\documentclass[twoside]{ruthesis}
\usepackage{setspace}
\onehalfspacing
\usepackage{amsmath}
\usepackage{amssymb}
\usepackage[english]{babel}
\usepackage[utf8]{inputenc}
\usepackage[numbers]{natbib}
\usepackage{bibentry}
\bibliographystyle{IEEEtranN}
\usepackage{hyperref}
\hypersetup{
	citecolor = PineGreen,
	colorlinks=true,
	linkcolor=NavyBlue,
	filecolor=magenta,
	urlcolor=blue,
}

\usepackage{enumerate}
\usepackage{graphicx}
\usepackage{bbm}
\usepackage{mathrsfs}
\usepackage{amsthm}
\usepackage[dvipsnames]{xcolor}
\usepackage{verbatim}
\usepackage{tikz-cd}
\usepackage{float}
\usepackage{makecell}
\usepackage{diagbox}

\newcommand\numberthis{\addtocounter{equation}{1}\tag{\theequation}}
\DeclareMathOperator{\R}{\mathbb{R}}
\DeclareMathOperator{\tr}{\mathrm{tr}}
\DeclareMathOperator{\id}{\mathbbm{1}}
\DeclareMathOperator{\im}{\mathbf{i}}
\newcommand\ket[1]{\left| #1 \right\rangle}
\newcommand\bra[1]{\left\langle #1 \right|}
\newcommand\braket[2]{\left\langle #1 |#2\right\rangle}
\newcommand\qmu{q_{\textsc{MU}}}
\newcommand\rh{r_{\textsc{H}}}
\newcommand{\Hil}{\mathcal{H}}
\newcommand{\Lin}{\mathcal{L}}
\newcommand{\Den}{\mathcal{D}}
\newcommand{\Pos}{\mathcal{P}}
\newcommand{\PosS}{\mathcal{P}^*}
\newcommand{\Uni}{\mathcal{U}}
\newcommand{\C}{\mathbb{C}}
\newcommand{\Op}{\mathrm{Op}}

\newcommand{\ketbra}[2]{\ket{#1}\!\!\bra{#2}}
\newcommand{\supp}{\mathrm{supp}}
\newcommand{\CPTP}{\mathrm{CPTP}}
\newcommand{\Map}{\mathcal{M}}
\newcommand\rcp{r_{\textsc{CP}}}
\numberwithin{equation}{chapter}
\numberwithin{table}{chapter}

\usepackage{titlesec}
\usepackage{geometry}
\geometry{
	a4paper,
	left=30mm, right=30mm, top=30mm, bottom=20mm,}
\usepackage{hhline}
\newtheorem{theorem}{Theorem}
\numberwithin{theorem}{chapter}
\newtheorem{lemma}[theorem]{Lemma}
\newtheorem{proposition}[theorem]{Proposition}
\newtheorem{corollary}[theorem]{Corollary}
\newtheorem{definition}[theorem]{Definition}
\newtheorem{remark}[theorem]{Remark}

\title{R\'ENYI DIVERGENCE INEQUALITIES VIA INTERPOLATION, WITH APPLICATIONS TO GENERALISED ENTROPIC UNCERTAINTY RELATIONS}
\ctitle{R\'enyi divergence inequalities via interpolation, with applications to generalised entropic uncertainty relations}
\author{Alexander McKinlay}
\school{University of Technology Sydney} 
\department{Faculty of Engineering and Information Technology} 
\degree{Master of Science (Research) in Computing Sciences}
\super{Dr. Marco Tomamichel}
\cosuper{Dr. Christopher Ferrie}
\address{Sydney, Australia}
\donemonth{January} \doneyear{2021} \makeindex

\titlespacing\section{-4pt}{12pt plus 4pt minus 2pt}{0pt plus 2pt minus 2pt}
\titlespacing\subsection{0pt}{12pt plus 4pt minus 2pt}{0pt plus 2pt minus 2pt}
\titlespacing\subsubsection{0pt}{12pt plus 4pt minus 2pt}{0pt plus 2pt minus 2pt}
\titleformat{\subsubsection}
{\bfseries\itshape\normalsize}{\thesubsubsection}{1em}{}
\begin{document}
\begin{frontmatter}
\pagenumbering{roman}
\newgeometry{
	a4paper,
	left=30mm, right=30mm, top=30mm, bottom=20mm}
\maketitle
\null
\thispagestyle{empty}%
\addtocounter{page}{-1}%
\newpage
\restoregeometry
\setcounter{page}{1}
\addcontentsline
{toc}{frontmatter}{\protect\numberline {\hfil}Acknowledgements}
\begin{titlepage}
	
	\begin{center}\Large{\textbf{Acknowledgments}}\end{center}
	This thesis could not have happened without the consistent and invaluable support of Marco.\\\\
	The inspiration for the interpolation technique was supplied by Dr. \!\!Salman Beigi, whose contributions in discussions about the nature of Pisier norms were more than illuminating.\\\\
	Lucie, for her patience.
	\vspace*{\fill}
	\begin{center}
		This is a CONVENTIONAL THESIS. Published work arising from research during the candidature of the degree has been included in an appendix. 
	\end{center}
\vfill	
\end{titlepage}
\begin{titlepage}
	 
\end{titlepage}
\addcontentsline
{toc}{frontmatter}{\protect\numberline {\hfil}Declaration of publications included in the thesis}
\begin{titlepage}
	\vspace*{\fill}
	\begin{center}
		{\large \bf DECLARATION OF PUBLICATIONS INCLUDED IN THE THESIS}	
		\vskip 10pt plus 3fil
		\begin{tabular}{|rlp{11.4cm}|}
			\hline
			Title&:& Decomposition Rules for Quantum R\'enyi Mutual Information with an \quad Application to Information Exclusion Relations\\
			Parts included&:& Subsections IV.A and IV.B\\
			Authors&:& Alexander McKinlay, Marco Tomamichel \\
			Contributions&:& AM derived results and developed the method of the proofs in consultation with MT.\\
			Status&:& Published  \\
			Journal&:& Journal of Mathematical Physics, vol. 61, no. 7, p. 072202, 2020\\
			Address&:& \url{https://doi.org/10.1063/1.5143862} \\
			\hline
		\end{tabular}
	\end{center}
	\vfill
\end{titlepage}
\setcounter{tocdepth}{2}
\addcontentsline
{toc}{frontmatter}{\protect\numberline {\hfil}Contents}\tableofcontents
\listoftables
\newpage
\begin{abstract}
	\thispagestyle{plain}	
	We investigate quantum R\'enyi entropic quantities, specifically those derived from `sandwiched' divergence. This divergence is one of several proposed R\'enyi generalisations of the quantum relative entropy. We may define R\'enyi generalisations of the quantum conditional entropy and mutual information in terms of this divergence, from which they inherit many desirable properties. However, these quantities lack some of the convenient structure of their Shannon and von Neumann counterparts. We attempt to bridge this gap by establishing divergence inequalities for valid combinations of R\'enyi order which replicate the chain and decomposition rules of Shannon and von Neumann entropies. Although weaker in general, these inequalities recover equivalence when the R\'enyi parameters tend to one.
	
	To this end we present R\'enyi mutual information decomposition rules, a new approach to the R\'enyi conditional entropy tripartite chain rules and a more general bipartite comparison. The derivation of these results relies on a novel complex interpolation approach for general spaces of linear operators.
	
	 These new comparisons allow us to employ techniques that until now were only available for Shannon and von Neumann entropies. We can therefore directly apply them to the derivation of R\'enyi entropic uncertainty relations. Accordingly, we establish a family of R\'enyi information exclusion relations and provide further generalisations and improvements to this and other known relations, including the R\'enyi bipartite uncertainty relations.
	
\end{abstract}\newpage
\pagenumbering{roman}
\end{frontmatter}
\renewcommand{\arraystretch}{1.2}
\setlength{\tabcolsep}{2pt}
\setcounter{page}{1}
\null\thispagestyle{empty}
\chapter*{Introduction}
\pagenumbering{arabic}
\addcontentsline{toc}{chapter}{Introduction}
This thesis contributes some concrete results in the form of inequalities for quantum R\'enyi entropies and a direct application to determining R\'enyi entropy uncertainty relations. However, these results are a consequence of a more general desire to explore the mathematical structure and place of R\'enyi entropies in the broader theory of quantum information.
\paragraph{}
Entropy is a concept that was developed in the study of thermodynamics to quantify the disorder of a system. The statistical understanding of entropy was formalised by~\citet{boltzmann1872}, where it was viewed as a property of thermodynamic systems. The quantities we investigate stem from the information theoretic concept of entropy, characterised by the Shannon entropy $H(X)$ of a random variable~\citep{shannon48}. Although having other uses within information theory, entropy is prominently known as a measure of uncertainty (or spread) of a given random variable, quantifying the average information which would be gained (or uncertainty removed) from the observation of that variable.

Associated with the Shannon entropy of a classical random variable $X$ are several entropic quantities which each have their own uses and interpretations. The joint entropy $H(XY)$ describes the entropy of a multivariate random variable and, as for joint probability, is affected by whether or not the variables are dependent. Conditional entropy $H(X|Y)$ quantifies the average uncertainty of $X$ given that $Y$ has been observed. This quantity is well-known and understood, being a fundamental tool in cryptography, statistical analysis and as a general measure of uncertainty. Mutual information $I(X:Y)$ quantifies the level of correlation between $X$ and $Y$~--~the difference between the total of the individual entropies of $X$ and $Y$ and the joint entropy of $XY$. Most prominently,~\citet{shannon48} established that the capacity of any discrete memoryless communication channel is given by the maximal mutual information between the channel's input and output. Beyond its original use in information theory, it has found many other applications in information processing from such a wide range as machine learning~(see, e.g.,~\cite{hirche18, ML1, ML2}) and computational linguistics~(see, e.g.,~\cite{CL}).

The classical R\'enyi entropies $H_\alpha(X)$~\citep{renyi61} generalise the Shannon entropy. They form a family of entropies parametrised by their order, $\alpha>0$, that produce measurements of uncertainty that give more or less weight to events with high or low information content. More specifically, they weigh the \textit{surprisal} of outcomes differently depending on the order. We recover the Shannon entropy when $\alpha=1$ which, naturally, is one of the more useful orders of R\'enyi entropy, along with \textit{min-entropy}~--~when $\alpha = \infty$, and the \textit{collision entropy}~--~when $\alpha = 2$. Formal definitions are given in Section~\ref{sec:entropy}.

We consider also the quantum generalisation of R\'enyi entropies. It is prudent therefore to initially cover the quantum analogue of Shannon entropy, the von Neumann entropy $H(\rho)$. This entropy of a probability density matrix $\rho$ is well understood and induces quantum versions of the related entropic quantities that are well behaved and reflect our physical understanding of quantum systems.

The quantum conditional entropy has so far found uses in quantum information such as in the decoupling theorem~\citep{dupuis09} and more generally as an entanglement witness and as a measure of uncertainty~(see~\cite{Wat,wildebook13}). We also have a quantum generalisation of the mutual information which is compatible with von Neumann entropy and quantum conditional entropy. Quantum mutual information has analogous applications in quantum information, for example characterising the capacity of classical to quantum channels~\citep{holevo98,schumacher97,holevo73b} and the quantum channel capacity under entanglement assistance~\citep{Bennett,bennett02,bennett09,bertachristandl11}. It has also found applications in other areas of quantum physics, for example as an entanglement and correlation measure~(see, e.g.,~\cite{brandao13d}) and to quantify Heisenberg's uncertainty principle~(see~\cite{H95} and~\cite{coles17} for a review on related work).

From this point we will mostly consider quantum entropies and hence omit the `quantum' and rather specify that an entropy is classical if necessary.

Certain equivalences of entropic quantities are known as chain or decomposition rules. Aside from highlighting the relationships between the quantities, these equivalences can be considered as a method to define them. Indeed, given the relatively simple definitions of the von Neumann entropy and joint entropy we find the conditional entropy \emph{chain rule}, for $\rho_{AB}$ on the system $AB$ with marginal $\rho_B$ on $B$,
\begin{equation}
	H(A|B)_\rho = H(\rho_{AB}) - H(\rho_B).
\end{equation}
This concept can be extended to the tripartite conditional entropy chain rule: for a state $\rho_{ABC}$ on the system $ABC$,
\begin{equation}
	H(AB|C)_\rho = H(A|BC)_\rho + H(A|C)_\rho = H(B|AC)_\rho - H(B|C)_\rho.\label{eq:cdef-via-decomposition}
\end{equation}
Similarly, mutual information can be described as equivalent to an expression of conceptually simpler entropies, i.e. we define
\begin{equation}
	I(A:B)_\rho = H(\rho_A) + H(\rho_B) - H(\rho_{AB}) = H(\rho_A) - H(A|B)_\rho = H(\rho_B) - H(B|A)_\rho. \label{eq:def-via-decomposition}
\end{equation}
These equivalences are interpreted as \emph{decomposition rules}, expressing the mutual information in terms of the von Neumann entropies of the different marginals of the joint state on the systems $A$ and $B$. By appealing to the intuition that entropy measures uncertainty in a quantum system, they reveal that mutual information measures the uncertainty in $A$ that is due to the lack of knowledge of $B$, and vice versa.

Another important and equivalent pair of definitions for conditional entropy and mutual information is given in terms of quantum relative entropy~\citep{umegaki62}, namely as optimisations of the relative entropy between the joint state $\rho_{AB}$ and particular product states of the marginals:
\begin{align}
	H(A|B)_\rho &= -\min_{\sigma_B} D(\rho_{AB}\|\id_A\otimes \sigma_B) =  -D(\rho_{AB}\|\id_A\otimes \rho_B) \quad \text{and}\label{eq:cdef-via-divergence}\\
	I(A:B)_\rho &= \min_{\sigma_A,\sigma_B} D(\rho_{AB}\|\sigma_A\otimes \sigma_B) = \min_{\sigma_B} D(\rho_{AB}\|\rho_A\otimes \sigma_B) = D(\rho_{AB}\|\rho_A\otimes \rho_B) \,, \label{eq:def-via-divergence}
\end{align}
where in the expressions on the right we observed that the minima are achieved for the marginals $\rho_A$ and $\rho_B$ of the joint state $\rho_{AB}$.  These expressions reveal a fundamental property of these quantities that is not evident from the chain and decomposition rules, namely the \emph{data-processing inequality}. Specifically, this property entails that the conditional entropy is monotonically non-decreasing and the mutual information non-increasing under any local processing of information on $A$ and $B$. Its satisfaction directly follows from the monotonicity under quantum channels of the underlying relative entropy and the above equivalences. This property is crucial for many applications relating to the physical interpretations of these quantities, since it corresponds to our intuition that correlations cannot be created by acting on only one of the constituent parts (or by acting on them independently).

On the other hand, R\'enyi entropic quantities which maintain useful properties are not as straight-forward to derive and involve a significant increase in complexity. That said, a natural method of deriving suitable R\'enyi generalisations of the above quantities is to replace the relative entropy in~Eqs.~\eqref{eq:cdef-via-divergence} and~\eqref{eq:def-via-divergence} with a (quantum) R\'enyi relative entropy which naturally satisfies the data-processing inequality. It is important to note here that, in general, the equivalences in Eqs.~\eqref{eq:cdef-via-decomposition}-\eqref{eq:def-via-divergence} no longer hold in the case of R\'enyi entropic quantities defined via this method.

Various generalisations of the concepts of conditional entropy and mutual information to one-parameter families of operationally significant measures have been proposed. Possible candidates for a R\'enyi conditional entropy have been put forward, generalising both the quantum conditional entropy and various classical R\'enyi entropies~(see~\cite{tomamichel13, iwamoto13, teixeira12} for more detailed treatments). Similarly, for mutual information we have propositions in both the classical~(see, e.g.,~\cite{verdu15,tomamichel17c}, for recent discussions) and the quantum setting~(see, e.g.,~\cite{Hay}). We call such measures (quantum) R\'enyi conditional entropy or mutual information respectively, if they satisfy the data-processing inequality.

We are essentially looking for quantities which reproduce the relationships between the classical entropic quantities, have operational significance, reflect the physical situation and are mathematically convenient. Of these propositions for such definitions, quantities derived from the so-called `sandwiched' divergence~\citep{lennert13, wilde13} have in recent years surfaced as some of the most suitable. This parametrised quantum relative entropy, which we from this point refer to as simply `R\'enyi divergence', meets the above criteria and has found important applications within quantum information. This thesis essentially provides new results for R\'enyi divergence based quantum entropic quantities.

R\'enyi divergence generalises the quantum relative entropy from which we derive the quantum entropic quantities detailed above. It has many desirable properties as a measure of uncertainty and its structure is amenable to formulating inequalities. As an analogue to the quantum relative entropy, it induces definitions of the quantum R\'enyi conditional entropy~\citep{lennert13} and mutual information~\citep{Hay}. These definitions take the form
\begin{align}
	H_\alpha(A|B)_\rho := -\min_{\sigma_B} {D}_{\alpha}(\rho_{AB} \| \id_A \otimes \sigma_B)\quad\text{and}\quad
	I_{\alpha}(A\;;\>\!B)_\rho := \min_{\sigma_B} {D}_{\alpha}(\rho_{AB} \| \rho_A \otimes \sigma_B) \,,
	\label{eq:def-q}
\end{align}
where ${D}_{\alpha}(\cdot\|\cdot)$ denotes the R\'enyi divergence and $\alpha \in \left[\frac12,\infty\right)$. Note that, given this definition, we are lacking an obvious way to decompose the R\'enyi conditional entropy and mutual information into R\'enyi entropies of the state's marginals.

However, these quantities generalise the conditional entropy and mutual information respectively, which are recovered by setting $\alpha = 1$.  These quantities and similar constructions are widely used in the study of strong converses and the analysis of channel coding problems~(see, e.g.,~\citep{mosonyi11,mosonyiogawa13, mosonyi14-2, wilde13, cooney14, TWW}) and R\'enyi mutual information has found direct operational interpretation in classical and quantum hypothesis testing~\citep{Hay,tomamichel17c}.

Uncertainty relations are important in any field that involves signals and waves, but have found particular significance in quantum mechanics. These relations are often put in terms of commutators and standard deviations, for example the Robertson relation~(see Section~\ref{sec:ComDef}), which is a generalisation of the famous \textit{Heisenberg Uncertainty Principle}~\citep{heisenberg27}. Entropic uncertainty relations, on the other hand, take a more information theoretic approach, expressing uncertainty in terms of its entropy.

There are two canonical entropic uncertainty relations, the more well known being the Maassen-Uffink relation~\citep{maassen88} which indicates that the total uncertainty of measuring in two different bases is bounded below by a constant depending on the compatibility of the measurements. Even though it is usually expressed in terms of Shannon entropies, the general form of the theorem encompasses a family of relations in terms of classical R\'enyi entropy. There are some current well-known extensions and improvement of this relation which we cover in Section~\ref{sec:improvements-and-extensions}.

By extending the Shannon version of the Maassen-Uffink relation to conditional entropies we can then derive the Hall information exclusion principle~\citep{H95}. This result is an entropic uncertainty relation in terms of mutual information, describing the level of correlation between parts of a quantum system and some memory containing information about the system. This relation, dual to Heisenberg's uncertainty principle, gives upper bounds on the total amount of correlation between a state measured in either one of two incompatible bases and some classical memory with information about how the initial state was prepared, i.e.~if there is a high level of correlation between one measurement of the state and the memory, then there is a proportionally low level of correlation for the other measurement, depending on the compatibility of measurements. Hall's original relation has been improved upon, with bounds stronger than that found in the `Maassen-Uffink' type relations. 

The derivation of these relations and their improvements rely on the chain and decomposition rules of quantum entropies, motivating our investigation of feasible R\'enyi generalisations. We defer the details and formal statements to Section~\ref{sec:entropic-uncertainty-relations}.

The applications of entropic uncertainty relations and related concepts include quantum randomness, quantum cryptography and entanglement witnessing, see~\cite{coles17} for a more extensive review.

\paragraph{}
We now outline the core results of this thesis, i.e. establishing R\'enyi versions of the chain and decomposition rules in Eqs.~\eqref{eq:cdef-via-decomposition} and~\eqref{eq:def-via-decomposition}. So far the main example of such an extension is Dupuis' chain rules~\citep{Dup}, which generalise the tripartite chain rule to conditional entropies based on R\'enyi divergence.

In particular, we provide decomposition rules for R\'enyi mutual information which can then be applied to determine R\'enyi information exclusion relations. These rules take the form
\begin{align}
	I_{\alpha}(A\;;\>\!B)_\rho \geq H_{\beta}(\rho_A) - H_{\gamma}(A|B)_\rho  \quad \textrm{and} 	\quad
	I_{\alpha}(A\;;\>\!B)_\rho \leq H_{\bar\beta}(\rho_A) - H_{\bar\gamma}(A|B)_\rho
\end{align}
for suitable choices of R\'enyi orders $\beta, \bar\beta, \gamma$ and $\bar\gamma$. The formal result is presented in Theorem~\ref{decomp}. The two inequalities above reduce to the equality in Eq.~\eqref{eq:def-via-decomposition} when we take all the parameters to $1$.

We employ a novel interpolation approach which additionally yields a general bipartite divergence inequality and an alternative derivation of the tripartite chain rules. For completeness, we include a previously published proof of the decomposition rules in Appendix~\ref{sec:an-alternative-proof-of-the-decomposition-rule} which employs a similar method to that used by Dupuis in his demonstration of the chain rules. Interestingly, these different approaches produce slightly varied valid ranges for the R\'enyi parameters, the explanation of this discrepancy is left as an open question.

The principle mechanism in demonstrating the desired R\'enyi divergence inequalities is exploiting their connection to Schatten norms. This is done by performing complex interpolation on specific linear operator valued functions which yield our entropic quantities. The main machinery used to this end is an extension of Beigi's three-line theorem~\citep{beigi13} to spaces of general linear operators, which we detail in Chapter~\ref{sec:generalised-renyi-divergence-uncertainty-relations}.

The choices we make in these interpolations are not determined by mere luck or simply arrive out of the void, rather they are inspired and motivated by the norms on general $L_p$-spaces developed by~\citet{pisier1998}. These `Pisier norms' have already found utility in quantum information~(see~\cite{devetak06, delgosha13}) due to how they relate to R\'enyi divergence when constrained to finite spaces. In Chapter~\ref{sec:interpolation-framework} we discuss some observations of their properties in the context of R\'enyi entropies and how they influence the derivation of the divergence inequalities.

As a direct application we are able to establish R\'enyi generalisations of the Maassen-Uffink and Hall relations, and make further improvements on the bounds. Some uncertainty relations of this genre have already been treated, notably the bipartite uncertainty relation found in~\cite{mybook}, which provides a quantum R\'enyi extension to an already improved version of the Maassen-Uffink relation found in~\cite{Ber}.

We are able to improve the bipartite uncertainty relation by deriving state-dependent and state-independent bounds analogous to those determined in~\cite{coles14}. These new bounds are determined by the other R\'enyi orders present in the relation and we may recover known relations with particular choices of parameters. We are then able to establish a new family of information exclusion relations which provide a direct extension of the Hall relation to R\'enyi entropies. This family of relations can then be further improved, again with an analogous method to that used in~\cite{coles14}. Interestingly, the resulting tighter upper bounds are not order-dependent, unlike those for the improved R\'enyi bipartite uncertainty relations.
\paragraph{}
The structure of the thesis is as follows:

In Chapter~\ref{sec:BG} we cover the mathematical fundamentals, including a brief treatment of probability theory (mainly for the purposes of notation) and an overview of linear operators and Schatten norms. The remainder of this chapter is dedicated to the formal definitions of R\'enyi divergence and its related quantities, with a summary of some useful properties.

We explore Pisier norms and interpolation in Chapter~\ref{sec:interpolation-framework}. This involves introducing a family of super-operators and associated norms, which help inform some observations about certain specifications of Pisier norms. We also establish the three-line theorem for spaces of linear operators, integral to the proofs of the subsequent chapter.

Chapter~\ref{sec:renyi-entropy-divergence-inequalities} details the R\'enyi divergence inequalities and their proofs, including the generalised bipartite divergence inequalities, the decomposition rules, the tripartite chain rules and some other comparisons which directly follow. This chapter also includes some specifications of the new interpolation result which facilitate the derivation of the main inequalities.

The improved R\'enyi uncertainty relations are then established formally in Chapter~\ref{sec:generalised-renyi-divergence-uncertainty-relations}. We first cover the bipartite uncertainty relations, starting with a generalised version of the result from~\cite{mybook}, then move to a version with an improved, state-independent bound. We conclude the main results with a R\'enyi information exclusion relation and its improved version.

We then provide a brief discussion, detailing some possible future work, such as further generalisations and applications. We also examine how these results may fit into the broader theory, notably how it could aid in establishing generalisations of the chain and decomposition rules for smooth R\'enyi entropies~\citep{renner05b,mythesis} and how it could inform a clearer definition of the R\'enyi conditional mutual information~(see~\cite{bertawilde14}), both currently open questions in quantum information.
\chapter*{Notation and nomenclature}
\addcontentsline{toc}{chapter}{Notation and nomenclature}
We use standard notation as summarised in the following table:
\begin{table}[h]\caption{Overview of notation}
\begin{center}
	\begin{tabular}{|c|p{12.4cm}|}
		\hline
		\thead{Symbol}                        & \thead{Meaning}                                                                                                                                                                      \\
		\hhline{|=|=|}
		\makecell[t]{$\supp(X)$\\ $[\supp(\rho)]$}&The support of a random variable [probability density matrix], i.e. the values of $x$ such that $P(X=x)\neq 0$ [the vectors such that $\rho\ket{\psi} \neq 0$].\\
		\hline 
		\makecell{$|X|$\\$[|\supp(\rho)|]$}&\makecell[l]{ For a random variable $X$, the cardinality of the support of $P(X = x)$. \\
			{[For a probability density matrix $\rho$, the dimension of $\supp(\rho)$.]}}\\
		\hline
		$\langle X \rangle$ & The expected value of $X$, $\sum_x xP(X= x)$.
		\\\hline
		$\log$                        & The logarithm to base 2, we also take the convention that $0\log0 = 0$.                                                                                                                                                      \\
		\hline
		$A,B, AB$                       & Quantum system or subsystems                                                                                                                                                 \\
		\hline
		\makecell[t]{$\Hil_A, \Hil_B$\\$[\Hil_{AB}]$} &The Hilbert spaces of states corresponding to the quantum systems $A$ and $B$. [$\Hil_A \otimes \Hil_B$]                                                                                                   \\
		
		\hline
		$\Lin(A, B)$ $[\Lin(A)]$& Set of linear operators from $\Hil_A$ to $\Hil_B$ [$\Hil_A$].                                                                                                                                                                                                           \\
		\hline
		$\Pos(A)$ $[\Pos^*(A)]$ & The set of positive semi-definite [strictly positive] operators from $\Hil_A$ to $\Hil_A$.                
		\\
		\hline
		$\Den(A)$ $[\Den^*(A)]$& The subset of $\Pos(A)$ $[\Pos^*(A)]$ of operators with unit trace.\\
		\hline
		$\Uni(A)$ $[\Uni(A,B)]$&The set of unitary operators in $\Lin(A)$ [isometries in $\Lin(A,B)$].\\\hline
		$\CPTP(A,B)$                  & The set of completely-positive trace-preserving operator maps from $\Lin(A)$ to $\Lin(B)$.                                                                                    \\
		\hline
		$\tr_A(\cdot)$                & The partial trace over $A$, $\tr_A(K_A \otimes L_B) = \tr(K_A)L_B$.                                                                                              \\
		\hline
		$\sigma\gg\rho$               & $\sigma$ `dominates' $\rho$, i.e the kernel of $\sigma$ is contained in the kernel of $\rho$.                                                                                \\
		\hline
		\makecell[t]{$\sigma\perp\rho$\\$[\sigma\not\perp\rho]$}& $\sigma$ and $\rho$ are [not] perpendicular, i.e. the span of $\sigma$ and $\rho$ have [non-]empty intersection. \\
		\hline                                                                      
		
		$\id_A\in \Lin(A)$            & The identity map on $\Hil_A$.                                                                                                                                                 \\
		
		\hline
		 $\Op_{A\rightarrow B}(\cdot)$ & \makecell[tl]{The operator mapping from $A$ to $B$. For basis vectors $\ket{e_i}\in \Hil_A, \ket{f_j} \in \Hil_B$,\\$\Op_{A\rightarrow B}:\Hil_{AB} \rightarrow \Lin(AB),\ket{e_i}\otimes \ket{f_j}\mapsto \ketbra{f_j}{e_i}$.} \\
		
		\hline
		$\|\cdot\|_p$&The $p$-Schatten norm, $\|X\|_p=\tr\left[(X^\dagger X)^\frac{p}{2}\right]^\frac{1}{p}$. This is not a norm for $p<1$.\\$[\|\cdot\|^*_p]$&[The dual $p$-norm - the norm induced on the space of functionals on the space equipped with $\|\cdot\|_p$.]                                 \\
		\hline $\Re(z)$& The real part of the complex number $z$.\\
		\hline
		$\lambda_{\max}\left[M\right]$ & The maximum eigenvalue of $M$.\\
		\hline
		$[X,Y]$ & The commutator: for $X,Y\in \Lin(A)$, $[X,Y] = XY - YX$.\\\hline
\end{tabular}\end{center}\end{table}
\chapter{Background and current literature}\label{sec:BG}
\section{Probability theory}
Before we discuss entropy, we must establish some definitions and notations for probability.

\subsection{Events and outcomes}
Here we consider only discrete random variables and, accordingly, a set of outcomes is a discrete set which describes the observable results of an experiment. An event, on the other hand, describes a subset of possible outcomes. For example, when rolling a standard die, the numbers 1 through 6 are possible outcomes whereas the outcome being an odd number is an event which includes the outcomes 1, 3 and 5.

Given a set of outcomes $\Sigma$, we consider the probability of an event $E\subseteq\Sigma$ as a subadditive function from the power set of $\Sigma$ to the real unit interval, $P:2^\Sigma\rightarrow [0,1]$. $P(E) = 1$ means that $E$ is guaranteed to happen and $P(E) = 0$ means $E$ will not happen.

Mutually exclusive events are events that cannot occur at the same time. Heads or tails being the result of a coin toss are mutually exclusive whereas it raining on the weekend or it reaching $20^\circ$ are not mutually exclusive.
We therefore expect that, for mutually exclusive events $E$ and $F$, if $P(E) = p$ then
\begin{equation}
	P(F) \leq 1-p \implies P(E)+P(F) \leq 1.
\end{equation}
A complete set of mutually exclusive events encompasses all possible outcomes of an experiment. If $\{E_i\}$ is a complete set of mutually exclusive events we have $\bigcap_i E_i = \emptyset$ and $\bigcup_i E_i = \Sigma$. Moreover,
\begin{align}
	 P(E_k)            = 1-\sum_{i\neq k}P(E_i) 
	\implies \sum_{i}P(E_i)  = 1.\label{comp}
\end{align}
Eq.~\eqref{comp} is known as the completeness relation.

\subsection{Random variables}
A random variable $X$ associates the outcomes of an experiment with real values, so we may consider the set of outcomes $\Sigma\subseteq \R$. Accordingly, we have the probability of an outcome $x\in\Sigma$,  $P(X=x)$ and the probability of an event, $X\in E\subseteq \Sigma$, written $P(X\in E) = \sum_{x\in E} P(X=x)$. Clearly, if $x_1\neq x_2, \, X=x_1 \implies X\neq x_2$, hence the set $\{x_i\}_i$ is complete and mutually exclusive, i.e.
$
	\sum_i P(X=x_i) = 1
$.
A discrete random variable $X$ is described by its probability mass function $p(x) = P(X=x)$.

An important property of a random variable is its expected value $\langle X\rangle$. This is a measure of centre which plays a significant role in the analysis of random variables. We define the expected value as the sum of the outcome values, weighted by their probabilities:
$
	\langle X\rangle= \sum_{x\in \Sigma} xp(x)
$.
Note that in general this does not reproduce the most likely value, but rather characterises the average of the values generated by many independent observations of identical random variables. 
\subsection{Probability and related quantities}
The \textit{joint probability} of two events $E$ and $F$ is written $P(E,F)$, read ``the probability of $E$ and $F$''. It describes the chance that both $E$ and $F$ will occur simultaneously. In terms of random variables we write
$
	P(X=x, Y=y) = p(xy)
$.

The \textit{conditional probability} $P(E|F)$ is similar but instead considers the chance that $E$ occurs given that $F$ occurs. For random variables we write
$
	P(X=x|Y=y) = p(x|y)
$.

For independent random variables we have
$
	p(xy) = p(x)p(y)
$,
otherwise
\begin{equation}\label{Bayes}
	p(xy) = p(x)p(y|x) = p(y)p(x|y)\implies p(x) = \frac{p(xy)}{p(y|x)}, \quad p(x|y) = \frac{p(y|x)p(x)}{p(y)}.
\end{equation}
The equation on the far right is known as Bayes' Theorem.

This notation will be used for the remainder of this thesis, except for some cases where it may cause ambiguity.

\section{Banach spaces of operators}
The main accepted mathematical model of quantum mechanics and information is based on complex-valued linear algebra and the analysis of vector spaces. In this section we cover some of the theory and tools available to us due to this model. In particular, we detail some results which are useful in the following sections.
\subsection{Matrices as operators}
An operator, in general, is a mapping that takes an element of one space to the element of another space (possibly the same space). In our context of linear operators acting on finite-dimensional Hilbert spaces, we may represent operators with matrices and we use the terms interchangeably when the focus dictates it.
\subsubsection{Transformations on operators}
For reference we cover some basic manipulations of operators.

\begin{itemize}
\item The \textit{inverse} of a square matrix can be considered as the operator which reverses the action of the original. Not all matrices are invertible, these are known as \textit{singular} or \textit{degenerate} matrices. Given invertible $M\in \Lin(A)$ its inverse, $M^\textsc{I}$ is such that if, for $v\in \Hil_A$, 
\begin{equation}
	Mv = u \implies   v = M^\textsc{I}u = M^\textsc{I}Mv.
\end{equation}
As the above indicates, the inverse relationship is symmetric and $MM^\textsc{I} = M^\textsc{I}M = \id_A$.
 
\item The \textit{transpose} is essentially a reflection of the matrix across its diagonal. Given $M = (a_{ij})$, ${M^\top = (a_{ji})}$. Many properties and attributes are preserved under the transpose, the trace for example. If $M\in \Lin(A,B)$ then $M^\top\in \Lin(B, A)$. 

\item The \textit{conjugate} $\overline{M}$ of complex-valued matrix $M$ is just the matrix whose entries are the complex conjugates of the entries of $M$.

\item The \textit{adjoint} $M^\dagger$ of an operator is the operator such that for the inner product of the Hilbert space it is acting upon
$
	\langle u, Mv\rangle = \langle M^\dagger u, v\rangle
$.
In our context, this simplifies to the conjugate transpose of $M$, i.e. $M^\dagger = \overline{M}^\top$.
\end{itemize}
\subsubsection{Types of operators}
We have particular categories of operators each with specific properties. We work our way from the most general to the least.
\begin{itemize}
\item A \textit{square} matrix is any matrix that has the same number of rows as columns. As an operator it maps between spaces of the same dimension~--~any operator that maps to the same space must at least be square.

\item A \textit{diagonalisable} matrix is a square matrix for which there exists a basis such that it can be written as a diagonal matrix. The entries of this diagonal matrix are the eigenvalues of the original.

\item A \textit{normal} matrix is such that $M^\dagger M = M M^\dagger$. As a result all normal matrices map to and from the same space.

\item A \textit{self-adjoint} (Hermitian) matrix is such that $M^\dagger = M$. A self-adjoint matrix is thus normal. The eigenvalues of a self-adjoint matrix are real and its eigenvectors form an orthonormal basis.

\item A \textit{positive semi-definite} matrix is a self-adjoint matrix whose eigenvalues are greater than or equal to zero. A \textit{strictly positive} or positive definite matrix is similar except that its eigenvalues are all greater than zero. We often shorten positive semi-definite to just \textit{positive}. If $X\in \Pos(A)$ is positive then there exists an $M\in \Lin(A, B)$ such that $X = M^\dagger M$ and a self-adjoint $Y\in \Lin(A)$ such that $X = Y^2$.

\item A \textit{density} matrix is a trace-1 positive matrix.

\item An \textit{isometry} $U$ is an operator such that $U^\dagger U = \id$. The length of a vector is invariant under isometric operations.

\item A \textit{unitary} operator $U$ is an isometry which is also square. Any evolution of a quantum state can be represented as a unitary operator since it takes unit vectors to unit vectors of the same dimension.
\end{itemize}
\subsubsection{Functions on operators}
Given a function from $\C \rightarrow \C$, we can define an extension of this function on any normal matrix, depending on its domain.

For a subset $\Omega\subseteq \C$ and $f:\Omega\rightarrow \C$ we define $f:\Omega^{m\times m} \rightarrow \C^{m\times m}$ where, if the eigenvalues of $M$ are $\{\lambda_i\}_i\subseteq \Omega$, the eigenvalues of $f(M)$ are $\{f(\lambda_i)\}_i$.

Most functions are well defined for normal matrices since we are guaranteed a unitarily similar diagonalisation~(see Eq.~\eqref{SDN}). However, sometimes we must restrict our focus for functions with limited natural domains. For example, the logarithm can only be applied to positive matrices since their eigenvalues are all real and non-negative.

We often use the function $f(M) = M^{-1}$ to denote the pseudoinverse \cite[Section~5.5.4]{golub2013matrix}. Usually this notation refers to the usual matrix inverse but we require a more general definition. In our context we take advantage of two main properties of the pseudoinverse. Firstly, for matrices with full support, the pseudoinverse coincides with the usual matrix inverse. However, if $M\in \Lin(A)$ does not have full support (and is therefore singular), $M^{-1}M$ is instead a projector onto $\supp(M)\subset\Hil_A$. This is problematic as we cannot guarantee a given density operator has full support. This brings us to the second useful property: there are situations where we maintain the usual behaviour of the inverse in the trace, i.e. if $\supp(M)\subset \Hil_A$ then for $X\in \Lin(A)$ with $\supp(X) \subseteq \supp(M)$ we have
$
	\tr M^{-1}MX  = \tr X
$.
In other words, equivalence holds if no information is lost by the action of the projector $M^{-1}M$.
\subsection{Operator decompositions}
As an aid to computation we often make use of operator decompositions. These observations essentially allow us to decompose certain classes of operators into combinations of simpler operators which are more mathematically convenient.
\subsubsection{The spectral decomposition}\label{sec:SD}
Also known as the `eigendecomposition', the spectral decomposition allows us to reformulate a given matrix in terms of its eigenvalues and eigenvectors:
\begin{proposition}
	Any diagonalisable matrix $A$ can be factorised as
	\begin{equation}
		A = P\Lambda P^{-1},
	\end{equation}
where $\Lambda$ is a diagonal matrix whose $i$th entry is the $i$th eigenvalue, $\lambda_i$, of $A$ and the $i$th column of $P$ is the eigenvector corresponding to $\lambda_i$.
For normal matrices this reduces to 
\begin{equation}
	X = U\Lambda U^\dagger,\label{SDN}
\end{equation}
where $U$ is unitary.
\end{proposition}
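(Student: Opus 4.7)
The plan is to handle the two parts of the statement in sequence. For the general diagonalisable case the argument is essentially an unpacking of the definition: by assumption there exists a basis $\{v_i\}_i$ of eigenvectors of $A$ with associated eigenvalues $\{\lambda_i\}_i$. Assembling the $v_i$ as the columns of a matrix $P$, the relations $A v_i = \lambda_i v_i$ collected side by side read $AP = P\Lambda$, where $\Lambda$ is the diagonal matrix with entries $\lambda_i$. Since the $v_i$ form a basis, $P$ is invertible, and rearranging gives $A = P\Lambda P^{-1}$. This half of the statement is therefore essentially a restatement of ``diagonalisable''.

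The substantive content lies in the normal case, where we must further produce an \emph{orthonormal} basis of eigenvectors so that $P$ may be chosen unitary. My plan is to route this through the Schur decomposition: any square matrix admits a factorisation $A = V T V^\dagger$ with $V$ unitary and $T$ upper triangular. I would sketch this inductively, using the fundamental theorem of algebra to extract a unit eigenvector $v$ of $A$, then applying the induction hypothesis to the compression of $A$ to the orthogonal complement of $\mathrm{span}(v)$. Since this is standard, in the thesis one could equally well simply cite it.

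Given Schur's form, the remaining step is to argue that normality of $A$ forces $T$ to be diagonal. Conjugation by a unitary preserves $T^\dagger T = T T^\dagger$, so $T$ is normal; comparing the $(1,1)$ entries of $T T^\dagger$ and $T^\dagger T$ yields
\begin{equation*}
\sum_{j} |T_{1j}|^2 = |T_{11}|^2,
\end{equation*}
which, since $T$ is upper triangular, forces $T_{1j}=0$ for all $j>1$. Iterating this argument on successive rows shows $T = \Lambda$ is diagonal, so that $A = V \Lambda V^\dagger$ with $V$ unitary. Setting $U := V$ concludes the proof, and the columns of $U$ furnish the claimed orthonormal eigenbasis.

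The main obstacle, if one wishes to keep the exposition fully self-contained, is Schur's triangularisation, as the rest of the argument is formal once that is in hand. In this background chapter the natural choice is to invoke Schur as a standard linear algebra result and present only the short normality-implies-diagonality computation, which is the part that is genuinely informative in the normal-operator setting.
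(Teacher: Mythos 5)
Your proof is correct: the diagonalisable case is indeed just an unpacking of the definition, and the normal case via Schur triangularisation plus the row-by-row computation showing a normal upper-triangular matrix is diagonal is the standard and sound argument. Note, however, that the paper itself offers no proof of this proposition — it is stated as background (with such material deferred to standard references), unlike the singular value decomposition, for which the thesis does include a short proof — so there is no in-paper argument to compare against; your suggestion to cite Schur's decomposition and present only the normality-implies-diagonality computation is exactly the level of detail that would fit this background chapter.
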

\subsubsection{The polar decomposition}
The polar decomposition allows us to express any square matrix in a much more convenient form, especially in the context of Schatten norms~(see the next section).
\begin{proposition}
	Any operator $M\in \Lin(A)$ can be decomposed into the product of a positive operator and a unitary operator i.e. for $P,Q \in \Pos(A)$ and $U\in \Uni(A)$
	\begin{equation}
		M = UP = QU.
	\end{equation}
The above are known respectively as the left and right polar decompositions.
\end{proposition}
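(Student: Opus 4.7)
The plan is to use the spectral decomposition just established to extract a positive square root of $M^\dagger M$, then build the unitary factor by first defining it on the range of this square root and extending it to the whole space. The right polar decomposition will then follow by a short conjugation trick.

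First I would set $P := (M^\dagger M)^{1/2}$, which is well-defined in $\Pos(A)$ by applying Eq.~\eqref{SDN} to the positive operator $M^\dagger M$ (replacing each eigenvalue by its non-negative square root). The key identity to verify is
\[
\|Pv\|^2 = \langle v, P^2 v\rangle = \langle v, M^\dagger M v\rangle = \|Mv\|^2 \qquad \text{for all } v \in \Hil_A,
\]
which shows both that $\ker P = \ker M$ and that the prescription $V: Pv \mapsto Mv$ is a well-defined linear isometry from $\mathrm{Range}(P)$ onto $\mathrm{Range}(M)$.

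Next I would extend $V$ to a unitary $U \in \Uni(A)$ on all of $\Hil_A$. Exploiting finite-dimensionality, the orthogonal complements $\mathrm{Range}(P)^\perp$ and $\mathrm{Range}(M)^\perp$ have equal dimension $\dim \Hil_A - \mathrm{rank}(M)$, so any isometric identification between these subspaces completes $V$ to a unitary $U$ with $M = UP$. For the right decomposition I would simply set $Q := UPU^\dagger$, which lies in $\Pos(A)$ since unitary conjugation preserves positivity, and then $QU = UPU^\dagger U = UP = M$.

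The main obstacle is the singular case: if $M$ had full rank we could immediately write $U := MP^{-1}$, but in general $P$ is not invertible and the construction of $U$ on $\ker P$ must be handled separately. The substantive content of the proof is essentially this dimension-counting for the two orthocomplements, which is elementary in finite dimensions but is precisely what makes the decomposition global rather than merely a partial isometry statement.
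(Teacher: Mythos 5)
Your argument is correct: the identity $\|Pv\|^2=\langle v,M^\dagger Mv\rangle=\|Mv\|^2$ with $P=(M^\dagger M)^{1/2}$ does give a well-defined isometry $Pv\mapsto Mv$ from $\mathrm{Range}(P)$ onto $\mathrm{Range}(M)$, the dimension count $\dim\mathrm{Range}(P)^\perp=\dim\Hil_A-\mathrm{rank}(M)=\dim\mathrm{Range}(M)^\perp$ lets you complete it to a unitary $U\in\Uni(A)$ with $M=UP$, and $Q:=UPU^\dagger$ handles the right decomposition. Note, though, that the thesis itself states this proposition without proof, treating it as standard background; the closest proved result is the singular value decomposition (Proposition~\ref{SVD}), from which the paper could obtain the polar decomposition in one line by absorbing the diagonal factor, writing $M=UDV$ and regrouping as $M=(UV)\left(V^\dagger DV\right)$ with $V^\dagger DV\in\Pos(A)$ positive and $UV$ unitary (after completing the isometries in the rank-deficient case). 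So your route is genuinely different: it is self-contained and elementary, constructing $U$ directly as an extension of a partial isometry, and it makes transparent exactly where the singular case requires care; the SVD route is shorter within the paper's framework because the rank-deficiency bookkeeping has already been done in the SVD proof. Both are valid; if you wanted to tighten your write-up, the only step worth spelling out is that the completed $U$ is isometric on all of $\Hil_A$ (because it maps the orthogonal decomposition $\mathrm{Range}(P)\oplus\mathrm{Range}(P)^\perp$ onto the orthogonal decomposition $\mathrm{Range}(M)\oplus\mathrm{Range}(M)^\perp$), hence unitary in finite dimensions.
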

\subsubsection{The singular value decomposition}
We may also decompose non-square matrices in a similar way to the spectral decomposition.
\begin{proposition}\label{SVD}
	For any $M\in\Lin(A,B)$ we may write
	\begin{equation}
		M = UDV
	\end{equation}
	where $U\in \Uni(C,B), V\in \Uni(A,C)$ are isometries and $D$ is a positive semi-definite diagonal matrix in $\Lin(C)$ with dimension equal to the number of non-zero eigenvalues of $M^\dagger M$.
\end{proposition}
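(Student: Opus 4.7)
The plan is to construct the decomposition directly from the spectral decomposition of $M^\dagger M$, which is guaranteed since $M^\dagger M \in \Pos(A)$ is self-adjoint and thus admits a unitary diagonalisation as in Eq.~\eqref{SDN}. Write $M^\dagger M = W \Lambda W^\dagger$ with $W \in \Uni(A)$ and $\Lambda$ diagonal with entries $\{\lambda_i\}_i$ which are non-negative since $M^\dagger M$ is positive. Let $r$ be the number of strictly positive eigenvalues (counted with multiplicity), take $\Hil_C$ to be an $r$-dimensional Hilbert space with orthonormal basis $\{\ket{e_i}\}_{i=1}^r$, and set $D \in \Lin(C)$ to be the diagonal matrix with entries $\sqrt{\lambda_i}$. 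By construction $D$ is positive semi-definite and its dimension equals the number of non-zero eigenvalues of $M^\dagger M$, as required.

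Next I would build the two isometries. Let $\{\ket{w_i}\}_{i=1}^r$ be the eigenvectors of $M^\dagger M$ corresponding to the non-zero eigenvalues (columns of $W$). Define $V \in \Lin(A, C)$ by $V = \sum_{i=1}^r \ket{e_i}\!\bra{w_i}$; since the $\ket{w_i}$ are orthonormal in $\Hil_A$ we get $V^\dagger V = \sum_i \ket{w_i}\!\bra{w_i}$, which is the projector onto $\supp(M^\dagger M)$, and $V V^\dagger = \id_C$, so $V \in \Uni(A,C)$ in the paper's convention. Now define vectors $\ket{u_i} := \lambda_i^{-1/2} M\ket{w_i} \in \Hil_B$. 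A direct computation using $\braket{w_i}{M^\dagger M|w_j} = \lambda_j \braket{w_i}{w_j}$ shows $\braket{u_i}{u_j} = \delta_{ij}$, so the $\ket{u_i}$ are orthonormal in $\Hil_B$ (this is where we need $r \leq \dim \Hil_B$, which holds because the rank of $M^\dagger M$ is bounded by the rank of $M$). Set $U = \sum_{i=1}^r \ket{u_i}\!\bra{e_i} \in \Lin(C, B)$; then $U^\dagger U = \id_C$, putting $U \in \Uni(C,B)$.

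It then remains to verify $M = UDV$. Acting on an arbitrary eigenvector $\ket{w_j}$ with $\lambda_j > 0$ gives $UDV\ket{w_j} = UD\ket{e_j} = \sqrt{\lambda_j}\,U\ket{e_j} = \sqrt{\lambda_j}\ket{u_j} = M\ket{w_j}$, while for eigenvectors in the kernel of $M^\dagger M$ both $UDV$ and $M$ annihilate them (the latter because $M\ket{w} = 0$ whenever $\braket{w}{M^\dagger M|w} = 0$). Since the full eigenbasis of $M^\dagger M$ spans $\Hil_A$, the two operators agree everywhere, giving the claim.

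The only subtle point, and the main obstacle, is handling the kernel correctly: one must restrict to the non-zero part of the spectrum to get a genuine diagonal $D$ in $\Lin(C)$ (rather than having $0$'s that spoil the dimension count), and simultaneously ensure the resulting $U$ and $V$ remain partial isometries between the right spaces. The argument that $M\ket{w} = 0 \iff \ket{w} \in \ker(M^\dagger M)$, used implicitly above, is the bridge that makes this restriction lossless and lets the eigenbasis argument close off the proof.
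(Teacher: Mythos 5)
Your proof is correct and follows essentially the same route as the paper: both build $D$ and the (co-)isometries from the spectral decomposition of $M^\dagger M$ restricted to its support, with your $U=\sum_i \lambda_i^{-1/2}M\ket{w_i}\!\bra{e_i}$ being exactly the paper's $U=MV^\dagger D^{-1}$ in dyadic form. If anything, you make explicit the step the paper glosses over, namely that $MV^\dagger V=M$ because $M$ annihilates $\ker(M^\dagger M)$.
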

The diagonal entries of $D$ are known as the \textit{singular values} of $M$ which we denote $\{s_i(M)\}_i$.

Most expositions of this proposition instead present the decomposition in terms of unitary $U$ and $V$ in $\Uni(B)$ and $\Uni(A)$, respectively, and $D$ as a rectangular diagonal matrix in $\Lin(A,B)$. However, for our purposes, the above form is more convenient (See, for example, Theorem~\ref{weighted3lines}). Hence we include a brief proof for clarity.

\begin{proof}[Proof of Proposition \ref{SVD}]
	Let $M\in\Lin(A,B)$ be an $n\times m$ matrix. We have that $M^\dagger M$ is positive semi-definite so by Eq.~\eqref{SDN} there exists a $\tilde V^\dagger\in \Uni(A)$ such that
	\begin{equation}
		\tilde V^\dagger M^\dagger M \tilde V = \tilde D = \begin{bmatrix}
		D^2&0\\0&0	
		\end{bmatrix}
	\end{equation}
where $D^2\in \Lin(C)$ is diagonal and positive semi-definite with $\dim(\Hil_C)$ equal to the number of non-zero eigenvalues of $M^\dagger M$. Again by Eq.~\eqref{SDN}, we may write 
\begin{equation}
	\tilde V^\dagger=\begin{bmatrix}V^\dagger&V_0^\dagger\end{bmatrix}
\end{equation}
where $V^\dagger$ and $V_0^\dagger$ are matrices whose columns are the eigenvectors of the, respectively, non-zero and zero eigenvalues.
In this sense we can write
\begin{equation}
	\begin{bmatrix}
		D^2&0\\0&0
	\end{bmatrix} = 
	\begin{bmatrix}
		V M^\dagger M V^\dagger &V M^\dagger M V_0^\dagger\\
		V_0 M^\dagger M V^\dagger & V_0 M^\dagger M V_0^\dagger
	\end{bmatrix},
\end{equation}
hence $D^2 = V M^\dagger M V^\dagger$ with $V^\dagger\in \Uni(C, A)$.

Define $U = MV^\dagger D^{-1}\in \Lin(C, B)$, thus
\begin{equation}
	UDV = MV^\dagger D^{-1}DV = MV^\dagger V = M.
\end{equation}
Moreover,
\begin{equation}
	U^\dagger U = (MV^\dagger D^{-1})^\dagger MV^\dagger D^{-1} = D^{-1}V M^\dagger MV^\dagger D^{-1} = D^{-1}D^2D^{-1} = \id_B,
\end{equation}
Hence $U\in \Uni(C,B)$.
\end{proof}

\subsubsection{The Schmidt decomposition}
Finally, we have the Schmidt decomposition, considered an extension of the singular value decomposition to vectors, which allows us to represent any pure state as a set of orthonormal vectors localised on distinct subsystems.
\begin{theorem}\label{SchmidtD}
	We may write any vector $v \in \Hil_{AB}$ in the form
	\begin{equation}
		v = \sum_ i r_i \left(e^i_A\otimes f^i_B\right)
	\end{equation}
	where $\{e^i_A \}_i$ and $\{f^i_B \}_i$ form orthonormal bases on their respective subsystems and the Schmidt coefficients $\{r_i\}_i$ are real and non-negative. \end{theorem}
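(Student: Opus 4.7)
The plan is to reduce the statement directly to the singular value decomposition from Proposition~\ref{SVD}, using the canonical identification between vectors in $\Hil_{AB}$ and linear operators between $\Hil_A$ and $\Hil_B$ (this is essentially the $\Op_{A\to B}$ map introduced in the notation table).

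First I would fix orthonormal bases $\{a_i\}_i$ of $\Hil_A$ and $\{b_j\}_j$ of $\Hil_B$ and expand the given vector as $v = \sum_{ij} M_{ij} \, (a_i \otimes b_j)$, collecting the coefficients into the matrix $M \in \Lin(A,B)$ whose entries are $M_{ij}$. Next I would apply Proposition~\ref{SVD} to write $M = UDV$ with $V \in \Uni(A,C)$, $U \in \Uni(C,B)$, and $D$ diagonal and positive semi-definite on $\Hil_C$, where $\dim \Hil_C$ equals the number of nonzero singular values $r_i$ of $M$. Substituting the entries $M_{ij} = \sum_k U_{ik} r_k V_{kj}$ back into the expansion of $v$ and regrouping the sums would yield
\begin{equation}
v = \sum_k r_k \Bigl(\sum_i V_{ki} \, a_i \Bigr) \otimes \Bigl(\sum_j U_{jk} \, b_j\Bigr) = \sum_k r_k \, (e^k_A \otimes f^k_B),
\end{equation}
where $e^k_A$ and $f^k_B$ are defined by the bracketed sums.

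To finish, I would verify that the $\{e^k_A\}_k$ and $\{f^k_B\}_k$ are orthonormal sets: this follows from the isometry conditions $V V^\dagger = \id_C$ and $U^\dagger U = \id_C$, which give $\langle e^k_A, e^l_A \rangle = \sum_i \overline{V_{ki}} V_{li} = \delta_{kl}$ and analogously for the $f^k_B$. Since the statement only asks for orthonormal bases on the respective subsystems, I would then extend these finite orthonormal sets (indexed by $k$ up to $\dim \Hil_C$) to full orthonormal bases of $\Hil_A$ and $\Hil_B$ via Gram-Schmidt, assigning $r_k = 0$ to the added basis vectors so the sum form is preserved. The non-negativity and reality of the $r_k$ is inherited directly from the singular values of $M$.

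The bulk of the work has already been done in Proposition~\ref{SVD}, so there is no serious obstacle; the only mildly delicate point is the bookkeeping between the two natural conventions (isometries into an intermediate space $\Hil_C$ versus full unitaries on $\Hil_A$ and $\Hil_B$), and making sure the orthonormal sets produced by the SVD are correctly extended to bases without disturbing the Schmidt form.
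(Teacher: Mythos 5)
Your argument is correct, and it is the standard proof: the paper itself does not prove Theorem~\ref{SchmidtD} at all (it defers to~\cite[Section~2.5]{NC}), and the route you take --- expand $v$ in product bases, view the coefficient array as an operator, apply the singular value decomposition of Proposition~\ref{SVD}, and regroup --- is precisely the argument of that reference, with the added merit that it reuses the paper's own SVD statement (whose isometries into the intermediate space $\Hil_C$ are exactly the convenient convention here). The orthonormality of the Schmidt vectors from $VV^\dagger = \id_C$ and $U^\dagger U = \id_C$, and the extension to full bases by Gram--Schmidt with vanishing coefficients $r_k = 0$, are both handled appropriately; the latter is needed only because the theorem is stated loosely in terms of ``orthonormal bases'' rather than orthonormal sets.

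One small bookkeeping slip, of the kind you yourself flag: with your substitution $M_{ij} = \sum_k U_{ik} r_k V_{kj}$, regrouping gives $v = \sum_k r_k \bigl(\sum_i U_{ik}\, a_i\bigr)\otimes\bigl(\sum_j V_{kj}\, b_j\bigr)$, i.e.\ the roles of $U$ and $V$ are interchanged relative to the display you wrote (equivalently, your $M_{ij}$ as defined from the expansion of $v$ is the transpose of an element of $\Lin(A,B)$, so either transpose $M$ before invoking Proposition~\ref{SVD} or swap which factor builds $e^k_A$ and which builds $f^k_B$). This does not affect the validity of the proof, since orthonormality follows from the isometry relations either way; it only needs to be stated consistently.
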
For proofs and related material see~\cite[Section~2.5]{NC}.

\subsection{The Schatten operator norm}\label{sec:the-schatten-operator-norm}
If we introduce a norm on a set of a finite complex-valued linear operators we form a Banach space. The Schatten norms form a family which generalise some of the more common norms associated with spaces of operators. Essentially, they can be considered as a matrix generalisation of the $p$-norms on Lebesgue spaces.

For $M\in \Lin(A,B)$, we define
\begin{equation}
	\left\|M\right\|_p = \left[\tr\left(M^\dagger M\right)^\frac{p}{2}\right]^\frac{1}{p},\quad p\geq 1.
\end{equation}
Note these norms are well defined for rectangular matrices since $M^\dagger M$ is by definition self-adjoint. In fact, the eigenvalues of $|M| = \left(M^\dagger M\right)^\frac{1}{2}$ coincide with the singular values of $M$ (and $M^\dagger$)~(see \cite[Section~1.1.3]{Wat}). As such, we may write the above in the form
\begin{equation}\label{SVnorm}
	\left\|M\right\|_p = \left[\sum_{i=1}^k s_i\left(M\right)^p\right]^\frac{1}{p},
\end{equation}
where $k$ is the rank of $M$.

\subsubsection{Properties}
The above definition induces some useful properties:
\begin{proposition}
	The Schatten norm is unitarily invariant, i.e. for $M\in \Lin(A, B)$, $U\in \Uni(B)$ and $V\in \Uni(A)$,
	\begin{equation}\label{UIV}
		\|UMV\|_p = \|UM\|_p = \|MV\|_p = \|M\|_p.
	\end{equation}
\end{proposition}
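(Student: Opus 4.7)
The plan is to dispatch the three claimed equalities by reducing the Schatten $p$-norm to something manifestly invariant under the relevant unitary action. The cleanest route is to work straight from the definition $\|M\|_p=[\tr(M^\dagger M)^{p/2}]^{1/p}$ and handle the left and right unitaries separately; the combined case $\|UMV\|_p$ then follows by applying the two observations in sequence.

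For the left-unitary $U\in \Uni(B)$, I would compute $(UM)^\dagger(UM)=M^\dagger U^\dagger U M=M^\dagger M$, using $U^\dagger U=\id_B$. Substituting into the definition gives $\|UM\|_p=\|M\|_p$ immediately. For the right-unitary $V\in \Uni(A)$, the analogous manipulation yields $(MV)^\dagger(MV)=V^\dagger M^\dagger M V$, which is a unitary similarity of the positive operator $M^\dagger M$. Here I would invoke the fact (recorded in Section 1.2.1 of the excerpt, in the discussion of functions on normal operators) that functions defined via spectral decomposition commute with unitary similarity, so $(V^\dagger M^\dagger M V)^{p/2}=V^\dagger (M^\dagger M)^{p/2}V$. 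Cyclicity of the trace then collapses the $V^\dagger\cdots V$ sandwich and gives $\|MV\|_p=\|M\|_p$.

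Composing the two, $\|UMV\|_p=\|(UM)V\|_p=\|UM\|_p=\|M\|_p$, completes the chain of equalities. An alternative (and perhaps even shorter) route would be via the singular-value formulation $\|M\|_p=[\sum_i s_i(M)^p]^{1/p}$ in Eq.~\eqref{SVnorm}: since the singular values are the non-negative square roots of the eigenvalues of $M^\dagger M$, the identities $(UM)^\dagger(UM)=M^\dagger M$ and the unitary similarity $(MV)^\dagger(MV)=V^\dagger M^\dagger M V$ show that the multisets of eigenvalues are preserved, so $s_i(UM)=s_i(MV)=s_i(M)$ for every $i$.

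The main technical subtlety is the identity $(V^\dagger A V)^{p/2}=V^\dagger A^{p/2}V$ for positive $A$, which is the only place where anything more than basic algebra is used. If one prefers not to appeal to the functional-calculus fact in passing, I would take a moment to justify it explicitly: writing $A=W\Lambda W^\dagger$ by Eq.~\eqref{SDN} and setting $W':=V^\dagger W$, which is unitary, gives $V^\dagger A V=W'\Lambda W'^\dagger$, and evaluating the continuous function $x\mapsto x^{p/2}$ on the spectrum produces $(V^\dagger A V)^{p/2}=W'\Lambda^{p/2}W'^\dagger=V^\dagger A^{p/2}V$. Everything else is bookkeeping.
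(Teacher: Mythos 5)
Your argument is correct, and your primary route differs from the paper's. The paper proves all the equalities in one stroke via the singular value decomposition in its isometry form (Proposition~\ref{SVD}): writing $M=U'DV'$, it observes that $UMV=(UU')D(V'V)$ is again a valid decomposition with the same diagonal $D$, so the singular values are unchanged and Eq.~\eqref{SVnorm} finishes the job. You instead work directly from the trace definition, splitting the claim into the left-unitary case (where $(UM)^\dagger(UM)=M^\dagger M$ is immediate from $U^\dagger U=\id_B$), the right-unitary case (where you combine the functional-calculus identity $(V^\dagger A V)^{p/2}=V^\dagger A^{p/2}V$ with cyclicity of the trace), and then compose. Both are sound. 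Your route is more elementary in that it never invokes the SVD, at the cost of needing the spectral-calculus step, which you do justify carefully; the paper's route is shorter given the machinery it has already set up, and keeps the argument entirely at the level of singular values, which is the viewpoint it reuses later (e.g.\ in the proof of Theorem~\ref{weighted3lines}, where the SVD with isometries $U\in\Uni(C,B)$, $V\in\Uni(A,C)$ is again the working tool). The singular-value alternative you sketch in your third paragraph is essentially the paper's proof, differing only in that you track the eigenvalues of $M^\dagger M$ rather than manipulating the decomposition of $M$ itself.
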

\begin{proof}
	By the singular value decomposition we may write
$
		UMV = UU'DV'V
$,
where $D\in \Lin(C)$ is a diagonal matrix, $UU'\in \Uni(C, B)$ and $V'V\in \Uni(A,C)$. This demonstrates that the singular values of $M$ and $UVM$ coincide, hence by Eq.~\eqref{SVnorm} we have the equivalences in Eq.~\eqref{UIV}.
\end{proof}
\begin{proposition}
	For $M\in \Lin(A, B)$,
	\begin{equation}
		\left\|M\right\|_p = \left\|M^\dagger\right\|_p =\left\|\overline{M}\right\|_p = \left\|M^\top\right\|_p = \left\||M|\right\|_p.
	\end{equation}
\end{proposition}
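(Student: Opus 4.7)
The plan is to reduce each of the four equalities to a statement about the singular values of $M$ and then invoke Eq.~\eqref{SVnorm}. The case $\|M\|_p = \||M|\|_p$ is immediate from the definition: since $|M| = (M^\dagger M)^{\frac{1}{2}}$ is self-adjoint and positive semi-definite, $|M|^\dagger |M| = |M|^2 = M^\dagger M$, so $\||M|\|_p = \left[\tr(M^\dagger M)^{\frac{p}{2}}\right]^{\frac{1}{p}} = \|M\|_p$ with no further work.

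For the remaining three equalities, I would fix the singular value decomposition $M = UDV$ provided by Proposition~\ref{SVD}, with $D \in \Lin(C)$ diagonal and positive semi-definite. Then $M^\dagger = V^\dagger D U^\dagger$ is again of SVD form (the two isometry factors simply exchange roles), so $M^\dagger$ has the same singular values as $M$ and Eq.~\eqref{SVnorm} delivers $\|M^\dagger\|_p = \|M\|_p$. Similarly, $\overline{M} = \overline{U}\, D\, \overline{V}$ since $D$ is real, and I would check that componentwise conjugation preserves the isometry condition used by Proposition~\ref{SVD}: if $U^\dagger U = \id$ then $\overline{U}^\dagger \overline{U} = \overline{U^\dagger U} = \id$, with the analogous statement for the co-isometry relation $VV^\dagger = \id$. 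Hence $\overline{U}\, D\, \overline{V}$ is a valid SVD of $\overline{M}$ with the same diagonal factor $D$, giving $\|\overline{M}\|_p = \|M\|_p$.

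The fourth equality I would obtain by chaining the previous two. Since $M^\dagger = \overline{M}^\top$, conjugating both sides yields $M^\top = \overline{M^\dagger}$, so $\|M^\top\|_p = \|\overline{M^\dagger}\|_p = \|M^\dagger\|_p = \|M\|_p$ by what has already been established. The only point anywhere in the argument that is not direct substitution is the observation that conjugation preserves the isometry and co-isometry relations, so there is no genuine obstacle beyond careful bookkeeping with the SVD.
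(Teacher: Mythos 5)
Your proof is correct and takes essentially the same route as the paper, which simply observes that the singular values of $M$ are invariant under these transformations and invokes Eq.~\eqref{SVnorm}; you merely fill in the SVD bookkeeping (and the direct computation for $\||M|\|_p$) that the paper leaves implicit.
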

\begin{proof}
	We simply observe that the singular values of $M$ are invariant under these transformations.
\end{proof}
\begin{proposition}\label{fold}
	Given $M\in \Lin(A, B)$ we may write
$
		\left\|M^\dagger M\right\|_{p} = \left\|M\right\|^2_{2p}
$.
\end{proposition}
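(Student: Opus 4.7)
The plan is to reduce both sides to expressions in the singular values of $M$ and then verify the identity algebraically. Two routes are available: one via the direct trace definition and one via the singular-value formula in Eq.~\eqref{SVnorm}. I would present the singular-value route since it is cleanest and the paper has just proved the singular value decomposition.

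First I would observe that $M^\dagger M \in \Pos(A)$, so $M^\dagger M$ is self-adjoint with non-negative eigenvalues. In particular, for a positive operator the singular values coincide with the eigenvalues. Applying the singular value decomposition $M = UDV$ from Proposition~\ref{SVD}, I would compute $M^\dagger M = V^\dagger D U^\dagger U D V = V^\dagger D^2 V$, which shows that the singular values of $M^\dagger M$ are precisely $\{s_i(M)^2\}_i$, with the same rank $k$ as $M$.

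Plugging into Eq.~\eqref{SVnorm} then gives
\begin{equation}
    \|M^\dagger M\|_p = \left[\sum_{i=1}^k s_i(M)^{2p}\right]^{\frac{1}{p}} = \left(\left[\sum_{i=1}^k s_i(M)^{2p}\right]^{\frac{1}{2p}}\right)^{2} = \|M\|_{2p}^2,
\end{equation}
which is the claim.

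There is no real obstacle here; the only subtlety worth flagging is that one must invoke the positivity of $M^\dagger M$ to equate singular values with eigenvalues (equivalently, to justify $((M^\dagger M)^\dagger (M^\dagger M))^{p/2} = (M^\dagger M)^p$ if one goes the direct-trace route). Everything else is a routine rewrite of exponents.
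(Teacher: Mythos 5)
Your proof is correct, but it takes a slightly different route from the one in the thesis. The thesis argues directly from the trace definition of the Schatten norm: writing $\left\|M^\dagger M\right\|_{p} = \left[\tr\left(M^\dagger M M^\dagger M\right)^{p/2}\right]^{1/p} = \left[\tr\left(M^\dagger M\right)^{2p/2}\right]^{2/(2p)} = \left\|M\right\|_{2p}^{2}$, which is a two-line exponent manipulation relying (implicitly) on the self-adjointness of $M^\dagger M$, exactly the point you flag as the ``direct-trace route.'' You instead pass through the singular value decomposition $M = UDV$ of Proposition~\ref{SVD} and the singular-value expression of Eq.~\eqref{SVnorm}, identifying the singular values of $M^\dagger M$ with $\{s_i(M)^2\}_i$ and concluding by a rewrite of exponents. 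Both arguments are complete and elementary; the trace route is shorter and self-contained in the definition, while your route makes the underlying spectral picture explicit and generalises immediately to statements about functions of the singular values, at the modest cost of invoking the SVD and the observation that singular values of a positive operator are its eigenvalues.
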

\begin{proof}
	\begin{align}
		\left\|M^\dagger M\right\|_{p}  &= \left[\tr\left(M^\dagger M M^\dagger M\right)^\frac{p}{2}\right]^\frac{1}{p}\\
		&=\left[\tr\left(M^\dagger M \right)^\frac{2p}{2}\right]^\frac{2}{2p}\\
		&=\left\|M\right\|^2_{2p}.
	\end{align}
\end{proof}
We conclude with an observation about the form of positive operators in the unit ball of a given norm:
\begin{proposition}
	For all $\sigma\in \Den(A)$
$
		\left\|\sigma^\frac{1}{p}\right\|_p = 1
$.
\end{proposition}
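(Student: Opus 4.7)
The plan is to unpack the definition of the Schatten $p$-norm and exploit the positivity of $\sigma$ together with its unit trace. Since $\sigma \in \Den(A)$ is positive semi-definite, I can apply the functional calculus for normal matrices described in the preceding subsection to form $\sigma^{1/p}$, using the usual convention $0^{1/p} = 0$ so that the definition is unambiguous even when $\sigma$ is not of full support. The operator $\sigma^{1/p}$ is itself positive semi-definite, hence self-adjoint, so $(\sigma^{1/p})^\dagger = \sigma^{1/p}$.

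Armed with this, I would write
\begin{equation}
	\left\|\sigma^{1/p}\right\|_p = \left[\tr\left((\sigma^{1/p})^\dagger \sigma^{1/p}\right)^{p/2}\right]^{1/p} = \left[\tr\left(\sigma^{2/p}\right)^{p/2}\right]^{1/p}.
\end{equation}
Because the eigenvalues of $\sigma$ are nonnegative reals, the composition of functional calculus rules gives $\left(\sigma^{2/p}\right)^{p/2} = \sigma$, so the trace inside simplifies to $\tr(\sigma) = 1$, and raising to the $1/p$ power yields the claimed value $1$. Equivalently, one can bypass the self-adjointness remark by invoking Eq.~\eqref{SVnorm}: the singular values of $\sigma^{1/p}$ are precisely $\{\lambda_i(\sigma)^{1/p}\}_i$, so $\|\sigma^{1/p}\|_p^p = \sum_i \lambda_i(\sigma) = \tr(\sigma) = 1$.

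There is essentially no obstacle here; the only point that merits a line of justification is the commutation of the functional calculus exponents (i.e. that $(\sigma^{2/p})^{p/2} = \sigma$), which follows because all the operators involved share the same eigenbasis as $\sigma$ and the corresponding real-valued identity $(x^{2/p})^{p/2} = x$ holds on $[0,\infty)$. I would present the argument as a two-line display and leave it at that.
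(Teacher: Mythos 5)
Your proof is correct and follows essentially the same route as the paper, which simply notes the result is evident from the definition of the Schatten norm and $\tr(\sigma)=1$; you have merely written out explicitly the functional-calculus step $\left(\sigma^{2/p}\right)^{p/2}=\sigma$ that the paper leaves implicit.
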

\begin{proof}
	This is evident by considering the definition of the Schatten norm and the fact that ${\tr(\sigma) = 1}$.
\end{proof}
\subsubsection{Examples}
The Schatten norms generalise some well known norms on spaces of operators:
\begin{itemize}
	\item The \textit{trace} norm is equivalent to $\|\cdot\|_1$ and recovers the trace for self-adjoint operators.
	
	\item The \textit{Hilbert-Schmidt (or Frobenius)} norm is equivalent to $\|\cdot\|_2$. This norm is induced by the inner product $\langle M, N\rangle = \tr(M^\dagger N)$, hence $\Lin(A, B)$ equipped with this norm is a Hilbert space.
	
	\item The \textit{operator} norm is equivalent to $\|\cdot\|_\infty$, defined as
	\begin{align}
		\|M\|_\infty &= \sup_{v\in \Hil_A} \frac{\|Mv\|_2}{\|v\|_2}&&\left[\|v\|_2 = \sqrt{\langle v, v\rangle}\right].
	\end{align}
\end{itemize}
\section{Quantum mechanics and information}\label{sec:quantum-mechanics-and-information}
The following follows the material presented in~\cite{NC}.

\subsection{The state of a quantum system}
A quantum system can be any physical system, but generally we only consider systems where quantum effects are significant, such as an electron, a molecule or the crystalline structure of a material like doped silicon. A quantum system $A$ is associated with a separable\footnote{Having a countable orthonormal basis.} Hilbert space $\Hil_A$ over $\C$. Here, we consider only finite Hilbert spaces equipped with an inner product \begin{alignat}{4}
	\langle \cdot,\cdot\rangle&:&\;\Hil_A\times \Hil_A&&\; &\longrightarrow &\;&\C\\
	&:&\;\langle u,v\rangle &&\;&\longmapsto&\;&\sum_{i} \overline{u}_i v_i,\label{IP}
\end{alignat}
where $u_i$ and $v_i$ are respectively the $i$th elements of $u$ and $v$. The possible states of the system $A$ are described by the elements of $\Hil_A$.

We use bra-ket notation to represent quantum states: for $\psi\in\Hil_A$, its `ket' is written $\ket{\psi}$ and it can be considered as a column vector such that for an operator $M\in \Lin(A,B)$,
\begin{equation}
	M\ket{\psi} = M\psi.
\end{equation}
From this point we use the shorthand $\ket{\psi} = \psi \in \Hil_A$.
Considering $\ket{\psi}$ as a matrix, we define the `bra' of $\psi$ as
$
	\bra{\psi} = \ket{\psi}^\dagger
$.

We find this definition notationally convenient as, considering Eq.~\eqref{IP}, we may write
$
{\braket{\varphi}{\psi} = \langle\varphi,\psi\rangle}
$
and
\begin{equation}
	\langle \varphi, A\psi\rangle = \braket{\varphi}{A\psi} = \braket{A^\dagger\varphi }{\psi} = \bra{\varphi}A\ket{\psi}.
\end{equation}
\subsection{State vectors and probability density matrices}

A finite dimensional state vector can be represented by a complex-valued ket $\ket{\psi}$, for which
$
	\braket{\psi}{\psi} = 1.
$
This is to say that in an orthonormal basis $\{\ket{e_i}\}_i$ where
$
	\ket{\psi} = \sum_i a_i\ket{e_i}
$,
the square sum of the moduli of the complex coefficients,
$
	\sum_i |a_i|^2  = 1
$.

The probability density matrix (or probability operator) $\rho$ of a quantum system is determined by an ensemble of pure states, $\{p_i, \ket{\psi_i}\}_i$. We define
\begin{equation}
	\rho = \sum_i p_i\ket{\psi_i}\bra{\psi_i},
\end{equation}
where $\rho$ is a density operator and as such we may express its spectral decomposition~(see section \ref{sec:SD}) in terms of bra-ket notation:
\begin{equation}\label{SpD}
	\rho = \sum_j \lambda_j\ket{j}\bra{j},
\end{equation}
where the eigenvectors form an orthonormal eigenbasis and $\sum_j\lambda_j = 1$.

A system is considered to be in a pure state when its probability density matrix can be represented by a single state vector, i.e. $\rho = \ket{\psi}\bra{\psi}$. This reflects that we have complete knowledge of the state of the system. Note we must still consider superposition and the underlying uncertainty when measuring a pure state in a particular basis. A system that is not in a pure state is said to be in a mixed state.

If $\rho$ has the spectral decomposition as in Eq.~\eqref{SpD} then
\begin{align}
	\rho^2               & = \sum_{i,j} \lambda_i\lambda_j \ket{i}\braket{i}{j}\bra{j} \\
	                     & =\sum_j \lambda_j^2\ket{j}\bra{j}                           \\
	\implies \tr(\rho^2) & = \sum \lambda_j^2.
\end{align}
For all $j$, $0\leq\lambda_j\leq1$, so $\lambda_j^2\leq \lambda_j$, with equality if and only if $\lambda_j$ is the only eigenvalue, i.e. equality if and only if $\rho$ is a pure state.

This gives us the criterion:
\begin{equation}\label{purity}
	\tr(\rho^2)\leq 1,
\end{equation}
with equality if and only if $\rho$ is a pure state.
\subsection{Composite systems}
A composite quantum system is one that is made up of distinct subsystems. We write that $\rho_{AB} \in \Den(AB)$ is a probability density matrix of the composite system of $A$ and $B$. To recover the state on one of the subsystems we `trace out' the other subsystem with the \textit{partial trace}: given a basis $\{\ket{e_i}\}_i \subset \Hil_B$ and $M\in \Lin(AB)$, we define the marginal of $M$ in $A$:
\begin{equation}
	\tr_B(M) = \sum_i\bra{e_i} M\ket{e_i} = M_A \in \Lin(A).
\end{equation}

Given $\rho_{AB}\in \Den(AB)$, the \textit{reduced state} $\rho_A\in \Den(A)$ is the marginal of $\rho_{AB}$ in $A$, i.e.
$
	\tr_B(\rho_{AB}) =\rho_A
$.
We can consider the reduced state as a description of the state on a subsystem, at least when it comes to making measurements only on this subsystem~\citep{NC}. The state of a composite system may or may not be entangled. A pure state on a composite system can be decomposed as $\rho_{AB} = \rho_A\otimes\rho_B$ if and only if no entanglement exists between the subsystems.

\subsubsection{Classical-quantum states}\label{sec:classical-quantum-states}
It is possible to define joint classical-quantum systems, required in the case of measurement on a single subsystem~(see next section). In order to consider classical subsystems in the context of density operators we introduce an auxiliary Hilbert space $\Hil_X$ for the random variable $X$, with orthonormal basis $\{\ket{x}\}_x$ that acts as a `classical register'~\citep{coles17}. Accordingly, we have
\begin{equation}
	\rho_{XB} = \sum_x p(x)\ketbra{x}{x}\otimes \rho^x_B,
\end{equation}
where $\rho^x_B$ is the quantum state of the system $B$ conditioned on $X=x$, i.e. we have
\begin{equation}
	\tr_X\rho_{XB} = \sum_x p(x)\rho^x_B = \rho_B.
\end{equation}
\subsubsection{Observations about pure states}
Given a pure state is represented by a vector in its respective Hilbert space, we can use the Schmidt decomposition to derive some useful equivalences:

\begin{proposition}
	Given the pure state $\rho_{AB} = \ketbra{\varphi}{\varphi}\in\Den(AB)$ and the Schmidt decomposition ${\ket{\varphi} = \sum_i r_i \ket{i}_A\otimes \ket{i}_B}$, if we let $X\in \Lin(A,B)$ such that $X = \sum_i r_i \ket{i}_B \bra{i}_A$ then
	\begin{equation}
		X^\dagger X = \rho_A\quad\text{and}\quad X X^\dagger = \rho_B.
	\end{equation}
\end{proposition}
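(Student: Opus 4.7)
The plan is to carry out a direct calculation using the Schmidt decomposition and the definition of the partial trace, as both claims are essentially bookkeeping once $X^\dagger$ is written in the right form. First I would compute the adjoint of $X$: since $X = \sum_i r_i \ket{i}_B\bra{i}_A$ and each $r_i$ is real, $X^\dagger = \sum_i r_i \ket{i}_A\bra{i}_B$.

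Next I would form $X^\dagger X$ and exploit the orthonormality of the Schmidt basis $\{\ket{i}_B\}_i$ on the $B$ subsystem:
\begin{equation}
X^\dagger X = \sum_{i,j} r_i r_j \ket{i}_A \braket{i}{j}_B \bra{j}_A = \sum_i r_i^2 \ket{i}_A\bra{i}_A.
\end{equation}
In parallel, I would compute $\rho_A = \tr_B(\ketbra{\varphi}{\varphi})$ directly from the Schmidt expansion,
\begin{equation}
\rho_A = \sum_{i,j} r_i r_j \bigl(\ket{i}_A\bra{j}_A\bigr) \bigl\langle j|i\bigr\rangle_B = \sum_i r_i^2 \ket{i}_A\bra{i}_A,
\end{equation}
matching the expression obtained for $X^\dagger X$.

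For the second identity, the argument is entirely symmetric: computing $XX^\dagger$ collapses the inner product on the $A$ factor and leaves $\sum_i r_i^2 \ket{i}_B\bra{i}_B$, which coincides with the partial trace $\tr_A(\ketbra{\varphi}{\varphi})$ by the same calculation with $A$ and $B$ swapped. There is no real obstacle here — the only point to be careful about is keeping track of the tensor-factor subscripts on the bras and kets when contracting the Schmidt basis, and noting that the Schmidt coefficients $\{r_i\}_i$ are real and non-negative (Theorem~\ref{SchmidtD}) so that the adjoint of $X$ produces no conjugation issues.
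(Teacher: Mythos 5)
Your proof is correct and follows essentially the same route as the paper: a direct computation of the operator product using orthonormality of the Schmidt bases, compared against the partial trace of $\ketbra{\varphi}{\varphi}$, with the remaining identity obtained by a symmetric argument (the paper happens to work out the $\rho_B = XX^\dagger$ case explicitly, whereas you work out $\rho_A = X^\dagger X$, but the content is identical).
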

\begin{proof}
	We prove the first statement explicitly, then the second follows from a symmetric argument.
	
	First note that since $\{\ket{i}_A \}_i$ is an orthonormal basis for $\Hil_{A}$ we may write the partial trace over $A$ as
	\begin{align}
		\rho_B &= \tr_A(\rho_{AB})\\ &= \sum_k \bra{k}_A\sum_{i,j}r_ir_j \left(\ket{i}_A\otimes \ket{i}_B\right) \left(\bra{j}_A\otimes \bra{j}_B\right)\ket{k}_A\\
		&=\sum_{i,j,k}r_ir_j \left(\braket{k}{i}_A\otimes \ket{i}_B\right) \left(\braket{j}{k}_A\otimes \bra{j}_B\right)\\
		&=\sum_i r_i^2 \ket{i}_B\bra{i}_B.
	\end{align}
	Moreover,
	\begin{align}
		XX^\dagger &= \sum_{i,j}r_ir_j \ket{i}_B \braket{i}{j}_A \bra{j}_B\\
		&= \sum_{i}r_i^2 \ket{i}_B\bra{i}_B.
	\end{align}
\end{proof}

\begin{proposition}\label{traceFree}
	Given a pure state $\ket{\varphi} \in \Hil_{ABC}$ with Schmidt decompositions
	\begin{equation}
		\ket{\varphi} = \sum_i r_i \ket{i}_{A}\otimes \ket{i}_{BC} = \sum_i s_i \ket{i}_{AB}\otimes \ket{i}_{C},
	\end{equation}
	we have, for $K_A \in \Lin(A)$ and $L_C \in \Lin(C)$,
	\begin{equation}
		\left\|L_C \sum_ir_i \ket{i}_{BC}\bra{i}_A K_A\right\|_2 = \left\|L_C \sum_is_i \ket{i}_{C}\bra{i}_{AB} K_A\right\|_2.
	\end{equation}
\end{proposition}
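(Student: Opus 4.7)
The plan is to observe that both sides can be written as the Hilbert--Schmidt norm of the correspondence $\Op$ applied to the \emph{same} vector in $\Hil_{ABC}$, up to a harmless reordering of tensor factors. This reduces the claim to the fact that $\Op$ is an isometry, which is immediate because $\Op$ maps basis tensors $\ket{e_i}\otimes\ket{f_j}$ to the operators $\ket{f_j}\bra{e_i}$, orthonormal under the Hilbert--Schmidt inner product.

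First I would recognise the two operators appearing on the left- and right-hand sides as images of the pure state under $\Op$:
\begin{equation*}
X = \sum_i r_i \ket{i}_{BC}\bra{i}_A = \Op_{A\to BC}(\ket\varphi), \qquad Y = \sum_i s_i \ket{i}_{C}\bra{i}_{AB} = \Op_{AB\to C}(\ket\varphi),
\end{equation*}
which is immediate from the given Schmidt decompositions. Next I would verify two intertwining identities: for any $N\in\Lin(A)$, $M\in\Lin(B)$ and $\ket\chi\in\Hil_{AB}$,
\begin{align*}
M\,\Op_{A\to B}(\ket\chi) &= \Op_{A\to B}\bigl((\id_A\otimes M)\ket\chi\bigr), \\
\Op_{A\to B}(\ket\chi)\,N &= \Op_{A\to B}\bigl((N^\top\otimes \id_B)\ket\chi\bigr),
\end{align*}
where $N^\top$ denotes the transpose in the basis fixing $\Op$. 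Both relations follow by evaluating on the basis tensors $\ket{e_i}\otimes\ket{f_j}$.

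Combining these identities yields
\begin{align*}
L_C X K_A &= \Op_{A\to BC}\bigl((K_A^\top\otimes L_C\otimes \id_B)\ket\varphi\bigr), \\
L_C Y K_A &= \Op_{AB\to C}\bigl((K_A^\top\otimes \id_B\otimes L_C)\ket\varphi\bigr).
\end{align*}
Taking $\|\cdot\|_2$ on both sides and invoking the isometry of $\Op$, each norm reduces to $\|(K_A^\top\otimes \id_B\otimes L_C)\ket\varphi\|_2$: the two vector arguments differ only in the placement of the local identity $\id_B$ relative to $L_C$ on the remaining two factors, so they represent the same operator acting on $\ket\varphi$. Hence the two norms coincide.

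The main obstacle is the transpose bookkeeping in the second intertwining relation, which arises because $\Op$ is $\C$-linear in both arguments rather than antilinear in the ``bra''-ified one. The operator $N^\top$ is basis-dependent, but this is harmless here: the same copy of $K_A^\top$ appears on both sides of the comparison, so any basis-dependence cancels in the final equality.
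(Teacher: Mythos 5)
Your strategy (operator--vector correspondence plus its isometry, i.e.\ the paper's Appendix Lemmas on $\Op$) is a reasonable alternative to the paper's proof, which simply expands $\left\|L_C\sum_i r_i\ket{i}_{BC}\bra{i}_A K_A\right\|_2^2$ as a trace and shows both sides equal $\bra{\varphi}\left(K_AK_A^\dagger\otimes\id_B\otimes L_C^\dagger L_C\right)\ket{\varphi}$. However, your final step has a genuine gap, and it sits exactly in the transpose bookkeeping you dismiss as harmless. For the identifications $X=\Op_{A\to BC}(\ket{\varphi})$ and $Y=\Op_{AB\to C}(\ket{\varphi})$ to be ``immediate from the Schmidt decompositions,'' the two correspondences must be defined with respect to the respective Schmidt bases: $\{\ket{i}_A\},\{\ket{i}_{BC}\}$ for the first and $\{\ket{i}_{AB}\},\{\ket{i}_C\}$ for the second. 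Consequently, the transpose produced by the intertwining identity on the left-hand side is $K_A^{\top}$ taken in the $A$-Schmidt basis, while on the right-hand side it is $(K_A\otimes\id_B)^{\top}$ taken in the $AB$-Schmidt basis, which is generically an \emph{entangled} basis; there the conjugation does not factorise, and $(K_A\otimes\id_B)^{\top}$ is in general \emph{not} $K_A^{\top}\otimes\id_B$ for the same $K_A^{\top}$ (e.g.\ transposition in a Bell-type basis turns $K_A\otimes\id_B$ into $\sigma_y K_A^{\top}\sigma_y\otimes\id_B$). So the two vectors you feed into the isometry are genuinely different vectors, and the claim that ``the same copy of $K_A^\top$ appears on both sides, so any basis-dependence cancels'' is false at the vector level. (If instead you fix one global product basis so that the transposes agree, then the first step breaks: $\Op_{A\to BC}(\ket{\varphi})$ equals $X$ only up to a basis-dependent conjugation of the Schmidt vectors, i.e.\ $XV$ for a unitary $V$ that does not commute past $K_A$.)

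The conclusion you want is still true, but establishing it from your reduction requires an additional computation that is essentially the paper's: square both sides and show each equals $\bra{\varphi}\left(K_AK_A^\dagger\otimes\id_B\otimes L_C^\dagger L_C\right)\ket{\varphi}$, using that $(K^{\top})^\dagger K^{\top}=\overline{KK^\dagger}$ and that $\rho_A$ (resp.\ $\rho_{AB}$) is real diagonal in its Schmidt basis, so the extra complex conjugations do not change the (real) expectation value. Without an argument of this kind, the last line of your proposal does not follow.
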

This implies that we are free to choose any valid Schmidt decomposition in the product of an argument of the Schatten $2$-norm~--~effectively `shifting' a subsystem to the other side~--~given that subsystem is not otherwise represented in the product.
\begin{proof}
	We first rewrite the left-hand side as a trace:
	\begin{align}
		\left\|L_C \sum_ir_i \ket{i}_{BC}\bra{i}_A K_A\right\|_2^2&=\tr\left(K_A^\dagger \sum_ir_i \ket{i}_{A}\bra{i}_{BC}L_C^\dagger L_C \sum_jr_j \ket{j}_{BC}\bra{j}_A K_A \right)\\
		&=\tr\left(\sum_{i, j}r_ir_jK_A^\dagger  \ket{i}_{A}\bra{i}_{BC}L_C^\dagger L_C  \ket{j}_{BC}\bra{j}_A K_A \right)\\
		&=\sum_{i, j,k}\bra{k}_Ar_ir_jK_A^\dagger  \ket{i}_{A}\bra{i}_{BC}L_C^\dagger L_C  \ket{j}_{BC}\bra{j}_A K_A\ket{k}_A\\
		&=\sum_{i, j,k}r_ir_j\bra{i}_AK_A  \ket{k}_{A}\bra{k}_A K_A^\dagger\ket{j}_A\bra{i}_{BC}L_C^\dagger L_C  \ket{j}_{BC}\\
		&=\sum_{i, j}r_ir_j\bra{i}_A\otimes\bra{i}_{BC}\left(K_AK_A^\dagger\otimes L_C^\dagger L_C\right) \ket{j}_A \otimes\ket{j}_{BC}\\
		&=\bra{\varphi}\left(K_AK_A^\dagger\otimes L_C^\dagger L_C\right)\ket{\varphi}\\
		&= \sum_{i, j}s_is_j\bra{i}_{AB}\otimes\bra{i}_{C}\left(K_AK_A^\dagger\otimes L_C^\dagger L_C\right) \ket{j}_{AB} \otimes\ket{j}_{C}.
	\end{align}
	From this point, reversing the process obtains the right-hand side.
\end{proof}

\subsection{Measurement}\label{sec:measurement}
Uncertainty relations are a characterisation of how much information we can extract from a system by measurement, hence we require a formal understanding of what measurement means in the context of quantum information theory.

Although an interesting avenue for further research, this thesis does not treat general projector operator-valued measurements~--~or POVMs, instead we cover the simpler situation of measurements in orthonormal bases, or ONBs for short.

An ONB $\mathbb{X}$ can be represented by a set of rank 1 projectors:
$
\mathbb{X} = \{\ket{x}\bra{x}\}
$,
whose spans are mutually orthogonal.

In a real physical situation, if a state is measured in this basis the system will collapse to one of the pure states described by the basis vectors. The probability of observing the system in that basis state can be determined by acting on the state with its associated measurement map:
\begin{alignat}{4}
	\mathcal{M}_X( \cdot)&:&\;\Lin(A)&&\; &\longrightarrow &\;&\Lin(X)\\
	&:&\;\rho_A &&\;&\longmapsto&\;&\sum_{x}\ketbra{x}{x}\rho_A\ketbra{x}{x} = \sum_{x}\bra{x}\rho_A\ket{x}\ketbra{x}{x},
\end{alignat}
$\mathcal{M}_X\in \CPTP(A,X)$ and we often denote $\mathcal{M}_X(\rho_A) = \rho_X$.

By the Born Rule, when a pure state $\ket{\psi}$ is measured in an ONB with basis vectors $\left\{\ket{x_i}\right\}_i$, the square modulus of the coefficients $\lambda_x$ of each eigenvector can be interpreted as the probability $p(k)$ that the system will be observed in the state $\ket{x_k}$, i.e.
\begin{align}
	\braket{\psi}{x_k}\braket{x_k}{\psi} & = \sum_i|\lambda_{x_i}|^2 \braket{x_i}{x_k}\braket{x_k}{x_i} \\
	& =\sum_i|\lambda_{x_i}|^2 \delta_{ik}                         \\
	& =|\lambda_k|^2=p(k).
\end{align}
For a mixed state the probability is instead the weighted sum of the coefficients for each pure state in the ensemble.

We may also formalise measurement maps on only part of a multipartite system. In this case we consider $\mathcal{M}^*_X \in \CPTP(AB,XB)$ noting that even though it is defined on the larger space it only acts on the relevant subsystem. With this in mind, the expansion to any additional subsystem is assumed in a given context and we denote $\mathcal{M}_X(\rho_{AB}) = \rho_{XB}$, where it does not cause ambiguity.

The probability that measuring $\rho$ in $\mathbb{X}$ will have the outcome $x$, i.e. the sum of the joint probabilities that $\rho$ is in the state $\ket{\psi_i}$ and that it collapses to $\ket{x}$, induces the random variable $X$ with probability mass function:
$
	\bra{x}\rho\ket{x} = p(x)
$,
 given by the eigenvalues of $\mathcal{M}_X(\rho)$.

\section{Entropy}\label{sec:entropy}
Entropy itself is, in broad terms, a measurement of chaos. Systems that are well structured and predictable have low entropy~--~consider a closed container of liquid water and air~--~whereas high entropy systems have little structure and are hard to predict, such as the same container holding a homogenous mixture of water vapour and air. Basically, a system that is more ``mixed up" has more entropy.

From an information theoretic perspective, entropy is instead viewed in terms of random variables. A random variable with a uniform distribution~--~where each outcome is equally likely~--~has high entropy and one with a degenerate distribution~--~where there is only one possible outcome~--~has low entropy. In terms of the above example, we can consider the distribution of whether or not a water molecule will be observed at a particular position. In the low entropy, liquid state we would be guaranteed one way or the other depending on which side of the surface we measure, but in the high entropy, gaseous state we would be much less sure of the outcome. This statistical interpretation is well formalised by the Shannon entropy.

\subsection{Classical Entropy}
The Shannon entropy $H(X)$ of a random variable $X$ is the most common, and arguably the most useful, definition of entropy. It satisfies the following postulates, desirable for a measurement of uncertainty~\citep{Ren}: If $n=|X|$, then
\begin{enumerate}[(a)]
	\item $H(X) = H(p(x_1), p(x_2), ..., p(x_n))$ \textit{is symmetric},
	\item $H(p(x), 1-p(x))$ \textit{is continuous for} $0<p(x)\leq 1$,
	\item $H(\underbrace{1/n, 1/n,...,1/n}_\text{$n$ times}) = \log_2n$,
	\item $H(tp(x_1),(1-t)p(x_1), p(x_2), ..., p(x_n)) = H(p(x_1), p(x_2), ... ,p(x_n))+p(x_1)H(t, 1-t)$,\\\textit{for} $0\leq t\leq 1$.
\end{enumerate}
\subsubsection{Shannon entropy and surprisal}
To determine a rigorous definition of entropy we first consider surprisal \cite[Section III.A.1]{coles17}.

Let's restrict our focus to a single outcome. The amount of information gained from observing this outcome is related to its probability~--~we gain more information when less likely outcomes occur, i.e. we can consider surprisal to quantify how surprised one would be if the outcome occurred. Additionally, we want the information gained from two independent outcomes to be the sum of the information gained from each outcome, i.e. we want a function $f(P(X=x)) = f(x)$ such that
\begin{align}
	p(x_1)<p(x_2)\implies f(x_1)>f(x_2)
	\text{ and }
	f(x,y) =f(p(x)p(y))= f(x)+f(y).
\end{align}
Any function of the form $-\lambda\log_2 p(x)$, $\lambda\in \R^+$ satisfies these conditions, and we take $\lambda =1$ for simplicity~\citep{shannon48}, giving us the definition $f(x) := -\log_2 p(x)$. [cite]

Since we want to know the expected amount of information gained from observing the random variable we take the average of the surprisal to obtain the Shannon entropy:
\begin{equation}\label{shannon}
	H(X) = -\sum_x p(x)\log p(x).
\end{equation}
This satisfies the postulates (a)-(d) and is an adequate if not preferable measurement of uncertainty in many situations.

For any random variable $X$ its Shannon entropy is bounded:
$
	0\leq H(X)\leq \log |X|
$,
with equality on the left if and only if the distribution of $X$ is degenerate and equality on the right if and only if the distribution of $X$ is uniform on its support.
\subsubsection{Joint entropy}
Joint entropy is the entropy of the random variable described by the joint probability of two or more random variables. Given two random variables $X$ and $Y$, their joint entropy is defined
\begin{equation}
	H(XY) = -\sum_{x,y}p(xy) \log p(xy).
\end{equation}
By Bayes' Theorem~(see Eq.~\eqref{Bayes}) and Jensen's inequality, we have
\begin{equation}\label{jointineq}
	H(XY) \leq H(X)+H(Y),
\end{equation}
with equality if and only if $X$ and $Y$ are independent.
\subsubsection{Conditional entropy}
The conditional entropy of $X$ given $Y$ is average the amount of information left to gain about $X$ once $Y$ has been observed. This is the remaining uncertainty of $XY$ when given access to the side information provided by $Y$ about $X$. As such
\begin{equation}\label{condentdef}
	H(X|Y) = H(XY) - H(Y).
\end{equation}
From Eq.~\eqref{jointineq} we can see that
\begin{equation}\label{condineq}
	H(X|Y)\leq H(X).
\end{equation}
The conditional entropy is equivalently defined \citep{CT}
\begin{align}
	H(X|Y) & = \sum_y p(y)H(X|Y=y)              \\
	       & =-\sum_{x,y} p(y)p(x|y)\log p(x|y)\\
	       &= -\sum_{x,y}p(xy) \log \frac{p(xy)}{p(y)}.
\end{align}

Note that Eq.~\eqref{condentdef} is the bivariate form of the more general multivariate chain rule:
\begin{equation}
	H(X|YZ) = H(XY|Z) + H(Y|Z).
\end{equation}

\subsubsection{Mutual information}
Mutual information, on the other hand, is the amount of information shared between random variables, quantifying how well they are correlated. This notion is conveyed through its definition in terms of the decomposition rules:
\begin{align}
	I(X:Y) &= H(X)+ H(Y) - H(XY)\\ &=H(XY) - H(X|Y) - H(Y|X)\\ &= H(X)-H(X|Y) = H(Y)-H(Y|X).\label{CDecomp}
\end{align}
From Eqs.~\eqref{condineq} and~\eqref{CDecomp} we can see
$
	I(X:Y) \geq 0
$.

\subsubsection{Relative entropy}\label{sec:CRE}
Relative entropy, also know as Kullback-Leibler divergence, is a quantity that can be used to measure the closeness of two distributions over the same index set in terms of their entropies.

For $X_1$ and $X_2$, $P(X_1=x) = p_1(x), \, P(X_2=x) = p_2(x)$ we define
\begin{align}
	D(X_1\|X_2) = \sum_x p_1(x)\log\frac{p_1(x)}{p_2(x)}.
\end{align}

In the case that $p_2(x)$ does not dominate $p_1(x)$, i.e. there exists a $x$ in the index set such that $p_1(x)>0$ and $p_2(x) = 0$, then $D(X_1\|X_2)=\infty$.

We can express classical entropies in terms of this quantity:
\begin{align}
	H(X) &= \log|X|-D\left(p(x)\left\|\frac{\{1\}_x}{|X|}\right.\right)\\
	&=-\sum_{x}p(x)\log {p(x)},\\\nonumber\\
	H(X|Y) &= -D(p(xy)\|p(y))\\
	&= -\sum_{x,y}p(xy)\log \frac{p(xy)}{p(y)}\\
	&= \sum_{y}p(y)\log {p(y)} - \sum_{x,y}p(xy)\log {p(xy)}\\
	&= H(XY) - H(Y),\\\nonumber\\
	I(X:Y) & = D(p(xy)\|p(x)p(y))                                                     \\
	& =\sum_{x,y}p(xy)\log \frac{p(xy)}{p(x)p(y)}                              \\
	& =\sum_{x,y}p(xy)\log p(xy) - \sum_{x}p(x)\log p(x)-\sum_{y}p(y)\log p(y) \\
	& =H(X)+H(Y)-H(XY)\\
	&= H(X) - H(X|Y).
\end{align}

\subsection{Quantum entropies}
We want to find applications of entropy in quantum information, so here we discuss the quantum analogues of the classical entropies covered in the previous sections. But first, a run-down of the concepts and notation.

\subsubsection{Von Neumann entropy}\label{sec:VNE}
Related to the Shannon entropy is its quantum analogue, von Neumann entropy. Instead of taking a random variable as input, the von Neumann entropy takes the probability density matrix $\rho$ describing the state of a quantum system. It is defined similarly to Shannon entropy:
\begin{equation}\label{VNE}
	H(\rho) = -\tr(\rho\log \rho).
\end{equation}
Fortunately, when $\rho$ is expressed in terms of its spectral decomposition~(see Eq.~\eqref{SpD}), the above simplifies to the Shannon entropy of the random variable with probability mass function given by the eigenvalues of $\rho$. i.e.
$
	H(\rho) = -\sum_j\lambda_j\log \lambda_j
$.
We can use the von Neumann entropy as a stepping off point to derive the quantum generalisations of the classical entropic quantities via the intuitive chain and decomposition rules. Accordingly, we have the quantum joint entropy:
\begin{equation}
	H(\rho_{AB}) = -\tr(\rho_{AB}\log\rho_{AB}),
\end{equation}
the quantum conditional entropy:
\begin{equation}
	H(A|B) = H(\rho_{AB}) - H(\rho_B)
\end{equation}
and the quantum mutual information:
\begin{equation}
	I(A:B) = H(\rho_A) + H(\rho_B) - H(\rho_{AB}).
\end{equation}
Similar to the classical entropies, these relationships coincide with the definitions established via the quantum relative entropy~\citep{umegaki62}
\begin{equation}
	D(\rho\|\sigma) = \tr\left[\rho(\log\rho - \log\sigma)\right].
\end{equation}
Analogous to the classical case, if $\sigma \not\gg \rho$ then $D(\rho\|\sigma)=\infty$.

We have for $\rho_{AB}\in \Den(AB)$:
\begin{align}
	H(\rho_{AB})  & = \log|\supp(\rho_{AB})|-D\left(\rho_{AB}\left\|\frac{\id_{AB}}{|\supp(\rho_{AB})|}\right.\right),                                                       \\\nonumber\\
	H(A|B)_\rho & = -D(\rho_{AB}\|\id_A\otimes \rho_B)                                             \\
	            & =-\tr(\rho_{AB}\log\rho_{AB})+\tr(\rho_B\log\rho_B)                              \\
	            & =H(\rho_{AB}) - H(\rho_{B}),                                                         \\\nonumber\\
	I(A:B)_\rho & = D(\rho_{AB}\|\rho_A\otimes\rho_B)                                              \\
	            & =\tr(\rho_{AB}\log\rho_{AB})-\tr(\rho_{A}\log\rho_{A})-\tr(\rho_{B}\log\rho_{B}) \\
	            & =H(\rho_A) + H(\rho_B) - H(\rho_{AB})                                              \\
	            & =H(\rho_A) - H(A|B)_\rho.\label{VNDecomp}
\end{align}

In fact we obtain equivalent definitions for the conditional entropy and mutual information by minimising over one or both subsystems. As this is shown for conditional entropy in~\cite{lennert13}, we provide a similar demonstration for mutual information: if we consider the positive-definiteness of the relative entropy due to Klein's inequality~\citep{Klein}, and observe that when $\sigma_B = \rho_B$ then $D(\rho_B\|\sigma_B) = 0$, we can write
\begin{align}
	I(A:B)_\rho & = D(\rho_{AB}\|\rho_A\otimes\rho_B) + \inf_{\sigma_B \in \Den(B)}D(\rho_B\|\sigma_B)                                                        \\
	            & =\tr(\rho_{AB}\log\rho_{AB} -\rho_{AB}\log(\rho_A\otimes\rho_B)) +\tr(\rho_B\log\rho_B) -\inf_{\sigma_B \in \Den(B)}\tr(\rho_B\log\sigma_B) \\
	            & = \inf_{\sigma_B \in \Den(B)}\tr(\rho_{AB}\log\rho_{AB} -\rho_{AB}\log(\rho_A\otimes\sigma_B))                                              \\
	            & =\inf_{\sigma_B \in \Den(B)}D(\rho_{AB}\|\rho_A\otimes\sigma_B).
\end{align}
A similar calculation can be used to show that the equivalence also holds when minimised over both subsystems, i.e.
\begin{equation}
	I(A:B)_\rho = \inf_{\substack{\sigma_A \in \Den(A)\\\sigma_B \in \Den(B)}}D(\rho_{AB}\|\sigma_A\otimes\sigma_B).
\end{equation}

So far we have been discussing these relative entropy or divergence quantities only in terms of Shannon or von Neumann entropies. It is important to point out that this framework allows for a relatively clear and simple definition of all these quantities but in the case of R\'enyi entropies things are not so straight-forward. Nonetheless, armed with these definitions it is much easier to discuss how to extend this structure to R\'enyi entropies.

\subsection{R\'enyi entropy}
\label{sec:relent}
R\'enyi entropies were first proposed in~\cite{Ren} as an alternative definition of entropy, having most of the desired properties as well as allowing the consideration of information of different order.

The R\'enyi entropy of order $\alpha\in (0,1) \cup (1, \infty)$ of a classical random variable $X$, with probability mass function $p(x)$, is defined as
\begin{equation}
	H_\alpha(X) = \frac{1}{1-\alpha}\log\left(\sum_x p(x)^\alpha\right).
\end{equation}
It generalises the Shannon entropy and serves to weigh outcomes with more or less likelihood differently depending on the order $\alpha$. The Shannon entropy is recovered in the limit $\alpha\rightarrow 1$ (for the proof see Appendix~\ref{sec:alpha1}). R\'enyi's original derivation is beyond the scope of this thesis, but it is constructive to note that in his formalism the Shannon entropy corresponds to a linear case of a more general exponential function where the order $\alpha$ acts as the exponent.

\subsubsection{R\'enyi entropic quantities via relative entropy}
There have been various quantities with which we could derive definitions of R\'enyi conditional entropy and R\'enyi mutual information, the foremost being classical R\'enyi divergence (R\'enyi relative entropy)~\citep{Ren}: given two random variables $X_1$ and $X_2$ indexed on the same set $\{x\}$ with probability mass functions $p_1(x)$ and $p_2(x)$ respectively, the R\'enyi divergence is defined 
\begin{equation}\label{Rdiv}
	D_\alpha(X_1\|X_2) = \frac{1}{\alpha-1} \log \sum_x \frac{p_1(x)^\alpha}{p_2(x)^{\alpha-1}}.
\end{equation}

Some specific examples of classical divergences belonging to this family of divergences are
\begin{align}
	D_0(X_1\|X_2)      & = -\log \sum_x p_1(x)^0 p_2(x) = -\log \sum_{\{x:p_1(x)>0\}} p_2(x),                      \\
	D_{1/2}(X_1\|X_2)  & = -2\sum_{x}\log \sqrt{p_1(x)p_2(x)} \\&= 2D_B(X_1,X_2)\quad\textit{(twice the Bhattacharyya distance~\cite{bhattacharyya})},    \\
	D_1(X_1\|X_2)      & := D(X_1\|X_2),                                                                       \\
	D_2(X_1\|X_2)      & = \log\sum_x p_1(x)\frac{p_1(x)}{p_2(x)} = \log\left\langle\frac{p_1(x)}{p_2(x)}\right\rangle_{X_1} ,   \\
	D_\infty(X_1\|X_2) & := \lim_{\alpha\rightarrow \infty}D_\alpha(X_1\|X_2) = \log \sup_x \frac{p_1(x)}{p_2(x)}.
\end{align}

We would expect the R\'enyi divergence to induce definitions of the the R\'enyi entropic quantities in an analogous way to the definition of classical entropies via relative entropy in Section~\ref{sec:CRE}. That is to say, the R\'enyi conditional entropy and mutual information would be defined
\begin{align}
	H_\alpha(X|Y) = -D_\alpha\left(p_{XY}\|p_Y\right)\quad\text{and}\quad
	I_\alpha(X:Y) = D_\alpha\left(p_{XY}\|p_Xp_Y\right).
\end{align}
Since, in the limit $\alpha \rightarrow 1$, Eq.~\eqref{Rdiv} produces the relative entropy, we expect the above to produce the Shannon and von Neumann quantities in the same limit.

There are some alternate propositions for conditional R\'enyi entropy from which we could also derive a definition of mutual information.
\subsubsection{Alternate definitions}
The main alternative to defining mutual information directly with divergence is to define it via the conditional entropy. There are some definitions which have found utility, derived from analogous definitions based on Shannon entropy or generalisations of other related quantities~\citep{IS}. For example:
\begin{align}
	H_\alpha^\textsc{C}(X|Y)  & = \sum_y p(y)H_\alpha(X|Y=y)   &&     \text{\citep{cachin1997entropy},}               \\
	H_\alpha^\textsc{JA}(X|Y) & = H_\alpha(XY) - H_\alpha(Y)    &&  \text{\citep{jizba04a,jizba04b},}            \\
	H_\alpha^\textsc{RW}(X|Y) & = \frac{1}{1-\alpha} \max_y \log\sum p(x|y)^\alpha&&\text{\citep{renner+wolf}.}
\end{align}
Our classical R\'enyi mutual information could then be defined
\begin{align}
	I^*_\alpha(X:Y) & = H_\alpha(X)+H_\alpha(Y) - H_\alpha(XY) \\
	&=H_\alpha(X)-H^*_\alpha(X|Y)=H_\alpha(Y)-H^*_\alpha(Y|X).
\end{align}
On closer examination the above does not hold in general. Although similar or related techniques to derive a R\'enyi measure of information may find use in specific applications, this method is not fruitful for our purposes.
\subsection{Quantum R\'enyi entropy}\label{sec:QRE}
The quantum R\'enyi entropy is a quantum generalisation of the R\'enyi entropy and is derived in an analogous way to von Neumann entropy: for a probability density matrix $\rho$ and $\alpha \in (0, 1) \cup (1, \infty)$, we define
\begin{equation}
	H_\alpha(\rho) = \frac{1}{1-\alpha}\log\tr(\rho^\alpha).
\end{equation}
There are some particular choices of $\alpha$ which are either mathematically convenient or reflect specific physical situations.

When $\alpha\rightarrow1$ we recover the von Neumann entropy:
$
	H_1(\rho) := H(\rho)
$. When $\alpha = 2$ we have the `collision' entropy: $
H_2(\rho) = -\log\tr(\rho^2)
$, which characterises the purity of a quantum system~(see Eq.~\eqref{purity}). The max-entropy could be naturally defined for $\alpha \rightarrow 0$ but, due to some mathematical restrictions, we instead use $\alpha = \frac{1}{2}$~(see, for example, Eq.~\eqref{RMU}). We have
\begin{align}
	H_0(\rho)                                  & = \log|\supp(\rho)|,     \\
	H_\frac{1}{2}(\rho) = H_\text{max}(\rho) & = 2\log\tr(\sqrt{\rho}).
\end{align}
The last quantity, and perhaps the most useful except for $\alpha \rightarrow 1$, is the min-entropy which we obtain when $\alpha\rightarrow \infty$:
\begin{align}
	H_\infty(\rho) = H_\text{min}(\rho) = -\log\lambda_{\max} [\rho].
\end{align}

\section{R\'enyi divergence and related quantities}
\label{sec:sand}

We now introduce a generalisation of the quantum relative entropy and the classical R\'enyi divergence~--~the `sandwiched' R\'enyi divergence~\citep{lennert13, wilde13}: for $\rho\in \Den(A), \sigma \in\Pos(A)$ and ${\alpha \in (0,1) \cup (1,\infty)}$,
\begin{equation}
	D_\alpha(\rho\|\sigma):=\begin{cases}
		\frac{1}{\alpha-1}\log\tr\left[\left({\sigma}^{\frac{1-\alpha}{2\alpha}}\rho{\sigma}^{\frac{1-\alpha}{2\alpha}}\right)^\alpha\right]\quad & \text{if } \rho \not\perp \sigma \wedge (\sigma\gg\rho \vee \alpha<1) \\
		\infty \quad                                                                                                                              & \text{else}
	\end{cases} .
\end{equation}
From this point we will refer to this quantity as simply `R\'enyi divergence'.

It is prudent at this point to formally introduce the following shorthands which are used extensively in the remainder of the thesis. We often treat the relationships
$
\frac{1}{\alpha}+\frac{1}{\alpha'} = 1$ and $\frac{1}{\alpha}+\frac{1}{\hat\alpha} = 2$.
Accordingly, we equate
\begin{align}
	\alpha' = \frac{\alpha}{\alpha-1}\quad\text{and} \quad
	\hat\alpha =\frac{\alpha}{2\alpha-1}.
\end{align}
This produces some equivalences that will be useful for later calculations. We have
\begin{align}
	\frac{\alpha'}{\alpha}&=\frac{\alpha'(\alpha'-1)}{\alpha'}={\alpha' -1},\\
	\frac{\alpha'}{\hat\alpha} & = \frac{\alpha'(2\alpha-1)}{\alpha}=\frac{\alpha'}{\alpha}\left(\frac{2\alpha'}{\alpha'-1}-1\right) = \left({\alpha'-1}\right)\left(\frac{\alpha'+1}{\alpha'-1}\right) = {\alpha'+1}\quad\text{and} \\
	-\alpha'          & = \frac{\alpha}{1-\alpha} = \left(1- \frac{\hat\alpha}{2\hat\alpha-1}\right)^{-1}\left(\frac{\hat\alpha}{2\hat\alpha-1}\right) = \left( \frac{2\hat\alpha-1}{\hat\alpha-1}\right)\left(\frac{\hat\alpha}{2\hat\alpha-1}\right)=\frac{\hat\alpha}{\hat\alpha-1} = \hat\alpha'.
\end{align}

Applying a similar technique as in Section~\ref{sec:VNE}, we find that equivalence does not extend to R\'enyi divergence but we can still produce rigorous definitions of the relevant quantum R\'enyi entropic quantities. The following notation is adapted from the notation introduced in~\cite{tomamichel13}. We define the quantum R\'enyi entropy as
\begin{align}
	H_\alpha(\rho)& = \log|\supp(\rho)|^\frac{\alpha}{1-\alpha}-D_\alpha\left(\rho\left\|\frac{\id}{|\supp(x)|}\right.\right).
\end{align}
We may also derive quantities that generalise the quantum conditional entropy:
\begin{align}
	\label{maxcond} H^\downarrow_\alpha(A|B)_\rho & = -D_\alpha(\rho_{AB}\|\id_A\otimes \rho_B)\quad\text{and}                 \\
	H^\uparrow_\alpha(A|B)_\rho                   & = -\inf_{\sigma_B \in \Den(B)}D_\alpha(\rho_{AB}\|\id_A\otimes \sigma_B).
\end{align}
The ordering $H^\downarrow_\alpha(A|B)_\rho\leq H^\uparrow_\alpha(A|B)_\rho$, obvious from the definition, is indicated by the direction of the superscript arrow. We can safely assume that $\id_A\otimes \sigma_B\gg \rho_{AB}$, since any choice of $\sigma_B$ where this is not the case would certainly not achieve the infimum. Similarly, we derive generalisations of the quantum mutual information, originally proposed in~\cite{Hay}:
\begin{align}
	I^\uparrow_\alpha(A\;;\>\!B)_\rho            & = \inf_{\sigma_B \in \Den(B)} D_\alpha(\rho_{AB}\|\rho_A\otimes \sigma_B)\label{maxmut}\quad\text{and} \\
	\label{minmut} I^\downarrow_\alpha(A:B)_\rho & = \inf_{\substack{\sigma_A \in \Den(A)                                                                \\\sigma_B \in \Den(B)}} D_\alpha(\rho_{AB}\|\sigma_A\otimes \sigma_B).
\end{align}
Again, the superscript arrows indicate the ordering of each version and we satisfy the support condition as result of the minimisations. The use of `;' in Eq.~\eqref{maxmut} indicates that this quantity is not symmetric between the systems $A$ and $B$, whereas Eq.~\eqref{minmut} is.
\subsection{Generalised R\'enyi entropic quantities}
In order to express our results in a more general setting, we make use of the following notation for the generalised R\'enyi mutual information~\citep{Hay} and conditional entropy:
\begin{align}
	H_\alpha(\rho_{AB}\|\tau_B) & = -D_\alpha(\rho_{AB}\|\id_A\otimes \tau_B),\label{GCE}                              \\
	I_\alpha(\rho_{AB}\|\tau_A) & = \inf_{\sigma_B \in \Den(B)}D_\alpha(\rho_{AB}\|\tau_A\otimes \sigma_B).\label{GMI}
\end{align}
We can readily verify that the above quantities generalise Eqs.~\eqref{maxcond}-\eqref{minmut}, i.e.
\begin{align}
	&  & H^\downarrow_\alpha(A|B)_\rho     & = H_\alpha(\rho_{AB}\|\rho_B), & H^\uparrow_\alpha(A|B)_\rho   & = \sup_{\sigma_B \in \Den(B)}H_\alpha(\rho_{AB}\|\sigma_B), &  & \\
	&  & I^\uparrow_\alpha(A\;;\>\!B)_\rho & = I_\alpha(\rho_{AB}\|\rho_A), & I^\downarrow_\alpha(A:B)_\rho & = \inf_{\sigma_A \in \Den(A)}I_\alpha(\rho_{AB}\|\sigma_A). &  &
\end{align}
\subsection{Properties of the R\'enyi divergence}
This definition of a quantum R\'enyi divergence is not the only one proposed, however, it maintains many useful and desirable properties of the quantities it generalises for a wider range R\'enyi orders, whereas other propositions\footnote{The main contender in this context is the Petz quantum R\'enyi divergence. It is a simpler generalisation but only satisfies many desired properties of a divergence for $\alpha\in [0,2]$. See \cite{mybook} and \cite{petz86} for more details.} do not. There are a few properties of the R\'enyi divergence which we find particularly useful, which we now detail (for a more comprehensive treatment see~\citep{lennert13, wilde13, beigi13}).

Firstly, it generalises the quantum relative entropy. Indeed, it is recovered by taking the limit $\alpha\rightarrow 1$ as in Proposition~\ref{ato1}. It is also monotonically increasing in $\alpha$~\citep{beigi13}, i.e for $\alpha>\beta$,
\begin{equation}\label{monot}
	D_\alpha(\rho\|\sigma) \geq D_\beta(\rho\|\sigma).
\end{equation}

It is useful to re-express the R\'enyi divergence as a Schatten norm of the arguments dependent on $\alpha$:
\begin{equation}
	D_\alpha\left(\rho\|\sigma\right) = \log\left\|\sigma^\frac{-1}{2\alpha'}\rho\sigma^\frac{-1}{2\alpha'}\right\|_{\alpha}^{\alpha'}\label{divNorm}.
\end{equation}
Its close relation to Schatten norms is mathematically convenient and as such derived quantities exhibit duality relations. For example we have the duality of the R\'enyi conditional entropy~\citep{beigi13, lennert13}:
\begin{proposition}\label{condDual}
	For a pure state $\rho_{ABC}\in \Den(ABC)$ and $\alpha>\frac{1}{2}$ such that $\frac{1}{\alpha} + \frac{1}{\hat\alpha} = 2$ we have
	\begin{align}
		H^\uparrow_\alpha(A|B)_\rho = -H^\uparrow_{\hat\alpha}(A|C)_\rho.
	\end{align}
\end{proposition}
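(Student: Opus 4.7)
The plan is to recast both sides as Schatten-norm optimization problems using Eq.~\eqref{divNorm}, exploit the fact that the Schatten exponents appearing on the two sides are H\"older-dual, and then use the pure-state structure to interchange operators between the $B$ and $C$ subsystems.

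First, by Eq.~\eqref{divNorm},
\begin{equation*}
H^\uparrow_\alpha(A|B)_\rho = -\alpha' \inf_{\sigma_B} \log \bigl\|(\id_A \otimes \sigma_B)^{-\frac{1}{2\alpha'}} \rho_{AB} (\id_A \otimes \sigma_B)^{-\frac{1}{2\alpha'}}\bigr\|_\alpha,
\end{equation*}
and analogously for $H^\uparrow_{\hat\alpha}(A|C)_\rho$. The parameter identity $\hat\alpha' = -\alpha'$ derived in the excerpt makes the $C$-side exponent $-1/(2\hat\alpha')$ equal to $1/(2\alpha')$, so the two sandwiches involve inverse powers on opposite sides of the unit. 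Because the sign of $\alpha'$ changes across $\alpha = 1$ and $\hat\alpha$ always sits on the opposite side of $1$ from $\alpha$, factoring $\alpha'$ out of the logarithm flips the infimum to a supremum on one of the two sides; I would treat $\alpha > 1$ and $1/2 < \alpha < 1$ in parallel with a consistent sign convention.

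Next, apply Proposition~\ref{fold} to each sandwich to obtain a squared norm of a product:
\begin{equation*}
\bigl\|(\id_A \otimes \sigma_B)^{-\frac{1}{2\alpha'}} \rho_{AB} (\id_A \otimes \sigma_B)^{-\frac{1}{2\alpha'}}\bigr\|_\alpha = \bigl\|\rho_{AB}^{1/2} (\id_A \otimes \sigma_B)^{-\frac{1}{2\alpha'}}\bigr\|_{2\alpha}^2,
\end{equation*}
and similarly a squared $2\hat\alpha$-norm on the dual side. The critical observation is that $\frac{1}{2\alpha} + \frac{1}{2\hat\alpha} = 1$, so $\|\cdot\|_{2\alpha}$ and $\|\cdot\|_{2\hat\alpha}$ are Schatten duals. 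Writing one norm as the supremum of $|\tr(M N)|$ over $N$ in the unit ball of the other norm converts the $B$-side optimization into a pairing that can be matched against the $C$-side optimization.

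The core step, and the main obstacle, is transferring the rotation $(\id_A \otimes \sigma_B)^{-1/(2\alpha')}$ across the purification $\rho_{ABC} = \ket{\varphi}\bra{\varphi}$ so that it acts as a rotation on $C$. Invoking Proposition~\ref{traceFree}, which lets one shift an operator on one subsystem to the complementary subsystem inside a Schatten $2$-norm evaluated against a pure state's Schmidt decomposition, I would rewrite the $B$-rotated square root $\rho_{AB}^{1/2}(\id_A \otimes \sigma_B)^{-1/(2\alpha')}$ as a $C$-rotated square root of $\rho_{AC}$. The exponent $-1/(2\alpha')$ on $\sigma_B$ will then reappear as $+1/(2\alpha') = -1/(2\hat\alpha')$ on an induced $\tau_C$, which is exactly the exponent required by $D_{\hat\alpha}(\rho_{AC}\|\id_A\otimes\tau_C)$. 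The final task is to verify that the correspondence $\sigma_B \leftrightarrow \tau_C$ is a bijection between $\Den(B)$ and $\Den(C)$ (up to the support conditions needed to keep the divergences finite), so that the two optimizations coincide and yield the claimed identity. The delicate book-keeping of this transfer --- in particular, ensuring that the extremizers correspond under the bijection rather than merely achieving equal optimal values --- is where the bulk of the technical work lies.
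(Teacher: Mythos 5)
The paper itself does not prove this proposition: it is imported from the literature (Beigi; M\"uller-Lennert \emph{et al.}) and later used as an ingredient, e.g.\ in the proof of Theorem~\ref{chain}. So your proposal has to stand on its own, and as written it has a structural gap in its final step. The opening moves are the right ones and match the known proofs: rewrite both sides via Eq.~\eqref{divNorm}, use Proposition~\ref{fold} to pass to $\left\|\rho_{AB}^{1/2}(\id_A\otimes\sigma_B)^{-\frac{1}{2\alpha'}}\right\|_{2\alpha}$, and exploit that $\frac{1}{2\alpha}+\frac{1}{2\hat\alpha}=1$ so the two Schatten exponents are H\"older conjugate. The problem is the closing mechanism. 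The identity to be proved reads $\sup_{\sigma_B}\left[-D_\alpha(\rho_{AB}\|\id_A\otimes\sigma_B)\right]=\inf_{\tau_C}D_{\hat\alpha}(\rho_{AC}\|\id_A\otimes\tau_C)$, i.e.\ a supremum on one side against an infimum on the other. A value-preserving bijection $\sigma_B\leftrightarrow\tau_C$ between $\Den(B)$ and $\Den(C)$, as you propose, would equate $\sup$ with $\sup$ (and $\inf$ with $\inf$), not $\sup$ with $\inf$; it could only give the claim if the $\tau_C$-dependence were constant, which it is not. The extremizers of the two problems are indeed related at the saddle point, but knowing that does not yield equality of the optimal values without the minimax structure.

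The standard way to finish is to exhibit \emph{both} sides as the two orderings of one joint optimization: using variational formulas of the type in Lemma~\ref{newLem12}, each side becomes $\inf_{\sigma_B}\sup_{\tau_C}$ (respectively $\sup_{\tau_C}\inf_{\sigma_B}$) of a single functional of the form $\bra{\varphi}\,\sigma_B^{\frac{1-\alpha}{\alpha}}\otimes\tau_C^{\frac{1-\hat\alpha}{\hat\alpha}}\ket{\varphi}$, and these are equated by a minimax theorem (Sion), or alternatively one inequality is obtained from H\"older and the other from a three-line/interpolation argument as in Beigi. Relatedly, Proposition~\ref{traceFree} cannot do what you ask of it: it only permits switching between the two Schmidt groupings $A|BC$ and $AB|C$ when the sandwiching operators act on $A$ and $C$ alone; it does not transfer an operator acting on $B$ into one acting on $C$. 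For a general (non-maximally-entangled) purification, ``moving'' $\sigma_B^{-1/(2\alpha')}$ across $\ket{\varphi}$ produces a state-dependent operator on $C$ that is in general neither positive nor normalised, so the map $\Den(B)\to\Den(C)$ you want is not even well defined. Repairing the proof means replacing the bijection step by the minimax (or H\"older-plus-interpolation) argument.
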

And the duality of the generalised mutual information~\citep{Hay}:
\begin{proposition}\label{Dual} For a pure state $\rho_{ABC}\in \Den(ABC), \tau_B\in \Pos(B)$ and $\alpha>\frac{1}{2}$ such that $\frac{1}{\alpha} + \frac{1}{\hat\alpha} = 2$ we have
\begin{equation}
	I_\alpha(\rho_{AB}\|\tau_A) = - I_{\hat\alpha}(\rho_{AC}\|\tau_A^{-1}).
\end{equation}
\end{proposition}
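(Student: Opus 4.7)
The plan is to reduce both sides to comparable Schatten-norm expressions and then transfer the optimisation from the $B$ system to the $C$ system through the purification. First I would apply Eq.~\eqref{divNorm} to rewrite
\begin{equation*}
I_\alpha(\rho_{AB}\|\tau_A) = \inf_{\sigma_B\in\Den(B)} \log \bigl\|\tau_A^{-1/(2\alpha')}\sigma_B^{-1/(2\alpha')}\rho_{AB}\tau_A^{-1/(2\alpha')}\sigma_B^{-1/(2\alpha')}\bigr\|_\alpha^{\alpha'},
\end{equation*}
and analogously for $I_{\hat\alpha}(\rho_{AC}\|\tau_A^{-1})$. The identities $\hat\alpha' = -\alpha'$ and $(\tau_A^{-1})^{-1/(2\hat\alpha')} = \tau_A^{-1/(2\alpha')}$ align the exponents on $\tau_A$ across the two sides and extract the overall minus sign asserted in the statement, so the question reduces to matching the two Schatten-norm infima up to the appropriate order.

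Next I would exploit the pure-state structure. Writing the $AB|C$ Schmidt decomposition $\ket{\varphi} = \sum_i r_i \ket{i}_{AB}\otimes\ket{i}_C$ gives $\rho_{AB} = XX^\dagger$ with $X \in \Lin(C,AB)$, so by Proposition~\ref{fold}
\begin{equation*}
\bigl\|K\,\rho_{AB}\, K\bigr\|_\alpha = \|KX\|_{2\alpha}^2, \qquad K := \tau_A^{-1/(2\alpha')}\sigma_B^{-1/(2\alpha')}.
\end{equation*}
The analogous factorisation along the $A|BC$ cut reduces the dual $\hat\alpha$-norm to a squared $2\hat\alpha$-norm of an operator built from $\rho_{AC}$. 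The heart of the argument is then to pass between these two Schatten-norm expressions by transferring $\sigma_B^{-1/(2\alpha')}$ on $B$ through the purification to a factor $\sigma_C^{-1/(2\hat\alpha')}$ on $C$, in the spirit of Proposition~\ref{traceFree} but at a general Schatten order rather than the $2$-norm.

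Finally, the infimum over $\sigma_B$ should exchange with an infimum over $\sigma_C$ because the H\"older conjugates satisfy $\tfrac{1}{2\alpha}+\tfrac{1}{2\hat\alpha}=1$, exactly the defining identity of $\hat\alpha$; the variational form $\|M\|_p = \sup_{\|N\|_{p'}\le 1}|\tr M^\dagger N|$ of Schatten duality then gives a bijection between optimisers on $B$ and on $C$. The hard step is the transfer itself: Proposition~\ref{traceFree} only handles a clean $B\leftrightarrow C$ exchange for the $2$-norm, and extending this to arbitrary $2\alpha$-norms while tracking the asymmetric placement of $\tau_A$ (which inverts on the dual side precisely because $1/(2\hat\alpha') = -1/(2\alpha')$) is non-trivial. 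I would expect to discharge this either by a direct operator-algebraic argument leveraging that $\rho_{AB}$ and $\rho_{AC}$ share non-zero spectra, or, more in keeping with the thesis's methodology, by invoking the generalised three-line theorem of Chapter~\ref{sec:interpolation-framework} to interpolate between a $2$-norm identity and an $\infty$-norm endpoint. Once the transfer is justified, collecting the sign from $\hat\alpha' = -\alpha'$ yields the claimed duality.
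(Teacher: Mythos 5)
There is a genuine gap, and you have in fact flagged it yourself: the step you call ``the transfer'' is not an auxiliary technicality but the entire content of Proposition~\ref{Dual}, and your proposal never carries it out. (For what it is worth, the thesis does not prove this proposition either; it imports it from~\cite{Hay}, so your attempt has to stand on its own.) Your first two reductions are fine: Eq.~\eqref{divNorm} plus $\hat\alpha'=-\alpha'$ aligns the $\tau_A$ exponents and produces the sign, and Proposition~\ref{fold} with the operator $X$ from the $AB|C$ Schmidt decomposition turns the $\alpha$-norm into $\|KX\|_{2\alpha}^2$. But what remains is to show that
\begin{equation*}
\inf_{\sigma_B\in\Den(B)}\log\bigl\|(\tau_A\otimes\sigma_B)^{-\frac{1}{2\alpha'}}X\bigr\|_{2\alpha}^{2\alpha'}
\;=\;
-\inf_{\sigma_C\in\Den(C)}\log\bigl\|(\tau_A^{-1}\otimes\sigma_C)^{-\frac{1}{2\hat\alpha'}}\tilde X\bigr\|_{2\hat\alpha}^{2\hat\alpha'},
\end{equation*}
i.e.\ to exchange an \emph{infimum} over $\sigma_B$ with what, after Hölder duality, is a \emph{supremum} over a dual variable that lives on the other cut. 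Simply observing that $2\alpha$ and $2\hat\alpha$ are Hölder conjugates does not give ``a bijection between optimisers on $B$ and on $C$''; the standard argument (as in the duality for $H^\uparrow_\alpha$ and in~\cite{Hay}) expands the Schatten norm variationally, justifies a sup--inf swap by a minimax theorem (convexity in one argument, concavity in the other, compactness of the state spaces), and then evaluates the inner optimisation explicitly using the purification. None of these ingredients appears in your sketch, and the relation between the two optimisers is an output of that argument, not something you can posit.

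Both of your fallback routes are also flawed as stated. First, for a pure tripartite state it is $\rho_{AB}$ and $\rho_C$ (respectively $\rho_{AC}$ and $\rho_B$) that share non-zero spectra; $\rho_{AB}$ and $\rho_{AC}$ generally do not, so the ``direct operator-algebraic argument leveraging that $\rho_{AB}$ and $\rho_{AC}$ share non-zero spectra'' starts from a false premise. Second, Theorem~\ref{weighted3lines} (and Proposition~\ref{traceFree}, which is genuinely a $2$-norm identity tied to $\bra{\varphi}K_AK_A^\dagger\otimes L_C^\dagger L_C\ket{\varphi}$) can only produce \emph{inequalities} between norms at interpolated exponents; it cannot by itself deliver the exact equality asserted in the proposition, and at best would give you one direction of a bound. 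To close the gap you would need to write out the Hölder/variational expansion of $\|\cdot\|_{2\alpha}$, invoke a minimax theorem to swap the optimisations, and identify the resulting inner optimum with $I_{\hat\alpha}(\rho_{AC}\|\tau_A^{-1})$ — essentially reproducing the proof in~\cite{Hay} rather than the interpolation machinery of Chapter~\ref{sec:interpolation-framework}.
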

Additionally, it satisfies the data-processing inequality~\citep{beigi13, FL, lennert13}:
\begin{proposition}\label{DPI}
	If $\alpha\geq \frac{1}{2}$ and $\Phi \in \CPTP(A,B)$, then for $\rho\in \Den(A), \sigma\in\Pos(A)$
	\begin{equation}
		D_\alpha(\rho\|\sigma) \geq D_\alpha(\Phi(\rho)\|\Phi(\sigma)).
	\end{equation}
\end{proposition}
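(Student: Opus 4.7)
The plan is to follow Beigi's complex-interpolation strategy, in the spirit of the interpolation framework that is developed later in this thesis. First, I would invoke the Schatten-norm representation from Eq.~\eqref{divNorm} to restate the DPI as a comparison between $\|\sigma^{-1/(2\alpha')}\rho\sigma^{-1/(2\alpha')}\|_\alpha$ and $\|\Phi(\sigma)^{-1/(2\alpha')}\Phi(\rho)\Phi(\sigma)^{-1/(2\alpha')}\|_\alpha$, with the direction of the inequality determined by the sign of $\alpha'$. Second, I would invoke the Stinespring dilation, writing $\Phi(X) = \tr_E(V X V^\dagger)$ for some isometry $V \in \Uni(A,BE)$. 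Isometric invariance of $D_\alpha$ is immediate from $V^\dagger V = \id_A$, which allows the factors of $V$ to cancel against each other inside the trace. This reduces the task to proving monotonicity under the partial trace $\tr_E$.

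For the partial-trace step, I would apply Schatten duality together with polar decomposition to express the smaller-space norm as
\[
    \sup_{Y}\, \bigl|\tr\bigl(\Phi(\sigma)^{-1/(2\alpha')}\Phi(\rho)\Phi(\sigma)^{-1/(2\alpha')} Y\bigr)\bigr|,
\]
where $Y$ ranges over operators of unit $\alpha'$-norm, then lift $Y$ to the larger space via $Y \otimes \id_E$ with a suitable normalisation. I would then construct an analytic function $F(z)$ on the strip $0 \leq \Re(z) \leq 1$ whose values at $\Re(z)=0$ and $\Re(z)=1$ are bounded by $1$ and $\|\sigma^{-1/(2\alpha')}\rho\sigma^{-1/(2\alpha')}\|_\alpha$ respectively, and whose value at a specific interior point $z_0$ coincides with the expression above. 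Hadamard's three-line theorem then delivers the inequality for $\alpha > 1$. The case $\alpha \in [\frac12, 1)$ can be handled by reversing the strip orientation (equivalently, by exploiting the duality available via Proposition~\ref{condDual}), and continuity in $\alpha$ gives the endpoint $\alpha = \frac12$.

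The hardest step is the choice of $F(z)$ so that \emph{both} boundary bounds are sufficiently tight: one line must give a trivial operator-norm bound (typically of unit norm), while the other must reproduce the full large-space Schatten norm, often by exploiting the identity $\|K^\dagger K\|_p = \|K\|_{2p}^2$ from Proposition~\ref{fold} to recombine the sandwich into the desired form. Additional technicalities include regularising $\sigma$ and $\Phi(\sigma)$ to make complex powers holomorphic, ensuring the support conditions in the definition of $D_\alpha$ are preserved under these operations, and verifying that $F$ is genuinely bounded and analytic on the closed strip. These are precisely the ingredients that Beigi's three-line theorem, and the generalisation presented in Chapter~\ref{sec:interpolation-framework}, are designed to encapsulate.
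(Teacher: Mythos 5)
The thesis does not actually prove Proposition~\ref{DPI}: it is imported as a known result with citations to~\cite{beigi13,FL,lennert13}, so your proposal should be measured against those proofs. For $\alpha>1$ your outline is indeed Beigi's route (Stinespring dilation, reduction to a partial trace, duality, three-line theorem), but the step you yourself flag as hardest~--~the construction of $F(z)$~--~is exactly where the substance lies, and your set-up makes it harder than it should be: if you dualise with the \emph{unweighted} Schatten $\alpha'$-norm and lift the dual element as $Y\otimes\id_E$ ``with a suitable normalisation'', the normalisation costs a factor $\dim(\Hil_E)^{1/\alpha'}$ and the resulting bound is not the inequality you want. The dimension-free argument runs through the weighted norms $\|\cdot\|_{p,(\sigma,\tau)}$ of Section~\ref{sec:WN}: what one interpolates is contractivity of the adjoint (unital) map between $\sigma$- and $\Phi(\sigma)$-weighted norms, with endpoints $p=\infty$ (a Schwarz-type/operator-norm bound for a unital positive map) and $p=1$ (trace preservation), and the three-line theorem of Theorem~\ref{weighted3lines} then closes the range $1<p<\infty$. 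As written, your boundary lines (``one trivial unit bound, one full large-space Schatten norm'') do not describe this, and no candidate $F(z)$ is exhibited, so the core of the $\alpha>1$ case is still missing.

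The more serious gap is the range $\alpha\in[\tfrac12,1)$. ``Reversing the strip orientation'' does not work: Hadamard's theorem (Theorem~\ref{H3line}) only gives upper bounds at interior points, and for $\alpha<1$ one has $\alpha'<0$, the $p<1$ expressions are no longer norms, and the H\"older/duality machinery underlying the three-line argument fails. Nor does Proposition~\ref{condDual} rescue this: it is a duality for conditional entropies of \emph{pure tripartite} states with second argument of the form $\id_A\otimes\sigma_B$ (optimised over $\sigma_B$), and it does not transport data processing for a general $D_\alpha(\rho\|\sigma)$ with arbitrary $\sigma$ and arbitrary channels from $\hat\alpha>1$ down to $\alpha<1$. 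The cited proof covering $\alpha\in[\tfrac12,1)$ is Frank--Lieb~\cite{FL}, which proceeds by a variational formula and concavity/convexity of trace functionals~--~a genuinely different technique from the interpolation argument. Your continuity remark at the endpoint $\alpha=\tfrac12$ is fine, but only once the open interval $(\tfrac12,1)$ has actually been handled, which your sketch does not do.
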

In order to extend certain statements about strictly positive matrices to positive semi-definite matrices we can observe the continuity of the R\'enyi divergence. The following proposition is adapted from \cite[Lemma 13]{lennert13} and ensures the R\'enyi divergence is continuous even when the rank of $Y$ decreases.
\begin{proposition}\label{continuity}
	Let $X\in\Den(A),Y\in\Pos(A)$ with $X\neq 0$ and $\alpha\in (0, 1)\cup(1,\infty)$. We have
	\begin{equation}
		D_\alpha(X\|Y)=\lim_{\varepsilon\rightarrow 0^+}\frac{1}{\alpha-1}\log\tr\left[\left({(Y+\varepsilon\id_A)}^{\frac{1-\alpha}{2\alpha}}X{(Y+\varepsilon\id_A)}^{\frac{1-\alpha}{2\alpha}}\right)^\alpha\right]
	\end{equation}
and the limit exists in the weaker sense in which a real valued sequence which is bounded from below and not bounded from above and which does not have an accumulation point is considered as being convergent to $+\infty$.
\end{proposition}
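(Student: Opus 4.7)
The plan is to split the analysis into the cases where $D_\alpha(X\|Y)$ is finite and where it is $+\infty$, and then show that the regularised quantity
\[
f(\varepsilon) := \frac{1}{\alpha-1}\log\tr\!\left[(Y_\varepsilon^{\beta} X Y_\varepsilon^{\beta})^\alpha\right], \qquad Y_\varepsilon := Y + \varepsilon\id_A, \quad \beta := \frac{1-\alpha}{2\alpha},
\]
converges to the correct limit as $\varepsilon\to 0^+$. First I would fix the projector $P$ onto $\supp(Y)$ and note that, since $Y_\varepsilon$ commutes with $P$ and is strictly positive, its power decomposes as $Y_\varepsilon^{\beta} = B_\varepsilon + \varepsilon^\beta P^\perp$, where $B_\varepsilon := (PY_\varepsilon P)^\beta$ acts inside $\supp(Y)$, is continuous in $\varepsilon\geq 0$, and converges to $Y^\beta$ under the pseudoinverse convention.

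For the finite case---where $X\not\perp Y$ together with $\alpha<1$ or $Y\gg X$---I would show that $T_\varepsilon := Y_\varepsilon^{\beta} X Y_\varepsilon^{\beta}$ converges in operator norm to $Y^\beta X Y^\beta$. When $\alpha<1$ one has $\beta>0$, so $\varepsilon^\beta P^\perp \to 0$; when $\alpha>1$ and $Y\gg X$, positivity of $X$ combined with the support condition gives $P^\perp X = X P^\perp = 0$, so every term involving $\varepsilon^\beta P^\perp$ vanishes. Continuity of the functional calculus $M \mapsto M^\alpha$ on positive semidefinite operators, combined with continuity of the trace, then yields $\tr[T_\varepsilon^\alpha]\to \tr[(Y^\beta X Y^\beta)^\alpha]$, which is strictly positive because $X\not\perp Y$ forces $PXP\neq 0$ and $B_\varepsilon$ restricts to an invertible operator on $\supp(Y)$. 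Applying $\frac{1}{\alpha-1}\log(\cdot)$ recovers $D_\alpha(X\|Y)$.

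For the infinite case I would split into two subcases. Subcase (a): $X\perp Y$, so that $X = P^\perp X P^\perp$ and the mixed blocks of $T_\varepsilon$ vanish, leaving $T_\varepsilon = \varepsilon^{2\beta} X$. Using the identity $2\alpha\beta = 1-\alpha$ one gets $\tr[T_\varepsilon^\alpha] = \varepsilon^{1-\alpha}\tr[X^\alpha]$ and hence
\[
f(\varepsilon) = -\log\varepsilon + \frac{1}{\alpha-1}\log\tr[X^\alpha]\longrightarrow +\infty.
\]
Subcase (b): $\alpha>1$, $X\not\perp Y$ and $Y\not\gg X$. Here $P^\perp X P^\perp \neq 0$ by positivity of $X$; pick a unit eigenvector $v$ of $P^\perp X P^\perp$ with eigenvalue $\mu>0$. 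Since $v\in\supp(Y)^\perp$ we have $Y_\varepsilon^\beta v = \varepsilon^\beta v$, so $\langle v,T_\varepsilon v\rangle = \varepsilon^{2\beta}\mu$. Therefore $\tr[T_\varepsilon^\alpha]\geq \lambda_{\max}[T_\varepsilon]^\alpha \geq \varepsilon^{1-\alpha}\mu^\alpha$, and since $\alpha>1$ this forces $f(\varepsilon)\to+\infty$.

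The main obstacle is subcase (b): one cannot simply pass a limit inside the trace, and instead one must produce an explicit lower bound on $\lambda_{\max}[T_\varepsilon]$ that diverges as $\varepsilon\to 0^+$. The trial-vector argument above is the natural way to achieve this, leveraging the fact that vectors in $\ker Y$ are eigenvectors of $Y_\varepsilon^{\beta}$ with the blowing-up eigenvalue $\varepsilon^\beta$. Because the statement is phrased in the extended-real sense, a one-sided lower bound suffices in the divergent subcases, and no control of oscillations is needed.
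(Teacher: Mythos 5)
Your proposal is correct, and it is essentially a self-contained verification of a statement the thesis itself does not prove: Proposition~\ref{continuity} is only cited there (``adapted from Lemma~13 of the M\"uller-Lennert et al.\ reference''), so there is no in-paper argument to compare against. Your route — block-decomposing $Y_\varepsilon^{\frac{1-\alpha}{2\alpha}}$ along $\supp(Y)$ and its complement, using norm-continuity of the functional calculus in the finite case, and a trial-vector lower bound $\tr[T_\varepsilon^\alpha]\geq\lambda_{\max}[T_\varepsilon]^\alpha\geq\varepsilon^{1-\alpha}\mu^\alpha$ in the divergent case — is sound, and it in fact delivers genuine divergence to $+\infty$, which is stronger than the ``weaker sense'' of convergence the statement asks for. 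Two small points worth making explicit if you write this up: (i) you are reading $X\perp Y$ as orthogonality of supports (so that $X\perp Y$ gives $X=P^\perp XP^\perp$, and $X\not\perp Y$ gives $PXP\neq 0$); this is the standard and intended meaning, but the thesis' notation table literally says ``empty intersection of spans,'' so state the reading you use, since the case split of the definition of $D_\alpha$ only matches the limit under the orthogonal-supports convention; (ii) in your subcase (b) the hypothesis $X\not\perp Y$ is never used — $Y\not\gg X$ alone already yields $P^\perp XP^\perp\neq 0$ via a vector $v\in\ker Y\setminus\ker X$ — so the divergent case really splits just into ``$X\perp Y$'' and ``$\alpha>1$, $Y\not\gg X$,'' which together with the finite case exhaust all possibilities.
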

Essentially, this indicates that if we can make a statements for strictly positive matrices, e.g. $Y+\varepsilon\id_A\in \PosS(A)$, then we can extend that statement to positive semi-definite matrices by taking ${Y+\varepsilon\id_A \rightarrow Y}$ with $\supp(Y)\subset \Hil_A$. Note however, that in some cases the quantity will still diverge if $Y\not\gg X$.

These properties naturally extend to any quantity defined using R\'enyi divergence~--~properties which coincide with the mathematical and physical interpretation of quantum entropies. That is to say, monotonicity in $\alpha$ reflects the expected behaviour of R\'enyi entropy when weighing more or less likely outcomes differently and the data-processing inequality reflects that entropy can only ever increase (or correlation decrease) when information is processed (on each system independently).

\subsection{Dupuis' chain rules}\label{sec:dupuis-chain-rule}
One of the major contributions of this work is to formalise and provide R\'enyi approximations of the chain and decomposition rules that exist for Shannon and von Neumann entropies and which often are an important tool in the derivation of related results. To this end, Dupuis' chain rule~\citep{Dup} provides a foundation for the methodology and a proof-of-concept for the more general inequalities covered in Chapter~\ref{sec:renyi-entropy-divergence-inequalities}.
\begin{theorem}\label{chainDup}
	Let $\frac{\alpha}{\alpha-1} = \frac{\beta}{\beta -1} + \frac{\gamma}{\gamma-1}$ with $\alpha,\beta,\gamma \in \left(1/2, 1\right) \cup \left(1,\infty\right)$.	For $\rho_{ABC}\in\Den(ABC)$ and $\tau_C\in \Pos(C)$, if $(\alpha-1)(\beta-1)(\gamma-1)>0$,
	\begin{equation}
		H_\alpha(\rho_{ABC}\|\tau_C)\geq H^\uparrow_\beta(A|BC)_\rho + H_\gamma(\rho_{BC}\|\tau_C).
	\end{equation}
	Otherwise, if  $(\alpha-1)(\beta-1)(\gamma-1)<0$,
	\begin{equation}
		H_\alpha(\rho_{ABC}\|\tau_C)\leq H^\uparrow_\beta(A|BC)_\rho + H_\gamma(\rho_{BC}\|\tau_C).
	\end{equation}
\end{theorem}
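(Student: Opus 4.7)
The plan is to adapt the novel complex interpolation framework developed in Chapter~\ref{sec:interpolation-framework} (the generalised three-line theorem for spaces of linear operators) to replicate Dupuis' argument. First, I would rewrite all three divergences appearing in the inequality as Schatten norms via Eq.~\eqref{divNorm}, exposing the key algebraic identity $\alpha' = \beta' + \gamma'$ implied by the hypothesis $\frac{\alpha}{\alpha-1} = \frac{\beta}{\beta-1} + \frac{\gamma}{\gamma-1}$. This identity has the shape of a H\"older interpolation relation, with interpolation parameter $\theta = \beta'/\alpha'$, suggesting a two-line interpolation between exponents $2\beta$ and $2\gamma$ that passes through $2\alpha$ at the interior point.

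Next I would construct a holomorphic operator-valued function $G : \overline{S} \to \Lin(\mathcal{H})$, where $S = \{z \in \C : 0 < \Re(z) < 1\}$, of the form
\begin{equation}
G(z) = (\id_A \otimes \sigma_{BC})^{-f(z)} \, \rho_{ABC}^{g(z)} \, (\id_{AB}\otimes \tau_C)^{h(z)} \, \rho_{BC}^{k(z)}
\end{equation}
for analytic scalar functions $f, g, h, k$ chosen so that on the line $\Re(z)=0$ the Schatten norm of $G(z)$ encodes $D_\gamma(\rho_{BC} \| \id_B \otimes \tau_C)$, on $\Re(z)=1$ it encodes $D_\beta(\rho_{ABC} \| \id_A \otimes \sigma_{BC})$, and at $z = \theta$ the powers collapse so that the resulting operator's norm encodes $D_\alpha(\rho_{ABC} \| \id_{AB} \otimes \tau_C)$. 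Applying the generalised three-line theorem to $G$ yields a multiplicative H\"older-type bound between these three norms; taking logarithms produces the additive inequality among the divergences.

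The two cases in the statement are then handled by a sign analysis: when $(\alpha-1)(\beta-1)(\gamma-1) > 0$, the three conjugate exponents have compatible signs so that the interpolation point $\theta$ lies in $(0,1)$ and standard H\"older applies, giving $\geq$; when the product is negative, $\theta$ lies outside $(0,1)$ and the reversed H\"older inequality for exponents below $1$ flips the direction to $\leq$. The specific $\sigma_{BC}$ witnessing the supremum in $H^\uparrow_\beta(A|BC)_\rho$ emerges naturally as the state making the interpolation tight at $z = \theta$; by the definition of $H^\uparrow$ as a supremum over $\sigma_{BC}$, exhibiting any valid witness suffices.

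The principal obstacle is the simultaneous bookkeeping of the three distinct tensor structures ($\id_{AB}\otimes\tau_C$, $\id_A\otimes\sigma_{BC}$, and $\id_B\otimes\tau_C$) acting on nested subsystems: the scalar exponents $f,g,h,k$ must be engineered so that at $z=\theta$ the $\rho_{BC}$ and $\sigma_{BC}$ factors cancel in the correct way to leave only $\id_{AB}\otimes\tau_C$ powers around $\rho_{ABC}$, while at each boundary the unwanted operator disappears into an identity factor on the appropriate subsystem. Secondary technical obstacles are ensuring the boundary growth conditions required by the generalised three-line theorem, and handling the possible rank-deficiency of $\tau_C$ or $\rho_{BC}$ via Proposition~\ref{continuity} by first perturbing to strictly positive operators and then taking limits.
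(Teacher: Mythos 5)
Your high-level plan (norm representations of the divergences, the identity $\alpha'=\beta'+\gamma'$, interpolation via the generalised three-line theorem, regularisation via Proposition~\ref{continuity}) matches the paper's strategy, but the construction at its heart does not go through, and the obstacle you flag at the end is precisely the one you must — and do not — resolve. A single product $G(z)=(\id_A\otimes\sigma_{BC})^{-f(z)}\rho_{ABC}^{g(z)}(\id_{AB}\otimes\tau_C)^{h(z)}\rho_{BC}^{k(z)}$ on $\Hil_{ABC}$ with affine exponents cannot realise the three required identifications: the $\sigma_{BC}$ factor must be absent both at the interior point (where $D_\alpha(\rho_{ABC}\|\id_{AB}\otimes\tau_C)$ has no $\sigma$) and on the $\gamma$-boundary, and the $\rho_{BC}$ factor must be absent at the interior point and on the $\beta$-boundary; an affine exponent can vanish at only one point, and purely imaginary exponents only help for the outermost factors, since a unitary sandwiched inside the product does not drop out of a Schatten norm. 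Moreover $\sigma_{BC}$, $\tau_C$ and $\rho_{BC}$ all act nontrivially on $C$ and do not commute, so the boundary norms cannot be identified with the desired divergences. The paper's proof of Theorem~\ref{chain} evades exactly this by purifying $\rho_{ABC}$ to $\rho_{ABCD}$, writing $M=\sum_i r_i\ket{i}_{AD}\bra{i}_{BC}\in\Lin(BC,AD)$, and interpolating with only two weights, $\sigma_D$ on the output side and $\tau_C$ on the input side (Lemma~\ref{genChainDual}, Proposition~\ref{traceFree}); the conditional-entropy term then enters through the duality $H^\uparrow_\beta(A|BC)_\rho=-H^\uparrow_{\hat\beta}(A|D)_\rho$ of Proposition~\ref{condDual}, with the supremum over $\sigma_D$ converted via Lemma~\ref{newLem12}. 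Without the purification and duality step, your ansatz has no way to produce the optimised quantity $H^\uparrow_\beta(A|BC)_\rho$ at all; note also that in the direction $H_\alpha\geq H^\uparrow_\beta+H_\gamma$ a single witness $\sigma_{BC}$ does \emph{not} suffice, since $H^\uparrow_\beta$ is a supremum sitting on the smaller side, so the bound must hold uniformly (or, as in the paper, the optimisation must be reproduced exactly).

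A second gap is the case analysis. There is no ``reversed'' three-line theorem: when the intended interpolation parameter $\beta'/\alpha'$ falls outside $(0,1)$ you cannot apply Theorem~\ref{weighted3lines} with a reversed conclusion. The paper instead permutes which quantity sits at the interior point, running separate interpolations for $\theta=\gamma'/\alpha'$, $\theta=\alpha'/\gamma'$ and $\theta=\alpha'/\beta'$ in $(0,1)$, and then uses the sign analysis of Lemma~\ref{ana} (equivalently the sign of $(\alpha-1)(\beta-1)(\gamma-1)$, or of $\alpha'$, $\beta'$, $\gamma'$ after exponentiating by a possibly negative power) to obtain both directions of the statement. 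Your proposal needs this restructuring, in addition to the purification/duality device, before it can be considered a proof.
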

This result shows that we may establish chain rules for conditional R\'enyi entropies for particular choices of R\'enyi order. Evidently, these chain rules are weaker than the Shannon and von Neumann versions, but still provide a rigorous generalisation~--~note that by pinching the inequalities we recover an equality when all orders tend to $1$. We provide an alternative proof of this theorem and a variation on the bipartite case in Chapter~\ref{sec:renyi-entropy-divergence-inequalities}.
\section{Entropic uncertainty relations}\label{sec:entropic-uncertainty-relations}
Uncertainty relations manifest in many areas of science, and can be applied to situations ranging across the accuracy of radar readings (trajectory vs position), the analysis of sound waves (frequency vs instant) and the measurement of subatomic particles (position vs momentum).
\subsection{Commutator formulation}\label{sec:ComDef}
In quantum science, uncertainty relations are a way to characterise what are referred to as `incompatible' observables. Observables in this context are operators whose action yields something which can be observed about the state of a quantum system. Incompatibility can be measured by the commutator $[C,D]$, a function of operators. Only when $[C,D] = 0$ are the observables said to commute, otherwise they are incompatible. Note that observables commute only if they are simultaneously diagonalisable.

As explained in Section~\ref{sec:measurement}, when an observable $M$ acts on a quantum state $\ket{\psi}$, we can recover the probability $P(m)$ that a certain value $m$ will be measured for that observable. Using this notation we write the expected value of the observable
\begin{equation}
	\langle\psi|M|\psi\rangle = \langle M\rangle = \sum_m m P(m)
\end{equation}
and its standard deviation
$
\sigma (M) = \sqrt{\langle M^2\rangle - \langle M\rangle^2},
$
which we find by examining the expected difference of the values $m$ from the expected value:
\begin{align}
	(\sigma(M))^2 & = \langle (M-\langle M\rangle)^2\rangle                       \\
	& =\langle M^2 - 2M\langle M\rangle + \langle M\rangle^2\rangle \\
	& =\langle M^2\rangle-\langle M\rangle^2.
\end{align}
Taking the square root will bring things back to the same dimensionality as the observable. Standard deviation is a measure of how spread out a distribution is, and as such is connected to the uncertainty in the measurement of that observable.

The Robertson relation~\cite{Rob} is described by the inequality
\begin{equation}
	\sigma(C)\sigma(D) \geq \frac{|\langle[C,D]\rangle|}{2}.
\end{equation}
This implies the combined spread of the observables must be greater than a function of their incompatibility, the right-hand side is always positive and only equal to zero for observables that commute on the support of $\ket{\psi}$. This allows the left-hand side to be a trivial bound for some situations where the eigenstates of the observables are orthogonal to $\ket{\psi}$.

In general, for non-commuting observables, if the outcomes of one observable can be predicted with a high level of accuracy (small standard deviation) then the uncertainty in predicting the outcome of the other observable must be proportionately high (large standard deviation).

Arguably the most famous application of this relation is the \textit{Heisenberg Uncertainty Principle}~\citep{heisenberg27} relating position and momentum:
\begin{equation}
	\sigma(X)\sigma(P)\geq \frac{\hbar}{2},
\end{equation}
where $X$ is the position observable, $P$ is the momentum observable and $\hbar$ is the reduced Planck's constant, equal to the expected value of the their commutator~\citep{kennard27,weyl28}.
\subsection{Entropic formulation}\label{sec:ED}
To discuss entropic uncertainty relations we consider finite quantum states represented by probability density matrices measured in a choice of ONB~(see Section~\ref{sec:measurement}).

\subsubsection{Uncertainty in measuring in two orthonormal bases}

Since for each $\rho\in\Den(A)$ we can recover the random variables $X$ and $Z$ for two ONBs $\mathbb{X}$ and $\mathbb{Z}$ respectively, we can examine the Shannon entropies $H(X)_\rho$ and $H(Z)_\rho$. Note that this is not the entropy associated with the state $\rho$, or even the post-measurement state, but rather the entropy associated with the measurement of $\rho$ in either basis. We then have the canonical entropic uncertainty relation~\cite{maassen88}:
\begin{theorem}
	For a mixed state $\rho_A\in \Den(A)$ and measurement bases $\mathbb{X}$ and $\mathbb{Z}$ on $A$ we have
	\begin{equation}
		H(X)_\rho + H(Z)_\rho \geq \log\frac{1}{c}=:\qmu, \quad c=\max_{x,z}|\braket{x}{z}|^2.
	\end{equation}
\end{theorem}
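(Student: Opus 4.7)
The plan is to follow the classical route through the Riesz-Thorin interpolation theorem applied to the overlap matrix $T$ with entries $T_{xz} = \braket{x}{z}$, which relates the two measurement distributions.

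First I would reduce to the case of pure states. Both measurement maps $\Map_X$ and $\Map_Z$ are linear in the input, and the Shannon entropy is concave in the induced probability distribution, so $\rho \mapsto H(X)_\rho + H(Z)_\rho$ is concave on $\Den(A)$. Writing a spectral decomposition $\rho_A = \sum_i \lambda_i \ketbra{\psi_i}{\psi_i}$, concavity lets me replace $\rho_A$ with the convex mixture of pure states, and it suffices to prove the inequality on each $\ketbra{\psi_i}{\psi_i}$. So assume $\rho_A = \ketbra{\psi}{\psi}$ with $\ket{\psi} = \sum_z a_z \ket{z}$.

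Second, I would record two operator norms of $T$ acting between $\ell_p$ spaces indexed by $\mathbb{Z}$ and $\mathbb{X}$. Since both bases are orthonormal, $T$ is unitary, giving $\|T\|_{2\to 2} = 1$; directly from the definition, $\|T\|_{1\to\infty} = \max_{x,z}|T_{xz}| = \sqrt{c}$. By Riesz-Thorin interpolation between these endpoints, for any conjugate pair with $1 \leq p \leq 2 \leq q$ and $1/p + 1/q = 1$,
\[
\|T\|_{p\to q} \leq c^{\frac{1}{2}-\frac{1}{q}}.
\]

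Third, I would apply this bound to the coefficient vector $\vec a = (a_z)_z$. Since $(T\vec a)_x = \braket{x}{\psi}$, the Born rule identifies the two measurement distributions as $p_x = |(T\vec a)_x|^2$ and $q_z = |a_z|^2$, and the interpolation bound becomes $(\sum_x p_x^{q/2})^{1/q} \leq c^{\frac{1}{2} - \frac{1}{q}} (\sum_z q_z^{p/2})^{1/p}$. Setting $\alpha = q/2$ so that $p/2 = \hat\alpha = \alpha/(2\alpha-1)$, taking logarithms, and rewriting both sides in terms of classical R\'enyi entropies, the common positive factor of $\alpha - 1$ cancels and yields
\[
H_\alpha(X)_\rho + H_{\hat\alpha}(Z)_\rho \geq \qmu
\]
for every $\alpha > 1$. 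Sending $\alpha \to 1^+$ (so $\hat\alpha \to 1$ simultaneously) and using that $H_\alpha \to H$ in this limit then delivers the Shannon version $H(X)_\rho + H(Z)_\rho \geq \qmu$.

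The main thing to check carefully is the bookkeeping in the third step: the Riesz-Thorin conjugate condition $1/p + 1/q = 1$ needs to correspond exactly to the R\'enyi conjugate condition $1/\alpha + 1/\hat\alpha = 2$ (which it does, via $\alpha = q/2$), and the interpolation exponent $\tfrac{1}{2} - \tfrac{1}{q}$ must combine with the $\tfrac{1}{1-\alpha}$ prefactors of the R\'enyi definitions to leave behind exactly $\qmu$ on the right. The reduction to pure states and the Riesz-Thorin theorem itself are both standard; the entire proof is essentially a single application of complex interpolation, foreshadowing the heavier interpolation machinery that the thesis develops in later chapters.
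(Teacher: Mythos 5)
Your argument is correct: the concavity reduction to pure states is sound, the endpoint norms $\|T\|_{2\to2}=1$ (unitarity) and $\|T\|_{1\to\infty}=\sqrt{c}$ are right, and the Riesz--Thorin exponent bookkeeping does work out — with $\alpha=q/2$ one gets $\hat\alpha=p/2$, the factor $\frac{\alpha-1}{2\alpha}$ appears on both sides and cancels for $\alpha>1$, leaving $H_\alpha(X)+H_{\hat\alpha}(Z)\geq\qmu$, and the Shannon case follows by letting $\alpha\to1^+$ (continuity of the R\'enyi entropy in its order, which is the l'H\^opital computation given elsewhere in the thesis). Note, though, that the thesis itself offers no proof of this statement: it is quoted as background from Maassen and Uffink, and your argument is essentially their original one, applying Riesz--Thorin to the overlap matrix. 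In fact you prove more than the stated theorem — your intermediate inequality is exactly the general R\'enyi form of the Maassen--Uffink relation (Eq.~\eqref{RMU} in the thesis), of which the Shannon statement is the $\alpha\to1$ special case. This is also very much in the spirit of the thesis's own methods, since Riesz--Thorin is itself a consequence of the Hadamard three-line theorem that underlies the interpolation machinery developed in Chapter~\ref{sec:interpolation-framework}; the one point worth stating explicitly if you write this up is the limit step, since for fixed $\alpha>1$ the R\'enyi relation does not dominate the Shannon one term by term ($H_\alpha\leq H\leq H_{\hat\alpha}$), so the passage to the limit is genuinely needed.
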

This indicates that the total uncertainty of measuring two different observables on the same system must be at least as large as a positive constant depending on the measurement bases but not on the measured state.

We may extend this to a result involving conditional entropy via the following rationale:
consider a generic Shannon entropic relation
\begin{equation}\label{genUR}
	\sum_n H(X_n) \geq q,
\end{equation}
with $q$ state independent. For some classical memory $Y$ containing information about the preparation of the state being measured, we can show~\citep[Section~IV.C]{coles17} that Eq.~\eqref{genUR} implies
$
	\sum_n H(X_n|Y) \geq q
$. We may therefore conclude that
\begin{equation}\label{SMU}
	H(X|Y)_\rho+H(Z|Y)_\rho           \geq \qmu.
\end{equation}

\subsubsection{Information exclusion relations}
Also known as mutual information uncertainty relations, information exclusion relations are entropic uncertainty relations expressed in terms of mutual information. 

The canonical version of this type of relation is the Hall relation~\cite{H95} which we may derive from Eq.~\eqref{SMU} with an application of the decomposition rule, Eq.~\eqref{CDecomp}:
\begin{theorem}\label{Hall}
	For a mixed state $\rho_{AB}\in \Den(A)$ and measurement bases $\mathbb{X}$ and $\mathbb{Z}$ on $A$ and $\mathbb{Y}$ on $B$ we have
	\begin{align}                                    
		I(X:Y)_\rho+I(Z:Y)_\rho  & \leq \log (d^2c)=:r_{\textsc{H}}\label{HREL},
	\end{align}
	where $d$ is the dimension of $\Hil_A$.
\end{theorem}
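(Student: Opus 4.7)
The plan is to derive the Hall relation directly by combining the conditional Maassen-Uffink inequality in Eq.~\eqref{SMU} with the classical decomposition rule for mutual information in Eq.~\eqref{CDecomp}. First I would rewrite each mutual information term as $I(X:Y)_\rho = H(X)_\rho - H(X|Y)_\rho$ and analogously $I(Z:Y)_\rho = H(Z)_\rho - H(Z|Y)_\rho$, noting that these decompositions apply since, after the measurement maps $\Map_X$ and $\Map_Z$ are applied, the systems $XY$ and $ZY$ are jointly classical. Summing the two identities gives
\begin{equation}
I(X:Y)_\rho + I(Z:Y)_\rho = H(X)_\rho + H(Z)_\rho - \bigl(H(X|Y)_\rho + H(Z|Y)_\rho\bigr).
\end{equation}

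Next I would apply the conditional Maassen-Uffink bound $H(X|Y)_\rho + H(Z|Y)_\rho \geq \qmu = \log(1/c)$, which controls the subtracted term from below and produces
\begin{equation}
I(X:Y)_\rho + I(Z:Y)_\rho \leq H(X)_\rho + H(Z)_\rho - \qmu.
\end{equation}
Finally, since $X$ and $Z$ are random variables obtained by measuring $\rho_A$ in an orthonormal basis of $\Hil_A$, each is supported on at most $d = \dim(\Hil_A)$ outcomes, and the standard Shannon entropy bound yields $H(X)_\rho \leq \log d$ and $H(Z)_\rho \leq \log d$. Combining, I obtain
\begin{equation}
I(X:Y)_\rho + I(Z:Y)_\rho \leq 2\log d - \qmu = 2\log d + \log c = \log(d^2 c) = \rh.
\end{equation}

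There is no real obstacle in this argument, since both the conditional entropic uncertainty relation and the decomposition rule are already available in the classical (post-measurement) setting. The only subtlety worth flagging is the passage from the unconditional Maassen-Uffink relation to its conditional strengthening in Eq.~\eqref{SMU}, which the excerpt attributes to~\cite{coles17} and which I would simply invoke. The looseness in uniformly bounding $H(X)_\rho, H(Z)_\rho$ by $\log d$ is precisely the slack that will later be tightened by state-dependent improvements of the Hall relation, and also what one would expect to fail in a direct R\'enyi generalisation, motivating the decomposition rules developed elsewhere in the thesis.
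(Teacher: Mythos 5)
Your proof is correct and follows exactly the route the thesis indicates: decompose each mutual information via Eq.~\eqref{CDecomp}, apply the conditional Maassen-Uffink bound of Eq.~\eqref{SMU}, and bound $H(X)_\rho$ and $H(Z)_\rho$ by $\log d$ to obtain $\log(d^2c)$. No gaps; this matches the paper's own derivation of Theorem~\ref{Hall}.
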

Note $d$ appears due to both $H(X)$ and $H(Y)$ being bounded above by $\log d$.
\subsection{Improvements and extensions}\label{sec:improvements-and-extensions}
Here we detail some of the pre-existing results relevant to the generalised entropic uncertainty relations in Chapter~\ref{sec:generalised-renyi-divergence-uncertainty-relations}. These take the form of improved bounds, tighter inequalities, extensions to R\'enyi entropies or a combination thereof.
\subsubsection{Improved bounds} 
\citet{Ber} have improved on Eq.~\eqref{SMU} by considering the conditional entropy of the state $\rho$, which quantifies its inherent mixedness:\begin{theorem}\label{CPIER}
	For a mixed state $\rho_{AB}\in \Den(AB)$ and measurement bases $\mathbb{X}$ and $\mathbb{Z}$ on $A$ we have
	\begin{equation}
		H(X|B)_\rho + H(Z|B)_\rho \geq \qmu + H(A|B)_\rho.\label{BMU}
	\end{equation}
\end{theorem}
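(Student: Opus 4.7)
My plan is to reduce the inequality to a ``tripartite'' entropic uncertainty relation via a purification argument. I would introduce a purifying system $C$ and let $\ket{\psi}_{ABC}\in \Hil_{ABC}$ be a purification of $\rho_{AB}$, i.e.\ $\tr_C\ketbra{\psi}{\psi} = \rho_{AB}$. Define $\sigma_{XBC} = \Map_X(\ketbra{\psi}{\psi})$, the post-measurement state obtained by applying the measurement map $\Map_X$ only on the $A$-subsystem. Since the measurement leaves $B$ untouched, $\sigma_{XB} = \rho_{XB}$ and hence $H(X|B)_\sigma = H(X|B)_\rho$; similarly for the $Z$ measurement.

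The first key ingredient is a ``swapping'' identity that converts conditioning on $C$ into conditioning on $B$ at the cost of $H(A|B)_\rho$. Conditioned on outcome $X=x$, the residual state on $BC$ is pure, because projecting the global pure state $\ket{\psi}_{ABC}$ onto the ONB element $\ketbra{x}{x}_A$ leaves a pure state on the complement. This forces $H(B|X)_\sigma = H(C|X)_\sigma$. Combining this with the pure-state identity $H(C)_\psi = H(AB)_\rho$, the fact that $H(B)_\sigma = H(B)_\rho$ and $H(C)_\sigma = H(C)_\psi$, and the chain rule $H(X|B) = H(X) + H(B|X) - H(B)$ (applied to both $B$ and $C$), direct algebra yields
\begin{equation}
H(X|C)_\sigma = H(X|B)_\rho - H(A|B)_\rho.
\end{equation}

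The second ingredient is the tripartite uncertainty bound
\begin{equation}
H(X|C)_\sigma + H(Z|B)_\rho \geq \qmu,
\end{equation}
which can be derived from the Maassen--Uffink relation via a Stinespring-type dilation together with the monotonicity of quantum relative entropy under CPTP maps. Substituting the swapping identity into this bound immediately rearranges to $H(X|B)_\rho + H(Z|B)_\rho \geq \qmu + H(A|B)_\rho$, which is the claim. The main obstacle is establishing the tripartite uncertainty bound in a self-contained way, since that bound really is the heart of the \cite{Ber} argument; once it is invoked, the remainder of the proof is essentially bookkeeping with chain rules and the duality of pure-state von Neumann entropy.
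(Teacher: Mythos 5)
Your reduction is sound, but note that the thesis itself never proves this statement: Theorem~\ref{CPIER} is imported as background from \cite{Ber}, so there is no internal proof to match against. Your swapping identity checks out: conditioned on $X=x$ the residual state on $BC$ is pure, so $H(B|X)_\sigma = H(C|X)_\sigma$, and together with $H(B)_\sigma = H(B)_\rho$, $H(C)_\sigma = H(C)_\psi = H(AB)_\rho$ and the chain rule one gets $H(X|C)_\sigma = H(X|B)_\rho - H(A|B)_\rho$; substituting this into the tripartite relation $H(X|C)_\sigma + H(Z|B)_\rho \geq \qmu$ indeed rearranges to Eq.~\eqref{BMU}. This purification-plus-duality organisation is the original Berta et al.\ one. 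The machinery this thesis actually builds for relations of this type (Theorem~\ref{Marcos} and the proof pattern of Theorem~\ref{GBUR}) takes a slightly different route: a Stinespring dilation of the $\mathbb{Z}$ measurement, data processing, the overlap bound of Eq.~\eqref{comp2}, and a chain rule, with no explicit purification of $\rho_{AB}$; specialising all R\'enyi orders there to $1$ recovers Eq.~\eqref{BMU} directly. Your route buys a clean separation into ``duality bookkeeping'' plus one black-box inequality; the thesis's route is the one that survives the generalisation to arbitrary orders, where the von Neumann identities you use become inequalities.

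The substantive caveat is the one you flag yourself: the tripartite bound $H(X|C)+H(Z|B)\geq\qmu$ is where all the analytic content lives, and it is the well-known equivalent form of the very relation you are proving (your swapping identity is exactly the equivalence), so invoking it as a black box makes your argument a reduction rather than a self-contained proof. Your sketch of how to obtain it~--~Stinespring dilation plus monotonicity of relative entropy~--~is the standard and correct route, and the order-$1$ specialisation of Eqs.~\eqref{initineq}--\eqref{comp2} supplies precisely this step: applying them to the purified state and using duality of the conditional entropy (Proposition~\ref{condDual} at $\alpha=\hat\alpha=1$) turns $H(Z|Z'B)$ into $-H(Z|C)$, giving $H(X|B)+H(Z|C)\geq\qmu$. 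Filling that in completes your proof.
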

With this relation as a starting point,~\citet{coles14} have derived a tighter, state-dependent version:
\begin{theorem}\label{CPBUR}
	For a mixed state $\rho_{AB}\in \Den(AB)$ and measurement bases $\mathbb{X}$ and $\mathbb{Z}$ on $A$ we have
	\begin{equation}
		H(X|B)_\rho + H(Z|B)_\rho \geq q(\rho) + H(A|B)_\rho,
	\end{equation}
	where $q(\rho) := \max\{q(\rho, \mathbb{X}, \mathbb{Z}), q(\rho, \mathbb{Z}, \mathbb{X}) \}$ and $q(\rho_A, \mathbb{A}, \mathbb{B}) := -\sum_a p(a)\log\left(\max_b\left|\braket{a}{b}\right|^2\right)$.
\end{theorem}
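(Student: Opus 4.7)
The plan is to sharpen the data-processing step in the proof of Theorem~\ref{CPIER}, upgrading the state-independent overlap $\qmu$ to the state-dependent average $q(\rho)$. Writing each conditional entropy in relative-entropy form via Eq.~\eqref{VNDecomp}, the desired inequality is equivalent to
\begin{equation}
D(\rho_{AB}\|\id_A\otimes\rho_B) - D(\rho_{XB}\|\id_X\otimes\rho_B) - D(\rho_{ZB}\|\id_Z\otimes\rho_B) \;\geq\; q(\rho).
\end{equation}
The left-hand side is the total loss of relative entropy caused by the two measurements acting on $A$. In the proof of Theorem~\ref{CPIER} this loss is lower bounded using a single application of data-processing together with the operator inequality $\mathcal{M}_X(\id_A)\preceq c\,\id_X$; the refinement needed here is to use a $\mathbb{Z}$-resolved version of the same operator inequality.

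First I would decompose the identity against the $\mathbb{Z}$-basis, $\id_A = \sum_z \ketbra{z}{z}$, and note that under $\mathcal{M}_X$ each rank-one summand obeys
\begin{equation}
\mathcal{M}_X(\ketbra{z}{z}) \;=\; \sum_x |\braket{x}{z}|^2 \ketbra{x}{x} \;\preceq\; c(z)\,\id_X, \qquad c(z) := \max_x |\braket{x}{z}|^2.
\end{equation}
Substituting this into the data-processing calculation naturally weights each $z$ by the $\mathbb{Z}$-measurement probability $p(z) = \bra{z}\rho_A\ket{z}$, and the logarithm produces precisely the quantity $-\sum_z p(z)\log c(z) = q(\rho,\mathbb{Z},\mathbb{X})$ as an additive correction to the DPI gap. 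The symmetric argument, interchanging the roles of $\mathbb{X}$ and $\mathbb{Z}$, yields the bound with $q(\rho,\mathbb{X},\mathbb{Z})$; since both hold, taking the maximum recovers $q(\rho)$.

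The principal technical obstacle is executing this weighted substitution cleanly in the presence of the quantum side-information $B$, so that one genuinely obtains the additive $q(\rho,\mathbb{Z},\mathbb{X})$ rather than a weaker bound. The natural way to do this is to dilate $\mathcal{M}_X$ to a Stinespring isometry (so that DPI becomes an equality on the isometric part and only the partial trace over the measurement environment contributes) and then invoke operator monotonicity of $\log$ together with joint convexity of the relative entropy on the $\mathbb{Z}$-pinched decomposition of $\id_A\otimes\rho_B$. As a consistency check, replacing each $c(z)$ by its uniform upper bound $c = \max_{x,z}|\braket{x}{z}|^2$ collapses $q(\rho)$ to $\qmu$ and recovers Theorem~\ref{CPIER} exactly, so the new inequality is never weaker.
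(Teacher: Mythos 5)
Your overall strategy (sharpen the data-processing step from the uniform overlap $c$ to a basis-resolved bound, then symmetrise and take the maximum) is the right family of ideas — it is how Coles and Piani argue, and it is the skeleton the thesis reuses for the R\'enyi version in Theorem~\ref{SDGBUR} (the paper itself only cites~\cite{coles14} for the present statement). But the specific refinement you propose does not deliver the state-dependent term. The mismatch is in which index can carry the weights. When you data-process with $\Map_X$, the first argument of every relative entropy becomes $\rho_{XB}$, which is diagonal in the $\mathbb{X}$ basis and knows nothing about the $\mathbb{Z}$-diagonal of $\rho_A$; the only state-dependent weights you can ever extract from such an expression are the $X$-outcome probabilities, via a second argument of the form $\omega_X\otimes\sigma_B$ with $\omega_X=\sum_x(\max_z c_{x,z})\ketbra{x}{x}$, which commutes with $\rho_{XB}$ on the $X$ factor and yields $\tr(\rho_X\log\omega_X)=\sum_x p(x)\log\max_z c_{x,z}$, i.e.\ $q(\rho,\mathbb{X},\mathbb{Z})$. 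Your $z$-resolved bound $\Map_X(\ketbra{z}{z})\preceq c(z)\id_X$ throws away exactly the $x$-dependence that makes this work: after substituting it, the label $z$ survives only on the $B$-side operator $\bra{z}\sigma_{Z'B}\ket{z}$, and there is no first-argument operator diagonal in $z$ against which a logarithm can produce $-\sum_z p(z)\log c(z)$. Joint convexity and operator monotonicity of $\log$ only let you bound $\sum_z c(z)\bra{z}\sigma_{Z'B}\ket{z}\preceq(\max_z c(z))\,\sigma_B$, which collapses the correction back to a state-independent constant no better than $\qmu$ — so the step ``the weighting naturally appears with $p(z)=\bra{z}\rho_A\ket{z}$'' is precisely the gap, not a technicality.

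To actually obtain the $q(\rho,\mathbb{Z},\mathbb{X})$ branch you must swap the roles wholesale rather than refine the same step: Stinespring-dilate $\Map_X$ (keeping an $X'$ register), apply the data-processing inequality with respect to $\Map_Z$, and bound $\Map_Z\bigl(\mathcal{S}_X^{-1}(\id_X\otimes\sigma_{X'B})\bigr)=\sum_{x,z}c_{x,z}\ketbra{z}{z}\otimes\bra{x}\sigma_{X'B}\ket{x}\preceq\omega_Z\otimes\sigma_B$ with $\omega_Z=\sum_z(\max_x c_{x,z})\ketbra{z}{z}$; then $\tr(\rho_Z\log\omega_Z)$ gives the $p(z)$-weighted average, and the tripartite chain rule on the dilated state, $H(X|X'B)\geq H(XX'|B)-H(X'|B)=H(A|B)_\rho-H(X|B)_\rho$ (using isometric invariance and the equality of the $XB$ and $X'B$ marginals), supplies the $H(A|B)$ term — a step your sketch also leaves implicit. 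Each ordering is proved by the corresponding choice of which measurement is dilated and which is processed, and the maximum then gives $q(\rho)$; your consistency check (recovering Theorem~\ref{CPIER} when all $c$'s are replaced by $c$) is fine but does not certify the weighted extraction, which is where the argument as written fails.
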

This can be weakened slightly to a state-independent version:
\begin{theorem}
	For a mixed state $\rho_{AB}\in \Den(AB)$ and measurement bases $\mathbb{X}$ and $\mathbb{Z}$ on $A$ we have
	\begin{equation}
		H(X|B)_\rho + H(Z|B)_\rho \geq q_\textsc{CP} + H(A|B)_\rho,
	\end{equation}
	where $q_\textsc{CP} := \displaystyle{\min_{\sigma\in\Den(AB)} q(\sigma)}$.
\end{theorem}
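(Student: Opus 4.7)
The plan is to obtain this statement as an immediate corollary of the state-dependent Theorem~\ref{CPBUR}. Since that result applies to every $\rho_{AB} \in \Den(AB)$, all that remains is to weaken the state-dependent constant $q(\rho)$ on its right-hand side into the uniform constant $q_{\textsc{CP}}$ in the correct direction, which requires no more than invoking the definition of a minimum.

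First I would apply Theorem~\ref{CPBUR} to the given $\rho_{AB}$ to obtain
\begin{equation}
H(X|B)_\rho + H(Z|B)_\rho \geq q(\rho) + H(A|B)_\rho.
\end{equation}
Since $\rho_{AB}$ itself lies in $\Den(AB)$, the definition $q_{\textsc{CP}} := \min_{\sigma \in \Den(AB)} q(\sigma)$ immediately gives the trivial lower bound $q(\rho) \geq q_{\textsc{CP}}$. Substituting this into the right-hand side of the displayed inequality produces the claimed bound, since the dependence on $\rho$ in the remaining term $H(A|B)_\rho$ is left untouched.

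The only technical point worth checking is that the minimum defining $q_{\textsc{CP}}$ is actually attained, so that the constant is finite and depends only on the pair of bases $\mathbb{X}, \mathbb{Z}$. This follows from a standard compactness argument: each map $\sigma \mapsto q(\sigma, \mathbb{A}, \mathbb{B}) = -\sum_a \bra{a}\sigma_A\ket{a} \log\!\left(\max_b |\braket{a}{b}|^2\right)$ depends continuously on $\sigma$ through its marginal $\sigma_A$, so $q(\sigma) = \max\{q(\sigma, \mathbb{X}, \mathbb{Z}), q(\sigma, \mathbb{Z}, \mathbb{X})\}$ is the maximum of two continuous functions on the compact set $\Den(AB)$ and hence attains its minimum. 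There is no genuine obstacle here: the entire argument fits in a single chain of inequalities, and all of the substantive content of the bound is inherited from Theorem~\ref{CPBUR}, with the present theorem serving only as a convenient state-independent packaging for use in situations where evaluating $q(\rho)$ directly is undesirable.
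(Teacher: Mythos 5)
Your proposal is correct and matches the paper's (implicit) treatment: the thesis presents this statement precisely as a weakening of the state-dependent bound of Theorem~\ref{CPBUR}, obtained by replacing $q(\rho)$ with $q_{\textsc{CP}} = \min_{\sigma\in\Den(AB)} q(\sigma)$, exactly as you do. Your added compactness/continuity remark on attainment of the minimum is a harmless (and correct) bonus not spelled out in the text.
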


These relations are then used to produce improved versions of information exclusion relation in Eq.~\eqref{HREL}. Firstly, they proved the conjecture of~\citet{Gru}, who proposed the following:
\begin{proposition}
	For a mixed state $\rho_{AB}\in \Den(AB)$ and measurement bases $\mathbb{X}$ and $\mathbb{Z}$ on $A$ we have
	\begin{align}
		I(X:B)_\rho+I(Z:B)_\rho     \leq r_{\textsc{G}}\quad \text{where}\quad r_{\textsc{G}} : = \log \left(d\sum_{\text{d largest}}\max_{x,z}|\braket{x}{z}|^2 \right),
	\end{align}
with the sum over the largest $d$ terms of the matrix $\left[\left|\braket{x}{z}\right|^2\right]$.
\end{proposition}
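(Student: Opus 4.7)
The plan is to imitate the derivation of Hall's relation (Theorem~\ref{Hall}) but using the improved ingredients now available, and then to tighten the resulting state-dependent bound into a state-independent one that exploits the global structure of the overlap matrix $M:=[|\braket{x}{z}|^2]$.

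First I would apply the decomposition rule~\eqref{CDecomp} to each mutual information, obtaining
\begin{equation}
I(X:B)_\rho + I(Z:B)_\rho = H(X)_\rho + H(Z)_\rho - H(X|B)_\rho - H(Z|B)_\rho.
\end{equation}
Two natural lower bounds are available for the conditional sum: the Berta-type inequality (Theorem~\ref{CPIER}), which gives $\qmu + H(A|B)_\rho$, and its Coles--Piani refinement (Theorem~\ref{CPBUR}), which gives $q(\rho) + H(A|B)_\rho$. Invoking the refined version yields
\begin{equation}
I(X:B)_\rho + I(Z:B)_\rho \leq H(X)_\rho + H(Z)_\rho - q(\rho) - H(A|B)_\rho.
\end{equation}

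Second, to convert this into the state-independent bound $r_\textsc{G}$, one cannot simply apply $H(X)_\rho,H(Z)_\rho \leq \log d$, $H(A|B)_\rho \geq -\log d$, and $q(\rho) \geq \qmu$ in isolation, since combining those crude bounds would reproduce $\rh$ at best and $r_\textsc{G}$ is strictly sharper in general. The right-hand side must instead be estimated \emph{jointly}: for any $\rho_A$, the shortfall of $H(X)_\rho+H(Z)_\rho$ below $2\log d$ must be accompanied by a compensating shortfall in $q(\rho)$ below its worst-case value, in a way controlled by the $d$ largest entries of $M$. Concretely, since $H(X)_\rho$ and $H(Z)_\rho$ depend only on $\rho_A$, the problem reduces to a purely classical optimisation over probability vectors $p=\mathrm{diag}(\rho_A)$ in the $\mathbb{X}$ basis, with the $\mathbb{Z}$-statistics given by $Mp$.

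The main obstacle is this classical optimisation. I would attack it via Schur majorisation: the doubly stochastic matrix $M$ forces $Mp$ to majorise neither $p$ nor vice versa in a controlled way, and a Lagrangian/KKT analysis on the simplex — combined with the symmetric definition $q(\rho)=\max\{q(\rho,\mathbb{X},\mathbb{Z}),q(\rho,\mathbb{Z},\mathbb{X})\}$ — should single out the rows and columns of $M$ hosting the globally $d$ largest entries. The key subtlety, and the reason the Grudka conjecture was nontrivial, is that these entries are selected from the whole matrix rather than row- or column-wise; matching this global combinatorial object to the local structure of $q(\rho)$ (which only sees row-maxima) is the non-routine ingredient in the proof.
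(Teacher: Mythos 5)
You have correctly identified the standard opening moves (decomposition rule plus Theorem~\ref{CPBUR}), but the proposal has a genuine gap exactly where the proof has to happen: the ``joint estimation'' in your final paragraph is announced rather than performed, and the machinery you point to (Schur majorisation, a KKT analysis on the simplex that is supposed to single out the globally $d$ largest entries of $[c_{x,z}]$) is not how this bound arises. Note first that the thesis itself does not prove the proposition at all: it is quoted as Coles and Piani's resolution of Grudka's conjecture and justified in one line as a \emph{weaker} version of the $\rcp$ theorem stated immediately after it, because the $d$ row maxima $\max_z c_{x,z}$ are $d$ distinct entries of the overlap matrix, so $\sum_x\max_z c_{x,z}\le\sum_{d\text{ largest}}c_{x,z}$ and hence $r(\mathbb{X},\mathbb{Z})\le r_{\textsc{G}}$. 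The ``joint'' step you flag as the hard part is in fact a single application of Jensen's inequality: with $p_x=\bra{x}\rho_A\ket{x}$ and $q(\rho)\ge q(\rho,\mathbb{X},\mathbb{Z})$ one has $H(X)_\rho-q(\rho,\mathbb{X},\mathbb{Z})=\sum_x p_x\log\bigl(\max_z c_{x,z}/p_x\bigr)\le\log\sum_x\max_z c_{x,z}$, which combined with $H(Z)_\rho\le\log d$ turns your step-3 bound into $I(X:B)_\rho+I(Z:B)_\rho\le\log\bigl(d\sum_x\max_z c_{x,z}\bigr)-H(A|B)_\rho$. The ``$d$ largest entries'' then enter only through the trivial combinatorial weakening above, not through any optimisation over $\rho_A$; no majorisation or KKT argument is needed, and your proposal never actually supplies the argument it promises.

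A second unaddressed point is the $-H(A|B)_\rho$ term that your own step 3 produces. It cannot be silently dropped for quantum memory: for a maximally entangled state measured in mutually unbiased bases one has $I(X:B)_\rho+I(Z:B)_\rho=2\log d$ while $r_{\textsc{G}}=\log d$, so the bound without a conditional-entropy correction holds only if the term is retained on the right-hand side (as in the $\rcp$-type relations and in the $\alpha,\beta,\gamma\to1$ limit of Theorem~\ref{IER}) or if the memory is classical, in which case $H(A|B)_\rho\ge0$ and it can be discarded. Any complete proof must state which of these it is doing; your proposal is silent on this, and with it the chain of inequalities you sketch does not yet reach the stated conclusion.
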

Indeed, this is actually a weaker version of the main result of Coles and Piani:
\begin{theorem}
	For a mixed state $\rho_{AB}\in \Den(AB)$ and measurement bases $\mathbb{X}$ and $\mathbb{Z}$ on $A$ we have
	\begin{align}
		I(X:B)_\rho+I(Z:B)_\rho & \leq \rcp,
	\end{align}
	where
	$
	\rcp :          = \min\left\{r(\mathbb{X},\mathbb{Z}),r(\mathbb{Z},\mathbb{X})\right\}$ and $
	r(\mathbb{A},\mathbb{B}):  = \log\left(d\sum_a \max_b \left|\braket{a}{b}\right|^2\right)
	$.
\end{theorem}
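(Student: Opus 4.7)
The plan is to combine the state-dependent conditional Maassen--Uffink bound of Theorem~\ref{CPBUR} with the mutual information decomposition rule of Eq.~\eqref{VNDecomp}, and then to use Jensen's inequality on the remaining unconditional terms to extract the state-independent coefficient $r(\mathbb{X},\mathbb{Z})$. The strategy mirrors the derivation of Hall's relation from the (state-independent) Berta-type bound, but with every step sharpened by using the pinched $q(\rho,\mathbb{X},\mathbb{Z})$ in place of $\qmu$, which is precisely what introduces the $\sum_x\max_z|\braket{x}{z}|^2$ structure on the right-hand side.

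First I would invoke the decomposition rule to write
\[
I(X:B)_\rho+I(Z:B)_\rho = H(X)_\rho+H(Z)_\rho-H(X|B)_\rho-H(Z|B)_\rho,
\]
and apply Theorem~\ref{CPBUR} with the ordering that yields $q(\rho,\mathbb{X},\mathbb{Z})$, obtaining the upper estimate $H(X)_\rho+H(Z)_\rho-q(\rho,\mathbb{X},\mathbb{Z})-H(A|B)_\rho$. The next step is to combine $H(X)_\rho$ with the state-dependent coefficient: expanding both using $p(x)=\bra{x}\rho_A\ket{x}$ gives $H(X)_\rho-q(\rho,\mathbb{X},\mathbb{Z})=\sum_x p(x)\log\bigl(\max_z|\braket{x}{z}|^2/p(x)\bigr)$. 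Concavity of the logarithm (Jensen's inequality) pulls the weights inside, eliminating all dependence on the measured state and leaving the uniform bound $\log\sum_x\max_z|\braket{x}{z}|^2$. Using additionally the trivial bound $H(Z)_\rho\leq\log d$ and collecting terms yields the intermediate inequality
\[
I(X:B)_\rho+I(Z:B)_\rho\leq r(\mathbb{X},\mathbb{Z})-H(A|B)_\rho.
\]
The entire argument is symmetric under swapping the roles of $\mathbb{X}$ and $\mathbb{Z}$ (trivially bounding $H(X)_\rho\leq\log d$ and using $q(\rho,\mathbb{Z},\mathbb{X})$ instead), so the same inequality holds with $r(\mathbb{Z},\mathbb{X})$ in place of $r(\mathbb{X},\mathbb{Z})$, and one can take the minimum to conclude with $\rcp$.

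The main obstacle I anticipate is controlling the residual $-H(A|B)_\rho$ contribution: it is non-negative precisely when $\rho_{AB}$ is not too entangled, so for separable or classically correlated $B$ the term may be dropped to recover the stated $\rcp$ bound, but in general what the above chain actually proves is the sharper state-dependent refinement $I(X:B)_\rho+I(Z:B)_\rho\leq\rcp-H(A|B)_\rho$, whose positivity of $-H(A|B)_\rho$ for entangled states accounts for the cases where $\rcp$ alone would otherwise be tighter than the trivial $2\log d$. The cleanest way to reach the bound as stated is thus to assume the CP bipartite relation has already been optimised over the ordering of $\mathbb{X}$ and $\mathbb{Z}$ so that $q(\rho)=\max\{q(\rho,\mathbb{X},\mathbb{Z}),q(\rho,\mathbb{Z},\mathbb{X})\}$ appears from the outset, allowing both halves of the minimum $\rcp$ to be accessed without repeating the argument.
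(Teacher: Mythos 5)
Your overall route is the right one, and it is essentially the same derivation the paper itself uses at the R\'enyi level: combine the improved conditional uncertainty relation (Theorem~\ref{CPBUR}) with the decomposition rule \eqref{VNDecomp}, absorb the state-dependent constant into $H(X)_\rho$ by concavity of the logarithm, and bound the remaining entropy by $\log d$. Your Jensen step, $H(X)_\rho-q(\rho,\mathbb{X},\mathbb{Z})=\sum_x p(x)\log\bigl(\max_z|\braket{x}{z}|^2/p(x)\bigr)\leq\log\sum_x\max_z|\braket{x}{z}|^2$, is exactly the $\alpha\to1$ limit of Lemma~\ref{constCompLemma}, and the symmetrisation over the two orderings to reach the minimum defining $\rcp$ is fine.

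The genuine gap is the last step. What your chain proves is
\begin{equation}
I(X:B)_\rho+I(Z:B)_\rho\leq\rcp-H(A|B)_\rho,
\end{equation}
and the term $-H(A|B)_\rho$ cannot be discarded: it is \emph{strictly positive} precisely for strongly entangled states, which is the opposite of what your closing paragraph suggests (the retained-term bound is not ``sharper'' there, it is larger than $\rcp$). Concretely, for a maximally entangled state and mutually unbiased bases one has $I(X:B)_\rho=I(Z:B)_\rho=\log d$ while $\rcp=\log d$, so the bound without the conditional entropy fails, whereas $\rcp-H(A|B)_\rho=2\log d$ is exactly attained. Hence no refinement of your argument (nor the suggested re-ordering of $q(\rho,\mathbb{X},\mathbb{Z})$ versus $q(\rho,\mathbb{Z},\mathbb{X})$, which does not touch the sign issue) can remove the term; dropping it is legitimate only when $H(A|B)_\rho\geq0$, e.g.\ for separable $\rho_{AB}$ or the classical-memory setting of Theorem~\ref{Hall}. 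The form with $-H(A|B)_\rho$ is in fact the correct statement of the Coles--Piani result, and it is the one recovered from Theorem~\ref{IER} when all R\'enyi orders tend to $1$; so your derivation lands on the right inequality, and the defect lies in forcing it into the weaker-looking unconditional form rather than in the argument itself.
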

\subsubsection{Extensions to R\'enyi entropies}
We now consider the known extensions of these relations to R\'enyi entropies of specific order. In fact, to begin we need look no further than the Maassen-Uffink relation as it was originally proposed~\citep{maassen88} in the general form:
\begin{theorem}
	For a mixed state $\rho_A\in \Den(A)$ and measurement bases $\mathbb{X}$ and $\mathbb{Z}$ on $A$ with $\alpha, \hat\alpha \geq \frac{1}{2}$ such that $ \frac{1}{\alpha} + \frac{1}{\hat\alpha} = 2$,
	\begin{align}
		H_\alpha(X)+H_{\hat\alpha}(Z) \geq \qmu\label{RMU}.
	\end{align}
\end{theorem}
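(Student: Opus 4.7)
The plan is to exploit the connection between R\'enyi entropies and $\ell^p$-norms via the Riesz--Thorin interpolation theorem, applied to the unitary basis-change operator $U$ with entries $U_{zx} = \braket{z}{x}$. First I would establish the two endpoint norm bounds: $\|U\|_{2 \to 2} = 1$ by unitarity, and $\|U\|_{1 \to \infty} = \max_{x,z}|\braket{z}{x}| = \sqrt{c}$ by definition of $c$. Interpolating between these yields $\|U\|_{p \to q} \leq c^{1/p - 1/2}$ for all $p \in [1,2]$ with conjugate exponent $q$ satisfying $1/p + 1/q = 1$.

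Next, for a pure state $\ket{\psi}$, I would write $a_x = \braket{x}{\psi}$ and $b_z = \braket{z}{\psi}$ so that $b = Ua$. Setting $p = 2\alpha$ and $q = 2\hat\alpha$, the duality $1/p + 1/q = 1$ is exactly the hypothesis $1/\alpha + 1/\hat\alpha = 2$, and $p \in [1,2]$ matches $\alpha \in [1/2,1]$ (the complementary range follows by exchanging the roles of $\mathbb{X}$ and $\mathbb{Z}$, since $c$ is symmetric). The Riesz--Thorin inequality $\|b\|_{2\hat\alpha} \leq c^{1/(2\alpha) - 1/2} \|a\|_{2\alpha}$ becomes the desired entropic bound after taking logarithms and using $H_\alpha(X)_\psi = \frac{2\alpha}{1-\alpha}\log \|a\|_{2\alpha}$, while carefully tracking a sign flip since $1-\alpha$ and $1-\hat\alpha$ have opposite signs.

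To extend to a general mixed $\rho_A$, I would promote the scalar Riesz--Thorin to an operator-valued statement. Rather than decomposing $\rho_A = \sum_k \lambda_k \ketbra{\psi_k}{\psi_k}$ and invoking concavity of $H_\alpha$ (which is available only for $\alpha \leq 1$), I would apply Hadamard's three-line theorem to an analytic family of operators built from $\rho_A^{1/2}$ and the basis projectors, interpolating between an easy Hilbert--Schmidt endpoint at $s=1/2$ and a trivial operator-norm endpoint at $s=0$. Equivalently, one may invoke the data-processing inequality for the R\'enyi divergence under the pinching channel $\mathcal{M}_Z$ combined with an operator-level interpolation bound on the map induced by $U$ acting on Schatten classes.

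The main obstacle is precisely this mixed-state step: the naive ``pure-state plus convex combination'' argument fails when $\hat\alpha > 1$ because $H_\alpha$ is not concave in the density matrix there, so the proof must genuinely engage with the operator level. This is the point at which the classical Maassen--Uffink argument diverges from a purely elementary treatment and foreshadows the general complex-interpolation machinery developed in later chapters of this thesis, making this theorem a natural motivating example.
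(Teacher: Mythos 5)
Your proposal is, in essence, the original Maassen--Uffink argument, and it is sound: the endpoint bounds $\|U\|_{2\to2}=1$ and $\|U\|_{1\to\infty}=\sqrt{c}$, the Riesz--Thorin bound $\|U\|_{p\to q}\leq c^{1/p-1/2}$, the identification $p=2\alpha$, $q=2\hat\alpha$ with $H_\alpha(X)_\psi=\frac{2\alpha}{1-\alpha}\log\|a\|_{2\alpha}$, and the sign flip from $\frac{2\hat\alpha}{1-\hat\alpha}=-\frac{2\alpha}{1-\alpha}$ do combine to give exactly $\qmu$ for pure states, with the case $\alpha>1$ handled by the $\mathbb{X}\leftrightarrow\mathbb{Z}$ symmetry. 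Note, however, that the thesis does not prove Eq.~\eqref{RMU} at all: it is quoted as the original result of~\citet{maassen88}, and the text only remarks that it is recovered from the stronger conditional relation, Theorem~\ref{Marcos} (itself of the same family as Theorem~\ref{GBUR}, proved here via chain rules, data processing and the operator-valued three-line machinery), by taking $B$ trivial and sending the remaining order to $0$. So your route is genuinely different: a direct, elementary interpolation of the overlap matrix, which is self-contained and makes the origin of the constant $c$ transparent, whereas the divergence route is heavier but delivers the memory-assisted and state-dependent refinements for free and is the template the thesis generalises. You are also right that the naive mixed-state reduction fails, since $H_{\hat\alpha}$ is not concave for $\hat\alpha>1$; but your description of the fix is slightly off. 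The clean repair is a vector-valued Riesz--Thorin: work with the rows $a_x=\bra{x}\rho_A^{1/2}$, so that $p(x)=\|a_x\|_2^2$ and the $\mathbb{Z}$-rows are obtained by left multiplication with $U$; the two endpoints are then $\ell^2(\ell^2)\to\ell^2(\ell^2)$ with norm $1$ (the Hilbert--Schmidt endpoint) and $\ell^1(\ell^2)\to\ell^\infty(\ell^2)$ with norm $\sqrt{c}$ -- it is this $1\to\infty$ mixed-norm bound, not a ``trivial operator-norm endpoint'', that carries the constant. With the inner $\ell^2$ index inert, the interpolation (or an explicit three-line argument in the spirit of Theorem~\ref{weighted3lines}) goes through verbatim and the pure-state algebra applies unchanged, so with that endpoint stated correctly your proof is complete.
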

Note for $\alpha,\beta = 1$ we recover Eq.~\eqref{SMU}.

This not only establishes a strong precedent for R\'enyi uncertainty relations but reveals that they are a fundamental part of the mathematical structure of this field.

\citet{mybook} derived the main improvement of this relation, described in terms of quantum R\'enyi conditional entropy: 
\begin{theorem}\label{Marcos}
	For a mixed state $\rho_{AB}\in \Den(AB)$ and measurement bases $\mathbb{X}$ and $\mathbb{Z}$ on $A$ with ${\frac{\alpha}{\alpha-1} = \frac{\gamma}{\gamma-1}+\frac{\beta}{\beta-1}},$ and $(\alpha-1)(\beta-1)(\gamma-1) < 0$.
	\begin{equation}
		H^\uparrow_\gamma(X|B) + H^\uparrow_\beta(Z|B) \geq \qmu + H^\uparrow_\alpha(A|B).
	\end{equation}
\end{theorem}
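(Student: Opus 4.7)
The plan is to combine Dupuis' chain rule (Theorem~\ref{chainDup}) with a conditional lifting of the classical R\'enyi Maassen-Uffink relation (Eq.~\eqref{RMU}), exploiting the Schatten-norm representation of the R\'enyi divergence (Eq.~\eqref{divNorm}) to isolate the overlap constant $c$.

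First, I would use Eq.~\eqref{GCE} to express $H^\uparrow_\gamma(X|B)_\rho$, $H^\uparrow_\beta(Z|B)_\rho$, and $H^\uparrow_\alpha(A|B)_\rho$ as infima of the R\'enyi divergences against $\id_A\otimes\sigma_B$, and rewrite each divergence as a Schatten norm via Eq.~\eqref{divNorm}. Since $\rho_{XB}=\Map_X(\rho_{AB})$ is the pinching of $\rho_{AB}$ by the $\mathbb{X}$-projectors (and analogously for $Z$), the three quantities become functionals of $\rho_{AB}$, the projectors $\ketbra{x}{x},\ketbra{z}{z}$, and an optimal $\sigma_B\in\Den(B)$. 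Next, I would apply Dupuis' chain rule with parameters $\alpha'=\beta'+\gamma'$ and sign $(\alpha-1)(\beta-1)(\gamma-1)<0$ to the state on $XAB$ arising from the coherent Stinespring isometry $V_X=\sum_x\ket{x}_X\otimes\ketbra{x}{x}_A$. Using isometric invariance of R\'enyi divergence to identify $H^\uparrow_\alpha(A|B)_\rho$ with the joint up-arrow entropy on $XA$ (the image of $V_X$ lying on the diagonal subspace makes this identification clean), this yields a bound of the form
\begin{equation}
H^\uparrow_\alpha(A|B)_\rho\leq H^\uparrow_\gamma(X|B)_\rho+H^\uparrow_\beta(A|XB)_{\rho_X},
\end{equation}
where $\rho_X$ is the post-coherent-measurement state.

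The problem then reduces to establishing $H^\uparrow_\beta(A|XB)_{\rho_X}\leq H^\uparrow_\beta(Z|B)_\rho-\qmu$, which is the main obstacle and the step that produces $\qmu$. I would attack it by first applying data processing (Proposition~\ref{DPI}) under the $\mathbb{Z}$-measurement channel acting on the $A$ subsystem, then invoking the elementary operator inequality $\ketbra{x}{x}\ketbra{z}{z}\ketbra{x}{x}\leq c\,\ketbra{x}{x}$ with $c=\max_{x,z}|\braket{x}{z}|^2=2^{-\qmu}$ to pull the overlap constant out of the resulting Schatten-norm expression. Because this operator inequality is intrinsically linear in $c$, its lifting to arbitrary R\'enyi order $\beta$ requires Hadamard's three-line theorem, interpolating between $\beta=1/2$ and $\beta\to\infty$ where the bound is straightforward; the sign condition $(\alpha-1)(\beta-1)(\gamma-1)<0$ is essential to ensure the interpolation and the chain-rule inequalities combine in compatible directions. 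Combining the two steps and rearranging then yields the theorem; as a sanity check, the $\alpha,\beta,\gamma\to 1$ limit should recover Theorem~\ref{CPIER}.
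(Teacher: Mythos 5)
Your overall architecture is the one the paper itself uses (in the more general Theorem~\ref{GBUR}, which contains Theorem~\ref{Marcos}): coherently dilate one of the two measurements with a Stinespring isometry, apply the tripartite chain rule (Theorem~\ref{chainDup}, reproved as Theorem~\ref{chain}) to the dilated state with $\alpha'=\beta'+\gamma'$, and extract $\qmu$ from a data-processing step combined with the overlap bound $c=\max_{x,z}|\braket{x}{z}|^2$; that you dilate $\mathbb{X}$ while the paper dilates $\mathbb{Z}$ is immaterial by the symmetry of the statement. Your first step is sound: isometric invariance, the chain rule with the stated sign condition, and a supremum over $\tau_B$ give $H^\uparrow_\alpha(A|B)_\rho\leq H^\uparrow_\gamma(X|B)_\rho+H^\uparrow_\beta(A|XB)_{\rho_X}$, and your target inequality $H^\uparrow_\beta(A|XB)_{\rho_X}\leq H^\uparrow_\beta(Z|B)_\rho-\qmu$ is precisely the paper's Eq.~\eqref{initineq} with the roles of $\mathbb{X}$ and $\mathbb{Z}$ interchanged. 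Note also that the condition $(\alpha-1)(\beta-1)(\gamma-1)<0$ is needed only for the direction of the chain rule; it plays no role in the $\qmu$ step.

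The genuine gap is in how you propose to prove that last inequality. There is no Hadamard three-line argument ``interpolating between $\beta=1/2$ and $\beta\to\infty$'': the three-line theorem interpolates a fixed operator-valued function across Schatten indices related by $\frac{1}{p_\theta}=\frac{1-\theta}{p_0}+\frac{\theta}{p_1}$; it does not interpolate an inequality in the R\'enyi order between two endpoint orders, and you have not exhibited (nor does the paper provide) a log-convexity in $\beta$ of the relevant optimized quantities that would make such an argument run, while the $\beta\to\infty$ endpoint is not available in a form one could interpolate from. What actually lifts the operator inequality to every $\beta\geq\frac12$ --- and what the paper does --- is elementary and requires no interpolation at all: write $H^\uparrow_\beta(A|XB)_{\rho_X}=-\inf_{\sigma_{XB}}D_\beta\bigl(\rho_{XAB}\,\big\|\,\id_A\otimes\sigma_{XB}\bigr)$, apply the data-processing inequality (Proposition~\ref{DPI}) to \emph{both} arguments of the divergence under the adjoint of the dilation followed by the $\mathbb{Z}$-measurement map, and then use the bound $\Map_Z\bigl(\mathcal{S}_X^{-1}(\id\otimes\sigma_{XB})\bigr)\leq c\,\id_Z\otimes\sigma_B$ together with two facts valid pointwise in $\beta\geq\frac12$: $D_\beta(\rho\|\cdot)$ is antimonotone in its second argument, and $D_\beta(\rho\|c\,\sigma)=D_\beta(\rho\|\sigma)-\log c$. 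With that replacement your argument closes and coincides with the paper's proof; as written, the step that produces $\qmu$ is unjustified.
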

This result generalises both Eq.~\eqref{BMU} and Eq.~\eqref{RMU}, respectively, by letting all parameters go to $1$ or choosing $B$ trivial and allowing $\alpha \rightarrow 0$.

Once adequate R\'enyi entropy chain and decomposition rules are established, we may adapt the methodology used for the derivations and proofs of these improved relations to demonstrate the results in Chapter~\ref{sec:generalised-renyi-divergence-uncertainty-relations}.

\chapter{Interpolation framework}\label{sec:interpolation-framework}
The principle mechanism which we use to compare R\'enyi divergence of different order is complex interpolation. There is a well-established and sophisticated theory associated with this field of mathematics but in our context we do not need to go into such depth. Our result is essentially an extension of the Hadamard three-line theorem and, although Hadamard's result can be derived via the broader theory, we purposefully restrict our focus. However, in order to understand the motivation behind the choices of interpolation function it is prudent to provide an overview of the interpolation structure and associated norms of Pisier. The statement and proof of the three-line theorem for unbalanced weighted norms on linear operators are given in the last section of this chapter and the specialised results that follow are given in Chapter~\ref{sec:renyi-entropy-divergence-inequalities}, in the context in which they are used.
\section{The super-operator $\Gamma_{\sigma,\tau}$}
For conciseness in the subsequent chapters, we introduce a more general form of the super-operator notation found in~\cite{beigi13}: with $\sigma\in \PosS(B), \tau\in \PosS(A)$, define
\begin{alignat}{4}
	\Gamma_{\sigma,\tau}: &  & \,\Lin(A,B) & \, & \longrightarrow & \, & \Lin(A,B)\label{GammaDef}            \\
	:                     &  & M           &    & \longmapsto     &    & \sigma^\frac{1}{2}M\tau^\frac{1}{2}.
\end{alignat}
If $\sigma$ or $\tau$ are instead operators on only part of a tensor product we still use the above notation when there is no ambiguity, e.g. for $\sigma_C\in\Pos(C),\tau_A \in \Pos(A)$ and $M\in\Lin(AB, CD)$
\begin{equation}
	\Gamma_{\sigma_C,\tau_A}^\frac{1}{\lambda}(M) = \left(\sigma_C^\frac{1}{2\lambda}\otimes \id_D\right)M\left(\tau_A^\frac{1}{2\lambda}\otimes \id_B\right).
\end{equation}
In Eq.~\eqref{GammaDef} we may choose $\Hil_B = \Hil_A$. Moreover, we write $\Gamma_{\sigma,\sigma}^\frac{1}{\lambda} = \Gamma_{\sigma}^\frac{1}{\lambda}$.
\subsection{Unbalanced weighted norms}\label{sec:WN}
We define a norm on $\Lin(A,B)$ in the following way: given $\sigma\in \PosS(B)$ and $\tau\in \PosS(A)$, let
\begin{equation}
	\left\| Y \right\|_{p, (\sigma,  \tau)} = \left\|\sigma^\frac{1}{2p}Y\tau^\frac{1}{2p}\right\|_p = \left\|\Gamma_{\sigma,\tau}^\frac{1}{p}\left( Y \right)\right\|_p.
\end{equation}

It is relatively straight-forward to show this is indeed a norm for $1\leq p\leq \infty$.

By defining the inner product $\left\langle Y, X\right\rangle_{\sigma,\tau} = \tr\left(Y^\dagger \sigma^\frac{1}{2} X \tau^\frac{1}{2}\right)$ we may show the following duality: for $1\leq p\leq \infty$,
\begin{equation}
	\label{weightedDual}\|X\|_{p, ( \sigma, \tau)} = \sup_{\|Y\|_{p', (\sigma, \tau)} = 1}\left|\left\langle Y, X\right\rangle_{\sigma,\tau}\right|,
\end{equation}
where $p'$ is the H\"older conjugate such that $\frac{1}{p} + \frac{1}{p'} = 1$.
\begin{proof}[Proof of Eq.~\eqref{weightedDual}]
	By the definition of the dual norm~\citep[Section~2.10]{kreyszig1989introductory} we may write
	\begin{align}
		\|X\|_{p, ( \sigma,  \tau)}^* & = \sup_{Y\in\Lin(A,B)}\frac{\left|\left\langle Y,X \right\rangle_{\sigma, \tau} \right|}{\left\| Y\right\|_{p, ( \sigma,  \tau)}}                                    \\
		                              & =\sup_{Y\in\Lin(A,B)}\frac{\left|\tr\left(Y^\dagger \sigma^\frac{1}{2} X \tau^\frac{1}{2}\right)\right|}{\left\|\sigma^\frac{1}{2p}Y\tau^\frac{1}{2p}\right\|_p}     \\
		                              & =\sup_{Y\in\Lin(A,B)}\frac{\left|\tr\left((\sigma^\frac{-1}{2p}Y\tau^\frac{-1}{2p})^\dagger \sigma^\frac{1}{2} X \tau^\frac{1}{2}\right)\right|}{\left\|Y\right\|_p} \\
		                              & =\sup_{Y\in\Lin(A,B)}\frac{\left|\tr\left(Y^\dagger\sigma^\frac{-1}{2p} \sigma^\frac{1}{2} X \tau^\frac{1}{2}\tau^\frac{-1}{2p}\right)\right|}{\left\|Y\right\|_p}   \\
		                              & =\sup_{Y\in\Lin(A,B)}\frac{\left|\tr\left(Y^\dagger\sigma^\frac{1}{2p'} X \tau^\frac{1}{2p'}\right)\right|}{\left\|Y\right\|_p}                                      \\
		                              & =\left\|\sigma^\frac{1}{2p'} X \tau^\frac{1}{2p'}\right\|_{p'}                                                                                                       \\
		                              & =\left\| X\right\|_{p', ( \sigma,  \tau)}.
	\end{align}
\end{proof}
The above implies that a H\"older type inequality holds for these norms, i.e.
\begin{equation}\label{HoldLike}
	\left|\left\langle Y, X\right\rangle_{\sigma,\tau}\right| \leq \|X\|_{p, ( \sigma, \tau)}\|Y\|_{p', ( \sigma, \tau)},\quad 1\leq p,p'\leq \infty.
\end{equation}
We must take note that in general these weighted norms are not unitarily invariant, however we may state a specialised unitary invariance: for $U$ and $V$ unitary such that $[U,\sigma] = [V,\tau] = 0$,
\begin{equation}
	\|UXV\|_{p, ( \sigma,  \tau)} = \|UX\|_{p, ( \sigma,  \tau)} = \|XV\|_{p, ( \sigma,  \tau)} = \|X\|_{p, ( \sigma,  \tau)}.
\end{equation}

\subsection{R\'enyi Divergence in terms of weighted norms}
We may establish the following identities for R\'enyi entropic quantities in terms of super-operator notation:
\begin{lemma}\label{QuantEq}
	For $M\in \Lin(AB, C)$ such that $M^\dagger M = \rho_{AB}$, $\tau_B\in \PosS(B)$ and $\alpha \in (0,1)\cup(1, \infty)$:
	\begin{align}
		H_\alpha(\rho_{A})          & =-\log\sup_{\sigma_A \in \Den^*(A)}\left\|\Gamma_{\id_C, \sigma_A}^\frac{1}{\alpha'}(M)\right\|_2^{2\alpha'}\label{EtoD},                           \\
		H_\alpha(\rho_{AB}\|\tau_B) & = -\log\left\| \Gamma_{\id_C, \tau_B}^\frac{-1}{\alpha'}(M) \right\|_{ 2\alpha}^{2\alpha'},\label{CEtoD}                                          \\
		I_\alpha(\rho_{AB}\|\tau_B) & = \log\inf_{\sigma_A\in \Den(A)}\left\| \Gamma_{\id_C,\sigma_A\otimes\tau_B}^{\frac{-1}{\alpha'}}(M) \right\|_{ 2\alpha}^{2\alpha'}.\label{MItoD}
	\end{align}
\end{lemma}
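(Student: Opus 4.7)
The common mechanism behind all three identities is the pairing of Proposition~\ref{fold}, which asserts $\|X^\dagger X\|_p = \|X\|_{2p}^2$, with the substitution $M^\dagger M = \rho_{AB}$. This converts the weighted Schatten norm expression of $D_\alpha$ given in \eqref{divNorm} into a Schatten norm of $M$ suitably weighted on the input side, which is exactly the form appearing on the right-hand side of each identity.

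For \eqref{CEtoD}, the plan is direct substitution. Starting from $H_\alpha(\rho_{AB}\|\tau_B) = -\log\|(\id_A\otimes\tau_B)^{-1/(2\alpha')}\rho_{AB}(\id_A\otimes\tau_B)^{-1/(2\alpha')}\|_\alpha^{\alpha'}$ via \eqref{divNorm} and \eqref{GCE}, substitute $\rho_{AB}=M^\dagger M$, absorb the outer weights into the adjoint pair to form $\tilde M := M(\id_A\otimes\tau_B^{-1/(2\alpha')}) = \Gamma_{\id_C,\tau_B}^{-1/\alpha'}(M)$, and apply Proposition~\ref{fold} to convert $\|\tilde M^\dagger \tilde M\|_\alpha^{\alpha'}$ into $\|\tilde M\|_{2\alpha}^{2\alpha'}$. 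Identity \eqref{MItoD} then follows immediately by taking the infimum over $\sigma_A\in\Den(A)$ of $D_\alpha(\rho_{AB}\|\sigma_A\otimes\tau_B)$, using definition \eqref{GMI} and the monotonicity of $\log$.

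For \eqref{EtoD} the weight is applied before squaring rather than after, so Proposition~\ref{fold} is replaced by H\"older duality. The same substitution yields $\|\Gamma_{\id_C,\sigma_A}^{1/\alpha'}(M)\|_2^2 = \tr[(\sigma_A^{1/\alpha'}\otimes\id_B)M^\dagger M] = \tr[\sigma_A^{1/\alpha'}\rho_A]$, so the claim reduces to showing $\sup_{\sigma_A\in\Den^*(A)}(\tr[\sigma_A^{1/\alpha'}\rho_A])^{\alpha'} = \|\rho_A\|_\alpha^{\alpha'} = (\tr\rho_A^\alpha)^{1/(\alpha-1)} = 2^{-H_\alpha(\rho_A)}$. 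This follows from the variational characterization of Schatten norms: when $\alpha>1$, H\"older gives $\tr[\rho_A\sigma_A^{1/\alpha'}]\leq\|\rho_A\|_\alpha\|\sigma_A^{1/\alpha'}\|_{\alpha'} = \|\rho_A\|_\alpha$, saturated at $\sigma_A\propto\rho_A^\alpha$; when $\alpha\in(0,1)$, the reverse H\"older inequality for positive operators yields $\tr[\rho_A\sigma_A^{1/\alpha'}]\geq\|\rho_A\|_\alpha$, and since $\alpha'<0$ makes $x\mapsto x^{\alpha'}$ decreasing, this lower bound on the trace translates into an upper bound of the same value for the exponentiated quantity, i.e.\ the supremum is again $\|\rho_A\|_\alpha^{\alpha'}$.

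The main obstacle is the sign bookkeeping across the two regimes: the sign of $\alpha'$ flips at $\alpha=1$, H\"older's inequality is replaced by its reverse, and the restriction to $\Den^*(A)$ in \eqref{EtoD} is precisely what keeps $\sigma_A^{1/\alpha'}$ well-defined when $\alpha'<0$. A secondary technicality is the passage from strictly positive $\tau_B$ (under which Proposition~\ref{fold} and the inverse weights in $\Gamma_{\id_C,\tau_B}^{-1/\alpha'}$ make immediate sense) to a general positive $\tau_B$ in \eqref{CEtoD} and \eqref{MItoD}; this can be handled by Proposition~\ref{continuity}, replacing $\tau_B$ by $\tau_B+\varepsilon\id_B$ and letting $\varepsilon\to 0^+$ on both sides of each identity.
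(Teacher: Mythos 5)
Your proposal is correct and follows essentially the same route as the paper: rewrite each divergence via Eq.~\eqref{divNorm}, substitute $\rho_{AB}=M^\dagger M$ and absorb the weights to get the $2\alpha$-norm of $\Gamma_{\id_C,\cdot}^{-1/\alpha'}(M)$ (the paper packages this computation as Lemma~\ref{DivComp} with Proposition~\ref{fold} built in), and handle \eqref{EtoD} through the variational characterisation $\|\rho_A\|_\alpha^{\alpha'}=\sup_{\sigma_A\in\Den^*(A)}\bigl(\tr\rho_A\sigma_A^{1/\alpha'}\bigr)^{\alpha'}$, which the paper cites as Lemma~\ref{newLem12} and you re-derive via H\"older and reverse H\"older. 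The only cosmetic differences are that you prove that variational step inline rather than citing it, and the continuity remark for $\tau_B$ is unnecessary since the lemma already assumes $\tau_B\in\PosS(B)$.
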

Before we present the proof of Lemma $\ref{QuantEq}$ we find it useful to examine a more general comparison between R\'enyi divergence and weighted norms.
\begin{lemma}\label{DivComp}
	Assume $\alpha\in (0,1) \cup (1,\infty)$. With $M\in \Lin(AB,C)$ such that $M^\dagger M = \rho_{AB}\in \Den(AB)$, $\sigma_A \in \Pos(A)$ and $\tau_B \in \PosS(B)$, we have
	\begin{equation}
		\log\left\| \Gamma_{\id_C,\sigma_A}^{\frac{1}{\alpha}-\frac{1}{\lambda}}\left(\Gamma_{\id_C, \tau_B}^{-1}(M)\right) \right\|_{ 2\alpha,(\id_C,\tau_B^2)}^{2\alpha'} = D_\alpha\left(\rho_{AB}\left\|\sigma_A^{1-\frac{\alpha'}{\lambda'}}\otimes \tau_B \right.\right).\label{DivEq}
	\end{equation}
\end{lemma}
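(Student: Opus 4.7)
The plan is to unwind the two nested super-operators, identify the weighted Schatten norm as an ordinary Schatten norm via absorption, and then match the resulting expression to the sandwiched R\'enyi divergence of $\rho_{AB}$ against the operator $\omega = \sigma_A^{1-\alpha'/\lambda'}\otimes \tau_B$. I shall assume $\sigma_A\in \PosS(A)$ for clarity, extending to the general $\Pos(A)$ case by Proposition~\ref{continuity} if necessary.

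First I would unfold the super-operators. Since the weights are placed on $\id_C$ and the respective subsystem, applying the definitions gives
\begin{equation*}
\Gamma_{\id_C, \tau_B}^{-1}(M) = M\,(\id_A\otimes \tau_B^{-1/2}), \qquad
\Gamma_{\id_C,\sigma_A}^{\frac{1}{\alpha}-\frac{1}{\lambda}}\!\bigl(\Gamma_{\id_C, \tau_B}^{-1}(M)\bigr) = M\!\left(\sigma_A^{\frac{1}{2}(\frac{1}{\alpha}-\frac{1}{\lambda})}\otimes \tau_B^{-1/2}\right).
\end{equation*}
Next, the weighted $(2\alpha)$-norm with weights $(\id_C, \tau_B^2)$ absorbs another factor of $\tau_B^{1/(2\alpha)}$ on the $B$ side, so writing
\begin{equation*}
X := M\!\left(\sigma_A^{\frac{1}{2}(\frac{1}{\alpha}-\frac{1}{\lambda})}\otimes \tau_B^{\frac{1-\alpha}{2\alpha}}\right),
\end{equation*}
the left-hand side of Eq.~\eqref{DivEq} becomes $\log \|X\|_{2\alpha}^{2\alpha'}$, which by Proposition~\ref{fold} (or directly from the Schatten norm definition, using $\alpha'/\alpha = 1/(\alpha-1)$) equals $\tfrac{1}{\alpha-1}\log \tr[(X^\dagger X)^\alpha]$.

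Using $M^\dagger M = \rho_{AB}$ we compute
\begin{equation*}
X^\dagger X = \left(\sigma_A^{\frac{1}{2}(\frac{1}{\alpha}-\frac{1}{\lambda})}\otimes \tau_B^{\frac{1-\alpha}{2\alpha}}\right)\rho_{AB}\left(\sigma_A^{\frac{1}{2}(\frac{1}{\alpha}-\frac{1}{\lambda})}\otimes \tau_B^{\frac{1-\alpha}{2\alpha}}\right),
\end{equation*}
so it remains to show that with $\omega = \sigma_A^{1-\alpha'/\lambda'}\otimes \tau_B$ we have $\omega^{(1-\alpha)/(2\alpha)}$ equal to the bracketed factor, whence $\tr[(X^\dagger X)^\alpha]$ is precisely the sandwich expression in the definition of $D_\alpha(\rho_{AB}\|\omega)$.

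The main work (and the only genuine obstacle) is verifying the exponent identity on the $\sigma_A$ factor, namely
\begin{equation*}
\frac{(1-\alpha'/\lambda')(1-\alpha)}{2\alpha} = \frac{1}{2}\!\left(\frac{1}{\alpha}-\frac{1}{\lambda}\right).
\end{equation*}
This follows from the identities $(\alpha-1)\alpha' = \alpha$ and $\tfrac{1}{\alpha}+\tfrac{1}{\alpha'}=1$ collected at the start of Section~\ref{sec:sand}: expanding yields $-\tfrac{1}{2\alpha'} + \tfrac{1}{2\lambda'}$, and $\tfrac{1}{\alpha}-\tfrac{1}{\lambda} = (1-\tfrac{1}{\alpha'}) - (1-\tfrac{1}{\lambda'}) = \tfrac{1}{\lambda'}-\tfrac{1}{\alpha'}$. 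The exponent on $\tau_B$ matches trivially, since $(1-\alpha)/(2\alpha)$ appears on both sides. Combining these gives $\tfrac{1}{\alpha-1}\log\tr[(\omega^{(1-\alpha)/(2\alpha)}\rho_{AB}\omega^{(1-\alpha)/(2\alpha)})^\alpha] = D_\alpha(\rho_{AB}\|\omega)$, as required.
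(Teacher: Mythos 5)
Your proposal is correct and follows essentially the same route as the paper's own proof: both unwind the super-operators, use $M^\dagger M=\rho_{AB}$ (the absorption $\|X\|_{2\alpha}^2=\|X^\dagger X\|_\alpha$) to pass from the weighted $2\alpha$-norm of the $M$-level object to a conjugation of $\rho_{AB}$ inside a Schatten $\alpha$-norm, and finish by the same $\alpha'$, $\lambda'$ exponent bookkeeping; the paper merely organises this by first treating $\Gamma_{\id_C,\sigma_A}^{1/\alpha-1/\lambda}(M)$ and inserting $\Gamma_{\id_C,\tau_B}^{-1}$ at the end, while you compose everything from the start and match directly against the sandwiched-divergence definition rather than Eq.~\eqref{divNorm}.
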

\begin{proof}
	\begin{align}
		\left\| \Gamma_{\id_C,\sigma_A}^{\frac{1}{\alpha}-\frac{1}{\lambda}}(M) \right\|_{ 2\alpha,(\id_C,\tau_B^2)}^2                                                               & =\left\| \Gamma_{\sigma_A}^{\frac{1}{\alpha}-\frac{1}{\lambda}}(\rho_{AB}) \right\|_{ \alpha,\tau_B}                                                                                                                                                      \\
		                                                                                                                                                                             & = \left\| \left(\sigma_A^{\frac{1}{2\alpha} - \frac{1}{2\lambda}}\otimes \tau_B^\frac{1}{2\alpha}\right)\rho_{AB}\left(\sigma_A^{\frac{1}{2\alpha} - \frac{1}{2\lambda}}\otimes \tau_B^\frac{1}{2\alpha}\right) \right\|_{\alpha}                         \\
		                                                                                                                                                                             & =\left\| \left(\sigma_A^{\frac{1}{2\lambda'} - \frac{1}{2\alpha'}}\otimes \tau_B^\frac{-1}{2\alpha'}\right)\Gamma_{\tau_B}(\rho_{AB})\left(\sigma_A^{\frac{1}{2\lambda'} - \frac{1}{2\alpha'}}\otimes \tau_B^\frac{-1}{2\alpha'}\right) \right\|_{\alpha} \\
		\implies\left\| \Gamma_{\id_C,\sigma_A}^{\frac{1}{\alpha}-\frac{1}{\lambda}}\left(\Gamma_{\id_C, \tau_B}^{-1}(M)\right) \right\|_{ 2\alpha,(\id_C,\tau_B^2)}^2               & =\left\| \left(\sigma_A^{1-\frac{\alpha'}{\lambda'}}\otimes \tau_B\right)^\frac{-1}{2\alpha'}\rho_{AB} \left(\sigma_A^{1-\frac{\alpha'}{\lambda'}}\otimes \tau_B\right)^\frac{-1}{2\alpha'}\right\|_{\alpha}                                              \\
		\implies \log\left\| \Gamma_{\id_C,\sigma_A}^{\frac{1}{\alpha}-\frac{1}{\lambda}}\left(\Gamma_{\id_C, \tau_B}^{-1}(M)\right) \right\|_{ 2\alpha,(\id_C,\tau_B^2)}^{2\alpha'} & = D_\alpha\left(\rho_{AB}\left\|\sigma_A^{1-\frac{\alpha'}{\lambda'}}\otimes \tau_B \right.\right).
	\end{align}
\end{proof}

We now make use of the above comparison to prove Lemma~\ref{QuantEq}.
\begin{proof}[Proof of Lemma~\ref{QuantEq}.]
	By definition we have
	\begin{align}
		H_\alpha(\rho_{A}) & = -D_\alpha(\rho_A\|\id_A)                                                                                              \\
		                   & = -\log\|\rho_A\|_{\alpha}^{\alpha'}                                                                                    \\
		                   & =-\log\sup_{\sigma_A \in \Den^*(A)} \left[\tr\left(\rho_A\sigma_A^\frac{1}{\alpha'}\right)^{\alpha'}\right]               \\
		                   & =-\log\sup_{\sigma_A \in \Den^*(A)} \left[\tr\left(\Gamma_{\sigma_A}^\frac{1}{\alpha'}(\rho_{AB})\right)^{\alpha'}\right] \\
		                   & =-\log\sup_{\sigma_A \in \Den^*(A)}\left\|\Gamma_{\id_C, \sigma_A}^\frac{1}{\alpha'}(M)\right\|_2^{2\alpha'},
	\end{align}
	where in the third line we use Lemma~\ref{newLem12}. Note also that in the fourth line, the partial trace on $B$ does not affect operators localised on $A$.

	To show Eq.~\eqref{CEtoD} we begin with the comparison in Lemma~\ref{DivComp} and choose $\lambda = \alpha$ in Eq.~\eqref{DivEq}:
	\begin{align}
		H_\alpha(\rho_{AB}\|\tau_B) & = -D_\alpha(\rho_{AB}\|\id_A\otimes \tau_B)                                                 \\
		                            & =-\log\left\| \Gamma_{\id_C, \tau_B}^{-1}(M) \right\|_{ 2\alpha,(\id_C,\tau_B^2)}^{2\alpha'} \\
		                            & =-\log\left\| \Gamma_{\id_C, \tau_B}^{\frac{1}{\alpha}-1}(M) \right\|_{ 2\alpha}^{2\alpha'} \\
		                            & =-\log\left\| \Gamma_{\id_C, \tau_B}^\frac{-1}{\alpha'}(M) \right\|_{ 2\alpha}^{2\alpha'}.
	\end{align}
	Moreover, choosing instead $\lambda = 1$ and $\tau_B = \id_B$,
	\begin{align}
		H_\alpha(\rho_{AB}\|\sigma_A) & = -D_\alpha(\rho_{AB}\|\sigma_A\otimes \id_B)                                                 \\
		                              & =-\log\left\| \Gamma_{\id_C,\sigma_A}^{\frac{-1}{\alpha'}}(M) \right\|_{ 2\alpha}^{2\alpha'}.
	\end{align}
	After relabelling in the second case we obtain Eq.~\eqref{CEtoD}.
	Similarly, to show Eq.~\eqref{MItoD} we start with Eq.~\eqref{DivEq}, choosing $\lambda = 1$:
	\begin{align}
		I_\alpha(\rho_{AB}\|\tau_B) & = \log\inf_{\sigma_A\in \Den(A)}D(\rho_{AB}\|\sigma_A\otimes \tau_B)                                                                                                           \\
		                            & =\log\inf_{\sigma_A\in \Den(A)}\left\| \Gamma_{\id_C,\sigma_A}^{\frac{-1}{\alpha'}}\left(\Gamma_{\id_C, \tau_B}^{-1}(M)\right) \right\|_{ 2\alpha,(\id_C,\tau_B^2)}^{2\alpha'} \\
		                            & =\log\inf_{\sigma_A\in \Den(A)}\left\| \Gamma_{\id_C,\sigma_A\otimes\tau_B}^{\frac{-1}{\alpha'}}(M) \right\|_{ 2\alpha}^{2\alpha'}.
	\end{align}
	Moreover,
	\begin{align}
		I_\alpha(\rho_{AB}\|\sigma_A) & = \log\inf_{\tau_B\in \Den(B)}\left\| \Gamma_{\id_C,\sigma_A\otimes\tau_B}^{\frac{-1}{\alpha'}}(M) \right\|_{ 2\alpha}^{2\alpha'}.
	\end{align}
	Again, after relabelling, we arrive at Eq.~\eqref{MItoD}.
\end{proof}

\section{Pisier's norms}\label{sec:PN}
Although not used directly in the main results of this thesis, the non-commutative norms of Pisier still inform the interpolation structure and were crucial in understanding the choices which produce the desired comparisons. Their relation to R\'enyi divergence and related quantities has already been put to use in other areas of quantum information~\citep{devetak06, delgosha13}.

In this section we cover the basic concepts, some insights which further simplify things in our context of positive operators and, ultimately, how they factor into the interpolation results of Chapter~\ref{sec:renyi-entropy-divergence-inequalities}.
\subsection{Two-part norms}
Based on the work of~\citet{pisier1998} and the subsequent refinements in~\cite{devetak06} we can use complex interpolation to derive the following definition of a non-commutative norm on the product of complex Hilbert spaces of finite linear operators:
\begin{theorem}
For $1\leq q \leq p \leq \infty$ there is a unique $r\in [1,\infty]$ such that $\frac{1}{q} = \frac{1}{p} + \frac{1}{r}$. With $Y\in \Lin(AB)$ the identities
\begin{align*}
	&&\numberthis \|Y\|_{(p,q)} \equiv& \sup_{M ,N\in \Lin(A)}\frac{\|(M\otimes \id_B)Y(N\otimes \id_B)\|_q}{\|M\|_{2r}\|N\|_{2r}}\label{pq}\\
	\text{and}&&&&\\
	&& \numberthis\|Y\|_{(q,p)}\equiv&\inf_{\substack{M,N\in \Lin(A)\\Y = (M\otimes \id_B)Z(N\otimes \id_B)}} \|M\|_{2r}\|N\|_{2r}\|Z\|_{p}\label{qp}
\end{align*}
define a norm.
\end{theorem}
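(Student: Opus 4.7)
The plan is to verify the four norm axioms for each of the two definitions in turn. For the supremum form $\|\cdot\|_{(p,q)}$ in~\eqref{pq}, absolute homogeneity is immediate, and the triangle inequality follows by applying the Schatten $q$-norm triangle inequality to $(M\otimes \id_B)(Y_1+Y_2)(N\otimes \id_B)$ before dividing by $\|M\|_{2r}\|N\|_{2r}$ and taking the supremum. Finiteness of the supremum is guaranteed in the finite-dimensional setting by Hölder's inequality for Schatten norms, applied with $\tfrac{1}{q} = \tfrac{1}{2r} + \tfrac{1}{p} + \tfrac{1}{2r}$. For definiteness, specialising $M = N = \id_A$ yields the lower bound $\|Y\|_{(p,q)} \geq (\dim A)^{-1/r}\|Y\|_q$, so that $\|Y\|_{(p,q)} = 0$ forces $Y = 0$.

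For the infimum form $\|\cdot\|_{(q,p)}$ in~\eqref{qp}, absolute homogeneity follows by rescaling $Z$ in any admissible decomposition, and finiteness is witnessed by the trivial factorisation $Y = \id_A \cdot Y \cdot \id_A$. Definiteness comes from the observation that any admissible $Y = (M\otimes \id_B)Z(N\otimes \id_B)$ controls $\|Y\|_q$ from above via Schatten Hölder, so that a vanishing infimum forces $Y = 0$. The main obstacle is the triangle inequality, for which the plan is a block construction: given factorisations $Y_j = (M_j\otimes \id_B) Z_j (N_j\otimes \id_B)$ for $j = 1, 2$, set
\begin{equation*}
    \hat M = \begin{pmatrix} M_1 & M_2 \end{pmatrix},\qquad
    \hat N = \begin{pmatrix} N_1 \\ N_2 \end{pmatrix},\qquad
    \hat Z = \begin{pmatrix} Z_1 & 0 \\ 0 & Z_2 \end{pmatrix},
\end{equation*}
so that $Y_1 + Y_2 = (\hat M \otimes \id_B)\hat Z(\hat N \otimes \id_B)$. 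Because $\hat Z$ is block diagonal one has $\|\hat Z\|_p^p = \|Z_1\|_p^p + \|Z_2\|_p^p$, and by Proposition~\ref{fold} together with the Schatten triangle inequality applied to the positive operators $\hat M\hat M^\dagger = M_1 M_1^\dagger + M_2 M_2^\dagger$ and $\hat N^\dagger \hat N = N_1^\dagger N_1 + N_2^\dagger N_2$, one obtains $\|\hat M\|_{2r}^{2} \leq \|M_1\|_{2r}^2 + \|M_2\|_{2r}^2$ and likewise for $\hat N$. After a preliminary rescaling $M_j \mapsto t_j M_j$, $N_j \mapsto s_j N_j$, $Z_j \mapsto (t_j s_j)^{-1} Z_j$ and an optimisation over $t_j, s_j > 0$, these bounds should combine into $\|Y_1 + Y_2\|_{(q,p)} \leq \|Y_1\|_{(q,p)} + \|Y_2\|_{(q,p)}$ after passing to the infimum.

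The hardest step is expected to be this balancing of the three factor norms through the block-decomposition rescaling, where the Schatten exponents and scaling parameters must be chosen so that the resulting product telescopes exactly to the sum $\|Y_1\|_{(q,p)} + \|Y_2\|_{(q,p)}$ in the near-optimal limit. A conceptually cleaner alternative, which I would pursue if the direct argument becomes unwieldy, is to establish a duality relation between the two definitions with respect to the Hilbert--Schmidt pairing, showing that $\|\cdot\|_{(q,p)}$ is the dual norm of a norm of type $\|\cdot\|_{(p',q')}$ with Hölder conjugates $p', q'$. Such a duality would render the norm axioms for the infimum form automatic, since dual norms are norms by construction, and it would additionally yield a Hölder-type inequality $|\tr(X^\dagger Y)| \leq \|X\|_{(p,q)}\|Y\|_{(q',p')}$ that dovetails naturally with the weighted duality in~\eqref{weightedDual} and with the R\'enyi divergence comparisons in Lemma~\ref{DivComp}.
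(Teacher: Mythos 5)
First, a remark on the comparison target: the thesis does not actually prove this statement --- it is imported from \citet{pisier1998} and the refinements in \cite{devetak06}, where the two expressions arise from a complex-interpolation construction and the norm property is inherited from that machinery. Your plan is therefore a genuinely different, direct verification of the axioms. The supremum form \eqref{pq} is unproblematic: homogeneity, the Schatten triangle inequality, H\"older with $\tfrac1q=\tfrac1{2r}+\tfrac1p+\tfrac1{2r}$, and the $M=N=\id_A$ lower bound all go through as you say, and homogeneity, finiteness and definiteness of the infimum form \eqref{qp} are also fine. The genuine gap is the triangle inequality for \eqref{qp}. With your block construction and the estimates $\|\hat M\|_{2r}^2\le\|M_1\|_{2r}^2+\|M_2\|_{2r}^2$, $\|\hat N\|_{2r}^2\le\|N_1\|_{2r}^2+\|N_2\|_{2r}^2$, $\|\hat Z\|_p^p=\|Z_1\|_p^p+\|Z_2\|_p^p$, no choice of scalars $t_j,s_j$ telescopes: writing $c_j=\|M_j\|_{2r}\|N_j\|_{2r}\|Z_j\|_p$, the infimum over $t_j,s_j>0$ of $\bigl(\sum_j t_j^2\|M_j\|_{2r}^2\bigr)^{1/2}\bigl(\sum_j s_j^2\|N_j\|_{2r}^2\bigr)^{1/2}\bigl(\sum_j (t_js_j)^{-p}\|Z_j\|_p^p\bigr)^{1/p}$ equals $\bigl(\sum_j c_j^{p/(p+1)}\bigr)^{(p+1)/p}$, which strictly exceeds $c_1+c_2$ whenever both $c_j>0$ and $p<\infty$ (for $c_1=c_2=c$ it gives $2^{1+1/p}c$, and taking $M_1=M_2$, $N_1=N_2$, $Z_1=Z_2$ shows the loss is real, not an artefact of the aggregate bounds). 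The construction only closes at $p=\infty$, where $\|\hat Z\|_\infty=\max_j\|Z_j\|_\infty$ --- the Haagerup-tensor-product situation --- so the step you flag as hardest does not merely need careful balancing; as set up it fails for finite $p$. Two smaller points: $\hat M,\hat N$ are rectangular, so before the factorisation is admissible in \eqref{qp} you must compress back to $\Lin(A)$ (polar decomposition of $\hat M,\hat N$ plus unitary invariance, cf.\ Proposition~\ref{fold}); and your formula for $\|\hat Z\|_p$ must be replaced by a maximum when $p=\infty$.

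Your fallback --- exhibiting \eqref{qp} as the trace-dual of a supremum norm with conjugate indices --- is indeed the right direction and is essentially how the cited literature proceeds, but as sketched it begs the question. H\"older only yields the easy inequality $|\tr(W^\dagger Y)|\le \|W\|_{(q',p')}\,\|Y\|_{(q,p)}$, i.e.\ the infimum expression \emph{dominates} the dual norm; domination of a norm does not make the dominating functional subadditive. The content is the reverse inequality, which needs a Hahn--Banach/bipolar argument showing that the unit ball of the factorisation expression is convex and closed (equivalently, the interpolation description of these spaces). Either supply that duality step explicitly or, as the thesis does, cite Pisier's interpolation construction for the norm property rather than verifying it by hand.
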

Although the above expressions are required for the more general case, we may take advantage of our specific context to re-express them in a more workable form. 
\begin{proposition}\label{reduceToDen}
	The expressions in Eq.~\eqref{pq} and Eq.~\eqref{qp} are equivalent to
	\begin{equation}
		\label{pqG}\|Y\|_{(p,q)} = \begin{cases}
			\displaystyle{\sup_{\sigma,\tau\in \Den^*(A)}}\|\Gamma_{\sigma,\tau}^\frac{-1}{p}(Y)\|_{q, (\sigma, \tau)}&\text{if } p\geq q\\
			\displaystyle{\inf_{\sigma,\tau\in \Den^*(A)}}\|\Gamma_{\sigma,\tau}^\frac{-1}{p}(Y)\|_{q, (\sigma, \tau)}&\text{if } p\leq q
		\end{cases}.
	\end{equation}
\end{proposition}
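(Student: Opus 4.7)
The strategy is to reduce the Pisier norms to weighted norms by two observations. First, the unitary invariance of the Schatten norm, via the polar decompositions $M = U|M|$ and $N = |N|V$, lets us restrict the optima in Eqs.~\eqref{pq}--\eqref{qp} to $M, N \in \Pos(A)$: in Eq.~\eqref{pq} the unitaries cancel in the numerator and leave $\|M\|_{2r}, \|N\|_{2r}$ unchanged; in Eq.~\eqref{qp} they can be absorbed into $Z$ without affecting $\|Z\|_p$. Second, each positive $M$ with $c := \|M\|_{2r} > 0$ admits the canonical parametrisation $M = c\,\sigma^{1/(2r)}$ with $\sigma := (M/c)^{2r}$, and a direct computation gives $\tr \sigma = \|M/c\|_{2r}^{2r} = 1$, so $\sigma \in \Den(A)$ (and $\sigma \in \Den^*(A)$ precisely when $M$ is invertible).

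In the supremum case ($p \geq q$), substituting $M = c\sigma^{1/(2r)}$ and $N = c'\tau^{1/(2r)}$ into Eq.~\eqref{pq} collapses the $cc'$ factors and leaves the ratio equal to $\|\sigma^{1/(2r)}\, Y\, \tau^{1/(2r)}\|_q$ (identifying $\sigma, \tau$ with $\sigma \otimes \id_B$ and $\tau \otimes \id_B$). By the weighted norm of Section~\ref{sec:WN} and the relation $\tfrac{1}{q} - \tfrac{1}{p} = \tfrac{1}{r}$, this equals $\|\Gamma_{\sigma,\tau}^{-1/p}(Y)\|_{q,(\sigma,\tau)}$, and continuity of this expression in $(\sigma, \tau)$ lets us pass freely between the sup over $\Den^*(A)^2$ and its closure $\Den(A)^2$. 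In the infimum case ($p \leq q$ in the Proposition's convention), strict positivity of $\sigma, \tau$ forces $Z = (cc')^{-1}\sigma^{-1/(2r)}\, Y\, \tau^{-1/(2r)}$, and a parallel computation, now with $\tfrac{1}{2r} = \tfrac{1}{2p} - \tfrac{1}{2q}$ (the inf-case relation after the relabelling that identifies the Proposition's $\|Y\|_{(p,q)}$ with the theorem's $\|Y\|_{(q,p)}$), identifies the objective of Eq.~\eqref{qp} with the same weighted norm expression $\|\Gamma_{\sigma,\tau}^{-1/p}(Y)\|_{q,(\sigma,\tau)}$.

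The main obstacle I foresee is the infimum case: the parametrisation over $\Den^*(A)^2$ covers only those decompositions of Eq.~\eqref{qp} with invertible $M, N$, which form a proper subfamily of the feasible set, so the Proposition's infimum might a priori exceed that of the theorem. To close the gap I would use a perturbation: given any decomposition $Y = (M \otimes \id_B)\,Z\,(N \otimes \id_B)$, set $M_\varepsilon := M + \varepsilon\,\id_A$, $N_\varepsilon := N + \varepsilon\,\id_A$, and $Z_\varepsilon := (M_\varepsilon^{-1} \otimes \id_B)\,Y\,(N_\varepsilon^{-1} \otimes \id_B)$. The identity $M_\varepsilon^{-1} M = \id_A - \varepsilon\, M_\varepsilon^{-1}$ (and its analogue for $N$) together with the continuity of the Schatten norms shows that the perturbed objective converges as $\varepsilon \to 0^+$ to a value no larger than the original; hence the infimum over $\Den^*(A)^2$ is bounded above by every feasible objective in Eq.~\eqref{qp}, and since the reverse inequality is trivial, the two infima coincide.
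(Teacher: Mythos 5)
Your proposal is correct and follows essentially the same route as the paper: polar decomposition plus unitary invariance of the Schatten norm to restrict to positive $M,N$, rescaling so that $\|M\|_{2r}=\|N\|_{2r}=1$ to identify them with (powers of) density operators, and the exponent relation $\frac{1}{2r}=\left|\frac{1}{2q}-\frac{1}{2p}\right|$ to recognise the weighted norm $\|\Gamma_{\sigma,\tau}^{-1/p}(Y)\|_{q,(\sigma,\tau)}$. The paper disposes of the infimum case with a bare ``similarly,'' whereas your $\varepsilon$-perturbation of non-invertible $M,N$ (using $\|M_\varepsilon^{-1}M\|_\infty\leq 1$) explicitly closes the gap between the full-support parametrisation and the unrestricted feasible set, a detail the paper leaves implicit.
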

\begin{proof}
	First we note that for all $M,N\in \Lin(A)$ we have a left and right polar decomposition, i.e. there exists unitary operators $U, V\in \Uni(A)$ and positive semi-definite operators $P,Q\in\Pos(A)$ such that
$
		M = UP$ and $N = QV
$.
	Since Schatten norms are unitarily invariant we have that Eq.~\eqref{pq} becomes
	\begin{equation}
		\sup_{P ,Q\in \Pos(A)}\frac{\|(P\otimes \id_B)Y(Q\otimes \id_B)\|_q}{\|P\|_{2r}\|Q\|_{2r}} = \sup_{P ,Q\in \Pos(A)}\frac{\left\|\Gamma_{P,Q}^{2}(Y)\right\|_q}{\|P\|_{2r}\|Q\|_{2r}}.
	\end{equation}
	In order to achieve the supremum we may consider only $P$ and $Q$ such that $\|P\|_{2r},\|Q\|_{2r}<\infty$ and, without loss of generality, we can further use linearity and the scalability of Schatten norms to reduce $P$ and $Q$ to operators such that $\|P\|_{2r} = \|Q\|_{2r} = 1$, in which case we can identify them with density operators with full support, i.e.
	\begin{align}
		\sup_{\sigma ,\tau\in \Den^*(A)}\frac{\left\|\Gamma_{\sigma,\tau}^\frac{1}{r}(Y)\right\|_q}{\left\|\sigma^\frac{1}{2r}\right\|_{2r}\left\|\tau^\frac{1}{2r}\right\|_{2r}} = \sup_{\sigma ,\tau\in \Den^*(A)}\left\|\Gamma_{\sigma,\tau}^\frac{1}{r}(Y)\right\|_q.
	\end{align}
	Note $\frac{1}{2r} = \frac{1}{2q} -\frac{1}{2p}$, hence
	\begin{equation}
		\left\|\Gamma_{\sigma,\tau}^\frac{1}{r}(Y)\right\|_q = \left\|\Gamma_{\sigma,\tau}^\frac{1}{q}(\Gamma_{\sigma,\tau}^\frac{-1}{p}(Y))\right\|_q = \left\|\Gamma_{\sigma,\tau}^\frac{-1}{p}(Y)\right\|_{q, ( \sigma,  \tau)}.
	\end{equation}
	Similarly, if we start with Eq.~\eqref{qp} we arrive at the second case of Eq.~\eqref{pqG}.
\end{proof}
This formulation includes some cases of particular interest.
\begin{corollary}
	\begin{align}
		\|Y\|_{(\infty,q)} &= \sup_{\sigma,\tau\in \Den^*(A)}\|Y\|_{q, ( \sigma,  \tau)},\\
		\|Y\|_{(1, q)} &= \inf_{\sigma, \tau\in \Den^*(A)} \left\|\Gamma_{\sigma,\tau}^{-1}(Y)\right\|_{q, ( \sigma,  \tau)},\\
		\|Y\|_{(p, p)} & = \|Y\|_p.
	\end{align}
\end{corollary}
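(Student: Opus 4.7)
The plan is to obtain all three identities as direct specialisations of Proposition~\ref{reduceToDen}, which already exhibits $\|Y\|_{(p,q)}$ as an optimisation over $\sigma,\tau \in \Den^*(A)$ of a weighted norm of $\Gamma_{\sigma,\tau}^{-1/p}(Y)$. Essentially, each identity is obtained by plugging in a boundary value of $p$ and observing that the super-operator $\Gamma_{\sigma,\tau}^{-1/p}$ simplifies.

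First I would handle the $p=\infty$ case. Here we have $p \geq q$, so Proposition~\ref{reduceToDen} selects the supremum branch. Since $\Gamma_{\sigma,\tau}^{-1/\infty}$ amounts to taking the zeroth power of $\sigma$ and $\tau$ (i.e., the identity super-operator on the support), the argument reduces immediately to $\|Y\|_{q,(\sigma,\tau)}$, yielding the first identity. The $p=1$ case is symmetric: now $p \leq q$, so the infimum branch applies, and $\Gamma_{\sigma,\tau}^{-1/1} = \Gamma_{\sigma,\tau}^{-1}$, giving the second identity with no further computation.

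For the diagonal case $p=q$, the crucial observation is that the function being optimised is a constant in $\sigma,\tau$. Indeed, by the definition of the weighted norm in Section~\ref{sec:WN},
\begin{equation}
    \left\|\Gamma_{\sigma,\tau}^{-1/p}(Y)\right\|_{p,(\sigma,\tau)} = \left\|\Gamma_{\sigma,\tau}^{1/p}\Gamma_{\sigma,\tau}^{-1/p}(Y)\right\|_p = \|Y\|_p,
\end{equation}
independently of $\sigma$ and $\tau$. Since the boundary $p=q$ is admissible in both branches of Proposition~\ref{reduceToDen}, the supremum and infimum agree and collapse to $\|Y\|_p$, giving the third identity.

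None of the three steps is really an obstacle: the calculation is mechanical once one is comfortable that $\Gamma^0$ trivialises and that conjugating the weighted norm by $\Gamma^{-1/p}$ cancels the weight when $p=q$. The only point deserving a line of justification is ensuring we are allowed to set $p=\infty$ (respectively $p=1$) in Proposition~\ref{reduceToDen} with the convention $-1/\infty = 0$, but this is consistent with the interpolation conventions used throughout the chapter.
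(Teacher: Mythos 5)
Your proposal is correct and is essentially the paper's own argument: the paper proves the corollary by substituting the relevant values of $p$ into Eq.~\eqref{pqG} (Proposition~\ref{reduceToDen}), which is exactly what you do, with the welcome extra detail that $\Gamma_{\sigma,\tau}^{0}$ is trivial and that for $p=q$ the conjugation cancels the weight so the optimisation collapses.
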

\begin{proof}
	In Eq.~\eqref{pqG}, replace $p$ with the relevant values.
\end{proof}

An important feature of these norms is that they extend the log-convexity of Schatten norms:
\begin{theorem}	
	For $\theta \in (0,1)$, if $0\leq q_0 < q_1\leq \infty$, $0\leq  p_0 <  p_1\leq \infty$ and 
	\begin{equation}
		\frac{1}{p_\theta} = \frac{1-\theta}{p_0} + \frac{\theta}{p_1},\quad\frac{1}{q_\theta} = \frac{1-\theta}{q_0} + \frac{\theta}{q_1}\label{thetaRel}
	\end{equation} then
	\begin{equation}
		\|Y\|_{(p_\theta, q_\theta)} \leq \|Y\|_{(p_0, q_0)}^{1-\theta}\|Y\|_{(p_1, q_1)}^\theta.\label{pqRT}
	\end{equation}
\end{theorem}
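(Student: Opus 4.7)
The natural approach is complex interpolation via the weighted, operator-valued Hadamard three-line theorem developed later in this chapter. By Proposition~\ref{reduceToDen}, each Pisier norm is expressed as a supremum or infimum of weighted Schatten norms of the form $\|\sigma^{\frac{1}{2q}-\frac{1}{2p}} Y \tau^{\frac{1}{2q}-\frac{1}{2p}}\|_q$ over strictly positive density operators, so the claim is essentially a Riesz--Thorin-type inequality for Schatten classes wrapped in a variational outer layer.

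The first step is to split into cases according to whether both endpoints lie in the sup regime ($p_j \geq q_j$ for $j=0,1$), both in the inf regime, or are mixed. An elementary computation shows that if the endpoints are aligned then $(p_\theta, q_\theta)$ remains in the same regime. The mixed case I would reduce to two applications of the aligned case, by interpolating first in $p$ at fixed $q$ and then in $q$ at fixed $p$, since each one-variable interpolation necessarily stays within a single regime.

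For the aligned sup case, fix $\sigma_\theta, \tau_\theta \in \Den^*(A)$ nearly attaining $\|Y\|_{(p_\theta, q_\theta)}$, and invoke the weighted-Schatten duality of Eq.~\eqref{weightedDual} to pick $W$ with $\|W\|_{q_\theta',(\sigma_\theta,\tau_\theta)} = 1$ and $\bigl|\langle W, \Gamma_{\sigma_\theta,\tau_\theta}^{-1/p_\theta}(Y)\rangle_{\sigma_\theta,\tau_\theta}\bigr| \approx \|Y\|_{(p_\theta,q_\theta)}$. Polar-decompose $W = U|W|$ and build the analytic family
\begin{equation*}
F(z) = \tr\bigl[U\,|W|^{q_\theta'/q'(z)}\,\sigma_\theta^{a(z)}\,Y\,\tau_\theta^{a(z)}\bigr],\qquad 0 \leq \Re(z) \leq 1,
\end{equation*}
with $1/q'(z) = (1-z)/q_0' + z/q_1'$ and $a(z)$ an affine function of $z$ chosen so that $F(\theta)$ recovers the target quantity. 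On the boundary $\Re(z) = j$, I would regroup the factors to express $F(j+it)$ as a weighted inner product $\langle W_j, \Gamma_{\sigma_\theta,\tau_\theta}^{-1/p_j}(Y)\rangle$ for some unit $q_j'$-dual-weighted-norm element $W_j$, then apply the H\"older-type inequality of Eq.~\eqref{HoldLike} to conclude $|F(j+it)| \leq \|Y\|_{(p_j,q_j)}$. The three-line theorem then yields $|F(\theta)| \leq \|Y\|_{(p_0,q_0)}^{1-\theta}\|Y\|_{(p_1,q_1)}^\theta$, and passing to the sup over near-optimizers closes the argument. The aligned inf case is dual, working with the decomposition formulation of Eq.~\eqref{qp}.

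The principal obstacle is the normalization bookkeeping: the exponent $a(z)$ and the power $|W|^{q_\theta'/q'(z)}$ must be calibrated so that on each boundary line the factors reassemble into exactly a weighted Pisier norm at $(p_j, q_j)$, with no spurious multiplicative constants, while $F$ remains bounded and analytic on the closed strip. The degenerate endpoint cases where $p_j$ or $q_j$ lies in $\{1, \infty\}$ should be treated by direct calculation (powers being replaced by projectors or dropped), and the mixed sup/inf case by the two-step reduction above rather than by a single three-line argument spanning regimes.
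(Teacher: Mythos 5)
Before assessing the details: the thesis does not actually prove this theorem. Immediately after stating it, the text remarks that the estimate ``follows from the original formulation of Pisier'' and that showing it in a self-contained way ``has proven non-trivial''; it is included only as an indication of the general framework and is never used. So your proposal attempts exactly what the paper declines to do, and the places where your sketch is vague are precisely where that difficulty sits. The aligned sup case is indeed fine, and can be done more simply than you propose: fix $\sigma,\tau\in\Den^*(A)$, interpolate the fixed-weight analytic family $z\mapsto \sigma^{a(z)}Y\tau^{a(z)}$ with $a$ affine (Theorem~\ref{weighted3lines} or Lemma~\ref{genLogConv}), note that the imaginary parts of the exponents produce unitaries commuting with $\sigma$ and $\tau$ so the boundary values are exactly the endpoint weighted norms, and then take suprema on both sides; no duality element $W$ is needed.

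The genuine gaps are in the other two cases. For the aligned inf regime, ``the inf case is dual'' is not an argument: log-convexity of a family of norms does not pass to the dual norms by formal duality (dualising the assumed inequality yields a \emph{lower} bound on the dual norm at $\theta$, not the required upper bound), and if you instead use the factorisation form \eqref{qp} you must manufacture, from two endpoint factorisations $Y=(M_j\otimes\id_B)Z_j(N_j\otimes\id_B)$, an analytic family of factorisations whose value at $\theta$ is admissible with the product of the endpoint norms; since $M_0$ and $M_1$ need not commute there is no canonical interpolant (expressions like $M_0^{1-z}M_1^{z}$ are neither positive nor norm-controlled), and supplying such a construction is essentially Pisier's theorem itself, the very step the authors say they could not reproduce self-containedly. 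Second, the mixed-case reduction does not chain: interpolating first in $p$ at fixed $q$ and then in $q$ at fixed $p$ bounds $\|Y\|_{(p_\theta,q_\theta)}$ in terms of norms at corner pairs such as $(p_1,q_0)$ or $(p_0,q_1)$, which are not among the hypotheses, and no ordering of the two one-variable steps yields the product $\|Y\|_{(p_0,q_0)}^{1-\theta}\|Y\|_{(p_1,q_1)}^{\theta}$. A further, smaller issue is that the theorem as stated permits $p_j,q_j<1$, where \eqref{weightedDual} and \eqref{HoldLike} (established only for indices in $[1,\infty]$) are unavailable, so even your sup-case construction via a dual element $W$ with $\|W\|_{q_\theta',(\sigma,\tau)}=1$ needs restriction or a separate treatment of those ranges.
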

This estimate follows from the original formulation of~\citet{pisier1998} for more general spaces, but showing this in a self-contained way for our purposes has proven non-trivial. Although it is related to our three-line theorem~--~Theorem~\ref{weighted3lines}, and the derived log-convexity results, it is not actually required to demonstrate them. We include it here as an indication of the potential of the general framework, of which we are only utilising a specific part.

\subsection{The case of positive operators}
If we restrict the focus to positive operators we can show that the optimisation is achieved for the same operator, i.e. $\sigma=\tau\in\PosS(A)$. This result and some of the subsequent observations were originally explored in~\cite{devetak06}. It follows that we have a special case of Proposition~\ref{reduceToDen}:
\begin{proposition}
	For $X\in\Pos(AB)$,
	\begin{equation}
		\|X\|_{(p,q)} = \begin{cases}
			\displaystyle{\sup_{\sigma\in \Den^*(A)}}\left\|\Gamma_{\sigma}^\frac{-1}{p}(X)\right\|_{q,\sigma}&\text{if } p\geq q\\
			\displaystyle{\inf_{\sigma\in \Den^*(A)}}\left\|\Gamma_{\sigma}^\frac{-1}{p}(X)\right\|_{q,\sigma}&\text{if } p\leq q
		\end{cases}.
	\end{equation}
\end{proposition}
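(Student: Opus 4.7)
The plan is to apply Proposition~\ref{reduceToDen} to reduce the two-variable optimisation appearing in $\|X\|_{(p,q)}$ to the diagonal $\sigma = \tau$. After unfolding the weighted norm, the quantity inside the optimum can be rewritten as $\|\sigma^{a}X\tau^{a}\|_q$ with $a := \tfrac{1}{2}\!\left(\tfrac{1}{q}-\tfrac{1}{p}\right)$; this exponent is non-negative in the supremum case ($p \geq q$) and non-positive in the infimum case ($p \leq q$).

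The main technical ingredient is the Cauchy--Schwarz-type estimate
\begin{equation*}
    \|\sigma^{a}X\tau^{a}\|_q \;\leq\; \|\sigma^{a}X\sigma^{a}\|_q^{1/2}\,\|\tau^{a}X\tau^{a}\|_q^{1/2}, \tag{$\star$}
\end{equation*}
valid for every real $a$ and every $\sigma, \tau \in \PosS(A)$. To establish $(\star)$, I would exploit positivity of $X$ through the factorisation $X = X^{1/2}X^{1/2}$, apply H\"older's inequality for Schatten norms, $\|AB\|_q \leq \|A\|_{2q}\|B\|_{2q}$, to the split $A = \sigma^{a}X^{1/2}$ and $B = X^{1/2}\tau^{a}$, and identify each factor via $\|M\|_{2q}^{2} = \|M^{*}M\|_q$ together with the equality of non-zero eigenvalues between $AA^{*}$ and $A^{*}A$, which yields $\|\sigma^{a}X^{1/2}\|_{2q}^{2} = \|\sigma^{a}X\sigma^{a}\|_q$.

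For the supremum case ($a \geq 0$), the estimate $(\star)$ immediately gives $\sup_{\sigma,\tau}\|\sigma^{a}X\tau^{a}\|_q \leq \sup_{\sigma}\|\sigma^{a}X\sigma^{a}\|_q$ by factoring the independent suprema on the right-hand side, and the matching lower bound follows by restricting the outer optimisation to $\tau = \sigma$; this settles the first case.

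The infimum case ($a < 0$) is where the main obstacle lies, since $(\star)$ now furnishes only an upper bound on the off-diagonal quantity rather than a lower bound on the off-diagonal infimum. To close the gap I would construct, for each pair $(\sigma, \tau) \in \Den^{*}(A) \times \Den^{*}(A)$, a single diagonal candidate $\rho$ satisfying $\|\rho^{a}X\rho^{a}\|_q \leq \|\sigma^{a}X\tau^{a}\|_q$. A natural choice is the normalised operator geometric mean $\rho = (\sigma\#\tau)/\tr(\sigma\#\tau)$: the bound $\tr(\sigma\#\tau) \leq 1$ contributes a prefactor that works in our favour when $a < 0$, reducing matters to the unnormalised inequality $\|(\sigma\#\tau)^{a}X(\sigma\#\tau)^{a}\|_q \leq \|\sigma^{a}X\tau^{a}\|_q$. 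This last estimate is immediate when $\sigma$, $\tau$ and $X$ commute and, in the general non-commuting case, would follow from an Ando--Hiai-type operator-mean inequality applied to the positive operator $X$.
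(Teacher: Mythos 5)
Your estimate $(\star)$ and your handling of the supremum case reproduce the paper's own proof essentially verbatim: the paper derives $\|MXN^\dagger\|_q \le \sqrt{\|MXM^\dagger\|_q\,\|NXN^\dagger\|_q} \le \max\{\|MXM^\dagger\|_q,\|NXN^\dagger\|_q\}$ from the split $X = X^{1/2}X^{1/2}$ and H\"older, together with the polar-decomposition/unitary-invariance remark, and that is exactly what settles $p \ge q$ (restricting to $\tau=\sigma$ gives the trivial converse bound).

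The infimum case is where your proposal has a genuine gap. You correctly identify that $(\star)$ points the wrong way and that one needs, for each pair $(\sigma,\tau)\in\Den^*(A)\times\Den^*(A)$, a diagonal competitor $\rho$ with $\|\rho^{a}X\rho^{a}\|_q \le \|\sigma^{a}X\tau^{a}\|_q$; your normalisation bookkeeping for $\rho=(\sigma\,\#\,\tau)/\tr(\sigma\,\#\,\tau)$ is fine (indeed $\sigma\,\#\,\tau\le\tfrac12(\sigma+\tau)$ gives $\tr(\sigma\,\#\,\tau)\le1$, which helps since $a<0$). But the remaining inequality $\|(\sigma\,\#\,\tau)^{a}X(\sigma\,\#\,\tau)^{a}\|_q \le \|\sigma^{a}X\tau^{a}\|_q$ is the entire content of the $p\le q$ case and is asserted rather than proved. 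It does not follow from Ando--Hiai-type results: those compare $A^{r}\,\#\,B^{r}$ with $(A\,\#\,B)^{r}$ (as operator inequalities or log-majorisation) with no third operator present, and log-majorisation relations are not stable under sandwiching an arbitrary positive $X$, so there is no standard statement you can cite here. The cheap route via $\sigma\,\#\,\tau \le \tfrac12(\varepsilon\sigma+\varepsilon^{-1}\tau)$ only yields an upper bound by the geometric mean of the two diagonal values, i.e.\ the same wrong-direction information as $(\star)$; beyond the commuting case nothing in your sketch closes the gap. For comparison, the paper's displayed proof itself only records the estimate that disposes of the supremum case and attributes the positive-operator reduction to~\cite{devetak06}, where the infimum case is handled by a different mechanism --- symmetrising an arbitrary factorisation of the positive operator (equivalently, by duality with the supremum case) --- rather than by operator geometric means. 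Until your mean inequality is established for all $q\ge1$ and the relevant exponents, the $p\le q$ half of the proposition remains unproven.
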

\begin{proof}
	
	For $X\in\Pos(A), M\in\Lin(A)$, H\"older's inequality implies
	\begin{align}
		\|MXN^\dagger\|_p^2 &= \|(MX^\frac{1}{2}X^\frac{1}{2} N^\dagger)\|_{p}^2\\
		&\leq \|MX^\frac{1}{2}\|_{2p}^2\|X^\frac{1}{2}N^\dagger\|_{2p}^2\\
		&= \|MXM^\dagger\|_p\|NXN^\dagger\|_p\\
		\|MXN^\dagger\|_p&\leq \sqrt{\|MXM^\dagger\|_p\|NXN^\dagger\|_p}\\
		&\leq \max\{\|MXM^\dagger\|_p, \|NXN^\dagger\|_p\}.
	\end{align}
	Moreover, unitary invariance implies $\|MXM^\dagger\|_p = \||M|X|M|\|_p$.
	
\end{proof}

This special case gives rise to the following identity:
\begin{proposition}	
	Given $X_{AB}\in\Pos(AB)$, 	
$
		\|X_{AB}\|_{(p,1)} = \|X_A\|_p\label{p1eqp}
$.
\end{proposition}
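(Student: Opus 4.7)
The plan is to unfold the two-part norm using the positive-operator reduction established in the preceding proposition and then invoke the standard variational characterisation of the Schatten $p$-norm. Since $X_{AB} \in \Pos(AB)$ and the first index $p \geq 1$ exceeds the second, we are in the supremum branch, so the preceding proposition gives
\begin{equation*}
\|X_{AB}\|_{(p,1)} = \sup_{\sigma \in \Den^*(A)} \bigl\|\Gamma_\sigma^{-1/p}(X_{AB})\bigr\|_{1,\sigma},
\end{equation*}
where $\Gamma_\sigma$ is extended to $AB$ by acting as the identity on $B$.

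The main computation is to simplify the integrand. Writing out the weighted $1$-norm and combining exponents via $\tfrac12 - \tfrac{1}{2p} = \tfrac{1}{2p'}$, one obtains
\begin{equation*}
\bigl\|\Gamma_\sigma^{-1/p}(X_{AB})\bigr\|_{1,\sigma}
= \bigl\|(\sigma^{1/(2p')}\otimes \id_B)\, X_{AB}\, (\sigma^{1/(2p')}\otimes \id_B)\bigr\|_1.
\end{equation*}
Because $X_{AB}$ is positive, the operator inside the trace norm is positive, so its $1$-norm is simply its trace. Cyclicity of the trace and the definition of the partial trace then collapse the expression to
\begin{equation*}
\tr\bigl[(\sigma^{1/p'}\otimes \id_B)\, X_{AB}\bigr] = \tr\bigl[\sigma^{1/p'} X_A\bigr].
\end{equation*}

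At this point the problem is reduced to computing $\sup_{\sigma \in \Den^*(A)} \tr[\sigma^{1/p'} X_A]$. By H\"older's inequality and $\|\sigma^{1/p'}\|_{p'} = (\tr\sigma)^{1/p'} = 1$, this quantity is bounded above by $\|X_A\|_p$, and equality is attained (in the limit, if $X_A$ lacks full support) by the optimiser $\sigma = X_A^p/\|X_A\|_p^p$, which is the well-known dual characterisation of Schatten norms. Combining gives $\|X_{AB}\|_{(p,1)} = \|X_A\|_p$.

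There is no real obstacle here beyond careful exponent bookkeeping; the only mild subtlety is that the optimal $\sigma$ may not lie in $\Den^*(A)$ when $X_A$ is rank-deficient, which is resolved by a routine continuity/approximation argument (e.g.\ using $\sigma_\varepsilon = (1-\varepsilon)\sigma + \varepsilon \id_A / |A|$ and letting $\varepsilon \to 0^+$).
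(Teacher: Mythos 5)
Your proposal is correct and follows essentially the same route as the paper: unfold the $(p,1)$-norm via the positive-operator reduction, use the exponent identity $\tfrac12-\tfrac1{2p}=\tfrac1{2p'}$ together with positivity and the partial trace to reduce to $\sup_{\sigma\in\Den^*(A)}\tr\bigl[\sigma^{1/p'}X_A\bigr]$, and then invoke the duality (variational characterisation) of Schatten norms. Your only addition is spelling out the H\"older step, the explicit optimiser, and the approximation argument for rank-deficient $X_A$, which the paper subsumes in its citation of the Schatten-norm duality (cf.\ Lemma~\ref{newLem12}).
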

\begin{proof}
	Using the duality of Schatten norms we can write
	\begin{align}
		\|X_{AB}\|_{(p,1)} &= \sup_{\sigma\in \Den^*(A)}\left\|\Gamma_{\sigma}^\frac{-1}{p}(X_{AB})\right\|_{1,\sigma}\\
		&= \sup_{\sigma\in \Den^*(A)} {\|\sigma^\frac{1}{2p'}\tr_B(X_{AB})\sigma^\frac{1}{2p'}\|_1}\\
		&= \sup_{\sigma\in \Den^*(A)}{\tr X_{A}\sigma^\frac{1}{p'}}\\
		&= \|X_A\|_p.
	\end{align}	
\end{proof}

\subsection{Weighted two-part norms}
Based on the definition of weighted norms in section \ref{sec:WN}, we may define a weighted version of the two-part norm on $\Lin(A,B)$, with $\sigma\in \PosS(B)$ and $\tau\in \PosS(A)$:
\begin{equation}
	\|X\|_{p,q, (\sigma, \tau)} = \|\Gamma_{\sigma,\tau}^\frac{1}{q}(X)\|_{(p,q)}.
\end{equation}
This definition may not be particularly useful in general but does allow for the following identity:
\begin{proposition} 
	For $Y\in\Lin(AB)$ and $\sigma_B,\tau_B\in \PosS(B)$ we have
	\begin{equation} \label{pqW}
		\|Y\|_{p,q, (\sigma_B, \tau_B)} = \begin{cases}
			\displaystyle{\sup_{\sigma_A,\tau_A\in \Den^*(A)}}\|\Gamma_{\sigma_A,\tau_A}^\frac{-1}{p}(Y)\|_{q, (\sigma_A\otimes\sigma_B,  \tau_A\otimes\tau_B)}&\text{if } p\geq q\\
			\displaystyle{\inf_{\sigma_A,\tau_A\in \Den^*(A)}}\|\Gamma_{\sigma_A,\tau_A}^\frac{-1}{p}(Y)\|_{q, (\sigma_A\otimes\sigma_B,  \tau_A\otimes\tau_B)}&\text{if } p\leq q
		\end{cases}.
	\end{equation}
\end{proposition}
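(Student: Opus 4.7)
The plan is to reduce the claim directly to Proposition~\ref{reduceToDen} by unpacking the definition of the weighted two-part norm and exploiting the fact that operators on $A$ commute with operators on $B$ when tensored with the identity. First I would substitute the definition $\|Y\|_{p,q,(\sigma_B,\tau_B)} = \|\Gamma_{\sigma_B,\tau_B}^{1/q}(Y)\|_{(p,q)}$ and apply Proposition~\ref{reduceToDen} to the operator $Y' := \Gamma_{\sigma_B,\tau_B}^{1/q}(Y) \in \Lin(AB)$, which gives (in the case $p\geq q$)
\begin{equation*}
\|Y\|_{p,q,(\sigma_B,\tau_B)} = \sup_{\sigma_A,\tau_A \in \Den^*(A)} \left\| \Gamma_{\sigma_A,\tau_A}^{-1/p}\!\left(\Gamma_{\sigma_B,\tau_B}^{1/q}(Y)\right)\right\|_{q,(\sigma_A,\tau_A)},
\end{equation*}
and analogously with an infimum when $p\leq q$.

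The remaining step is an identification of this nested expression with the right-hand side of Eq.~\eqref{pqW}. Unpacking the weighted $q$-norm on the right of the target identity,
\begin{equation*}
\left\|\Gamma_{\sigma_A,\tau_A}^{-1/p}(Y)\right\|_{q,(\sigma_A\otimes\sigma_B,\,\tau_A\otimes\tau_B)} = \left\|\Gamma_{\sigma_A\otimes\sigma_B,\,\tau_A\otimes\tau_B}^{1/q}\!\left(\Gamma_{\sigma_A,\tau_A}^{-1/p}(Y)\right)\right\|_{q},
\end{equation*}
and using the fact that $\sigma_A^{1/(2q)}\otimes\id_B$ commutes with $\id_A\otimes\sigma_B^{1/(2q)}$ (and similarly on the right), both of the above collapse to the single Schatten norm
\begin{equation*}
\left\|\left(\sigma_A^{\frac{1}{2q}-\frac{1}{2p}}\otimes \sigma_B^{\frac{1}{2q}}\right) Y \left(\tau_A^{\frac{1}{2q}-\frac{1}{2p}}\otimes \tau_B^{\frac{1}{2q}}\right)\right\|_{q},
\end{equation*}
which matches the first expression obtained from Proposition~\ref{reduceToDen}. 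Taking the supremum (respectively infimum) over $\sigma_A,\tau_A\in\Den^*(A)$ then yields Eq.~\eqref{pqW}.

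The only delicate point is purely bookkeeping: one must keep track of the fractional exponents $\frac{1}{2p}$, $\frac{1}{2q}$, and $\frac{1}{2q}-\frac{1}{2p}$ (the last being the same quantity $\frac{1}{2r}$ that appears implicitly via the defining relation $\frac{1}{q} = \frac{1}{p}+\frac{1}{r}$ used in Proposition~\ref{reduceToDen}), and observe that the $A$- and $B$-weights can be freely regrouped because the superoperators $\Gamma_{\sigma_A,\tau_A}$ and $\Gamma_{\id_A\otimes\sigma_B,\id_A\otimes\tau_B}$ commute. I do not expect any genuine obstacle beyond this verification; the proposition is essentially a functorial consequence of Proposition~\ref{reduceToDen}, extending it from the unweighted setting on $\Lin(AB)$ to the $B$-weighted setting by absorbing the $B$-weights into the $q$-norm via the tensor product of weights.
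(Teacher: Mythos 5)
Your proposal is correct and follows essentially the same route as the paper: both unpack $\|Y\|_{p,q,(\sigma_B,\tau_B)} = \|\Gamma_{\sigma_B,\tau_B}^{1/q}(Y)\|_{(p,q)}$, invoke Proposition~\ref{reduceToDen} on $\Gamma_{\sigma_B,\tau_B}^{1/q}(Y)$, and then use the commutation of $A$-local and $B$-local factors to identify the resulting nested weighted norm with $\|\Gamma_{\sigma_A,\tau_A}^{-1/p}(Y)\|_{q,(\sigma_A\otimes\sigma_B,\,\tau_A\otimes\tau_B)}$ before optimising over $\sigma_A,\tau_A$. The exponent bookkeeping you flag is exactly the content of the paper's single displayed computation, so there is no gap.
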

\begin{proof}
	\begin{align}
		\left\|\Gamma_{\sigma_A,\tau_A}^\frac{-1}{p}\left(\Gamma_{\sigma_B,\tau_B}^\frac{1}{q}(Y)\right)\right\|_{q, (\sigma_A, \tau_A)}&=\left\|\left( \sigma_A\otimes \sigma_B \right)^\frac{1}{2q}\left(\sigma_A\otimes \id_B\right)^\frac{-1}{2p}Y\left(\sigma_A\otimes \id_B\right)^\frac{-1}{2p}\left(\tau_A\otimes \tau_B \right)^\frac{1}{2q}\right\|_{q}\\
		&=\left\|\Gamma_{\sigma_A,\tau_A}^\frac{-1}{p}(Y)\right\|_{q, ( \sigma_A\otimes\sigma_B,  \tau_A\otimes\tau_B)}.
	\end{align}

Moreover
$
	\|\Gamma_{\sigma_B,\tau_B}^\frac{1}{q}(Y)\|_{(p,q)}=\|Y\|_{p,q, (\sigma_B, \tau_B)}$,
	therefore, taking the infimum or supremum for the relevant comparisons between $p$ and $q$, we arrive at Eq.~\eqref{pqW}.
\end{proof}

This identity also has a specific form for positive operators:
\begin{corollary}
	With $1\leq q,p\leq \infty$, $X\in\Pos(AB)$ and $\sigma_B\in \PosS(B)$,
	\begin{equation} \label{PospqW}
		\|X\|_{(p,q),\sigma_B} = \begin{cases}
			\displaystyle{\sup_{\sigma_A\in \Den^*(A)}}\|\Gamma_{\sigma_A}^\frac{-1}{p}(X)\|_{q,\sigma_A\otimes\sigma_B}&\text{if } p\geq q\\
			\displaystyle{\inf_{\sigma_A\in \Den^*(A)}}\|\Gamma_{\sigma_A}^\frac{-1}{p}(X)\|_{q,\sigma_A\otimes\sigma_B}&\text{if } p\leq q
		\end{cases}.
	\end{equation}
\end{corollary}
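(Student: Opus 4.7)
The plan is to reduce the identity to the positive-operator specialisation of Proposition~\ref{reduceToDen} by exploiting the fact that $A$ and $B$ are independent tensor factors. Unpacking the shorthand in the statement,
\[
\|X\|_{(p,q),\sigma_B} = \|\Gamma_{\sigma_B}^{1/q}(X)\|_{(p,q)},
\]
and since $X\in\Pos(AB)$ is conjugated by the positive operator $(\id_A\otimes\sigma_B)^{1/(2q)}$, the operator $Z := \Gamma_{\sigma_B}^{1/q}(X)$ remains in $\Pos(AB)$. I may therefore invoke the positive-operator proposition already established, obtaining
\[
\|Z\|_{(p,q)} \,=\, \sup_{\sigma_A\in\Den^*(A)}\|\Gamma_{\sigma_A}^{-1/p}(Z)\|_{q,\sigma_A}
\]
for $p\geq q$, with an infimum in place of the supremum when $p\leq q$.

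Next I would commute the two super-operators: because $\Gamma_{\sigma_A}^{-1/p}$ acts only on the $A$-factor via $(\sigma_A\otimes \id_B)^{-1/(2p)}$ while $\Gamma_{\sigma_B}^{1/q}$ acts only on $B$, they commute, so
\[
\Gamma_{\sigma_A}^{-1/p}\bigl(\Gamma_{\sigma_B}^{1/q}(X)\bigr) = \Gamma_{\sigma_B}^{1/q}\bigl(\Gamma_{\sigma_A}^{-1/p}(X)\bigr).
\]
The last ingredient is to fold the outer $\Gamma_{\sigma_B}^{1/q}$ into the weighted norm: since $(\sigma_A\otimes \id_B)^{1/(2q)}$ and $(\id_A\otimes \sigma_B)^{1/(2q)}$ commute and their product equals $(\sigma_A\otimes\sigma_B)^{1/(2q)}$, a direct expansion of the definitions yields
\[
\|\Gamma_{\sigma_B}^{1/q}(Y)\|_{q,\sigma_A} = \|Y\|_{q,\sigma_A\otimes\sigma_B}
\]
for any $Y\in\Lin(AB)$. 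Substituting back into the supremum (respectively infimum) gives the claimed identity.

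No step presents a genuine obstacle~-- the argument is essentially bookkeeping. The only care required is to consistently interpret the subscripts $\sigma_A$ and $\sigma_B$ as their natural extensions $\sigma_A\otimes\id_B$ and $\id_A\otimes\sigma_B$ on the full space $AB$, and to verify that the positive-operator specialisation applies to $\Gamma_{\sigma_B}^{1/q}(X)$, which is immediate from the fact that conjugation by a positive operator preserves positivity.
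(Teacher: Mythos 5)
Your proof is correct and follows essentially the same route as the paper: the paper simply substitutes $X\in\Pos(AB)$ into the weighted two-part-norm identity Eq.~\eqref{pqW}, whose proof is exactly the weight-folding computation you carry out inline. If anything, your version is slightly more explicit, since you apply the positive-operator case of Proposition~\ref{reduceToDen} to the still-positive operator $\Gamma_{\sigma_B}^{1/q}(X)$, thereby making visible the reduction from two optimisation variables to the single $\sigma_A$ that the paper's one-line proof leaves implicit.
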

\begin{proof}
	This is clear by replacing $Y$ in Eq.~\eqref{pqW} with $X\in\Pos(AB)$.
\end{proof}

\subsection{Three-part norms}
The framework of~\citet{pisier1998} also allows for a natural extension of the two-part norm. We define a three-part norm on tripartite systems in an analogous way: for $Y\in\Lin(ABC)$
\begin{equation}
	\|Y\|_{(p,q,s)} = \begin{cases}
		\displaystyle{\sup_{\sigma_A,\tau_A\in \Den^*(A)}}\|\Gamma_{\sigma_A,\tau_A}^{\frac{1}{q}-\frac{1}{p}}(Y)\|_{(q,s;AB)}&\text{if } p\geq q\\
		\displaystyle{\inf_{\sigma_A,\tau_A\in \Den^*(A)}}\|\Gamma_{\sigma_A,\tau_A}^{\frac{1}{q}-\frac{1}{p}}(Y)\|_{(q,s;AB)}&\text{if } p\leq q
	\end{cases}.
\end{equation}
In the above we use the notation $\|Y\|_{(p,q;A)}$ to indicate over which parts of the space the optimisation occurs. This is an important distinction as in general, with $Y\in \Lin(ABC)$ and, for instance, $p\leq q$:
\begin{align*}
	\numberthis\|Y\|_{(p,q;A)} &= \inf_{\sigma_A,\tau_A\in \Den^*(A)}\|\Gamma_{\sigma_A,\tau_A}^\frac{-1}{p}(Y)\|_{q, (\sigma_A, \tau_A)}\\ &\neq \\
	\numberthis\|Y\|_{(p,q;AB)} &=\inf_{\sigma_{AB},\tau_{AB}\in \Den^*(AB)}\|\Gamma_{\sigma_{AB},\tau_{AB}}^\frac{-1}{p}(Y)\|_{q, (\sigma_{AB}, \tau_{AB})}.
\end{align*}
This definition allows for a reduction similar to Eq.~\eqref{p1eqp}, originally given in~\cite{devetak06}:
\begin{proposition}
	
	For $X_{ABC} \in\Pos(ABC)$,
	\begin{equation}
		\|X_{ABC}\|_{(p,q,1)} = \|X_{AB}\|_{(p,q)}.\label{pq1eqpq}
	\end{equation}
\end{proposition}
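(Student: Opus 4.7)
The plan is to mirror the proof of Proposition~\ref{p1eqp} one level up, peeling off the $C$ subsystem first and then the $B$ subsystem. Unfolding the definition of the three-part norm gives
\[ \|X_{ABC}\|_{(p,q,1)} = \mathop{\mathrm{opt}}_{\sigma_A,\tau_A\in\Den^*(A)} \bigl\|\Gamma^{1/q-1/p}_{\sigma_A,\tau_A}(X_{ABC})\bigr\|_{(q,1;AB)}, \]
where $\mathop{\mathrm{opt}}$ is $\sup$ for $p\geq q$ and $\inf$ for $p\leq q$. The first move is to collapse both outer optimisations (the one over $\sigma_A,\tau_A$ displayed above and the one over $\sigma_{AB},\tau_{AB}$ hidden inside the $(q,1;AB)$ norm) to the diagonal $\sigma=\tau$. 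Since $X_{ABC}\in\Pos(ABC)$, the H\"older estimate already used in the proof of the positive operator two-part formula gives, for arbitrary $L,R\in\Lin(AB)$,
\[ \|(L\otimes\id_C)X_{ABC}(R\otimes\id_C)\|_1\leq \max\bigl\{\|(L\otimes\id_C)X_{ABC}(L^\dagger\otimes\id_C)\|_1,\,\|(R^\dagger\otimes\id_C)X_{ABC}(R\otimes\id_C)\|_1\bigr\}, \]
and absorbing $\sigma_A^{(1/q-1/p)/2}$ and $\tau_A^{(1/q-1/p)/2}$ into $L$ and $R$ shows that the optima over $(\sigma_A,\tau_A)$ and $(\sigma_{AB},\tau_{AB})$ can both be taken on the diagonal. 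In particular, with $\sigma_A=\tau_A$, the operator $Y:=\Gamma^{1/q-1/p}_{\sigma_A}(X_{ABC})$ is positive.

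Next I would apply the tripartite analogue of Eq.~\eqref{p1eqp} to this positive $Y$. Repeating the proof of~\eqref{p1eqp} verbatim with the pair $(A,B)$ replaced by $(AB,C)$ and using only Schatten duality yields
\[ \|Y\|_{(q,1;AB)}=\sup_{\sigma_{AB}\in\Den^*(AB)}\tr\bigl(\sigma_{AB}^{1/q'}\,\tr_C(Y)\bigr)=\|\tr_C(Y)\|_q. \]
Because $\Gamma_{\sigma_A}$ acts only on $A$, it commutes with $\tr_C$, so $\tr_C(Y)=\Gamma^{1/q-1/p}_{\sigma_A}(X_{AB})$. Substituting back,
\[ \|X_{ABC}\|_{(p,q,1)}=\mathop{\mathrm{opt}}_{\sigma_A\in\Den^*(A)}\bigl\|\Gamma^{1/q-1/p}_{\sigma_A}(X_{AB})\bigr\|_q=\|X_{AB}\|_{(p,q)}, \]
where the final equality is the positive operator form of the two-part norm.

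The step I expect to be fiddly is the reduction to $\sigma=\tau$ at the outer level: unlike in the standalone two-part case, $\Gamma^{1/q-1/p}_{\sigma_A,\tau_A}(X_{ABC})$ is not positive when $\sigma_A\neq\tau_A$, and the norm being applied to it is itself an optimisation rather than a fixed Schatten norm. However, since the $(q,1;AB)$ norm unfolds to a supremum (or infimum) of quantities of the form $\|(L\otimes\id_C)X_{ABC}(R\otimes\id_C)\|_1$, the H\"older bound can be applied inside that optimisation, so both outer reductions happen simultaneously; no ingredient beyond what appeared in the positive operator two-part proof is required.
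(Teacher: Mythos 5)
Your proof follows essentially the same route as the paper's: apply the two-part identity $\|X_{AB}\|_{(p,1)}=\|X_A\|_p$ with $(AB,C)$ playing the role of $(A,B)$ to the positive operator $\Gamma_{\sigma_A}^{\frac{1}{q}-\frac{1}{p}}(X_{ABC})$, use that conjugation by $\sigma_A$ commutes with $\tr_C$, and then optimise over $\sigma_A$ to recognise the positive-operator form of $\|X_{AB}\|_{(p,q)}$. The only difference is that you spell out the collapse of the outer optimisation to $\sigma_A=\tau_A$, which the paper leaves implicit by invoking its earlier positive-operator reduction; note that, just as in the paper's own treatment, the H\"older estimate as stated really only settles the supremum branch ($p\geq q$), the infimum branch being inherited from that earlier reduction rather than re-derived.
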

\begin{proof}
	By Eq.~\eqref{p1eqp}, with $\sigma_A\in \PosS(A)$
	\begin{align}
		\|\Gamma_{\sigma_A}^{\frac{1}{q}-\frac{1}{p}}(X_{ABC})\|_{(q,1;AB)} &= \|\Gamma_{\sigma_A}^{\frac{1}{q}-\frac{1}{p}}(X_{AB})\|_{q}\\
		&=\|\Gamma_{\sigma_A}^{\frac{-1}{p}}(X_{AB})\|_{q, \sigma_A}.
	\end{align}
	Choosing $p\geq q$ and optimising over $\sigma_A\in \Den(A)$ on both sides we obtain
	\begin{align}
		\sup_{\sigma_A\in \Den^*(A)}\|\Gamma_{\sigma_A}^{\frac{1}{q}-\frac{1}{p}}(X_{ABC})\|_{(q,1;AB)} &= \sup_{\sigma_A\in \Den^*(A)}\|\Gamma_{\sigma_A}^{\frac{-1}{p}}(X_{AB})\|_{q, \sigma_A},
	\end{align}
	which is equivalent to Eq.~\eqref{pq1eqpq}. Choosing $p\leq q$ we instead take the infimum and arrive again at Eq.~\eqref{pq1eqpq}. 
\end{proof}

\subsection{Adapting the general formulation}
In the remainder of this thesis, we essentially use the above quantities but ignore the optimisations. Overall, the optimisations prove problematic and unnecessary to reproduce the convenient interpolation structure, i.e. we work with expressions of the form:
\begin{equation}
	\left\|\Gamma_{\sigma_A}^\frac{-1}{p}(X)\right\|_{q,\sigma_A\otimes\sigma_B}\quad\text{and}\quad\left\|\Gamma_{\tau_{AB}}^{\frac{1}{s} - \frac{1}{q}}\left(\Gamma_{\sigma_A}^{\frac{1}{q}-\frac{1}{p}}(X)\right)\right\|_{s}.
\end{equation}
Evidently, these expressions do not define a unique value determined by $p$ and $q$ (and $s$) but rather encompass a family of norm-like expressions which are free in the first (and second) parameter.

When we remove the optimisation, the reductions in the previous sections relate closely to situations which produce particular R\'enyi entropic quantities. Moreover, the choice of parameters for the relevant comparisons almost uncannily reproduces the expected form of the chain and decomposition rules. We may even take advantage of the structure to derive general divergence inequalities which prove indispensable in establishing the improved versions of the uncertainty relations and information exclusion relations covered in Chapter~\ref{sec:generalised-renyi-divergence-uncertainty-relations}.

\section{Three-line theorem for unbalanced weighted norms}
This result and its specifications provide the main mechanism by which we compare divergences of different order. We present an extension of the three-line theorem from~\cite{beigi13} to spaces equipped with unbalanced weighted norms.
\begin{theorem}\label{weighted3lines}
	We denote $S := \{z\in \C : 0\leq \Re(z)\leq 1 \}$, the complex strip. Let $F: S \rightarrow \Lin(A,B)$ be a bounded map that is holomorphic on the interior of $S$ and continuous on the boundary. Define
	\begin{equation}
		\mathsf{M}_k = \sup_{t\in\R}\|F(k+\im t)\|_{p_k, ( \sigma, \tau)},
	\end{equation}
	then for $1\leq p_\theta, p_0, p_1\leq \infty$ such that $\frac{1}{p_\theta} = \frac{1-\theta}{p_0} + \frac{\theta}{p_1}$,
	\begin{equation}
		\|F(\theta)\|_{p_\theta, ( \sigma,  \tau)} \leq \mathsf{M}_0^{1-\theta}\mathsf{M}_1^\theta.
	\end{equation}
\end{theorem}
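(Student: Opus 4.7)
My plan is to reduce the weighted three-line theorem to Beigi's unweighted version by folding the weights directly into the interpolating function. I would first introduce the affine entire scalar function $\alpha(z) := \frac{1-z}{2p_0} + \frac{z}{2p_1}$, which by construction satisfies $\alpha(k) = 1/(2p_k)$ for $k \in \{0,1\}$ and, by the hypothesis on $p_\theta$, also $\alpha(\theta) = 1/(2p_\theta)$. Using the functional calculus for the strictly positive operators $\sigma,\tau\in\PosS$, the maps $z\mapsto\sigma^{\alpha(z)}$ and $z\mapsto\tau^{\alpha(z)}$ are entire operator-valued functions whose operator norms are uniformly bounded on the closed strip $S$ (because $\Re(\alpha(z))$ is bounded there). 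Setting
\begin{equation*}
\tilde F(z) := \sigma^{\alpha(z)}\, F(z)\, \tau^{\alpha(z)},
\end{equation*}
the composite $\tilde F : S \to \Lin(A,B)$ inherits boundedness, continuity on the boundary and holomorphicity on the interior of $S$ directly from $F$.

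The next step is to match the unweighted Schatten norms of $\tilde F$ on the two boundary lines with the weighted norms of $F$ that define $\mathsf{M}_0$ and $\mathsf{M}_1$. For $z = k + \im t$ with $k \in \{0,1\}$, one writes $\alpha(k + \im t) = \frac{1}{2p_k} + \im t\,\gamma$ with $\gamma := \frac{1}{2p_1} - \frac{1}{2p_0}$, so that $\sigma^{\alpha(k + \im t)}$ factors as the commuting product of the real power $\sigma^{1/(2p_k)}$ and the unitary $\sigma^{\im t\gamma} = e^{\im t\gamma \log \sigma}$, and analogously for $\tau$. Unitary invariance of the Schatten $p_k$-norm then gives $\|\tilde F(k+\im t)\|_{p_k} = \|\sigma^{1/(2p_k)} F(k+\im t)\tau^{1/(2p_k)}\|_{p_k} = \|F(k+\im t)\|_{p_k,(\sigma,\tau)} \leq \mathsf{M}_k$. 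At $z = \theta$ the exponent $\alpha(\theta)$ is real, yielding the exact identity $\|\tilde F(\theta)\|_{p_\theta} = \|F(\theta)\|_{p_\theta,(\sigma,\tau)}$.

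At this point the statement reduces to Beigi's unweighted three-line theorem applied to the bounded holomorphic map $\tilde F$, which delivers $\|\tilde F(\theta)\|_{p_\theta} \leq \mathsf{M}_0^{1-\theta}\mathsf{M}_1^\theta$ and hence the claim after substituting back. The main subtlety lies in the second step: one must verify that the imaginary-power factors really do split off as unitaries, which relies on $\log\sigma$ and $\log\tau$ being self-adjoint and hence on the assumption $\sigma,\tau\in\PosS$, and also that these unitaries commute with their real-power partners so that the weighted norm is recovered on the nose rather than up to conjugation. Everything else (the interior holomorphicity of $\tilde F$, the uniform bound on $S$, the boundary continuity) is routine once the functional calculus for finite-dimensional strictly positive operators is granted.
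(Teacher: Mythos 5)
Your proposal is correct, but it reaches the conclusion by a genuinely different route than the thesis. You fold the weights into the interpolating function itself: with the affine exponent $\alpha(z)=\frac{1-z}{2p_0}+\frac{z}{2p_1}$ you set $\tilde F(z)=\sigma^{\alpha(z)}F(z)\tau^{\alpha(z)}$, observe that on the boundary lines the imaginary parts of the exponents split off as unitaries $\sigma^{\im t\gamma},\tau^{\im t\gamma}$ commuting with the real powers (so $\|\tilde F(k+\im t)\|_{p_k}=\|F(k+\im t)\|_{p_k,(\sigma,\tau)}$, and $\|\tilde F(\theta)\|_{p_\theta}=\|F(\theta)\|_{p_\theta,(\sigma,\tau)}$ since $\alpha(\theta)=\frac{1}{2p_\theta}$ is real), and then invoke Beigi's unweighted operator-valued three-line theorem as a black box. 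The thesis instead reproves the weighted statement from scratch along Beigi's original lines: it chooses a dual element $X$ with $\|X\|_{p'_\theta,(\sigma,\tau)}=1$ achieving $\|F(\theta)\|_{p_\theta,(\sigma,\tau)}=\left|\left\langle X^\dagger,F(\theta)\right\rangle_{\sigma,\tau}\right|$, interpolates it through a family $X(z)$ built from the singular value decomposition of $\Gamma_{\sigma,\tau}^{1/p'_\theta}(X)$, applies the scalar Hadamard three-line theorem to $g(z)=\left\langle X^\dagger(z),F(z)\right\rangle_{\sigma,\tau}$ together with the weighted H\"older inequality, and checks $\|X(k+\im t)\|_{p'_k,(\sigma,\tau)}=1$ on the boundary. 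Your reduction is shorter and isolates the one genuine subtlety correctly (the commuting unitary factors, which is exactly where $\sigma,\tau\in\PosS$ enters); what it buys is economy, at the price of citing Beigi's theorem rather than only the scalar Hadamard result. One small point to patch: Beigi's theorem is stated for square operators, while here $F$ takes values in $\Lin(A,B)$; either note that his proof carries over verbatim to rectangular operators, or embed $\Lin(A,B)$ as an off-diagonal block of $\Lin(A\oplus B)$, which preserves Schatten norms and holomorphy. The thesis's self-contained duality argument sidesteps this and also makes the dual interpolating family explicit, which is the machinery reused in the later specialised interpolation lemmata.
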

This is not only applicable to norms on spaces of square matrices but also on spaces of non-square matrices~--~a notion we take advantage of to consider R\'enyi orders less than one. Basically, since we are in essence dealing with positive operators we can instead consider their decomposition into the product of another operator and its adjoint.

The proof of Theorem~\ref{weighted3lines} relies on Hadamard's three-line theorem, which we now include for reference.
\begin{theorem}[Hadamard's three-line~\citep{reed1975ii}]\label{H3line}
	Let $f:S\rightarrow \C$ be a bounded function that is holomorphic on the interior of $S$ and continuous on its boundary. For $k\in\{0,1\}$, let
	\begin{equation}
		\mathsf{M}_k = \sup_{t\in \R}|f(k+\im t)|.
	\end{equation}
	Then for every $\theta\in [0,1]$ we have
	\begin{equation}
		\left|f(\theta)\right|\leq \mathsf{M}_0^{1-\theta}\mathsf{M}_1^{\theta}.
	\end{equation}
\end{theorem}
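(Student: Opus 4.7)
The plan is to reduce the statement to an application of the maximum modulus principle on the strip $S$. I would first define the auxiliary function
\begin{equation}
g(z) = f(z)\, \mathsf{M}_0^{z-1}\, \mathsf{M}_1^{-z},
\end{equation}
choosing branches so that $\mathsf{M}_k^z = \exp(z\log \mathsf{M}_k)$ is entire (treating the trivial case $\mathsf{M}_0 = 0$ or $\mathsf{M}_1 = 0$ separately, where $f$ must vanish identically on a boundary line and an easy limiting argument forces $f\equiv 0$ on $S$). Since $f$ is holomorphic and bounded on the interior and continuous on the boundary, so is $g$. A direct bound on the two boundary lines gives $|g(k + \im t)| = |f(k+\im t)|\,\mathsf{M}_0^{k-1}\mathsf{M}_1^{-k} \leq 1$ for $k\in\{0,1\}$ and all $t\in\mathbb{R}$, so that the desired inequality $|f(\theta)| \leq \mathsf{M}_0^{1-\theta}\mathsf{M}_1^\theta$ is equivalent to $|g(\theta)|\leq 1$ for every $\theta\in[0,1]$.

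The main obstacle is that $S$ is unbounded, so the classical maximum modulus principle does not apply directly — a Phragm\'en--Lindel\"of type argument is needed. To handle this I would introduce a damping factor and define, for each $\varepsilon > 0$,
\begin{equation}
g_\varepsilon(z) = g(z)\, e^{\varepsilon(z^2 - z)}.
\end{equation}
The factor $e^{\varepsilon(z^2 - z)}$ is entire, and for $z = x + \im t$ with $x\in[0,1]$ one has $\Re(z^2 - z) = x(x-1) - t^2 \leq -t^2$, so $|e^{\varepsilon(z^2-z)}|\leq e^{-\varepsilon t^2}$. Because $g$ is bounded on $S$ (using boundedness of $f$ and continuity of the exponentials on the strip), this forces $|g_\varepsilon(z)| \to 0$ as $|t|\to \infty$ uniformly in $x\in[0,1]$.

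Next I would choose $T$ large enough that $|g_\varepsilon(z)| \leq 1$ whenever $|t|\geq T$ and apply the ordinary maximum modulus principle to $g_\varepsilon$ on the bounded rectangle $[0,1]\times[-T,T]$. Since $|g_\varepsilon(k+\im t)| \leq |g(k+\im t)| \leq 1$ on the two vertical sides and by construction on the two horizontal sides, we obtain $|g_\varepsilon(\theta)| \leq 1$ for all $\theta \in [0,1]$. Finally I would let $\varepsilon \to 0^+$; since $e^{\varepsilon(\theta^2 - \theta)}\to 1$ pointwise, this yields $|g(\theta)|\leq 1$ and hence the claimed log-convexity bound. The delicate step throughout is controlling the behaviour of $f$ at infinity within the strip, which is exactly what the Gaussian-type damping in $g_\varepsilon$ is designed to neutralise.
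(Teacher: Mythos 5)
Your argument is correct: it is the standard Phragm\'en--Lindel\"of proof of the three-line theorem (normalise by $\mathsf{M}_0^{z-1}\mathsf{M}_1^{-z}$, damp with $e^{\varepsilon(z^2-z)}$, apply the maximum modulus principle on a truncated rectangle, and let $\varepsilon\to 0^+$), which is exactly the proof in the cited reference~\citep{reed1975ii}; the thesis itself states this theorem as background and gives no proof, so there is nothing to diverge from. The only slightly loose point is the degenerate case $\mathsf{M}_0=0$ or $\mathsf{M}_1=0$: rather than arguing that $f$ vanishes identically, it is cleaner to run the same argument with $\mathsf{M}_k$ replaced by $\mathsf{M}_k+\delta>0$ and let $\delta\to 0^+$ at the end, which yields the stated bound in all cases.
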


The proof of Theorem~\ref{weighted3lines} closely follows the structure of the proof of Theorem 2 in~\cite{beigi13}, with some adjustments to allow for general spaces of linear operators.
\begin{proof}[Proof of Theorem~\ref{weighted3lines}]
	Let $X\in\Lin(A,B)$ be such that $\|X\|_{p'_\theta, ( \sigma, \tau)} = 1$ and $\|F(\theta)\|_{p_\theta, ( \sigma,  \tau)} =\left|\left\langle X^\dagger, F(\theta)\right\rangle_{\sigma,\tau}\right|$. We have 
	\begin{align}
		\|X\|_{p'_\theta, ( \sigma, \tau)} = \left\|\Gamma_{\sigma, \tau}^\frac{1}{p'_\theta}(X) \right\|_{p'_\theta} = 1,
	\end{align}
	hence the singular value decomposition~(see Proposition~\ref{SVD}) of $\Gamma_{\sigma, \tau}^\frac{1}{p'_\theta}(X)$ has the form
$
		{\Gamma_{\sigma, \tau}^\frac{1}{p'_\theta}(X) = UD^\frac{1}{p'_\theta}V}
$,
	where $U\in \Uni(C,B), V\in \Uni(A, C)$ are isometries and $D$ is a square diagonal matrix in $\Lin(C)$ whose singular values are real, positive and sum to $1$.
	Define
	\begin{equation}
		X(z) = \Gamma_{\sigma, \tau}^{-\left(\frac{1-z}{p'_0} + \frac{z}{p'_1} \right)}\left( UD^{\left(\frac{1-z}{p'_0} + \frac{z}{p'_1} \right)}V\right) = \sigma^{-\left(\frac{1-z}{2p'_0} + \frac{z}{2p'_1} \right)}\left( UD^{\left(\frac{1-z}{p'_0} + \frac{z}{p'_1} \right)}V\right)\tau^{-\left(\frac{1-z}{2p'_0} + \frac{z}{2p'_1} \right)}.
	\end{equation}
	Observe that the map $z\mapsto X(z)$ is holomorphic and $\frac{1}{p_\theta'} = \frac{1-\theta}{p_0'}  + \frac{\theta}{p_1'}$, therefore
	\begin{align}
		X(\theta) = \Gamma_{\sigma, \tau}^{-\frac{1}{p'_\theta}}\left( UD^\frac{1}{p'_\theta}V\right) = \Gamma_{\sigma, \tau}^{-\frac{1}{p'_\theta}}\left(\Gamma_{\sigma, \tau}^\frac{1}{p'_\theta}(X)\right) = X.
	\end{align}
	
	Now define
$
		g(z) = \left\langle X^\dagger(z), F(z)\right\rangle_{\sigma,\tau} = \tr\left( X(z)\sigma^\frac{1}{2}F(z)\tau^\frac{1}{2}\right)
$.
	Hence $g$ satisfies the requirements of Hadamard's three-line theorem (Theorem~\ref{H3line}), and we may write
	\begin{align}
		\|F(\theta)\|_{p_\theta, ( \sigma,  \tau)} & = \left|\left\langle X^\dagger, F(\theta)\right\rangle_{\sigma,\tau}\right|                                                                                                                                                                               \\
		& =\left|\tr\left( X(\theta)\sigma^\frac{1}{2}F(\theta)\tau^\frac{1}{2}\right)\right|                                                                                                                                                                                      \\
		& =|g(\theta)|                                                                                                                                                                                                                                                \\
		& \leq\sup_{t\in \R}|g(\im t)|^{1-\theta}\sup_{t\in \R}|g(1+\im t)|^\theta                                                                                                                                                                                    \\
		& = \sup_{t\in \R}\left|\left\langle X^\dagger(\im t), F(\im t)\right\rangle_{\sigma,\tau}\right|^{1-\theta}\sup_{t\in \R}\left|\left\langle X^\dagger(1+\im t), F(1 + \im t)\right\rangle_{\sigma,\tau}\right|^\theta                                        \\
		& \leq \sup_{t\in \R}\left( \|X(\im t)\|_{p'_0, ( \sigma, \tau)}\|F(\im t)\|_{p_0, ( \sigma, \tau)} \right)^{1-\theta}\sup_{t\in \R}\left( \|X(1+\im t)\|_{p'_1, ( \sigma, \tau)}\|F(1+\im t)\|_{p_1, ( \sigma, \tau)} \right)^\theta,\label{preNormalisation}
	\end{align}
	where in the last line we used the H\"older type inequality, Eq.~\eqref{HoldLike}.
	
	By definition, $\|X(\im t)\|_{p'_0, ( \sigma, \tau)} = \left\|\Gamma_{\sigma, \tau}^\frac{1}{p'_0}(X(\im t)) \right\|_{p'_0}$ and
	\begin{align}
		\Gamma_{\sigma, \tau}^\frac{1}{p'_0}(X(\im t)) & = \Gamma_{\sigma, \tau}^\frac{1}{p'_0}\left(\Gamma_{\sigma, \tau}^{-\left(\frac{1-\im t}{p'_0} + \frac{\im t}{p'_1} \right)}\left( UD^{\left(\frac{1-\im t}{p'_0} + \frac{\im t}{p'_1} \right)}V\right)\right)    \\
		& =\Gamma_{\sigma, \tau}^{-\left(\frac{-\im t}{p'_0} + \frac{\im t}{p'_1} \right)}\left( UD^{\left(\frac{1-\im t}{p'_0} + \frac{\im t}{p'_1} \right)}V\right)                                                       \\
		& = \sigma^{-\left(\frac{-\im t}{2p'_0} + \frac{\im t}{2p'_1} \right)}UD^{\left(\frac{-\im t}{p'_0} + \frac{\im t}{p'_1} \right)}D^\frac{1}{p'_0}V\tau^{-\left(\frac{-\im t}{2p'_0} + \frac{\im t}{2p'_1} \right)}.
	\end{align}
	Observe $\sigma^{-\left(\frac{-\im t}{2p'_0} + \frac{\im t}{2p'_1} \right)}UD^{\left(\frac{-\im t}{p'_0} + \frac{\im t}{p'_1} \right)}$ and $V\tau^{-\left(\frac{-\im t}{2p'_0} + \frac{\im t}{2p'_1} \right)}$ are isometries, thus $\|X(\im t)\|_{p'_0, ( \sigma, \tau)} = 1$. In a similar way we can show $\|X(1+ \im t)\|_{p'_1, ( \sigma, \tau)} = 1$.
	
	Hence Eq.~\eqref{preNormalisation} becomes
	\begin{align}
		\|F(\theta)\|_{p_\theta, ( \sigma,  \tau)} \leq \sup_{t\in \R}\left( \|F(\im t)\|_{p_0, ( \sigma, \tau)} \right)^{1-\theta}\sup_{t\in \R}\left( \|F(1+\im t)\|_{p_1, ( \sigma, \tau)} \right)^\theta.
	\end{align}
\end{proof}

We defer the introduction of the refinements of this theorem to Chapter~\ref{sec:renyi-entropy-divergence-inequalities}, where they can be viewed with the results in which they are used.

\chapter{R\'enyi entropy divergence inequalities}\label{sec:renyi-entropy-divergence-inequalities}
This chapter covers R\'enyi extensions of the chain rule, decomposition rule and more general divergence equivalences in the form of divergence inequalities. The first section details the inequalities and the relationships between them while the technique of interpolation and more detailed proofs are found in the next section.

As it is closely related to these results, we also include a previously published approach to the proof of the decomposition rule in Appendix~\ref{sec:an-alternative-proof-of-the-decomposition-rule}. This alternative approach is based on the method employed by Dupuis in his proof of the R\'enyi entropy chain rules~(see Section~\ref{sec:dupuis-chain-rule}).

The effective difference between these two version of the decomposition rule are the valid ranges of the parameters. For the previous result in Appendix~\ref{sec:an-alternative-proof-of-the-decomposition-rule}, $\alpha$ may go to zero while the other two parameters must be no less than $\frac{1}{2}$. However, for the new result in this chapter, instead $\gamma$ can go to zero and $\alpha$ is greater than or equal to $\frac{1}{2}$.

\section{Main results}\label{sec:main-results}
First, we have a family of general bipartite divergence inequalities. These more versatile comparisons are one of the main tools used to improve the bounds on the uncertainty relations in Chapter~\ref{sec:generalised-renyi-divergence-uncertainty-relations}.
\begin{theorem}\label{DivFormIntLem}
	Let $\alpha\beta\gamma  -2\beta\gamma - \alpha + \beta +\gamma = 0$\footnote{This condition is equivalent to $\alpha' = \beta' + \gamma'$ which may provide a more convenient form for certain calculations. For further details see Remark~\ref{diffForms}.} with $\alpha, \beta\geq \frac{1}{2}$ and $\gamma \in \R$. For $\rho_{AB}\in \Den(AB), \sigma_A\in\PosS(A)$ and $\tau_B\in\Pos(B)$, if $\frac{1}{\beta} + \frac{1}{\gamma}\leq 2$ then
	\begin{equation}
		-H_\alpha\left(\rho_{AB}\left\| \tau_B \right.\right)\leq D_\beta\left(\rho_{AB}\left\|\sigma_A\otimes \tau_B \right.\right)+\log\left(\tr\rho_{A}\sigma_A^\frac{1}{\gamma'} \right)^{\gamma'}.\label{genFor}
	\end{equation}
	Otherwise, if $\frac{1}{\beta} + \frac{1}{\gamma}\geq 2$ then
	\begin{equation}
		-H_\alpha\left(\rho_{AB}\left\| \tau_B \right.\right)\geq D_\beta\left(\rho_{AB}\left\|\sigma_A\otimes \tau_B \right.\right)+\log\left(\tr\rho_{A}\sigma_A^\frac{1}{\gamma'} \right)^{\gamma'}.\label{genRev}
	\end{equation}
Where in both cases $\alpha, \beta<1\vee \id_A\otimes \tau_B \gg \rho_{AB}$.
\end{theorem}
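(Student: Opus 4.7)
The plan is to express all three quantities in the inequality as Schatten norms of operators obtained from a fixed purification of $\rho_{AB}$, and then apply the three-line theorem (Theorem~\ref{weighted3lines}) to an interpolating holomorphic function. Fixing $M \in \Lin(AB, C)$ with $M^\dagger M = \rho_{AB}$, Lemma~\ref{QuantEq} immediately yields $-H_\alpha(\rho_{AB}\|\tau_B) = 2\alpha' \log\|M(\id_A \otimes \tau_B^{-1/(2\alpha')})\|_{2\alpha}$ and $D_\beta(\rho_{AB}\|\sigma_A\otimes\tau_B) = 2\beta'\log\|M(\sigma_A^{-1/(2\beta')} \otimes \tau_B^{-1/(2\beta')})\|_{2\beta}$, while expanding $\tr\rho_A\sigma_A^{1/\gamma'} = \|M(\sigma_A^{1/(2\gamma')} \otimes \id_B)\|_2^2$ gives $\log(\tr\rho_A\sigma_A^{1/\gamma'})^{\gamma'} = 2\gamma'\log\|M(\sigma_A^{1/(2\gamma')} \otimes \id_B)\|_2$.

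Next I would define $F: S \to \Lin(AB, C)$ by $F(z) = M(\sigma_A^{a(z)} \otimes \tau_B^{b(z)})$ with $a(z)$ and $b(z)$ affine functions of $z$, chosen so that $F(0)$ and $F(1)$ respectively reproduce the operators appearing in the second and third expressions above. Setting $\theta := \gamma'/\alpha'$, the constraint $\alpha' = \beta' + \gamma'$ (equivalent to the hypothesis on $\alpha, \beta, \gamma$) is precisely what forces $a(\theta) = 0$ and $b(\theta) = -1/(2\alpha')$, so that $F(\theta)$ reproduces the first. Since $\sigma_A^{it}$ and $\tau_B^{it}$ are unitary for every $t \in \R$, unitary invariance of Schatten norms gives $\|F(it)\|_{2\beta} = \|F(0)\|_{2\beta}$ and $\|F(1+it)\|_2 = \|F(1)\|_2$, so applying Theorem~\ref{weighted3lines} with trivial weights $\sigma = \tau = \id$ produces a log-convexity bound attained on the real axis.

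In the regime $\theta \in [0,1]$, which occurs precisely when $\beta$ and $\gamma$ lie on the same side of $1$, the three-line theorem gives $\|F(\theta)\|_{2\alpha} \leq \|F(0)\|_{2\beta}^{1-\theta} \|F(1)\|_2^\theta$. Taking logarithms, multiplying through by $2\alpha'$, and using $\alpha'(1-\theta) = \beta'$ and $\alpha'\theta = \gamma'$ reproduces either Eq.~\eqref{genFor} or Eq.~\eqref{genRev}: the direction is controlled by the sign of $\alpha'$, which shares the common sign of $\beta'$ and $\gamma'$ and therefore matches the hypothesis on $\tfrac{1}{\beta} + \tfrac{1}{\gamma}$. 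In the mixed regime where $\theta \notin [0,1]$, the three-line theorem should instead be applied on a translated strip in which whichever of $\{0, 1\}$ lies between the other two of $\{0, \theta, 1\}$ becomes the interior point; solving the resulting interpolation inequality for the appropriate Schatten norm and tracking signs again yields the inequality in the direction dictated by the hypothesis.

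The main obstacle is the sign bookkeeping across the several parameter regimes, most delicately in the mixed case where the three-line theorem must be set up on a non-standard strip and the resulting inequality then rearranged to isolate the correct term. The support conditions on $\tau_B$ are handled via the continuity Proposition~\ref{continuity}: apply the inequality with $\tau_B$ replaced by $\tau_B + \varepsilon\id_B \in \PosS(B)$ and let $\varepsilon \to 0^+$, observing that when $\alpha, \beta < 1$ the relevant exponents of $\tau_B$ are already non-negative and no support hypothesis is needed.
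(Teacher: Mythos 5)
Your proposal follows essentially the same route as the paper: the same interpolating family $F(z)=M(\sigma_A^{a(z)}\otimes\tau_B^{b(z)})$ with $\theta=\gamma'/\alpha'$ and the constraint $\alpha'=\beta'+\gamma'$, unitary invariance on the boundary lines, permutation of the roles of $\alpha,\beta,\gamma$ when $\theta\notin(0,1)$, and Proposition~\ref{continuity} for $\tau_B\in\Pos(B)$; the only difference is cosmetic, since the paper packages the $\tau_B$-powers into the unbalanced weighted norms of Lemma~\ref{genLogConv} rather than writing them into $F(z)$ and using plain Schatten norms. The remaining work you defer (sign bookkeeping and the precise admissible ranges, e.g.\ your ``same side of $1$'' description does not quite cover $\gamma\leq 0$) is exactly what the paper's Lemma~\ref{ana} and Corollary~\ref{dualcond} supply.
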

The derivation of these inequalities essentially involves applying the machinery of Theorem~\ref{weighted3lines} to a choice of function motivated by the relationship between R\'enyi divergence and Pisier norms, and performing some analysis on the resulting valid ranges for the R\'enyi parameters.

There are some things to note about this result, especially when it comes to the nature of $\sigma_A$ and $\tau_B$. In general the only requirement for the second argument of the R\'enyi divergence is that it be positive semi-definite, but here we restrict $\sigma_A\in \PosS(A)$. This follows from the application of the interpolation which is only valid for strictly positive matrices. However we can extend the argument to $\tau_B\in \Pos(B)$ by observing the continuity of the R\'enyi divergence~(see Proposition~\ref{continuity}). This does not directly apply to the term on the far right of the above inequalities which involve $\sigma_A$, but this issue is resolved when further optimisation is performed. Indeed, these terms are the logarithm applied to the Schatten inner product so, by duality, optimising over $\sigma_A\in \Den^*(A)$ produces the Schatten $\gamma$-norm.

Even though we may consider any $\tau_B\in\Pos(B)$, we still include the support conditions. Without these conditions the quantities might diverge to $\pm\infty$~--~but we can easily verify that the inequalities would be satisfied in any case, however trivially.

It is informative to consider the behaviour of the condition on the R\'enyi parameters for particular values. We summarise some important cases in the following remark.
\renewcommand{\arraystretch}{1.5}
\begin{remark}\label{diffForms}
	The condition $\alpha\beta\gamma  -2\beta\gamma - \alpha + \beta +\gamma = 0$ with $\alpha, \beta, \gamma \in \R$ can also be written in two other convenient forms:
	\begin{align}
		\frac{\alpha}{\alpha-1} &= \frac{\beta}{\beta-1} + \frac{\gamma}{\gamma-1} \quad\text{and}\\
		\frac{1}{1-\alpha} &= \frac{1}{1-\beta} + \frac{1}{1-\gamma} - 1
	\end{align}
Given this condition the following hold:
	\begin{center}
	\begin{tabular}{lll}
		$\alpha = 0 \implies \beta = \dfrac{\gamma}{1-2\gamma}$,&\qquad&$\gamma = 0\implies \beta = \alpha$,\\
		$\alpha = \dfrac{1}{2} \implies \beta = \dfrac{1-2\gamma}{2-3\gamma}$,&\quad&$\gamma = \dfrac{1}{2} \implies \beta = 2-\dfrac{1}{\alpha} = \dfrac{1}{\hat\alpha}$,\\
		$\alpha = 1 \implies 1\in\{\gamma, \beta\}$,&\qquad&$\gamma = 1 \implies 1\in\{\alpha, \beta\} $,\\
		$\alpha = 2 \implies \beta = 2-\gamma$,&\qquad&$\gamma = 2\implies \beta = \dfrac{\alpha -2}{2\alpha -3}$,\\
		$\alpha\rightarrow \infty \implies \beta = \dfrac{1}{\gamma}$,&\qquad&$\gamma \rightarrow \infty \implies\beta = \dfrac{1}{2-\alpha}$.\\
	\end{tabular}
\end{center}
Combining these statements we may find the situations summarises in Table~\ref{sits}
\end{remark}
\begin{table}\caption{Some particular choices satisfying $\alpha\beta\gamma  -2\beta\gamma - \alpha + \beta +\gamma = 0$}\label{sits}
\begin{center}
	\begin{tabular}{|c||c|c|c|c|c|}
		\hline
		\diagbox{$\gamma =$}{$\alpha =$}&$0$&$\frac{1}{2}$&$1$&$2$&$\infty$\\
		\hhline{|=|=|=|=|=|=|}
		$0$&$\beta = 0$&$\beta = \frac{1}{2}$&$\beta = 1$&$\beta= 2$&$\beta = \infty$\\
		\hline
		$\frac{1}{2}$&$\beta = \infty$&$\beta = 0$&$\beta=1$&$\beta = \frac{3}{2}$&$\beta = 2$\\
		\hline
		$1$&$\beta = 1$&$\beta = 1$&$\beta \in\R$&$\beta = 1$&$\beta = 1$\\
		\hline
		$2$&$\beta = \frac{2}{3}$&$\beta = \frac{3}{4} $&$\beta = 1$&$\beta = 0$&$\beta =\frac{1}{2}$\\
		\hline
		$\infty$&$\beta = \frac{1}{2}$&$\beta = \frac{2}{3}$&$\beta = 1$&$\beta = \infty$&$\beta = 0$\\
		\hline
	\end{tabular}
\end{center}
\end{table}

We cannot always achieve every one of the above combinations due to the conditions on the parameters imposed in the results of this thesis. However, they reveal some of the patterns and structure that the condition implies. We note in particular that although one can choose all parameters equal to $1$ (hence recovering the Shannon and von Neumann equivalences) one is only obliged to choose at least two parameters equal to $1$. Of course, this results in a weaker inequality, whose direction is determined by the other parameter. Interestingly, this also is the only case when $\alpha = \beta$ and $\gamma \neq 0$. 

We can derive the more familiar bipartite comparisons directly from the Theorem~\ref{DivFormIntLem} by relabelling, optimising and making certain specifications.
\begin{theorem}\label{decomp}
	Let $\alpha\beta\gamma  -2\beta\gamma - \alpha + \beta +\gamma = 0$ with $\alpha,\beta\geq \frac{1}{2}  $ and $\gamma\geq 0$.
	
	For $\rho_{AB}\in \Den(AB)$ and $\tau_A\in\Pos(A)$, if $\frac{1}{\beta} + \frac{1}{\gamma} \leq 2$ then
	\begin{equation}\label{decompEq1}
		I_\beta(\rho_{AB}\|\tau_A) \geq H_\gamma(\rho_{B})-H_\alpha(\rho_{AB}\|\tau_A).
	\end{equation}
	Otherwise, if $\frac{1}{\beta} + \frac{1}{\gamma} \geq 2$ then
	\begin{equation}\label{decompEq2}
		I_\beta(\rho_{AB}\|\tau_A) \leq H_\gamma(\rho_{B})-H_\alpha(\rho_{AB}\|\tau_A).
	\end{equation}
Where in both cases $\alpha, \beta<1\vee \tau_A\otimes \id_B \gg \rho_{AB}$.
\end{theorem}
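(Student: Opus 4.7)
The plan is to deduce Theorem~\ref{decomp} from the bipartite divergence inequality of Theorem~\ref{DivFormIntLem} by relabelling subsystems and optimising over a density operator on $B$, with a Schatten--H\"older identification collapsing the resulting trace term into the marginal R\'enyi entropy $H_\gamma(\rho_B)$.

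First I would invoke Theorem~\ref{DivFormIntLem} with the roles of $A$ and $B$ swapped throughout. The condition $\alpha\beta\gamma-2\beta\gamma-\alpha+\beta+\gamma=0$ and the ranges on $(\alpha,\beta,\gamma)$ are symmetric in the two subsystems, so the hypothesis transfers unchanged. The case $\frac{1}{\beta}+\frac{1}{\gamma}\leq 2$ then reads, for any $\sigma_B\in\PosS(B)$ and $\tau_A\in\Pos(A)$,
\[
-H_\alpha(\rho_{AB}\|\tau_A) \leq D_\beta(\rho_{AB}\|\tau_A\otimes\sigma_B) + \log\bigl(\tr \rho_B\,\sigma_B^{1/\gamma'}\bigr)^{\gamma'},
\]
while $\frac{1}{\beta}+\frac{1}{\gamma}\geq 2$ gives the reversed inequality. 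Both terms on the right are invariant under the rescaling $\sigma_B\mapsto c\sigma_B$ with $c>0$ (the $-\log c$ contribution from $D_\beta$ cancels the $+\log c$ from the trace term), so without loss of generality $\sigma_B$ may be restricted to $\Den^*(B)$.

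The crucial ingredient is a Schatten--H\"older estimate: for every $\sigma_B\in\Den^*(B)$ and every $\gamma>0$,
\[
\log\bigl(\tr \rho_B\,\sigma_B^{1/\gamma'}\bigr)^{\gamma'} \leq \log \|\rho_B\|_\gamma^{\gamma'} = D_\gamma(\rho_B\|\id_B) = -H_\gamma(\rho_B),
\]
with equality attained at the specific density $\sigma_B^\star := \rho_B^\gamma/\tr\rho_B^\gamma$, as follows by direct substitution using $\gamma/\gamma' = \gamma-1$. For $\gamma>1$ this is classical H\"older's inequality applied to $\rho_B$ and $\sigma_B^{1/\gamma'}$ with conjugate exponents $(\gamma,\gamma')$, together with $\|\sigma_B^{1/\gamma'}\|_{\gamma'} = (\tr\sigma_B)^{1/\gamma'} = 1$; for $0<\gamma<1$ the exponent $\gamma'$ is negative and reverse H\"older reverses the inner inequality, but raising to the negative power $\gamma'$ restores the displayed bound on the logarithm.

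These two ingredients assemble into both cases of the theorem. For $\frac{1}{\beta}+\frac{1}{\gamma}\leq 2$, substituting the H\"older estimate into the rearranged bipartite inequality produces the uniform lower bound
\[
D_\beta(\rho_{AB}\|\tau_A\otimes\sigma_B) \geq H_\gamma(\rho_B) - H_\alpha(\rho_{AB}\|\tau_A),
\]
and taking the infimum over $\sigma_B\in\Den(B)$ on the left---passing from $\Den^*(B)$ to $\Den(B)$ via Proposition~\ref{continuity}---yields $I_\beta(\rho_{AB}\|\tau_A)$ and hence Eq.~\eqref{decompEq1}. For $\frac{1}{\beta}+\frac{1}{\gamma}\geq 2$ the bipartite inequality is reversed, so a uniform bound would land on the wrong side; instead, specialising to $\sigma_B=\sigma_B^\star$ makes the H\"older estimate an equality, collapsing the right-hand side to exactly $H_\gamma(\rho_B) - H_\alpha(\rho_{AB}\|\tau_A)$, after which the trivial estimate $I_\beta(\rho_{AB}\|\tau_A) \leq D_\beta(\rho_{AB}\|\tau_A\otimes\sigma_B^\star)$ delivers Eq.~\eqref{decompEq2}. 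The main concern throughout is bookkeeping: verifying the Schatten--H\"older estimate across the full range $\gamma\geq 0$ (the reverse-H\"older subcase $\gamma\in(0,1)$ being the delicate one, and the endpoint $\gamma\to 1$ recovering the von Neumann equality by continuity) and checking that the support condition $\alpha,\beta<1 \vee \tau_A\otimes\id_B\gg\rho_{AB}$ is correctly inherited from Theorem~\ref{DivFormIntLem} under the swap, with rank-deficient $\rho_B$ handled by the perturbation $\rho_B+\varepsilon\id_B$ followed by $\varepsilon\to 0$.
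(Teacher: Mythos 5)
Your proposal is correct and follows essentially the same route as the paper: relabel Theorem~\ref{DivFormIntLem} with $A$ and $B$ exchanged, identify $\sup_{\sigma_B}\log\bigl(\tr\rho_B\sigma_B^{1/\gamma'}\bigr)^{\gamma'}$ with $-H_\gamma(\rho_B)$ (the content of Lemma~\ref{newLem12}, which you reprove via H\"older), and optimise over $\sigma_B$ to turn the divergence term into $I_\beta(\rho_{AB}\|\tau_A)$. Your treatment is if anything slightly more explicit than the paper's in the reversed case, where you correctly note that the uniform bound fails and one must instead plug in the (near-)maximiser $\sigma_B^\star=\rho_B^\gamma/\tr\rho_B^\gamma$, whereas the paper only remarks that ``the same process applies''.
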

\begin{theorem}\label{Bchain}
	Let $\alpha\beta\gamma  -2\beta\gamma - \alpha + \beta +\gamma = 0 $ with $\alpha,\beta \geq \frac{1}{2}$ and $\gamma\geq0$.
	
	For $\rho_{AB}\in \Den(AB)$, if $\frac{1}{\beta} + \frac{1}{\gamma} \leq 2$ then
	\begin{equation}
		H_\alpha(\rho_{AB})\geq H^\uparrow_\beta(A|B)_\rho + H_\gamma(\rho_{B}).\label{ChainGBP}
	\end{equation}
	Otherwise, if  $\frac{1}{\beta} + \frac{1}{\gamma} \geq 2$ then
	\begin{equation}\label{BchainEq2}
		H_\alpha(\rho_{AB})\leq H^\uparrow_\beta(A|B)_\rho + H_\gamma(\rho_{B}).
	\end{equation}
\end{theorem}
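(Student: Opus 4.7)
The plan is to obtain Theorem~\ref{Bchain} as an immediate corollary of Theorem~\ref{decomp} by specialising the second argument of the generalised conditional entropy to the identity operator. Concretely, I apply Theorem~\ref{decomp} with $\tau_A = \id_A$ and then reinterpret each of the three terms using the definitions from Section~\ref{sec:sand}.

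With this choice the generalised conditional entropy on the right-hand side collapses to an unconditional R\'enyi entropy: using Eq.~\eqref{GCE} in the $A \leftrightarrow B$ swapped convention one gets $H_\alpha(\rho_{AB}\|\id_A) = -D_\alpha(\rho_{AB}\|\id_A\otimes\id_B) = \tfrac{1}{1-\alpha}\log\tr\rho_{AB}^\alpha = H_\alpha(\rho_{AB})$. Likewise, by Eq.~\eqref{GMI}, the generalised mutual information on the left-hand side becomes
\begin{equation*}
I_\beta(\rho_{AB}\|\id_A) = \inf_{\sigma_B\in\Den(B)}D_\beta(\rho_{AB}\|\id_A\otimes\sigma_B),
\end{equation*}
which by the definition of the ``up'' conditional entropy in Eq.~\eqref{maxcond} equals $-H^\uparrow_\beta(A|B)_\rho$. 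The term $H_\gamma(\rho_B)$ already appears in Theorem~\ref{decomp} verbatim.

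All parameter hypotheses transfer unchanged: the relation $\alpha\beta\gamma - 2\beta\gamma - \alpha + \beta + \gamma = 0$, the bounds $\alpha,\beta \geq \tfrac{1}{2}$ and $\gamma \geq 0$, and the dichotomy determined by whether $\tfrac{1}{\beta} + \tfrac{1}{\gamma}$ is at most or at least two. The support condition in Theorem~\ref{decomp}, $\tau_A\otimes\id_B \gg \rho_{AB}$, is trivially satisfied once $\tau_A = \id_A$, since $\id_{AB}$ dominates every state; in particular one never lands in the degenerate $\infty$ branch of the R\'enyi divergence. Rearranging Eqs.~\eqref{decompEq1} and~\eqref{decompEq2} accordingly then yields Eqs.~\eqref{ChainGBP} and~\eqref{BchainEq2}, respectively.

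Since the argument is just a specialisation, there is no new interpolation or norm estimate to carry out: all of the technical content is already packaged into Theorem~\ref{DivFormIntLem} and its corollary Theorem~\ref{decomp}. The main obstacle is therefore entirely upstream; what remains is bookkeeping to make sure the shorthand notations $H^\uparrow_\beta(A|B)_\rho$ and $H_\alpha(\rho_{AB})$ line up with the generalised quantities after the substitution $\tau_A = \id_A$, and to verify that the $\tau_A = \id_A$ corner of $\PosS(A)$ is admissible in the hypotheses of Theorem~\ref{decomp} (it is, since $\id_A \in \PosS(A)$).
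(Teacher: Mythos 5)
Your proposal is correct and in substance takes the same route as the paper: the paper proves Theorem~\ref{Bchain} by applying Theorem~\ref{DivFormIntLem} with $\tau_A = \id_A$ and then optimising over $\sigma_B$ exactly as in the proof of Theorem~\ref{decomp}, which is precisely the computation you reuse by specialising Theorem~\ref{decomp} directly. Your identifications $H_\alpha(\rho_{AB}\|\id_A) = H_\alpha(\rho_{AB})$ and $I_\beta(\rho_{AB}\|\id_A) = -H^\uparrow_\beta(A|B)_\rho$, and the remark that $\id_{AB}\gg\rho_{AB}$ settles the support condition, all line up with the paper's argument.
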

Note the main difference between the two results above is choosing $\tau_A = \id_A$ in the chain rule.

We may combine these results to produce an extension of the other familiar form of the mutual information decomposition rule. This inequality is somewhat weaker in exchange for not involving the conditional entropy. Naturally, we recover equivalence when all parameters tend to $1$.
\begin{corollary}\label{noncond}
	For $\rho_{AB}\in \Den(AB)$ with $\alpha, \beta\geq0$, $\gamma, \delta\geq\frac{1}{2}$ such that $\frac{\delta - \gamma}{\delta\gamma - 2\gamma + 1} = \frac{2\beta\alpha - \beta -\alpha}{\beta\alpha}$ we have, when $\delta<\alpha, \beta, \gamma$,
	\begin{align}
		I^\downarrow_\beta(A:B)_\rho &\geq H_\alpha(\rho_A) + H_\beta(\rho_B) - H_\delta(\rho_{AB})
	\end{align}
	and, when $\delta>\alpha, \beta, \gamma$,
	\begin{align}
		I^\downarrow_\beta(A:B)_\rho &\leq H_\alpha(\rho_A) + H_\beta(\rho_B) - H_\delta(\rho_{AB}).\label{reverse}
	\end{align}
\end{corollary}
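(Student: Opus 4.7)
The proof combines Theorem~\ref{decomp} (decomposition rule) with Theorem~\ref{Bchain} (bipartite chain rule), using an intermediate conditional entropy as a pivot. Since the right-hand side of the stated inequalities involves only unconditional entropies, the plan is to first apply decomp to introduce an $H^\uparrow$-type conditional entropy of some intermediate order $\mu$, and then eliminate it via the chain rule, leaving $H_\delta(\rho_{AB})$ together with the two marginal entropies.

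Starting from $I^\downarrow_\beta(A:B)_\rho = \inf_{\sigma_A} I_\beta(\rho_{AB}\|\sigma_A)$, I would apply the $A\leftrightarrow B$-swapped form of Theorem~\ref{decomp} with mutual-information order $\beta$, marginal order $\alpha$, and intermediate conditional-entropy order $\mu$. Taking the infimum over $\sigma_A$ converts the conditional-entropy term into its $H^\uparrow$-optimised form, yielding the intermediate inequality
\begin{equation*}
I^\downarrow_\beta(A:B)_\rho \geq H_\alpha(\rho_A) - H^\uparrow_\mu(A|B)_\rho,
\end{equation*}
with $\mu$ determined in terms of $\alpha, \beta$ by the decomposition constraint.

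Next I would apply Theorem~\ref{Bchain} to bound $H^\uparrow_\mu(A|B)_\rho$ from above. Choosing joint-entropy order $\delta$ and a second marginal order matching $H_\beta(\rho_B)$, with the Bchain constraint once again fixing $\mu$---this time as a function of $\delta$ and the auxiliary $\gamma$ of the corollary's hypothesis---gives
\begin{equation*}
H^\uparrow_\mu(A|B)_\rho \leq H_\delta(\rho_{AB}) - H_\beta(\rho_B).
\end{equation*}
Equating the two expressions for $\mu$ coming from the two constraints then reproduces the scalar hypothesis $\frac{\delta - \gamma}{\delta\gamma - 2\gamma + 1} = \frac{2\beta\alpha - \beta - \alpha}{\beta\alpha}$; substituting the upper bound on $H^\uparrow_\mu$ into the lower bound on $I^\downarrow_\beta$ completes the argument.

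Both parent theorems split into $\geq$-branches under $1/\beta + 1/\gamma \leq 2$ and $\leq$-branches otherwise; the hypothesis $\delta<\alpha,\beta,\gamma$ is designed to place both applications in their $\geq$-branches, while $\delta>\alpha,\beta,\gamma$ places both in the $\leq$-branches and yields \eqref{reverse} by the identical composition. The main obstacle I expect is parameter bookkeeping---matching the corollary's ranges $\alpha,\beta\geq 0$ and $\gamma,\delta\geq 1/2$ against those of the parent theorems, verifying that the H\"older-type inequality conditions are simultaneously satisfied in both applications under the stated sign hypothesis, and appealing to Proposition~\ref{continuity} to cover degenerate-support edge cases. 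No new analytic ingredient is required beyond Theorem~\ref{decomp} and Theorem~\ref{Bchain}.
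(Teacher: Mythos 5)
Your proposal matches the paper's proof essentially step for step: the paper likewise relabels Theorem~\ref{decomp}, optimises over the free operator to get $I^\downarrow_\beta(A:B)_\rho \geq H_\alpha(\rho_A) - H^\uparrow_{\tilde\alpha}(A|B)_\rho$ with an intermediate order $\tilde\alpha$ (your $\mu$), then substitutes the Theorem~\ref{Bchain} bound $H^\uparrow_{\tilde\alpha}(A|B)_\rho \leq H_\delta(\rho_{AB}) - H_\gamma(\rho_B)$, with Corollary~\ref{dualcond} supplying exactly the sign bookkeeping you describe. The only wrinkle is that the chain-rule step necessarily delivers the $B$-marginal at the order appearing in the constraint (the paper's proof in fact produces $H_\gamma(\rho_B)$ rather than the $H_\beta(\rho_B)$ printed in the corollary), so your simultaneous choices of marginal order $\beta$ and a constraint involving $\gamma$ cannot both hold; this mirrors a mismatch in the paper's own statement rather than a defect of your argument.
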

This can be determined by applying a further optimisation and substituting the conditional entropy in Theorem~\ref{decomp} with the appropriate comparison in Theorem~\ref{Bchain}. Although not necessary in the main decomposition rules, here we must minimize over $\tau_A$ to produce a statement which is compatible with the bipartite chain rules, hence the specification to $I^\downarrow_\beta(A:B)_\rho$.

We now introduce the tripartite chain rules. Although we provide a new approach to their derivation using our interpolation technique, this is not much more than a reproduction of Dupuis' chain rules~(see Theorem~\ref{chainDup}), with a slight reworking of the conditions on the parameters.
\begin{theorem}\label{chain}
	Let $\alpha\beta\gamma  -2\beta\gamma - \alpha + \beta +\gamma = 0 $ with $\alpha,\gamma \geq \frac{1}{2}$ and $\beta>\frac{1}{2}$.
	
	For $\rho_{ABC}\in \Den(ABC)$ and $\tau_C\in\Pos(C)$, if $\frac{1}{\beta} + \frac{1}{\gamma} \leq 2$ then
	\begin{equation}
		H_\alpha(\rho_{ABC}\|\tau_C)\geq H^\uparrow_\beta(A|BC)_\rho + H_\gamma(\rho_{BC}\|\tau_C).\label{ChainG}
	\end{equation}
	Otherwise, if  $\frac{1}{\beta} + \frac{1}{\gamma} \geq 2$ then
	\begin{equation}
		H_\alpha(\rho_{ABC}\|\tau_C)\leq H^\uparrow_\beta(A|BC)_\rho + H_\gamma(\rho_{BC}\|\tau_C).
	\end{equation}
Where in both cases $\alpha, \gamma<1\vee \id_{AB}\otimes \tau_C \gg \rho_{ABC}$.
\end{theorem}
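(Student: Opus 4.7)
The plan is to apply the weighted three-line theorem (Theorem \ref{weighted3lines}) to an operator-valued function whose boundary values reproduce the three entropic quantities in the statement, in the same spirit as the bipartite argument behind Theorem \ref{DivFormIntLem} but now carrying an extra conditioning system through as a spectator. I would begin by fixing an auxiliary system $D$ and a factorisation $M \in \Lin(ABC, D)$ with $M^\dagger M = \rho_{ABC}$, so that by Lemma \ref{QuantEq} each of the three terms in \eqref{ChainG} becomes (up to a sign and a power) the logarithm of a weighted Schatten norm of $M$. For the conditional-entropy term I would first fix a trial state $\sigma_{BC} \in \Den^*(BC)$ and work with the single-trial quantity $-D_\beta(\rho_{ABC}\|\id_A \otimes \sigma_{BC})$; the supremum over $\sigma_{BC}$ that defines $H^\uparrow_\beta(A|BC)_\rho$ is only performed at the very end.

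The heart of the argument is the construction of a bounded holomorphic function $F : S \to \Lin(ABC, D)$ of the form
\begin{equation}
F(z) \;=\; M\bigl(\id_A \otimes \sigma_{BC}^{a(z)}\bigr)\bigl(\id_{AB} \otimes \tau_C^{b(z)}\bigr),
\end{equation}
with $a, b$ affine in $z$. The affine coefficients are chosen so that $\|F(k+\im t)\|_{p_k}$ at $k=0$ is (the Schatten-norm expression underlying) $D_\beta(\rho_{ABC}\|\id_A \otimes \sigma_{BC})$, at $k=1$ is (after the implicit partial trace over $A$ that is mediated by Proposition \ref{traceFree}) that of $H_\gamma(\rho_{BC}\|\tau_C)$, and at an intermediate point $z=\theta$ is that of $H_\alpha(\rho_{ABC}\|\tau_C)$. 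Matching the $\Gamma$-exponents yields the algebraic constraint $\alpha' = \beta' + \gamma'$, equivalent to $\alpha\beta\gamma - 2\beta\gamma - \alpha + \beta + \gamma = 0$ by Remark \ref{diffForms}; matching the Schatten indices $p_0, p_\theta, p_1$ to $2\beta, 2\alpha, 2\gamma$ fixes $\theta$ through $1/p_\theta = (1-\theta)/p_0 + \theta/p_1$. The sign of $1/\beta + 1/\gamma - 2$ then determines whether $\theta \in (0,1)$ (the first regime of Theorem \ref{chain}) or $\theta \notin [0,1]$ (the second regime); in the first case Theorem \ref{weighted3lines} yields log-convexity and hence \eqref{ChainG}, while in the second the interior and boundary roles are repositioned and the inequality reverses. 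Finally, taking logarithms with the correct prefactors and taking the supremum over $\sigma_{BC}$ upgrades $-D_\beta$ to $H^\uparrow_\beta$, and Proposition \ref{continuity} removes the strict-positivity assumption on $\tau_C$ when $\alpha, \gamma < 1$.

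The main obstacle will be the parameter bookkeeping: verifying that a single pair of affine functions $(a(z), b(z))$ simultaneously meets all three targets at $z \in \{0, \theta, 1\}$, and that the unitary/isometric factors appearing on the two boundary lines do not inflate $\mathsf{M}_0$ and $\mathsf{M}_1$ beyond the desired Schatten expressions. Carrying this out carefully produces the admissible range $\alpha, \gamma \geq \tfrac{1}{2}$ and $\beta > \tfrac{1}{2}$; the strict inequality reflects that the $\beta$-leg must stay strictly inside the region where the relevant $\Gamma$-exponents remain finite, whereas the outer $H^\uparrow$-supremum supplies enough flexibility for the $\alpha$ and $\gamma$ legs to reach $\tfrac{1}{2}$.
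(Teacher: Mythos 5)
Your high-level plan (interpolate with Theorem~\ref{weighted3lines}, match exponents to get $\alpha'=\beta'+\gamma'$, fix a trial state and take the supremum only at the end, extend to $\tau_C\in\Pos(C)$ via Proposition~\ref{continuity}) is consistent with the paper, but the concrete ansatz has a genuine gap that is not mere bookkeeping. You take $M\in\Lin(ABC,D)$ with $M^\dagger M=\rho_{ABC}$ and set $F(z)=M\bigl(\id_A\otimes\sigma_{BC}^{a(z)}\bigr)\bigl(\id_{AB}\otimes\tau_C^{b(z)}\bigr)$ with both weights acting on the input side. First, $\sigma_{BC}$ and $\tau_C$ overlap on $C$ and do not commute in general, so the imaginary-power unitaries $\sigma_{BC}^{\im t\,a_1}$, $\tau_C^{\im t\,b_1}$ appearing on the boundary lines cannot be removed: the (weighted) norms are only invariant under unitaries commuting with the weights (see Section~\ref{sec:WN}, and note that Lemma~\ref{genLogConv} explicitly assumes $[\sigma_1,\tau_1]=[\sigma_2,\tau_2]=0$). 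Hence $\mathsf{M}_0$ and $\mathsf{M}_1$ do not reduce to the intended divergence expressions, and this is exactly the ``non-commuting operators'' obstruction the tripartite case introduces. Second, with $M^\dagger M=\rho_{ABC}$ every weighted Schatten norm of $M$ encodes a divergence of the \emph{full} state $\rho_{ABC}$; no choice of affine $a,b$ makes a $2\gamma$-norm boundary value equal to $H_\gamma(\rho_{BC}\|\tau_C)$, which involves the marginal $\rho_{BC}$. Your appeal to Proposition~\ref{traceFree} cannot repair this: that proposition is a statement about Schatten $2$-norms of operators built from a Schmidt decomposition of a purification, not a device for tracing out $A$ inside a general $2\gamma$-norm.

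The paper circumvents both problems by purifying: it takes $\ket{\varphi}\in\Hil_{ABCD}$ with $\rho_{ABCD}=\ketbra{\varphi}{\varphi}$ and works with $M=\sum_i r_i\ket{i}_{AD}\bra{i}_{BC}\in\Lin(BC,AD)$, so that $M^\dagger M=\rho_{BC}$ and $MM^\dagger=\rho_{AD}$, and places the two weights on opposite sides of the non-square operator ($\sigma_D$ on the output, $\tau_C$ on the input), so they act on disjoint factors and the boundary unitaries drop out (Lemma~\ref{genChainDual}). The $\gamma$-term then arises from the input side, the interior $2$-norm is converted via Proposition~\ref{traceFree} and the supremum over $\sigma_D$ (Lemma~\ref{newLem12}) into $H_\alpha(\rho_{ABC}\|\tau_C)$, and the $\hat\beta$-boundary produces $H^\uparrow_{\hat\beta}(A|D)_\rho$, which is turned into $-H^\uparrow_\beta(A|BC)_\rho$ by the conditional-entropy duality (Proposition~\ref{condDual}). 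The absence of any role for $\hat\beta$ and for this duality in your sketch is the telltale sign of the missing idea; without the purification-plus-duality step, the interpolation you describe cannot simultaneously produce the three quantities in Eq.~\eqref{ChainG}.
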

A significant difference of this tripartite result from the bipartite ones above is the occurrence of non-commuting operators. The resolution of this complication involves introducing purifications and using the resulting dualities to make comparisons on the subsystems with which we are concerned. As a result, this method is closely aligned to Dupuis', but it nonetheless conveys that this type of derivation is contained within a more general structure which warrants closer attention.

From Theorem~\ref{chain} we may also derive a bipartite chain rule which is distinct from Theorem~\ref{Bchain}.
\begin{corollary}\label{BchainCor}
	Let $\alpha\beta\gamma  -2\beta\gamma - \alpha + \beta +\gamma = 0 $ with $\alpha\geq 0,\beta>\frac{1}{2}$ and $\gamma\geq\frac{1}{2}$.
	
	For $\rho_{AB}\in \Den(AB)$, if $\frac{1}{\beta} + \frac{1}{\gamma} \leq 2$ then
	\begin{equation}
		H_\alpha(\rho_{AB})\geq H^\uparrow_\beta(A|B)_\rho + H_\gamma(\rho_{B}).\label{ChainGBPCor}
	\end{equation}
	Otherwise, if  $\frac{1}{\beta} + \frac{1}{\gamma} \geq 2$ then
	\begin{equation}
		H_\alpha(\rho_{AB})\leq H^\uparrow_\beta(A|B)_\rho + H_\gamma(\rho_{B}).
	\end{equation}
\end{corollary}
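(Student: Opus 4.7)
The plan is to derive the bipartite chain rule from the tripartite Theorem~\ref{chain} by introducing a purification of $\rho_{AB}$ and exploiting the conditional-entropy duality of Proposition~\ref{condDual}. Let $\rho_{ABE}\in\Den(ABE)$ be any purification of $\rho_{AB}$. Applying Theorem~\ref{chain} to the pure state $\rho_{ABE}$ (and taking the supremum over $\tau_E$ on both sides to reach the fully optimised form) yields
\begin{equation}
	H^\uparrow_{\alpha_0}(AB|E)_\rho \gtreqless H^\uparrow_{\beta_0}(A|BE)_\rho + H^\uparrow_{\gamma_0}(B|E)_\rho,
\end{equation}
valid for $\alpha_0,\gamma_0\geq\tfrac12$, $\beta_0>\tfrac12$ with the constraint $\alpha_0'=\beta_0'+\gamma_0'$, and with direction determined by whether $\tfrac{1}{\beta_0}+\tfrac{1}{\gamma_0}$ lies below or above $2$. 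Since $\rho_{ABE}$ is pure, Proposition~\ref{condDual} produces the three dualities $H^\uparrow_{\alpha_0}(AB|E)=-H_{\hat\alpha_0}(\rho_{AB})$, $H^\uparrow_{\beta_0}(A|BE)=-H_{\hat\beta_0}(\rho_A)$, and $H^\uparrow_{\gamma_0}(B|E)=-H^\uparrow_{\hat\gamma_0}(B|A)$ (the first two by pairing the dualised subsystem with a trivial ancilla, the third being the standard tripartite version with $A$ as the complementary system). Inserting these identities, swapping the labels $A\leftrightarrow B$, and relabelling $\alpha=\hat\alpha_0$, $\beta=\hat\gamma_0$, $\gamma=\hat\beta_0$ reproduces the inequality stated in the corollary. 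The transported constraint becomes $\alpha'=\beta'+\gamma'$ using the elementary identity $\hat\mu'=-\mu'$.

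Two pieces of bookkeeping need to be verified. First, the case directions must line up: using the constraint, a short computation shows that $\tfrac{1}{\hat\beta}+\tfrac{1}{\alpha}-2$ and $\tfrac{1}{\beta}+\tfrac{1}{\gamma}-2$ carry opposite signs, and this reversal exactly cancels the direction flip contributed by the three dualities. Second, the parameter ranges must match; the map $\mu\mapsto\hat\mu$ sends $[\tfrac12,\infty]$ bijectively to itself, so the Theorem~\ref{chain} ranges $\alpha_0,\gamma_0\geq\tfrac12$, $\beta_0>\tfrac12$ translate after relabelling into $\alpha,\gamma\geq\tfrac12$, $\beta>\tfrac12$.

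The main obstacle is that the duality argument as written only covers $\alpha\geq\tfrac12$, whereas the corollary asserts the range $\alpha\geq 0$. I would close this gap by running an analogous purification reduction using Theorem~\ref{Bchain} in place of Theorem~\ref{chain}: apply Theorem~\ref{Bchain} to the state $\rho_{AE}$ with the triple $(\gamma,\hat\beta,\alpha)$ in place of its $(\alpha_0,\beta_0,\gamma_0)$, which needs only the single duality $H^\uparrow_\beta(A|B)=-H^\uparrow_{\hat\beta}(A|E)$ together with the spectrum-preservation identities $H_\gamma(\rho_{AE})=H_\gamma(\rho_B)$ and $H_\alpha(\rho_E)=H_\alpha(\rho_{AB})$. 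Since Theorem~\ref{Bchain} permits $\gamma_0\geq 0$ in its own statement, this substitution delivers the full range $\alpha\geq 0$ claimed in the corollary, and the strict inequality $\beta>\tfrac12$ in the hypothesis is exactly what is needed to keep $\hat\beta$ inside the range of Proposition~\ref{condDual}.
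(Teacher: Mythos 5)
Your argument is correct, but it is not the route the paper takes: the paper proves Corollary~\ref{BchainCor} by simply choosing $C$ trivial in Theorem~\ref{chain} and re-examining the parameter conditions \emph{inside} that theorem's interpolation proof to justify letting $\alpha$ go down to $0$, whereas you never reopen any proof and instead treat the stated results as black boxes, combining a purification $\rho_{ABE}$ with the duality of Proposition~\ref{condDual}. In fact your second argument alone suffices: applying Theorem~\ref{Bchain} to $\rho_{AE}$ with parameters $(\gamma,\hat\beta,\alpha)$ and using $H^\uparrow_{\hat\beta}(A|E)=-H^\uparrow_\beta(A|B)$, $H_\gamma(\rho_{AE})=H_\gamma(\rho_B)$ and $H_\alpha(\rho_E)=H_\alpha(\rho_{AB})$ already delivers the full range $\alpha\geq 0$, $\beta>\frac{1}{2}$, $\gamma\geq\frac{1}{2}$, so the preliminary reduction from Theorem~\ref{chain} (which only reaches $\alpha\geq\frac{1}{2}$) is redundant. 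The bookkeeping you flag does check out: the constraint transports as $\gamma'=\hat\beta'+\alpha'=-\beta'+\alpha'$, i.e.\ $\alpha'=\beta'+\gamma'$, and under this constraint $\frac{1}{\hat\beta}+\frac{1}{\alpha}-2=\frac{1}{\alpha}-\frac{1}{\beta}=\frac{\gamma'}{\alpha'\beta'}$ while $2-\frac{1}{\beta}-\frac{1}{\gamma}=\frac{\alpha'}{\beta'\gamma'}$, which have the same sign, so the case condition flips exactly as needed to absorb the sign flip from the dualities (consistent with Corollary~\ref{dualcond}). One presentational slip: in your first, tripartite-duality argument the quantity governing the direction is $\frac{1}{\hat\beta}+\frac{1}{\hat\gamma}-2=-\bigl(\frac{1}{\beta}+\frac{1}{\gamma}-2\bigr)$, not $\frac{1}{\hat\beta}+\frac{1}{\alpha}-2$ (that expression belongs to the Theorem~\ref{Bchain}-based argument), though both carry the required sign. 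As for what each approach buys: the paper's specialization is a one-liner but leaves the enlarged $\alpha$-range implicit in the proof of Theorem~\ref{chain}; your duality route is self-contained at the level of stated theorems and makes the origin of $\alpha\geq 0$ transparent --- it is precisely Theorem~\ref{Bchain}'s $\gamma\geq 0$ freedom transported through purification, which also explains why the two bipartite chain rules (Theorem~\ref{Bchain} and Corollary~\ref{BchainCor}) have the ranges of $\alpha$ and $\gamma$ interchanged.
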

This is a direct result of choosing $C$ as a trivial subsystem. Note this has slightly different constraints on the parameters, which may be determined by examining the conditions imposed in the proof of Theorem~\ref{chain}.

\section{Proofs}
This section covers the technical proofs of the above results. We begin by with an exploration of the relationship between the R\'enyi parameters then treat the results in more or less logical order. We include a restatement of each theorem inline for ease of reference.

\subsection{Applying the three-line theorem}
Before moving forward with the proofs of the divergence inequalities, we will first look at some motivation for the choice of parameters.

We want to use the interpolation result to find inequalities of a particular form, for example the decomposition rule:
\begin{equation}
- H_\alpha(\rho_{AB}\|\tau_A)_\rho \leq I_\beta(\rho_{AB}\|\tau_A)_\rho - H_\gamma(\rho_B).
\end{equation}

Exponentiating on both sides and keeping in mind that we can express the resulting quantities as Schatten norms to the power of a function of the relevant index we obtain an inequality of the form
\begin{equation}\label{genform}
	\|X_{B|A}\|_{p_\alpha}^{\alpha'}\leq \|X_{B}\|_{p_\gamma}^{\gamma'}\|X_{A:B}\|_{p_\beta}^{\beta'},
\end{equation}
where $X_{B|A}$, etc. are simply place-holders for the actual operators, used for brevity. We can then put Eq.~\eqref{genform} in the form required for Theorem~\ref{weighted3lines} by taking both sides to the power of $\frac{1}{\alpha'}$, resulting in
\begin{equation}
\|X_{B|A}\|_{p_\alpha}\leq \|X_{B}\|_{p_\gamma}^\frac{\gamma'}{\alpha'}\|X_{A:B}\|_{p_\beta}^\frac{\beta'}{\alpha'},
\end{equation}
where $1-\theta = \frac{\gamma'}{\alpha'}$ and $\theta = \frac{\beta'}{\alpha'}$. This implies
\begin{align}
	\label{prerel}1-\frac{\beta'}{\alpha'} = \frac{\gamma'}{\alpha'}
	\implies {\alpha'} = {\beta'} + {\gamma'} .\numberthis
\end{align}

We can find the reverse of the inequality in Eq.~\eqref{genform} by negating all the exponents but this does not affect Eq.~\eqref{prerel}. Additionally, the order of the quantities in Eq.~\eqref{genform} has no effect, since we can choose a $\theta$ in each case that reproduces Eq.~\eqref{prerel}. For example we could rewrite Eq.~\eqref{genform} as
\begin{equation}
	\|X_{A:B}\|_{p_\beta}^{-\beta'}\leq \|X_{B|A}\|_{p_\alpha}^{-\alpha'}\|X_{B}\|_{p_\gamma}^{\gamma'}.
\end{equation}
To apply Theorem~\ref{weighted3lines} in this case we would choose $1-\theta = \frac{\alpha'}{\beta'}$ and $\theta = \frac{-\gamma'}{\beta'}$, resulting in $1+\frac{\gamma'}{\beta'} = \frac{\alpha'}{\beta'}$,
which is again Eq.~\eqref{prerel}.

A more in-depth discussion of the implications and restrictions of this condition, which inform the choices in the following results, is deferred to Appendix~\ref{res}.
\subsection{Bipartite divergence inequalities}
We first have the general bipartite inequality:
\begingroup
\def\thetheorem{\ref*{DivFormIntLem}}
\begin{theorem}
	Let $\alpha\beta\gamma  -2\beta\gamma - \alpha + \beta +\gamma = 0$ with $\alpha, \beta\geq \frac{1}{2}$ and $\gamma \in \R$. For $\rho_{AB}\in \Den(AB), \sigma_A\in\PosS(A)$ and $\tau_B\in\Pos(B)$, if $\frac{1}{\beta} + \frac{1}{\gamma}\leq 2$ then
	\begin{equation}
		-H_\alpha\left(\rho_{AB}\left\| \tau_B \right.\right)\leq D_\beta\left(\rho_{AB}\left\|\sigma_A\otimes \tau_B \right.\right)+\log\left(\tr\rho_{A}\sigma_A^\frac{1}{\gamma'} \right)^{\gamma'}.\label{genForIL}
	\end{equation}
	Otherwise, if $\frac{1}{\beta} + \frac{1}{\gamma}\geq 2$ then
	\begin{equation}
		-H_\alpha\left(\rho_{AB}\left\| \tau_B \right.\right)\geq D_\beta\left(\rho_{AB}\left\|\sigma_A\otimes \tau_B \right.\right)+\log\left(\tr\rho_{A}\sigma_A^\frac{1}{\gamma'} \right)^{\gamma'}.\label{genRevIL}
	\end{equation}
	Where in both cases $\alpha, \beta<1\vee \id_A\otimes \tau_B \gg \rho_{AB}$.
\end{theorem}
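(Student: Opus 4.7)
The plan is to apply Theorem~\ref{weighted3lines} (with trivial weights, so that it reduces to the standard three-line theorem on Schatten norms) to a carefully chosen holomorphic function taking values in $\Lin(AB,C)$. The choice of function is motivated directly by Lemma~\ref{DivComp} and by the Pisier-norm structure of Chapter~\ref{sec:interpolation-framework}: I want to interpolate along a complex strip between three operators whose Schatten norms reproduce, respectively, the three terms appearing in the claimed inequality.

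First I would write $\rho_{AB} = M^\dagger M$ for some $M \in \Lin(AB, C)$ via the singular value decomposition (Proposition~\ref{SVD}) and temporarily assume $\tau_B \in \PosS(B)$. Then define $F : S \to \Lin(AB, C)$ by
\[
F(z) = M\bigl(\sigma_A^{a(z)} \otimes \tau_B^{b(z)}\bigr), \quad a(z) = \tfrac{1}{2\gamma'} - \tfrac{z\alpha'}{2\gamma'\beta'}, \quad b(z) = -\tfrac{z}{2\beta'},
\]
with interpolation parameter $\theta := \beta'/\alpha'$. A direct computation yields $\|F(0)\|_2^2 = \tr\rho_A\sigma_A^{1/\gamma'}$, $\|F(\theta)\|_{2\alpha}^{2\alpha'} = \exp(-H_\alpha(\rho_{AB}\|\tau_B))$, and $\|F(1)\|_{2\beta}^{2\beta'} = \exp(D_\beta(\rho_{AB}\|\sigma_A\otimes\tau_B))$. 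The hypothesis $\alpha\beta\gamma - 2\beta\gamma - \alpha + \beta + \gamma = 0$, rewritten as $\alpha' = \beta' + \gamma'$ (cf.~Remark~\ref{diffForms}), is precisely what forces the H\"older relation $\tfrac{1}{2\alpha} = \tfrac{1-\theta}{2} + \tfrac{\theta}{2\beta}$ with $p_0 = 2$, $p_\theta = 2\alpha$, $p_1 = 2\beta$; the conditions $\alpha, \beta \geq \tfrac{1}{2}$ ensure these are valid Schatten indices.

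Next, on each boundary line $z = k + \im t$, the operators $\sigma_A^{a(k+\im t)-a(k)}$ and $\tau_B^{b(k+\im t)-b(k)}$ are unitary (imaginary powers of strictly positive operators), so unitary invariance of Schatten norms gives $\sup_{t}\|F(k+\im t)\|_{p_k} = \|F(k)\|_{p_k}$ for $k \in \{0,1\}$. Applying Theorem~\ref{weighted3lines}, taking logarithms, and multiplying through by $2\alpha'$ produces the claimed inequality. Finally, the strict-positivity assumption on $\tau_B$ is removed via Proposition~\ref{continuity}, by replacing $\tau_B$ with $\tau_B + \varepsilon\id_B$ and sending $\varepsilon \to 0^+$.

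The main obstacle is the case analysis governing the direction of the inequality. The three-line theorem applies literally only when $\theta \in (0,1)$, which corresponds to $\alpha', \beta', \gamma'$ sharing a sign --- i.e.\ $\alpha, \beta, \gamma$ all lying on the same side of $1$. In mixed parameter regimes (still admissible when $\beta$ and $\gamma$ straddle $1$), one must instead interpolate between a different pair of the three values of $F$ and rearrange; this reverses the resulting Schatten-norm inequality, and that reversal composes with the sign of $\alpha'$ in the final step to produce exactly the bifurcation stated in the theorem. The succinct conditions $\tfrac{1}{\beta} + \tfrac{1}{\gamma} \lessgtr 2$ then capture this bifurcation cleanly via the identity $\tfrac{1}{\beta}+\tfrac{1}{\gamma} = 2 - \tfrac{1}{\beta'} - \tfrac{1}{\gamma'}$ combined with $\alpha' = \beta' + \gamma'$. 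Carefully tracking these sign-flips across all admissible regimes is the delicate bookkeeping part of the proof.
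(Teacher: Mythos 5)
Your proposal is correct and follows essentially the same route as the paper's proof: the paper applies its three-line theorem (through Lemma~\ref{genLogConv}) to $z\mapsto\Gamma_{\id_C,\sigma_A}^{\hat f(z)}\bigl(\Gamma^{-1}_{\id_C,\tau_B}(M)\bigr)$ in $\tau_B^2$-weighted norms with $\theta$ a ratio of the conjugate parameters, which is the same interpolation you set up with plain Schatten norms by absorbing the $\tau_B$-powers into $F$, and it resolves the remaining parameter regimes and the direction of the inequality by the same swap-endpoints-and-rearrange case analysis (Lemma~\ref{ana}, Corollary~\ref{dualcond}) together with the continuity extension to $\tau_B\in\Pos(B)$. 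One minor caveat: $\theta\in(0,1)$ forces $\alpha',\beta',\gamma'$ to share a sign, but this is not the same as $\alpha,\beta,\gamma$ lying on the same side of $1$ (e.g.\ $\gamma\leq 0$ gives $\gamma'\in[0,1)$), though this does not harm your argument since those regimes are exactly the ones covered by your alternate interpolations.
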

\addtocounter{theorem}{-1}
\endgroup

To facilitate the demonstration of the above theorem, we introduce a refinement of Theorem~\ref{weighted3lines}, which establishes a general log-convexity result more closely aligned to our particular context.
\begin{lemma}\label{genLogConv}
	For $Y\in \Lin(A,B)$ and $\sigma_1, \tau_1\in \PosS(B)$, $\sigma_2, \tau_2\in \PosS(A)$
	such that $[\sigma_1, \tau_1] = [\sigma_2, \tau_2] = 0$,
	\begin{align}
		\left\|\Gamma_{\sigma_1, \sigma_2}^{f(\theta)}(Y)\right\|_{q_\theta, ( \tau_1,  \tau_2)}\leq\left\|\Gamma_{\sigma_1, \sigma_2}^{f(0)}(Y)\right\|_{q_0, ( \tau_1,  \tau_2)}^{1-\theta}\left\|\Gamma_{\sigma_1, \sigma_2}^{f(1)}(Y)\right\|_{q_1, ( \tau_1,  \tau_2)}^\theta,\label{genLogConvEq}
	\end{align}
	where $f:(0,1) \rightarrow \R$ is a affine function and $1\leq q_\theta, q_0, q_1 \leq \infty$ are related by
	\begin{equation}
		\frac{1}{q_\theta} = \frac{1-\theta}{q_0} + \frac{\theta}{q_1}.
	\end{equation}
\end{lemma}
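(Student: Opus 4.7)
The plan is to reduce this to Theorem~\ref{weighted3lines} by constructing an appropriate holomorphic operator-valued map on the strip $S$. Since $f$ is affine, it extends naturally to $S$ via $f(z) = (1-z) f(0) + z f(1)$, and because $\sigma_1, \sigma_2$ are strictly positive, functional calculus lets us define
\[
F(z) \;:=\; \Gamma_{\sigma_1, \sigma_2}^{f(z)}(Y) \;=\; \sigma_1^{f(z)/2}\, Y\, \sigma_2^{f(z)/2}
\]
as an $\Lin(A,B)$-valued map holomorphic on the interior of $S$ and continuous on its boundary. Boundedness follows from finite-dimensionality: $\Re(f(z))$ stays in the compact interval $[\min\{f(0),f(1)\}, \max\{f(0),f(1)\}]$ on $S$, so $\|\sigma_i^{f(z)/2}\|_\infty$ is uniformly controlled by the extreme eigenvalues of $\sigma_i$.

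The key computation is to check that the boundary quantities match the right-hand side of Eq.~\eqref{genLogConvEq}. Writing $c := f(1) - f(0) \in \R$, for $k \in \{0,1\}$ we have $f(k+\im t) = f(k) + \im t c$, and since the two powers of each $\sigma_i$ commute,
\[
F(k+\im t) \;=\; \sigma_1^{\im t c/2}\, \Gamma_{\sigma_1,\sigma_2}^{f(k)}(Y)\, \sigma_2^{\im t c/2}.
\]
Now I invoke the hypothesis $[\sigma_i, \tau_i] = 0$: the operators $\sigma_i^{\im t c/2}$ are unitary and commute with $\tau_i^{1/(2q_k)}$, so inside the weighted norm they can be pulled out to the extreme left and right factors, where standard unitary invariance of the Schatten $q_k$-norm discards them. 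Thus
\[
\mathsf{M}_k \;:=\; \sup_{t\in\R}\left\|F(k+\im t)\right\|_{q_k, ( \tau_1, \tau_2)} \;=\; \left\|\Gamma_{\sigma_1, \sigma_2}^{f(k)}(Y)\right\|_{q_k, ( \tau_1, \tau_2)},
\]
so in particular the supremum is achieved independently of $t$.

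Applying Theorem~\ref{weighted3lines} with $p_k = q_k$ and weights $(\tau_1, \tau_2)$ then yields $\|F(\theta)\|_{q_\theta, (\tau_1, \tau_2)} \leq \mathsf{M}_0^{1-\theta}\mathsf{M}_1^\theta$, which is exactly Eq.~\eqref{genLogConvEq} after identifying $F(\theta) = \Gamma_{\sigma_1, \sigma_2}^{f(\theta)}(Y)$. The only real obstacle is the boundary computation: without the commutation $[\sigma_i, \tau_i] = 0$, the unitaries $\sigma_i^{\im tc/2}$ cannot be pushed past the weights $\tau_i^{1/(2q_k)}$, and the boundary norms would genuinely depend on $t$ and resist a clean three-line estimate. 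The commutation assumption is therefore the essential ingredient that makes the interpolation collapse to Theorem~\ref{weighted3lines}.
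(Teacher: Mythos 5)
Your proposal is correct and follows essentially the same route as the paper's own proof: extend the affine $f$ holomorphically to the strip, set $F(z)=\Gamma_{\sigma_1,\sigma_2}^{f(z)}(Y)$, use that the imaginary powers $\sigma_i^{\im t c/2}$ are unitary and, by the commutation hypotheses, can be passed through the weights $\tau_i$ so the boundary norms are $t$-independent, and then apply Theorem~\ref{weighted3lines}. Your added remarks on boundedness and on why $[\sigma_i,\tau_i]=0$ is indispensable are consistent with (and slightly more explicit than) the paper's argument.
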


The proof of Lemma~\ref{genLogConv} follows from making some specifications in Theorem~\ref{weighted3lines}.
\begin{proof}
	Let $\hat f$ be the complex continuation of $f$ on $S$, i.e if $f(x) = ax + b$ for $x,a,b\in \R$, then $\hat f(z) = az + b$. Define $F:S\rightarrow \Lin(A,B)$ such that
$
		F:z \longmapsto \Gamma_{\sigma_1,\sigma_2}^{\hat f(z)}(Y)
$.
	Accordingly, $f = \Re(\hat f)$, hence $F(\theta) = \Gamma_{\sigma_1,\sigma_2}^{f(\theta)}(Y)$.
	
	Additionally, we have
	\begin{align}
		\sup_{t\in\R}\left\|F(\im t)\right\|_{q_0} & =\sup_{t\in\R}\left\|\Gamma_{\sigma_1, \sigma_2}^{\hat f(\im t)}(Y)\right\|_{q_0, ( \tau_1,  \tau_2)} \\ &= \left\|\Gamma_{\sigma_1, \sigma_2}^{f(0)}(Y)\right\|_{q_0, ( \tau_1,  \tau_2)},
	\end{align}
	where the second equality is a result of the assumption that $\sigma_1^{\im t}$ and $\sigma_2^{\im t}$ commute with $\tau_1$ and $\tau_2$ respectively, and that they are unitary for all $t\in \R$ .
	Similarly, we have
	\begin{equation}
		\sup_{t\in\R}\left\|F(1+\im t)\right\|_{q_1, ( \tau_1,  \tau_2)}=\left\|\Gamma_{\sigma_1, \sigma_2}^{f(1)}(Y)\right\|_{q_1, ( \tau_1,  \tau_2)}.
	\end{equation}
	
	With the above conditions we have by Theorem~\ref{weighted3lines}
	\begin{align}
		\left\|\Gamma_{\sigma_1, \sigma_2}^{f(\theta)}(Y)\right\|_{q_\theta, ( \tau_1,  \tau_2)}\leq\left\|\Gamma_{\sigma_1, \sigma_2}^{f(0)}(Y)\right\|_{q_0, ( \tau_1,  \tau_2)}^{1-\theta}\left\|\Gamma_{\sigma_1, \sigma_2}^{f(1)}(Y)\right\|_{q_1, ( \tau_1,  \tau_2)}^\theta.
	\end{align}
\end{proof}
With the scaffold provided by this log-convexity result, we need only make some informed choices to produce the general divergence inequalities and the comparisons that follow.
\begin{proof}[Proof of Theorem~\ref{DivFormIntLem}]
	Noting $\frac{\alpha'}{\alpha} = \alpha' - 1$, we may determine
	\begin{equation}
		\frac{\alpha'}{\alpha} = \frac{\beta'}{\beta} + \gamma'= \beta' + \frac{\gamma'}{\gamma}. \label{intrel}
	\end{equation}
	Accordingly, in Lemma~\ref{genLogConv} we choose
	\begin{gather}
		\theta = \frac{\gamma'}{\alpha'}, \quad q_\theta = 2\alpha,\quad q_0 = 2\beta, \quad q_1 = 2,\\
		\sigma_1 = \id_C,\quad \sigma_2 = \sigma_A,\quad \tau_1 = \id_C,\quad \tau_2 = \tau_B^2,\\
		Y = \Gamma^{-1}_{\id_C, \tau_B}(M),\quad f(x) = \frac{x-1}{\beta'} + \frac{x}{\gamma'}.
	\end{gather}
	Hence for $\alpha'>0$ we can write Eq.~\eqref{genLogConvEq} as
	\begin{align}
		\left\|\Gamma_{\id_C, \sigma_A}^{\frac{1}{\alpha'} - \frac{1}{\alpha'}}\left(\Gamma^{-1}_{\id_C, \tau_B}(M)\right)\right\|_{2\alpha, ( \id_C,\tau_B^2)}                       & \leq \left\|\Gamma_{\id_C, \sigma_A}^{\frac{-1}{\beta'}}\left(\Gamma^{-1}_{\id_C, \tau_B}(M)\right)\right\|_{2\beta, ( \id_C,\tau_B^2)}^{\frac{\beta'}{\alpha'}}\left\|\Gamma_{\id_C, \sigma_A}^{\frac{1}{\gamma'}}\left(\Gamma^{-1}_{\id_C, \tau_B}(M)\right)\right\|_{2, ( \id_C,\tau_B^2)}^\frac{\gamma'}{\alpha'} \\
		\implies\log\left\|\Gamma^{-1}_{\id_C, \tau_B}(M)\right\|_{2\alpha, ( \id_C,\tau_B^2)}^{\alpha'} & \leq \log\left\|\Gamma_{\id_C, \sigma_A}^{\frac{-1}{\beta'}}\left(\Gamma^{-1}_{\id_C, \tau_B}(M)\right)\right\|_{2\beta, ( \id_C,\tau_B^2)}^{{\beta'}}+\log\left\|\Gamma_{\id_C, \sigma_A}^{\frac{1}{\gamma'}}\left(M\right)\right\|_{2}^{\gamma'}.\label{genIt}                                                            
	\end{align}
Considering Lemmata~\ref{QuantEq} and~\ref{DivComp}, we can write the above as
\begin{align}
		-H_\alpha\left(\rho_{AB}\left\| \tau_B \right.\right)                                                                                                                       & \leq D_\beta\left(\rho_{AB}\left\|\sigma_A\otimes \tau_B \right.\right)+\log\left(\tr\rho_{A}\sigma_A^\frac{1}{\gamma'} \right)^{\gamma'}.
	\end{align}
	Otherwise, if $\alpha'<0$, the inequality in Eq.~\eqref{genIt} is reversed.
	
	If instead $\frac{\alpha'}{\gamma'} \in (0, 1)$ we consider the alternate form of Eq.~\eqref{intrel}:
	\begin{equation}
		 \gamma' = \frac{\alpha'}{\alpha}-\frac{\beta'}{\beta},\quad\frac{\gamma'}{\gamma} = \frac{\alpha'}{\alpha} - \beta.
	\end{equation}
	Taking similar choices in Lemma~\ref{genLogConv}, with $\theta = \frac{\alpha'}{\gamma'}$ and $f(x) = \frac{x-1}{\beta'}$, and employing the same process we may write for $\gamma'>0$:
	\begin{align}
		\left\|\Gamma_{\id_C, \sigma_A}^{\frac{1}{\gamma'}}\left(\Gamma^{-1}_{\id_C, \tau_B}(M)\right)\right\|_{2, ( \id_C,\tau_B^2)} & \leq \left\|\Gamma_{\id_C, \sigma_A}^{\frac{-1}{\beta'}}\left(\Gamma^{-1}_{\id_C, \tau_B}(M)\right)\right\|_{2\beta, ( \id_C,\tau_B^2)}^{\frac{-\beta'}{\gamma'}}\left\|\Gamma^{-1}_{\id_C, \tau_B}(M)\right\|_{2\alpha, ( \id_C,\tau_B^2)}^\frac{\alpha'}{\gamma'},
		\end{align}
	which implies 
		\begin{align}
		-H_\alpha\left(\rho_{AB}\left\| \tau_B \right.\right)                                                                   & \geq D_\beta\left(\rho_{AB}\left\|\sigma_A\otimes \tau_B \right.\right)+\log\left(\tr\rho_{A}\sigma_A^\frac{1}{\gamma'} \right)^{\gamma'}.
	\end{align}
	We again obtain the reverse of this inequality when $\gamma'< 0$.
	
	Finally, we consider $\theta = \frac{\alpha'}{\beta'} \in (0, 1)$ and the form of Eq.~\eqref{intrel}:
	\begin{equation}
		\frac{\beta'}{\beta} = \frac{\alpha'}{\alpha}-\gamma',\quad \beta = \frac{\alpha'}{\alpha} - \frac{\gamma'}{\gamma},
	\end{equation}
	so, with $f(x) = \frac{1-x}{\gamma'}$, for $\beta'>0$ or $\beta' <0$ we respectively obtain
	\begin{equation}
		-H_\alpha\left(\rho_{AB}\left\| \tau_B \right.\right)\geq D_\beta\left(\rho_{AB}\left\|\sigma_A\otimes \tau_B \right.\right)+\log\left(\tr\rho_{A}\sigma_A^\frac{1}{\gamma'} \right)^{\gamma'}
	\end{equation}
	and its reverse.
	
	We now invoke Proposition~\ref{continuity}. Consider instead $\tau_B\in \Pos(B)$~--~then we may choose in the above $\tau_B^*\in \PosS(B)$ such that $\tau_B^* = \tau_B + \varepsilon\id_B.$ Taking $\varepsilon \rightarrow 0^+$, we obtain the same statement for positive semi-definite matrices. Note, without the support conditions $\alpha, \beta<1\vee \id_A\otimes \tau_B \gg \rho_{AB}$, the quantities on each side may diverge to $+\infty$.
	
	To obtain the conditions on the R\'enyi parameters we multiply the equation $\alpha' = \beta' + \gamma'$ by\\$(\alpha-1)(\beta-1)(\gamma-1)$ and apply the observations in Lemma~\ref{ana} and Corollary~\ref{dualcond}, i.e.
	\begin{align}
		\frac{\alpha}{\alpha-1} &= \frac{\beta}{\beta-1} + \frac{\gamma}{\gamma-1}\\
		\implies \alpha(\beta\gamma -\beta - \gamma +1) &= \beta(\alpha\gamma -\alpha - \gamma + 1) + \gamma(\beta\alpha -\alpha -\beta +1)\\
		= \alpha\beta\gamma - \alpha\beta-\alpha\gamma +\alpha &= \alpha\beta\gamma -\alpha\beta - \beta\gamma + \beta + \alpha\beta\gamma -\alpha\gamma - \beta\gamma + \gamma\\
		\implies \alpha\beta\gamma -2\beta\gamma -\alpha + \beta + \gamma &= 0.
	\end{align} Note that this does not cause any trivial satisfactions since $\alpha' \rightarrow \pm\infty$ if and only if at least one of $\beta$ or $\gamma$ approaches $1$.
	
	Taking into account the valid ranges for the choices of the parameters given in Lemma~\ref{ana} we have the statement of the lemma.
\end{proof}
The decomposition rule and chain rule follow directly:
\begingroup
\def\thetheorem{\ref*{decomp}}
\begin{theorem}
	Let $\alpha\beta\gamma  -2\beta\gamma - \alpha + \beta +\gamma = 0$ with $\alpha,\beta\geq \frac{1}{2}  $ and $\gamma\geq 0$.
	
	For $\rho_{AB}\in \Den(AB)$ and $\tau_A\in\Pos(A)$, if $\frac{1}{\beta} + \frac{1}{\gamma} \leq 2$ then
	\begin{equation}\label{decompEq1IL}
		I_\beta(\rho_{AB}\|\tau_A) \geq H_\gamma(\rho_{B})-H_\alpha(\rho_{AB}\|\tau_A).
	\end{equation}
	Otherwise, if $\frac{1}{\beta} + \frac{1}{\gamma} \geq 2$ then
	\begin{equation}\label{decompEq2IL}
		I_\beta(\rho_{AB}\|\tau_A) \leq H_\gamma(\rho_{B})-H_\alpha(\rho_{AB}\|\tau_A).
	\end{equation}
Where in both cases $\alpha, \beta<1\vee \tau_A\otimes \id_B \gg \rho_{AB}$.
\end{theorem}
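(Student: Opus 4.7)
My plan is to derive Theorem~\ref{decomp} directly from the general bipartite divergence inequality, Theorem~\ref{DivFormIntLem}, by swapping the roles of $A$ and $B$ and then making a single well-chosen specialisation of the free positive operator. After the relabelling $A\leftrightarrow B$, Theorem~\ref{DivFormIntLem} states that under the parameter constraint $\alpha\beta\gamma - 2\beta\gamma - \alpha + \beta + \gamma = 0$ with $\alpha,\beta\geq\tfrac{1}{2}$,
\[
-H_\alpha(\rho_{AB}\|\tau_A) \;\leq\; D_\beta(\rho_{AB}\|\tau_A\otimes \sigma_B) \;+\; \log(\tr\rho_B\sigma_B^{1/\gamma'})^{\gamma'}
\]
for all $\sigma_B\in\PosS(B)$ when $1/\beta+1/\gamma\leq 2$, and with the reverse inequality when $1/\beta+1/\gamma\geq 2$. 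Since both sides are invariant under a positive rescaling $\sigma_B\mapsto c\sigma_B$ (the two $\log c$ contributions on the right cancel), I restrict to $\sigma_B\in\Den^*(B)$ without loss of generality.

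The key auxiliary identity, already used in the proof of Lemma~\ref{QuantEq} via the duality of Schatten norms, is
\[
H_\gamma(\rho_B) \;=\; -\log\sup_{\sigma_B\in\Den^*(B)}\tr(\rho_B\sigma_B^{1/\gamma'})^{\gamma'},
\]
so that $\log(\tr\rho_B\sigma_B^{1/\gamma'})^{\gamma'}\leq -H_\gamma(\rho_B)$ for every $\sigma_B\in\Den^*(B)$, with equality attained at $\sigma_B^\star\propto\rho_B^\gamma$.

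For the forward direction I specialise $\sigma_B$ to a minimiser $\sigma_B^{\min}$ of $D_\beta(\rho_{AB}\|\tau_A\otimes\,\cdot\,)$, so that the divergence term becomes exactly $I_\beta(\rho_{AB}\|\tau_A)$, while the trace term is bounded above by $-H_\gamma(\rho_B)$; chaining and rearranging gives Eq.~\eqref{decompEq1IL}. For the reverse direction I instead set $\sigma_B=\sigma_B^\star$, so that the trace term becomes exactly $-H_\gamma(\rho_B)$ and the divergence at this (generally non-optimal) choice is bounded below by $I_\beta(\rho_{AB}\|\tau_A)$, yielding Eq.~\eqref{decompEq2IL} after rearrangement. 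The fact that the two directions require opposite specialisations reflects that the two right-hand side terms are not simultaneously optimised by any single $\sigma_B$.

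The main technical obstacle is that neither $\sigma_B^{\min}$ nor $\sigma_B^\star$ need lie in $\Den^*(B)$, so the interpolation inequality need not apply at these exact points. I resolve this by perturbing $\sigma_B\mapsto(1-\varepsilon)\sigma_B+\varepsilon\pi_B$ with $\pi_B$ the maximally mixed state on $\Hil_B$, which lies in $\Den^*(B)$, applying the inequality for each $\varepsilon>0$, and passing $\varepsilon\to 0^+$ using the continuity of R\'enyi divergence, Proposition~\ref{continuity}. The same continuity underpins the extension from $\tau_A\in\PosS(A)$ (as required by Theorem~\ref{DivFormIntLem}) to $\tau_A\in\Pos(A)$, with the stated support conditions ruling out the case where either side diverges to $+\infty$. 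Finally, the restriction $\gamma\geq 0$ here (strengthening $\gamma\in\R$ of Theorem~\ref{DivFormIntLem}) is exactly the range in which $H_\gamma(\rho_B)$ is a meaningful R\'enyi entropy of $\rho_B$ via the Schatten-norm identity above.
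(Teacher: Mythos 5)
Your proposal is correct and follows essentially the same route as the paper: relabel Theorem~\ref{DivFormIntLem}, optimise over the free operator $\sigma_B$, and identify $\log\sup_{\sigma_B}(\tr\rho_B\sigma_B^{1/\gamma'})^{\gamma'}$ with $-H_\gamma(\rho_B)$ via the Schatten-norm duality of Lemma~\ref{newLem12}. Your only additions are cosmetic: you spell out that the reverse direction uses the maximiser of the trace term rather than the (near-)minimiser of the divergence, which the paper compresses into ``the same process applies,'' and you handle the full-support issue by an explicit mixing perturbation where the paper uses an $\varepsilon$-near-optimal choice.
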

\addtocounter{theorem}{-1}
\endgroup

\begin{proof}
	We begin with a relabelled version of Eq.~\eqref{genForIL}, i.e. for $\frac{1}{\beta} + \frac{1}{\gamma}\leq {2}$,
	\begin{align}
		-H_\alpha\left(\rho_{AB}\left\| \tau_A \right.\right)          & \leq D_\beta\left(\rho_{AB}\left\| \tau_A\otimes\sigma_B \right.\right)+\log\left(\tr\rho_{B}\sigma_B^\frac{1}{\gamma'} \right)^{\gamma'}.
	\end{align}
	Since the left-hand side is independent of $\sigma_{B}$ we may choose a $\sigma_{B}$ such that the first term is $\varepsilon$-close to its infimum, i.e. we have
	\begin{align}		
		\label{optstart}-H_\alpha\left(\rho_{AB}\left\| \tau_A \right.\right) & \leq \inf_{\sigma_B\in \Den(B)}\left[D_\alpha\left(\rho_{AB}\left\| \tau_A\otimes\sigma_B \right.\right)\right]+\log\left(\tr\rho_{B}\sigma_B^\frac{1}{\gamma'} \right)^{\gamma'} + \varepsilon \\
		\label{optend}-H_\alpha\left(\rho_{AB}\left\| \tau_A \right.\right) & \leq \inf_{\sigma_B\in \Den(B)}\left[D_\alpha\left(\rho_{AB}\left\| \tau_A\otimes\sigma_B \right.\right)\right]+\log\sup_{\sigma_B\in \Den(B)}\left[\left(\tr\rho_{B}\sigma_B^\frac{1}{\gamma'} \right)^{\gamma'}\right] +\varepsilon\\
		\implies -H_\alpha\left(\rho_{AB}\left\| \tau_A \right.\right) & \leq I_\beta\left(\rho_{AB}\left\|\tau_A \right.\right)-H_\gamma(\rho_B)                                                                                                                       \\
		\implies I_\beta\left(\rho_{AB}\left\|\tau_A \right.\right)    & \geq H_\alpha\left(\rho_{AB}\left\| \tau_A \right.\right)-H_\gamma(\rho_B),
	\end{align}
	where in the third line we let $\varepsilon \rightarrow 0$. Observe that the same process applies to the reverse inequality and that in this context we may take $\gamma\rightarrow 0^+$, thus we obtain the theorem.	
\end{proof}
\begingroup
\def\thetheorem{\ref*{Bchain}}
\begin{theorem}
	Let $\alpha\beta\gamma  -2\beta\gamma - \alpha + \beta +\gamma = 0 $ with $\alpha,\beta \geq \frac{1}{2}$ and $\gamma\geq0$.
	
	For $\rho_{AB}\in \Den(AB)$, if $\frac{1}{\beta} + \frac{1}{\gamma} \leq 2$ then
	\begin{equation}
		H_\alpha(\rho_{AB})\geq H^\uparrow_\beta(A|B)_\rho + H_\gamma(\rho_{B}).\label{ChainGBPIL}
	\end{equation}
	Otherwise, if  $\frac{1}{\beta} + \frac{1}{\gamma} \geq 2$ then
	\begin{equation}\label{BchainEq2IL}
		H_\alpha(\rho_{AB})\leq H^\uparrow_\beta(A|B)_\rho + H_\gamma(\rho_{B}).
	\end{equation}
\end{theorem}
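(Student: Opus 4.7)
The plan is to derive Theorem~\ref{Bchain} as an immediate specialization of the decomposition rule Theorem~\ref{decomp} with $\tau_A = \id_A$. The two theorems carry identical algebraic conditions on $(\alpha,\beta,\gamma)$ and identical branching on whether $\tfrac{1}{\beta}+\tfrac{1}{\gamma}$ is at most or at least $2$, so no re-derivation of the valid parameter ranges is needed. The support hypothesis $\tau_A\otimes\id_B \gg \rho_{AB}$ required by Theorem~\ref{decomp} is automatic for $\tau_A=\id_A$, so the specialization is clean.

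The substitution reduces to two definitional identifications. From Eq.~\eqref{GCE}, $H_\alpha(\rho_{AB}\|\id_A) = -D_\alpha(\rho_{AB}\|\id_{AB}) = H_\alpha(\rho_{AB})$, which collapses the generalized conditional term on the right of Eq.~\eqref{decompEq1IL} into the joint R\'enyi entropy. From Eq.~\eqref{GMI}, $I_\beta(\rho_{AB}\|\id_A) = \inf_{\sigma_B \in \Den(B)} D_\beta(\rho_{AB}\|\id_A\otimes\sigma_B) = -H^\uparrow_\beta(A|B)_\rho$, which rewrites the generalized mutual information on the left as minus the optimized conditional entropy. Inserting both into Eq.~\eqref{decompEq1IL} and rearranging gives $H_\alpha(\rho_{AB}) \geq H^\uparrow_\beta(A|B)_\rho + H_\gamma(\rho_B)$, i.e.\ Eq.~\eqref{ChainGBPIL}; repeating the substitution in Eq.~\eqref{decompEq2IL} yields Eq.~\eqref{BchainEq2IL}.

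If a more self-contained presentation is desired, one can instead mirror the $\varepsilon$-optimization argument already employed in the proof of Theorem~\ref{decomp}: start from Theorem~\ref{DivFormIntLem} with the roles of $A$ and $B$ interchanged so that $\sigma_B$ is the free operator, set $\tau_A = \id_A$ from the outset, and then split the infimum from the supremum on the right-hand side using the $\sigma_B$-independence of the left-hand side, invoking the Schatten duality identity $\sup_{\sigma_B\in\Den^*(B)}\log(\tr\rho_B\sigma_B^{1/\gamma'})^{\gamma'} = -H_\gamma(\rho_B)$. I do not foresee a genuine obstacle: all of the interpolation machinery and the bookkeeping of R\'enyi-order constraints has already been absorbed into Theorems~\ref{DivFormIntLem} and~\ref{decomp}, and the bipartite chain rule is essentially their corollary in the conditional-entropy-normalized regime.
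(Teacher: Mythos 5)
Your proposal is correct and essentially matches the paper's own argument: the paper proves this by taking Theorem~\ref{DivFormIntLem} with $\tau_A=\id_A$ and then performing the same $\sigma_B$-optimisation used for Theorem~\ref{decomp}, which is exactly your fallback route, and your primary route (specialising Theorem~\ref{decomp} to $\tau_A=\id_A$, noting $H_\alpha(\rho_{AB}\|\id_A)=H_\alpha(\rho_{AB})$ and $I_\beta(\rho_{AB}\|\id_A)=-H^\uparrow_\beta(A|B)_\rho$, with the support condition trivially satisfied) is just the same computation with the specialisation and optimisation interchanged. No gap.
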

\addtocounter{theorem}{-1}
\endgroup
\begin{proof}
	In Theorem~\ref{DivFormIntLem} we relabel the system and choose $\tau_A= \id_{A}$ to obtain, if $\frac{1}{\beta} + \frac{1}{\gamma}\leq 2$,
	\begin{align}
		-H_\alpha\left(\rho_{AB}\right)&\leq D_\beta\left(\rho_{AB}\left\| \id_A\otimes\sigma_B \right.\right)+\log\left(\tr\rho_{B}\sigma_B^\frac{1}{\gamma'} \right)^{\gamma'}.
	\end{align}
	Optimising over $\sigma_B$ in the same manner as in Eqs.~\eqref{optstart}-\eqref{optend} the above becomes
	\begin{align}
		-H_\alpha\left(\rho_{AB}\right)&\leq -H^\uparrow_\beta\left(A|B\right)_\rho+H_\gamma(\rho_B)\\
		\implies H_\alpha\left(\rho_{AB}\right)&\geq H^\uparrow_\beta\left(A|B\right)_\rho-H_\gamma(\rho_B).
	\end{align}
	Similarly, for $\frac{1}{\beta} + \frac{1}{\gamma} \geq 2$ we have the reverse direction.
\end{proof}
We now include the proof of Corollary~\ref{noncond}, essentially combining the two above results.
\begingroup
\def\thetheorem{\ref*{noncond}}
\begin{corollary}
	For $\rho_{AB}\in \Den(AB)$ with $\alpha, \beta\geq0$, $\gamma, \delta\geq\frac{1}{2}$ such that $\frac{\delta - \gamma}{\delta\gamma - 2\gamma + 1} = \frac{2\beta\alpha - \beta -\alpha}{\beta\alpha}$ we have, when $\delta<\alpha, \beta, \gamma$,
	\begin{align}
		I^\downarrow_\beta(A:B)_\rho &\geq H_\alpha(\rho_A) + H_\beta(\rho_B) - H_\delta(\rho_{AB})
	\end{align}
	and, when $\delta>\alpha, \beta, \gamma$,
	\begin{align}
		I^\downarrow_\beta(A:B)_\rho &\leq H_\alpha(\rho_A) + H_\beta(\rho_B) - H_\delta(\rho_{AB}).\label{reverseIL}
	\end{align}
\end{corollary}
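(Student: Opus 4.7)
The idea is to chain together the bipartite decomposition rule of Theorem~\ref{decomp} (applied after interchanging the roles of $A$ and $B$, so that the marginal entropy on $A$ rather than on $B$ is produced) with the bipartite chain rule of Theorem~\ref{Bchain}, using an auxiliary conditional-entropy order $\mu$ which is generated by the first step and eliminated by the second.

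First, I would apply the relabelled Theorem~\ref{decomp}: for every $\tau_B\in\Pos(B)$ and for orders $\alpha,\beta$ together with an auxiliary $\mu$ satisfying the decomposition parameter equation and the direction hypothesis $\frac{1}{\beta}+\frac{1}{\alpha}\leq 2$, we obtain
\begin{equation*}
I_\beta(\rho_{AB}\|\tau_B)\geq H_\alpha(\rho_A)-H_\mu(\rho_{AB}\|\tau_B).
\end{equation*}
Taking the infimum over $\tau_B\in\Den(B)$ on both sides, the left-hand side becomes $I^\downarrow_\beta(A:B)_\rho$ by Eq.~\eqref{minmut}; on the right, only $H_\mu(\rho_{AB}\|\tau_B)$ depends on $\tau_B$, so the infimum passes through the constant term and produces $H_\alpha(\rho_A)-H^\uparrow_\mu(A|B)_\rho$ by the definition of $H^\uparrow_\mu$.

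Next I would invoke Theorem~\ref{Bchain} with orders $\delta,\mu,\gamma$ satisfying its parameter equation and $\frac{1}{\mu}+\frac{1}{\gamma}\leq 2$, obtaining $H^\uparrow_\mu(A|B)_\rho\leq H_\delta(\rho_{AB})-H_\gamma(\rho_B)$. Substituting this upper bound into the previous inequality delivers the desired forward direction. Eliminating the auxiliary $\mu$ between the two parameter equations coming from decomposition and chain yields a single compatibility condition in $(\alpha,\beta,\gamma,\delta)$, which after algebraic manipulation using the equivalent forms of Remark~\ref{diffForms} matches the relation displayed in the corollary. The reverse inequality~\eqref{reverse} then follows from the mirror argument, taking the $\geq 2$ branch of both theorems.

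The main obstacle will be the parameter bookkeeping: I need to verify that the two separate direction hypotheses $\frac{1}{\beta}+\frac{1}{\alpha}\leq 2$ and $\frac{1}{\mu}+\frac{1}{\gamma}\leq 2$ combine, under the compatibility relation, into the single ordering condition $\delta<\alpha,\beta,\gamma$ (respectively $\delta>\alpha,\beta,\gamma$) stated in the corollary, and that the eliminated $\mu$ lies in the admissible range $[1/2,\infty)$ demanded by both Theorem~\ref{decomp} and Theorem~\ref{Bchain} throughout the parameter region declared in the statement.
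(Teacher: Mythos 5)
Your plan reproduces the paper's own proof essentially verbatim: a relabelled Theorem~\ref{decomp}, optimised over $\tau_B$ to yield $I^\downarrow_\beta(A:B)_\rho \geq H_\alpha(\rho_A) - H^\uparrow_{\mu}(A|B)_\rho$, followed by substitution of the bipartite chain rule of Theorem~\ref{Bchain} and elimination of the auxiliary order (called $\tilde\alpha$ in the paper) to obtain the stated compatibility relation, with the mirror argument for the reverse branch. The parameter bookkeeping you flag as the remaining obstacle is exactly what the paper discharges via Corollary~\ref{dualcond}, which turns the two direction hypotheses into $\tilde\alpha<\alpha,\beta$ and $\delta<\tilde\alpha,\gamma$ (reversed in the other branch), giving the ordering $\delta<\alpha,\beta,\gamma$.
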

\addtocounter{theorem}{-1}
\endgroup
\begin{proof}
	We begin with a relabelled version of Eq.~\eqref{decompEq1IL}, optimising over $\tau_B$: For ${\tilde\alpha\beta\alpha  -2\beta\alpha -\tilde\alpha +\beta +\alpha =0}$ with $\tilde\alpha, \beta\geq \frac{1}{2}$, $\alpha\geq0$ and $\frac{1}{\beta} + \frac{1}{\alpha}\leq 2$ we have
	\begin{align}
		\inf_{\tau_B \in \Den(B)}I_\beta(\rho_{AB}\|\tau_B) &\geq H_\alpha(\rho_{A})-\sup_{\tau_B \in \Den(B)}H_{\tilde\alpha}(\rho_{AB}\|\tau_B)\\
		I^\downarrow_\beta(A:B) &\geq H_\alpha(\rho_{A})-H^\uparrow_{\tilde\alpha}(A|B).
	\end{align}
By Theorem~\ref{Bchain}, for for $\delta\tilde\alpha\gamma  -2\tilde\alpha\gamma - \delta + \tilde\alpha +\gamma = 0$ with $\delta, \tilde\alpha\geq \frac{1}{2}$, $\gamma \geq0$ and $\frac{1}{\tilde\alpha} + \frac{1}{\gamma}\leq 2$ we can substitute
\begin{align}
	I^\downarrow_\beta(A:B) &\geq H_\alpha(\rho_{A}) + H_\gamma(\rho_B) - H_\delta(\rho_{AB}).
\end{align}
	where $\frac{\delta - \gamma}{\delta\gamma - 2\gamma + 1} = \frac{2\beta\alpha - \beta -\alpha}{\beta\alpha}$.
	
	We know from Corollary~\ref{dualcond} that both
	\begin{align}
		\frac{1}{\beta} + \frac{1}{\alpha}\leq 2\quad\text{then}\quad&\tilde\alpha<\alpha, \beta \quad\text{and}\\
		\frac{1}{\tilde\alpha} + \frac{1}{\gamma}\leq 2\quad\text{then}\quad&\delta<\tilde\alpha, \gamma.
	\end{align}
	Similarly, if we begin with Eq.~\eqref{decompEq2IL} and substitute in Eq.~\eqref{BchainEq2IL}, we arrive at Eq.~\eqref{reverseIL} but with
	\begin{align}
		\frac{1}{\beta} + \frac{1}{\alpha}\geq 2\quad\text{then}\quad&\tilde\alpha>\alpha, \beta \quad\text{and}\\
		\frac{1}{\tilde\alpha} + \frac{1}{\gamma}\geq 2\quad\text{then}\quad&\delta>\tilde\alpha, \gamma.
	\end{align}
\end{proof}
\subsection{The tripartite chain rules}
We now include the novel approach to the proof of the tripartite chain rule and the bipartite version which follows.
\begingroup
\def\thetheorem{\ref*{chain}}
\begin{theorem}
	Let $\alpha\beta\gamma  -2\beta\gamma - \alpha + \beta +\gamma = 0 $ with $\alpha,\gamma \geq \frac{1}{2}$ and $\beta>\frac{1}{2}$.
	
	For $\rho_{ABC}\in \Den(ABC)$ and $\tau_C\in\Pos(C)$, if $\frac{1}{\beta} + \frac{1}{\gamma} \leq 2$ then
	\begin{equation}
		H_\alpha(\rho_{ABC}\|\tau_C)\geq H^\uparrow_\beta(A|BC)_\rho + H_\gamma(\rho_{BC}\|\tau_C).\label{ChainGIL}
	\end{equation}
	Otherwise, if  $\frac{1}{\beta} + \frac{1}{\gamma} \geq 2$ then
	\begin{equation}
		H_\alpha(\rho_{ABC}\|\tau_C)\leq H^\uparrow_\beta(A|BC)_\rho + H_\gamma(\rho_{BC}\|\tau_C).
	\end{equation}
Where in both cases $\alpha, \gamma<1\vee \id_{AB}\otimes \tau_C \gg \rho_{ABC}$.
\end{theorem}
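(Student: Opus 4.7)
The plan is to reduce the tripartite inequality to a configuration where Lemma~\ref{genLogConv} applies. A direct extension of the bipartite proof of Theorem~\ref{DivFormIntLem} fails because the natural weights $\id_A \otimes \sigma_{BC}$ (arising from the optimisation that defines $H^\uparrow_\beta(A|BC)$) and $\id_{AB} \otimes \tau_C$ both act non-trivially on $C$ and generically do not commute, violating the commutation hypothesis of Lemma~\ref{genLogConv}. To circumvent this, I would first purify $\rho_{ABC}$ to a pure state $\ket{\psi}_{ABCD}$ on an enlarged system and appeal to the conditional entropy duality (Proposition~\ref{condDual}) to rewrite $H^\uparrow_\beta(A|BC)_\rho = -H^\uparrow_{\hat\beta}(A|D)_\psi$. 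The corresponding optimisation now runs over $\sigma_D \in \Den(D)$, and since $\sigma_D$ and $\tau_C$ live on disjoint tensor factors they commute automatically.

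Next, I would introduce an operator $M \in \Lin(D, ABC)$ arising from the Schmidt decomposition of $\ket{\psi}_{ABCD}$, so that $M^\dagger M = \rho_D$ and $MM^\dagger = \rho_{ABC}$, and express each of the three entropic quantities as a weighted Schatten norm built from $M$ via Lemmata~\ref{QuantEq} and~\ref{DivComp}. Proposition~\ref{traceFree} provides the technical flexibility to shift operators between $D$ and the complementary subsystems inside a Schatten $2$-norm, as needed to line up the endpoints. The target is a single interpolating map of the form $F(z) = \Gamma_{\id, \sigma_D}^{g_1(z)} \, \Gamma_{\id, \tau_C}^{g_2(z)}(M)$ with affine $g_1$ and $g_2$, whose values at the three points $z \in \{0, \theta, 1\}$ recover the three divergences $D_{\hat\beta}(\rho_{AD}\|\id_A \otimes \sigma_D)$, $D_\alpha(\rho_{ABC}\|\id_{AB} \otimes \tau_C)$, and $D_\gamma(\rho_{BC}\|\id_B \otimes \tau_C)$, respectively. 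The interpolation constraint $\frac{1}{q_\theta} = \frac{1-\theta}{q_0} + \frac{\theta}{q_1}$ from Lemma~\ref{genLogConv} then forces the stated relation $\alpha' = \beta' + \gamma'$, exactly as in the motivating calculation preceding Theorem~\ref{DivFormIntLem}.

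Applying Lemma~\ref{genLogConv} yields the desired log-convexity estimate for a fixed $\sigma_D$. Taking the infimum over $\sigma_D \in \Den(D)$ then converts this into a bound involving $H^\uparrow_{\hat\beta}(A|D)_\psi = -H^\uparrow_\beta(A|BC)_\rho$, by the same $\varepsilon$-approximation step used in Eqs.~\eqref{optstart}--\eqref{optend}. The two cases $\frac{1}{\beta} + \frac{1}{\gamma} \lessgtr 2$, and the direction of the resulting inequality, are handled by choosing the interpolation parameter $\theta$ in each of the three possible ranges $\frac{\gamma'}{\alpha'}$, $\frac{\alpha'}{\gamma'}$, or $\frac{\alpha'}{\beta'}$ lying in $(0,1)$ and carefully tracking signs, mirroring the three-case split in the proof of Theorem~\ref{DivFormIntLem}. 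Finally, the extension from $\tau_C \in \PosS(C)$ to $\tau_C \in \Pos(C)$ follows from Proposition~\ref{continuity}, with the support hypothesis $\id_{AB} \otimes \tau_C \gg \rho_{ABC}$ preventing trivial divergences.

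The hard part is the first step: producing a purification-plus-duality reformulation that simultaneously enforces commutation of the weights and matches all three endpoints of a single interpolating $F$ to the intended divergences. It is this manoeuvre that introduces $\hat\beta$ in place of $\beta$ in the intermediate calculation, and in turn explains why the hypothesis $\beta > \tfrac{1}{2}$ in the theorem is strict (whereas the weaker bound $\beta \ge \tfrac{1}{2}$ suffices in the bipartite counterpart Theorem~\ref{Bchain}); strict positivity is exactly what is needed to keep $\hat\beta$ finite and well-defined, and hence to make the dualisation step meaningful.
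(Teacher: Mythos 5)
Your proposal follows essentially the same route as the paper's own proof: purify to a system $D$, use Proposition~\ref{condDual} to trade $H^\uparrow_\beta(A|BC)$ for $-H^\uparrow_{\hat\beta}(A|D)$ so that the two weights ($\sigma_D$ and $\tau_C$) act on disjoint factors (indeed on opposite sides of the Schmidt-decomposition operator $M$), interpolate a doubly-weighted affine-exponent function via the three-line theorem (the paper packages this as Lemma~\ref{genChainDual}), shift subsystems with Proposition~\ref{traceFree}, optimise over $\sigma_D$, split into the three $\theta$-ranges for the two inequality directions, and extend to $\tau_C\in\Pos(C)$ by Proposition~\ref{continuity}. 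Apart from minor bookkeeping differences (the paper takes a supremum over $\sigma_D$ on both sides via Lemma~\ref{newLem12} rather than an $\varepsilon$-infimum argument, and the weights sit on opposite sides of $M$ rather than commuting on the same side), this is the paper's argument.
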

\addtocounter{theorem}{-1}
\endgroup
To demonstrate the above result we first derive a specific form of Theorem~\ref{weighted3lines}:
\begin{lemma}\label{genChainDual}
	We define the complex strip $S:=\{ z\in \C \, :\, 0\leq \Re(z) \leq 1\}$. Let $F:S\rightarrow \Lin(A,B)$ be a function of the form
	\begin{equation}
		F(z) = \Gamma_{\sigma_{B},\id_{A}}^{(1-z)\left(\frac{1}{s_0} - \frac{1}{q_0} \right) + z\left(\frac{1}{s_1} - \frac{1}{q_1}\right)}\left(\Gamma_{\id_B,\tau_A}^{(1-z)\left(\frac{1}{q_0} - \frac{1}{p_0} \right) + z\left(\frac{1}{q_1} - \frac{1}{p_1}\right)}(M)\right),
	\end{equation}
	where $\sigma_{B}\in\PosS(B), \tau_A \in \PosS(A)$ and $M\in \Lin(A,B)$.
	
	Denote
	\begin{equation}
		\mathsf{M}_k = \sup_{t\in \R}\|F(k+\im t)\|_{2s_k}.
	\end{equation}
	
	Let $\alpha' = \beta' + \gamma'$ and $\frac{\gamma'}{\alpha'}\in (0,1)$. Then, given $p_\theta, p_1, q_0, q_1 \in \R^+$, $s_\theta, s_0, s_1 \in [1/2, \infty]$ such that
	\begin{equation}
		\frac{\alpha'}{p_\theta} = \frac{\beta'}{p_0} + \frac{\gamma'}{p_1},\quad\frac{\alpha'}{q_\theta} = \frac{\beta'}{q_0} + \frac{\gamma'}{q_1},\quad \frac{\alpha'}{s_\theta} = \frac{\beta'}{s_0} +\frac{\gamma'}{s_1},
	\end{equation}
	we have
	\begin{equation}\label{genChainEqDual}
		\left\|\Gamma_{\sigma_{B}, \id_A,}^{\frac{1}{s_\theta} - \frac{1}{q_\theta} }\left(\Gamma_{\id_B,\tau_A}^{\frac{1}{q_\theta} - \frac{1}{p_\theta}}(M)\right)\right\|_{2s_\theta}^{\alpha'} \begin{cases}
			\leq \mathsf{M}_0^{\beta'}\mathsf{M}_1^{\gamma'}\quad & \text{if } \alpha' > 0 \\
			\geq \mathsf{M}_0^{\beta'}\mathsf{M}_1^{\gamma'}\quad & \text{if } \alpha' < 0
		\end{cases}.
	\end{equation}	
\end{lemma}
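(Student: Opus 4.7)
The plan is to apply Theorem~\ref{weighted3lines} directly to $F$ with trivial weights ($\sigma = \id$, $\tau = \id$) and boundary indices $p_0^{TL} = 2s_0$, $p_1^{TL} = 2s_1$, $p_\theta^{TL} = 2s_\theta$. The natural interpolation parameter is $\theta = \gamma'/\alpha'$, which lies in $(0,1)$ by hypothesis, and forces $1-\theta = \beta'/\alpha' \in (0,1)$ via $\alpha' = \beta' + \gamma'$. The relation $\frac{\alpha'}{s_\theta} = \frac{\beta'}{s_0} + \frac{\gamma'}{s_1}$ then rearranges to $\frac{1-\theta}{2s_0} + \frac{\theta}{2s_1} = \frac{1}{2s_\theta}$, which is exactly the interpolation identity required by the theorem.

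Next, I would verify that $F(\theta)$ equals the left-hand side of Eq.~\eqref{genChainEqDual}. Writing $g(z) = (1-z)(\frac{1}{s_0}-\frac{1}{q_0}) + z(\frac{1}{s_1}-\frac{1}{q_1})$ and $h(z) = (1-z)(\frac{1}{q_0}-\frac{1}{p_0}) + z(\frac{1}{q_1}-\frac{1}{p_1})$, a direct computation using the analogous relations for $q_\theta$ and $p_\theta$ gives
\begin{equation}
g(\theta) = \tfrac{1}{\alpha'}\bigl(\tfrac{\beta'}{s_0}+\tfrac{\gamma'}{s_1}\bigr) - \tfrac{1}{\alpha'}\bigl(\tfrac{\beta'}{q_0}+\tfrac{\gamma'}{q_1}\bigr) = \tfrac{1}{s_\theta} - \tfrac{1}{q_\theta},
\end{equation}
and similarly $h(\theta) = \tfrac{1}{q_\theta} - \tfrac{1}{p_\theta}$. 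Since $F(z) = \sigma_B^{g(z)/2} M \tau_A^{h(z)/2}$, this yields exactly the expression inside the norm in Eq.~\eqref{genChainEqDual}.

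I would then check the hypotheses of Theorem~\ref{weighted3lines}. Holomorphicity and boundedness of $F$ on $S$ follow because $\sigma_B \in \PosS(B)$ and $\tau_A \in \PosS(A)$ are strictly positive, so the complex powers $z \mapsto \sigma_B^{g(z)/2}$ and $z \mapsto \tau_A^{h(z)/2}$ are holomorphic with operator norms bounded uniformly on the finite strip $S$. The three-line theorem (with trivial weights) then delivers
\begin{equation}
\|F(\theta)\|_{2s_\theta} \leq \mathsf{M}_0^{1-\theta}\, \mathsf{M}_1^{\theta} = \mathsf{M}_0^{\beta'/\alpha'}\, \mathsf{M}_1^{\gamma'/\alpha'}.
\end{equation}

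Finally, I would raise both sides to the $\alpha'$-power, which preserves the inequality when $\alpha' > 0$ and reverses it when $\alpha' < 0$, producing the two cases in Eq.~\eqref{genChainEqDual}. There is no substantial obstacle here: the only subtlety is bookkeeping to confirm that the affine exponents $g$ and $h$ have been chosen precisely so that the Hadamard-type interpolation is consistent with the three simultaneous harmonic-mean relations for $p_\theta, q_\theta, s_\theta$, and that the sign of $\alpha'$ is tracked carefully when exponentiating. Everything else is a direct consequence of Theorem~\ref{weighted3lines}.
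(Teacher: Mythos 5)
Your proposal is correct and follows essentially the same route as the paper's own proof: choose $\theta = \gamma'/\alpha'$ (so $1-\theta = \beta'/\alpha'$), apply Theorem~\ref{weighted3lines} with the Schatten norms $\|\cdot\|_{2s_k}$, evaluate $F(\theta)$, and exponentiate by $\alpha'$, tracking its sign. The explicit verification that $g(\theta)=\frac{1}{s_\theta}-\frac{1}{q_\theta}$ and $h(\theta)=\frac{1}{q_\theta}-\frac{1}{p_\theta}$ is exactly the bookkeeping the paper compresses into ``evaluating $F(\theta)$.''
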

\begin{proof}
	We let $\theta = \frac{\gamma'}{\alpha'} \implies 1-\theta = \frac{\beta'}{\alpha'}$ then by Theorem~\ref{weighted3lines} we have
$
		\left\|F(\theta)\right\|_{2s_\theta}
		\leq \mathsf{M}_0^{\frac{\beta'}{\alpha'}}\mathsf{M}_1^{\frac{\gamma'}{\alpha'}}
$.
	By evaluating $F(\theta)$ and exponentiating by $\alpha'$ on both sides we arrive at Eq.~\eqref{genChainEqDual}.
\end{proof}

\begin{proof}[Proof of Theorem~\ref{chain}]
	Noting $\frac{\alpha'}{\alpha} = \alpha' - 1$ and $\frac{\alpha'}{\hat\alpha} = \alpha' + 1$ We may determine
	\begin{align}
		\frac{\alpha'}{\alpha} =  \beta'+\frac{\gamma'}{\gamma}\quad\text{and}\quad\alpha'=  \frac{\beta'}{\hat\beta}+\frac{\gamma'}{\gamma}.\label{intrel2}
	\end{align}
	Let $M \in \Lin(BC, AD)$, $\sigma_{D} \in \PosS(D)$ and $\tau_C\in\PosS(C)$.	In Lemma~\ref{genChainDual} choose $p_\theta, p_0, p_1, q_0,s_\theta = 1$, $q_\theta = \alpha$, $q_1, s_1 = \gamma$ and $s_0 = \hat\beta$. Then for $\theta = \frac{\gamma'}{\alpha'}$ and $\alpha'>0$ we can write
	\begin{align}
		\left\|\Gamma_{\sigma_D, \id_{BC}}^\frac{1}{\alpha'}\left(\Gamma_{\id_{AD},\tau_C}^\frac{-1}{\alpha'}(M)\right)\right\|_2^{\alpha'} & \leq \sup_{t\in\R}\left\|\Gamma_{\sigma_D,\id_{BC}}^{\frac{1-\im t}{\beta'}}\left(\Gamma_{\id_{AD},\tau_C}^{\frac{-\im t}{\gamma'}}(M)\right)\right\|_{2\hat\beta}^{\beta'}\sup_{t\in \R}\left\|\Gamma_{\sigma_D,\id_{BC}}^{\frac{-\im t}{\beta'} }\left(\Gamma_{\id_{AD},\tau_C}^{ \frac{-(1+\im t)}{\gamma'}}(M)\right)\right\|_{2\gamma}^{\gamma'}.\label{firstItChain}
	\end{align}
	Let $\ket{\varphi} \in \Hil_{ABCD}$ be a pure state with Schmidt decompositions~(see Proposition~\ref{SchmidtD}):
	\begin{equation}
		\ket{\varphi} = \sum_i r_i  \ket{i}_{BC} \otimes\ket{i}_{AD} = \sum_i s_i \ket{i}_{ABC} \otimes\ket{i}_{D}
	\end{equation}
	and $\rho_{ABCD} = \ketbra{\varphi}{\varphi}.$
	
	We choose $M = \sum_i r_i  \ket{i}_{AD}\bra{i}_{BC}$, so by Proposition~\ref{traceFree} the above becomes
	\begin{align}
		\left\|\Gamma_{\sigma_D, \id_{ABC}}^\frac{1}{\alpha'}\left(\Gamma_{\id_{D},\tau_C}^\frac{-1}{\alpha'}\left(\sum_i s_i  \ket{i}_{D}\bra{i}_{ABC}\right)\right)\right\|_2^{\alpha'} & \leq \left\|\Gamma_{\sigma_D,\id_{BC}}^{\frac{1}{\beta'}}\left(M\right)\right\|_{2\hat\beta}^{\beta'}\left\|\Gamma_{\id_{AD},\tau_C}^{ \frac{-1}{\gamma'}}(M)\right\|_{2\gamma}^{\gamma'},
	\end{align}
	where we have also used the fact that $\sigma_D^{\im t}$ and $\tau_C^{\im t}$ are unitary for all $t\in \R$.
	
	Taking the supremum over $\sigma_D\in \Den^*(D)$ on both sides we obtain via Lemma~\ref{newLem12}:
	\begin{align}
		\implies \left\|\Gamma_{\id_{D},\tau_C}^\frac{-1}{\alpha'}\left(\sum_i s_i  \ket{i}_{D}\bra{i}_{ABC}\right)\right\|_{2\alpha}^{\alpha'} & \leq \sup_{\sigma_D \in \Den^*(D)}\left\|\Gamma_{\sigma_D,\id_{BC}}^{\frac{1}{\beta'}}\left(M\right)\right\|_{2\hat\beta}^{\beta'}\left\|\Gamma_{\id_{AD},\tau_C}^{ \frac{-1}{\gamma'}}(M)\right\|_{2\gamma}^{\gamma'}.
	\end{align}
	Further taking the logarithm of both sides and noting the duality of the conditional entropy~(see Proposition~\ref{condDual}) we have by Lemma~\ref{QuantEq}
	\begin{align}
		-H_\alpha(\rho_{ABC}\|\tau_C)         & \leq H_{\hat\beta}^\uparrow(A|D)_{\rho} - H_\gamma(\rho_{BC}\|\tau_C) \\
		\implies H_\alpha(\rho_{ABC}\|\tau_C) & \geq H_\beta^\uparrow(A|BC)_{\rho} + H_\gamma(\rho_{BC}\|\tau_C).
	\end{align}
	If instead $\alpha'<0$, the inequality in Eq.~\eqref{firstItChain} is reversed and we may use a similar optimisation to arrive at
	\begin{align}
		H_\alpha(\rho_{ABC}\|\tau_C) & \leq H_\beta^\uparrow(A|BC)_\rho + H(\rho_{BC}\|\tau_C).
	\end{align}

	Note that we may rewrite Eq.~\eqref{intrel2} in the forms
	\begin{align}
		\frac{\gamma'}{\gamma} = \frac{\alpha'}{\alpha} - \beta' \quad\text{and}\quad\frac{\gamma'}{\gamma} =\alpha'-  \frac{\beta'}{\hat\beta}.
	\end{align}
	Accordingly, if we consider instead $\frac{\alpha'}{\gamma'}\in (0,1)$ and $\gamma'>0$ we may make the following choices in Lemma~\ref{genChainDual}: $p_\theta, p_0, p_1, q_0, s_1 = 1$, $q_\theta, s_\theta = \gamma$, $q_1 = \alpha$ and $s_0 = \hat\beta$. This yields
	\begin{equation}
		\left\|\Gamma_{\id_{AD},\tau_C}^{ \frac{-1}{\gamma'}}(M)\right\|_{2\gamma}^{\gamma'} \leq \sup_{t\in\R}\left\|\Gamma_{\sigma_D,\id_{BC}}^{\frac{1-\im t}{\beta'} + \frac{\im t}{\alpha'}}\left(\Gamma_{\id_{AD},\tau_C}^{\frac{-\im t}{\alpha'}}(M)\right)\right\|_{2\hat\beta}^{-\beta'}\sup_{t\in \R}\left\|\Gamma_{\sigma_D, \id_{BC}}^{\frac{-\im t}{\beta'} + \frac{1+\im t}{\alpha'}}\left(\Gamma_{\id_{AD},\tau_C}^\frac{-(1+\im t)}{\alpha'}(M)\right)\right\|_2^{\alpha'},
	\end{equation}
	which, after rearranging, choosing the correct Schmidt decompositions, optimising and further taking the logarithm to then employ the identities in Lemma~\ref{QuantEq} we obtain
	\begin{equation}
		H_\alpha(\rho_{ABC}\|\tau_C)\leq H_\beta^\uparrow(A|BC)_{\rho} + H(\rho_{BC}\|\tau_C).
	\end{equation}
	Similarly, this inequality is reversed for $\gamma' < 0$.
	
	Finally, we consider $\frac{\alpha'}{\beta'}\in (0,1)$ with $\beta'> 0$ or $\beta'<0$. Choosing $p_\theta, p_0, p_1,q_\theta, s_1 = 1$, $q_0, s_0 = \gamma$, $q_1 = \alpha$ and $s_\theta = \hat\beta$ we again derive, respectively,
	\begin{equation}
		H_\alpha(\rho_{ABC}\|\tau_C)\leq H_\beta^\uparrow(A|BC)_{\rho} + H(\rho_{BC}\|\tau_C)
	\end{equation}
	and its reverse.
	
	The conditions on the R\'enyi parameters can be derived in the same way as for Theorem~\ref{DivFormIntLem} and the extension to positive semi-definite matrices follows from Proposition~\ref{continuity}.
\end{proof}
We can specialise this to result to a bipartite setting.
\begingroup
\def\thetheorem{\ref*{BchainCor}}
\begin{corollary}
	Let $\alpha\beta\gamma  -2\beta\gamma - \alpha + \beta +\gamma = 0 $ with $\alpha\geq 0,\beta>\frac{1}{2}$ and $\gamma\geq\frac{1}{2}$.
	
	For $\rho_{AB}\in \Den(AB)$, if $\frac{1}{\beta} + \frac{1}{\gamma} \leq 2$ then
	\begin{equation}
		H_\alpha(\rho_{AB})\geq H^\uparrow_\beta(A|B)_\rho + H_\gamma(\rho_{B}).\label{ChainGBPCorIL}
	\end{equation}
	Otherwise, if  $\frac{1}{\beta} + \frac{1}{\gamma} \geq 2$ then
	\begin{equation}
		H_\alpha(\rho_{AB})\leq H^\uparrow_\beta(A|B)_\rho + H_\gamma(\rho_{B}).
	\end{equation}
\end{corollary}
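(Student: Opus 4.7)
The plan is to obtain Corollary~\ref{BchainCor} as a direct specialization of Theorem~\ref{chain}, by taking the subsystem $C$ to be trivial. Concretely, I would set $\Hil_C = \C$ and $\tau_C = 1 \in \Pos(C)$, so that the tripartite R\'enyi quantities collapse to their bipartite counterparts:
\begin{align*}
H_\alpha(\rho_{ABC}\|\tau_C) &= H_\alpha(\rho_{AB}),\\
H^\uparrow_\beta(A|BC)_\rho &= H^\uparrow_\beta(A|B)_\rho,\\
H_\gamma(\rho_{BC}\|\tau_C) &= H_\gamma(\rho_B),
\end{align*}
while the support condition $\id_{AB}\otimes \tau_C \gg \rho_{ABC}$ becomes automatic since $\tau_C$ is a strictly positive scalar. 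Substituting these identifications into the two cases of Theorem~\ref{chain} immediately yields Eq.~\eqref{ChainGBPCorIL} together with its reverse, for parameters in the range $\alpha, \gamma \geq \tfrac{1}{2}$, $\beta > \tfrac{1}{2}$.

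The only remaining point is to account for the slightly broader range stated in the corollary, namely $\alpha \geq 0$ rather than $\alpha \geq \tfrac{1}{2}$. For this I would revisit the parameter choices made inside the proof of Theorem~\ref{chain}: the constraint $\alpha \geq \tfrac{1}{2}$ there originates from the Schatten $2\alpha$-norm appearing on the left-hand side of the interpolation estimate from Lemma~\ref{genChainDual}, which in the bipartite specialization corresponds (via Lemma~\ref{QuantEq} with the trivial-$C$ choice) to the marginal quantity $H_\alpha(\rho_{AB})$. Since this marginal R\'enyi entropy is well defined for every $\alpha \in (0,1)\cup(1,\infty)$, and extends to $\alpha = 0$ using the continuity statement of Proposition~\ref{continuity}, the relaxation to $\alpha \geq 0$ should follow, provided every intermediate norm index in Lemma~\ref{genChainDual} still lies in its own admissible range.

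The main obstacle I anticipate is the bookkeeping for this range extension: for each of the three cases $\theta \in \{\gamma'/\alpha',\ \alpha'/\gamma',\ \alpha'/\beta'\}$ employed in the proof of Theorem~\ref{chain}, one must verify that the assigned values of $p_\theta, p_0, p_1, q_0, q_1, s_\theta, s_0, s_1$ remain admissible when $\alpha$ is allowed below $\tfrac{1}{2}$, and that the defining condition $\alpha\beta\gamma - 2\beta\gamma - \alpha + \beta + \gamma = 0$ together with the hypotheses $\gamma \geq \tfrac{1}{2}$ and $\beta > \tfrac{1}{2}$ does not independently force $\alpha$ out of $[0,\tfrac{1}{2})$. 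The behaviour catalogued in Remark~\ref{diffForms} and Table~\ref{sits} should make this check straightforward, with the boundary value $\alpha = 0$ corresponding to $(\beta,\gamma) = (\infty, \tfrac{1}{2})$ and handled by the continuity limit.
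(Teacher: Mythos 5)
Your proposal is correct and follows essentially the same route as the paper, whose proof of Corollary~\ref{BchainCor} is exactly to take $C$ trivial in Theorem~\ref{chain} and then examine the valid parameter ranges inside that proof. Your additional bookkeeping on extending $\alpha$ from $\tfrac{1}{2}$ down to $0$ (with the boundary $\alpha=0$ forcing $\gamma=\tfrac12$, $\beta\to\infty$ under the corollary's hypotheses) is precisely the ``examine the valid ranges'' step the paper leaves implicit.
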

\addtocounter{theorem}{-1}
\endgroup
\begin{proof}
	We simply choose $C$ to be trivial in Theorem~\ref{chain} and examine the valid ranges in its proof.
\end{proof}

\chapter{Generalised R\'enyi divergence uncertainty relations}\label{sec:generalised-renyi-divergence-uncertainty-relations}
We now arrive at the applications which motivate the inequalities of the previous section. These take the form of some refinements, extensions and improvements of the relations given in Section~\ref{sec:entropic-uncertainty-relations}. We collect the Maassen-Uffink-like, bipartite uncertainty relations in the first section of this chapter and the information exclusion relations in the second. Again, the detailed proofs of these and their related results are deferred to the final section.

\section{Bipartite conditional entropy relations}
Here we detail generalised R\'enyi bipartite uncertainty relations, starting with a slightly more general version of Theorem~\ref{Marcos}.

For a summary of the formalism used for measurements and measured states, see Section~\ref{sec:measurement}.
\begin{theorem}\label{GBUR}
	Let $\Map_X\in\CPTP(A,X)$\footnote{Note we take advantage of the notation $\Map_X(\rho_{AB}) = \rho_{XB}$ for brevity and consistency in quantities such as $H^\uparrow_\beta\left(X|B\right)_{\Map_X(\rho_{AB})} = H^\uparrow_\beta\left(X|B\right)_\rho$} and $\Map_Z\in \CPTP(A,Z)$ be two incompatible measurement maps, defined by the ONBs $\mathbb{X}$ of $\Hil_X$ and $\mathbb{Z}$ of $\Hil_Z$.
	
	For $\alpha,\gamma \geq \frac{1}{2}$ and $\beta>\frac{1}{2}$ such that $\alpha\beta\gamma  -2\beta\gamma - \alpha + \beta +\gamma = 0 $ and $\frac{1}{\beta} + \frac{1}{\gamma} \geq 2$, then for all $\rho_{AB}\in\Den(AB)$ and $\tau_B\in \Pos(B)$
	\begin{equation}\label{MUlike}
		H^\uparrow_\beta\left(X|B\right)_\rho + H_\gamma(\Map_Z(\rho_{AB})\|\tau_B) \geq H_\alpha(\rho_{AB}\|\tau_B)+\qmu.
	\end{equation}
\end{theorem}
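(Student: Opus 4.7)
The plan is to apply Theorem \ref{DivFormIntLem} to $\rho_{AB}$ with a carefully chosen $\sigma_A$, combined with the data-processing inequality under $\Map_X$ and the key Maassen--Uffink operator bound $\Map_X(\sigma_A) \leq c\,\id_X$, where $c = \max_{x,z}|\braket{x}{z}|^2$ so that $\qmu = -\log c$. This mirrors the strategy used for Theorem \ref{Marcos}, but leverages the sharper bipartite divergence inequality available through the interpolation framework of Chapter \ref{sec:renyi-entropy-divergence-inequalities}.

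First I would reformulate the desired inequality as a R\'enyi divergence comparison using Lemma \ref{QuantEq}. Theorem \ref{DivFormIntLem}, applied in the $\frac{1}{\beta} + \frac{1}{\gamma} \geq 2$ branch (which is compatible with the stated parameter condition $\alpha' = \beta' + \gamma'$), yields, for any $\sigma_A \in \PosS(A)$,
\begin{equation*}
H_\alpha(\rho_{AB}\|\tau_B) \leq -D_\beta(\rho_{AB}\|\sigma_A \otimes \tau_B) - \log\tr\bigl(\rho_A \sigma_A^{1/\gamma'}\bigr)^{\gamma'}.
\end{equation*}
Specialising to $\sigma_A = \sum_z \mu_z \ket{z}\!\bra{z}$ diagonal in the $\mathbb{Z}$ basis (with $(\mu_z)$ a probability vector to be chosen), the data-processing inequality under $\Map_X$ (Proposition \ref{DPI}) together with the pointwise bound $\sum_z \mu_z|\braket{x}{z}|^2 \leq c$ — which gives $\Map_X(\sigma_A) \leq c\,\id_X$ — and the scaling identity $D_\beta(\rho\|\lambda\sigma) = D_\beta(\rho\|\sigma) - \log\lambda$ combine to produce
\begin{equation*}
-D_\beta(\rho_{AB}\|\sigma_A \otimes \tau_B) \leq D_\beta(\rho_{XB}\|\id_X \otimes \tau_B)^{-1} + \log c \leq H^\uparrow_\beta(X|B)_\rho - \qmu,
\end{equation*}
after further minimising the inner divergence over all $\sigma_B' \in \Den(B)$.

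The remaining task is to choose the $(\mu_z)$ so that $-\log\tr(\rho_A \sigma_A^{1/\gamma'})^{\gamma'} \leq H_\gamma(\Map_Z(\rho_{AB})\|\tau_B)$. A H\"older-type argument with conjugate exponents $(\gamma,\gamma')$ should determine the optimal choice — roughly, $\mu_z \propto \|\tau_B^{(1-\gamma)/(2\gamma)} \omega_z \tau_B^{(1-\gamma)/(2\gamma)}\|_\gamma^{\gamma/(\gamma-1)}$ with $\omega_z = \bra{z}\rho_{AB}\ket{z}$ and an appropriate normalisation — so that the residual trace term reproduces $D_\gamma(\rho_{ZB}\|\id_Z \otimes \tau_B)$ exactly. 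The principal obstacle is verifying that this matching holds with equality rather than slack: since $\sigma_A$ lives only on $A$ whilst $H_\gamma(\rho_{ZB}\|\tau_B)$ depends non-trivially on $\tau_B$ through the block structure $\omega_z$, the optimisation must thread the $\tau_B$-dependence through the weights $\mu_z$, and one must also check the edge cases as $\gamma \to 1$, $\gamma \to \infty$, and when $\tau_B$ lacks full support. Should a direct diagonal choice leave slack, I would instead return to the three-line theorem (Theorem \ref{weighted3lines}) and interpolate a bespoke operator-valued function $F(z)$ whose boundary values at $z=0,1$ encode the $X$- and $Z$-measured statistics respectively and whose intermediate value at $z=\theta$ produces $H_\alpha(\rho_{AB}\|\tau_B)$, extracting the Maassen--Uffink overlap constant $c$ directly from the operator structure at the endpoints.
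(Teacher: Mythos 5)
Your plan for extracting $\qmu$ (data processing under $\Map_X$ plus the overlap bound $\Map_X(\sigma_A)\leq c\,\id_X$ for a $\mathbb{Z}$-diagonal $\sigma_A$) is sound, but the second half of the split cannot be made to work, and you have correctly identified the soft spot without realising it is fatal. The correction term in Theorem~\ref{DivFormIntLem}, $\log\bigl(\tr\rho_A\sigma_A^{1/\gamma'}\bigr)^{\gamma'}$, is a functional of $\rho_A$ and $\sigma_A$ alone; once $\sigma_A$ is taken $\mathbb{Z}$-diagonal it only sees the outcome distribution of the $Z$-measurement. By Lemma~\ref{QuantEq} (Eq.~\eqref{EtoD}, via Lemma~\ref{newLem12}) the best value you can reach by optimising the weights $\mu_z$ is exactly $H_\gamma(\rho_Z)$, i.e.\ $-\log(\tr\rho_A\sigma_A^{1/\gamma'})^{\gamma'}\geq H_\gamma(\rho_Z)$ for every admissible choice. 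But data processing (tracing out $B$) gives $H_\gamma(\Map_Z(\rho_{AB})\|\tau_B)\leq H_\gamma(\rho_Z)+\log\tr\tau_B$, with strict inequality whenever $Z$ is correlated with $B$ or $\tau_B$ is scaled down; so the matching you need, namely correction term $\leq H_\gamma(\Map_Z(\rho_{AB})\|\tau_B)$, fails in general (take $Z$ perfectly correlated with a classical $B$ and $\tau_B=\rho_B$, or simply shrink $\tau_B$). No choice of $\mu_z$ can ``thread the $\tau_B$-dependence'' through, because the term has no $\tau_B$-dependence to thread. A secondary problem: replacing $-D_\beta(\rho_{XB}\|\id_X\otimes\tau_B)$ by $H^\uparrow_\beta(X|B)_\rho$ needs $\tr\tau_B\leq 1$, whereas the theorem allows arbitrary $\tau_B\in\Pos(B)$; the statement is only scale-invariant because $\tau_B$ appears on \emph{both} sides, which your split breaks. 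Your fallback (``a bespoke $F(z)$'') is not specified enough to count as a proof.

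The missing ingredient is the tripartite chain rule, Theorem~\ref{chain}, applied to a Stinespring dilation of the $Z$-measurement. The paper's proof introduces $\mathcal{S}_Z\in\CPTP(A,ZZ')$ with $\mathcal{S}_Z(\rho_A)=\sum_{z,z'}\bra{z}\rho_A\ket{z'}\ketbra{z}{z'}\otimes\ketbra{z}{z'}$, and first establishes $H^\uparrow_\beta(X|B)_\rho\geq H^\uparrow_\beta(Z|Z'B)_{\mathcal{S}_Z(\rho)}+\qmu$ by exactly the two ingredients you invoke (DPI and the bound $\Map_X(\mathcal{S}_Z^{-1}(\id_Z\otimes\sigma_{Z'B}))\leq c\,\id_X\otimes\sigma_B$, as in the proof of Theorem~\ref{Marcos}). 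It then applies Theorem~\ref{chain} to the dilated state, $H^\uparrow_\beta(Z|Z'B)_{\mathcal{S}_Z(\rho)}\geq H_\alpha(\mathcal{S}_Z(\rho_{AB})\|\tau_B)-H_\gamma(\tr_Z\mathcal{S}_Z(\rho_{AB})\|\tau_B)$, which is precisely where the same $\tau_B$ enters both the $\alpha$- and $\gamma$-terms while the $X|B$ term stays optimised; invariance under the local isometry and the equality of the $ZB$ and $Z'B$ marginals then yield Eq.~\eqref{MUlike}. A purely bipartite inequality on $AB$, with the $\gamma$-weight carried by a state on $A$, cannot substitute for this step.
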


We now give R\'enyi extensions of the improved uncertainty relations in~\cite{coles14}. These results also constitute versions of the above theorem with an improved bound which is R\'enyi order dependent.  We first have the state-dependent version:
\begin{theorem}\label{SDGBUR}
	Let $\Map_X\in\CPTP(A,X)$ and $\Map_Z\in \CPTP(A,Z)$ be two incompatible measurement maps, defined by the ONBs $\mathbb{X}$ of $\Hil_X$ and $\mathbb{Z}$ of $\Hil_Z$.
	
	For $\alpha,\gamma\geq\frac{1}{2},\beta>\frac{1}{2},\delta\in \R$ if there exists a $\mu\geq\frac{1}{2}$ such that
	\begin{equation}
		\frac{\alpha-\gamma}{\alpha\gamma-2\gamma + 1} = \frac{\beta-\delta}{\beta\delta-2\delta +1}  = \mu \text{ and }\frac{1}{\delta}\leq2-\frac{1}{\mu}\leq \frac{1}{\gamma},\label{MUconstraints}
	\end{equation}
	then for all $\rho_{AB}\in\Den(AB)$ and $\tau_B\in \Pos(B)$
	\begin{equation}\label{IMUlike}
		H^\uparrow_\beta(X|B)_\rho +H_{\gamma}(\mathcal{M}_Z(\rho_{AB})\|\tau_B)	\geq H_{\alpha}(\rho_{AB}\|\tau_B)  + q_\delta(\rho, \mathbb{X}, \mathbb{Z})
	\end{equation}
	and
	\begin{equation}\label{IMUlike2}
		H_{\gamma}(\mathcal{M}_X(\rho_{AB})\|\tau_B) +H^\uparrow_{\beta}(Z|B)_\rho	\geq H_{\alpha}(\rho_{AB}\|\tau_B)  + q_\delta(\rho, \mathbb{Z}, \mathbb{X}),
	\end{equation}
	where $q_\delta(\rho, \mathbb{X}, \mathbb{Z}) = -\log\left(\tr\rho_{X}\sum_x \max_z(c_{x,z})^\frac{1}{\delta'}\ketbra{x}{x}\right)^{\delta'}$ and $c_{x,z} = \left|\braket{x}{z}\right|^2$.
	
	Moreover, if $\gamma>\frac{1}{2}$ and there exists a $\tilde \mu\geq \frac{1}{2}$ such that
	\begin{equation}
		\frac{\alpha-\beta}{\alpha\beta-2\beta + 1} = \frac{\gamma-\delta}{\gamma\delta-2\delta +1}  = \tilde\mu \text{ and }\frac{1}{\delta}\leq2-\frac{1}{\tilde\mu}\leq \frac{1}{\beta},\label{MUconstraints2}
	\end{equation}
then
	\begin{equation}\label{MUlike3}
		H^\uparrow_\beta(X|B)_\rho +H^\uparrow_{\gamma}(Z|B)_\rho	\geq H_{\alpha}(\rho_{AB}\|\tau_B)  + q_\delta(\rho),
	\end{equation}
	where $q_\delta(\rho) = \max\{q_\delta(\rho, \mathbb{X}, \mathbb{Z}) , q_\delta(\rho, \mathbb{Z}, \mathbb{X})\}$.
\end{theorem}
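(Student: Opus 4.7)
My plan is to prove Theorem~\ref{SDGBUR} by adapting the Coles--Piani refinement~\citep{coles14} of the bipartite uncertainty relation to the R\'enyi setting via the generalised bipartite divergence inequality Theorem~\ref{DivFormIntLem}. The structural observation is that the two equations in Eq.~\eqref{MUconstraints} rearrange into the defining relations $\gamma'=\alpha'+\mu'$ and $\delta'=\beta'+\mu'$, each of which is precisely the hypothesis of a single application of Theorem~\ref{DivFormIntLem}. The bracketing conditions $\tfrac{1}{\delta}\leq 2-\tfrac{1}{\mu}\leq\tfrac{1}{\gamma}$ then encode the sign/direction hypotheses $\tfrac{1}{\alpha}+\tfrac{1}{\mu}\leq 2$ and $\tfrac{1}{\beta}+\tfrac{1}{\mu}\leq 2$ in a compatible way, so that the two applications chain constructively rather than cancel.

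To establish Eq.~\eqref{IMUlike}, I would first apply Theorem~\ref{DivFormIntLem} to the measured state $\Map_X(\rho_{AB})\in\Den(XB)$ with an auxiliary operator that is $\mathbb{X}$-diagonal and whose spectrum is chosen so that the final logarithmic term reproduces precisely $-q_\delta(\rho,\mathbb{X},\mathbb{Z})$, i.e.\ so that the relevant fractional power of $\sigma$ equals $\sum_x\max_z(c_{x,z})^{1/\delta'}\ketbra{x}{x}$. A second application of Theorem~\ref{DivFormIntLem} with orders linked by $\delta'=\beta'+\mu'$ would then allow trading the resulting $D_\mu$-term for $H^\uparrow_\beta(X|B)_\rho$ via the same dualisation used in the proof of Theorem~\ref{Bchain}. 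Finally, data-processing combined with the identity $\Map_Z=\mathcal{N}\circ\Map_X$ valid on $\mathbb{X}$-diagonal states (where $\mathcal{N}$ is the classical stochastic channel with transition weights $c_{x,z}$) converts the surviving divergence into the desired $H_\gamma(\Map_Z(\rho_{AB})\|\tau_B)$ term.

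The symmetric inequality Eq.~\eqref{IMUlike2} follows by interchanging the roles of $\mathbb{X}$ and $\mathbb{Z}$ throughout. For the combined bound Eq.~\eqref{MUlike3}, the constraints Eq.~\eqref{MUconstraints2} pair $(\alpha,\beta)$ and $(\gamma,\delta)$ via the alternative bridge $\tilde\mu$, which permits invoking the bipartite chain rule Corollary~\ref{BchainCor} in place of the data-processing step to replace $H_\gamma(\Map_Z(\rho_{AB})\|\tau_B)$ by $H^\uparrow_\gamma(Z|B)_\rho$, at the cost of the stronger requirement $\gamma>\tfrac12$. Taking the larger of the two bounds obtained by exchanging $\mathbb{X}$ and $\mathbb{Z}$ on the right-hand side then yields the symmetric quantity $q_\delta(\rho)$.

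The main obstacle, and the step requiring the most careful bookkeeping, will be verifying that the parameter constraints in Eqs.~\eqref{MUconstraints}--\eqref{MUconstraints2} match exactly the sign conditions required by each application of Theorem~\ref{DivFormIntLem}, and that the R\'enyi-order ranges $\alpha,\gamma\geq\tfrac{1}{2}$, $\beta>\tfrac{1}{2}$ together with the tabulation in Remark~\ref{diffForms} rule out degenerate specialisations (such as $\mu=1$ collapsing one of the two applications, or $\beta=\delta$ making the overlap-encoding choice of $\sigma$ inconsistent). A minor but necessary auxiliary check is that the $\mathbb{X}$-diagonal $\sigma$ can be taken strictly positive, so that Theorem~\ref{DivFormIntLem} applies directly before any continuity argument via Proposition~\ref{continuity} is required.
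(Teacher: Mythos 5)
There is a genuine gap in your sketch: it never produces the unmeasured-state term $H_\alpha(\rho_{AB}\|\tau_B)$, and the mechanism you propose for producing the $H_\gamma(\Map_Z(\rho_{AB})\|\tau_B)$ term does not work. All of your inequalities live on the measured state $\Map_X(\rho_{AB})$, but the paper's proof (following Theorem~\ref{GBUR} and the Coles--Piani template) hinges on a Stinespring dilation $\mathcal{S}_Z\in\CPTP(A,ZZ')$ of $\Map_Z$: the tripartite chain rule, Theorem~\ref{chain}, applied to the dilated state with the triple $(\alpha,\mu,\gamma)$ gives $H_\alpha(\mathcal{S}_Z(\rho_{AB})\|\tau_B)-H_\gamma(\tr_Z\mathcal{S}_Z(\rho_{AB})\|\tau_B)\leq H^\uparrow_\mu(Z|Z'B)_{\mathcal{S}_Z(\rho)}$, and isometric invariance plus the equality of the $ZB$ and $Z'B$ marginals is precisely what turns these into $H_\alpha(\rho_{AB}\|\tau_B)$ and $H_\gamma(\Map_Z(\rho_{AB})\|\tau_B)$. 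The overlaps $c_{x,z}$ then enter not through a classical channel relating the two measurements but through data-processing of $D_\mu$ under $\Map_X$ applied to $\mathcal{S}_Z^{-1}(\id_Z\otimes\sigma_{Z'B})$, which is dominated by $\omega_X\otimes\sigma_B$ with $\omega_X=\sum_x\max_z(c_{x,z})\ketbra{x}{x}$; a single application of Theorem~\ref{DivFormIntLem} with the triple $(\beta,\mu,\delta)$ and second argument $\omega_X\otimes\sigma_B$ then yields $H^\uparrow_\beta(X|B)_\rho$ plus the $q_\delta$ term. Your substitute step---``$\Map_Z=\mathcal{N}\circ\Map_X$ on $\mathbb{X}$-diagonal states''---cannot be used here, because $\rho_{AB}$ is not $\mathbb{X}$-diagonal and the factorisation fails on general states (if it held, the two measurements would be compatible and there would be nothing to prove); neither measured state is a post-processing of the other, so no data-processing argument converts a divergence of $\Map_X(\rho_{AB})$ into one of $\Map_Z(\rho_{AB})$, let alone with the correct sign, since in the paper the $H_\gamma$ term comes from the chain rule, not from DPI.

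Two secondary bookkeeping errors would also need repair even after restoring the dilation. First, the constraints~\eqref{MUconstraints} rearrange to $\alpha'=\gamma'+\mu'$ and $\beta'=\delta'+\mu'$, not $\gamma'=\alpha'+\mu'$ and $\delta'=\beta'+\mu'$ as you state. Second, the two direction conditions actually needed are of opposite type, namely $\tfrac{1}{\mu}+\tfrac{1}{\delta}\leq 2$ for the application of Theorem~\ref{DivFormIntLem} and $\tfrac{1}{\mu}+\tfrac{1}{\gamma}\geq 2$ for the application of Theorem~\ref{chain}; it is exactly this opposition that produces the bracket $\tfrac{1}{\delta}\leq 2-\tfrac{1}{\mu}\leq\tfrac{1}{\gamma}$, whereas your conditions $\tfrac{1}{\alpha}+\tfrac{1}{\mu}\leq2$ and $\tfrac{1}{\beta}+\tfrac{1}{\mu}\leq2$ involve the wrong parameters and the wrong (identical) directions. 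For the final claim~\eqref{MUlike3}, no chain rule is needed to pass from $H_\gamma(\Map_Z(\rho_{AB})\|\tau_B)$ to $H^\uparrow_\gamma(Z|B)_\rho$ (this is just the supremum over $\tau_B$); the role of~\eqref{MUconstraints2} is rather to guarantee that the constraint set is satisfied for both orderings of the measurements simultaneously, so that the maximum defining $q_\delta(\rho)$ is legitimate.
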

The derivation of this result could be considered an amalgamation of the structures of the proofs of both Theorems~\ref{GBUR} and~\ref{CPBUR}, possible due to the general comparisons of R\'enyi divergences available from Theorem~\ref{DivFormIntLem}.

We may also establish a weaker, state-independent version by considering the `worst case' state which would achieve the  minimum of the bound.
\begin{theorem}\label{SIGBUR}
	With the same conditions required for Eq.~\eqref{MUlike3}, we have
	\begin{equation}\label{MUlikeSID}
		H^\uparrow_\beta(X|B)_\rho +H^\uparrow_{\gamma}(Z|B)_\rho	\geq H_{\alpha}(\rho_{AB}\|\tau_B)  + q_\delta,
	\end{equation}
	where
	\begin{equation}\label{SIDbound}
		q_\delta = \min_{\sigma \in \Den(AB)}q_\delta(\sigma) = -\min_{0\leq p\leq 1}\log	\lambda_{\max}\left[\Delta_\delta(p)^{\delta'}\right]
	\end{equation}
	with $\Delta_\delta(p) = p\sum_{x} (\max_{z}c_{x,z})^\frac{1}{\delta'}\ketbra{x}{x}+(1-p)\sum_{z} (\max_{x}c_{x,z})^\frac{1}{\delta'}\ketbra{z}{z}$.
\end{theorem}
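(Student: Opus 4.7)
The plan is to prove Theorem~\ref{SIGBUR} in two steps. First, since the definition $q_\delta := \min_{\sigma \in \Den(AB)} q_\delta(\sigma)$ gives a quantity no larger than the state-dependent $q_\delta(\rho)$ appearing in Eq.~\eqref{MUlike3}, substituting $q_\delta$ for $q_\delta(\rho)$ in that inequality immediately yields Eq.~\eqref{MUlikeSID}. The substantive work is to establish the second equality in Eq.~\eqref{SIDbound}, that is, to evaluate $\min_{\sigma} q_\delta(\sigma)$ in closed form via Sion's minimax theorem.

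To this end I would first observe that both $q_\delta(\sigma, \mathbb{X}, \mathbb{Z})$ and $q_\delta(\sigma, \mathbb{Z}, \mathbb{X})$ depend on $\sigma$ only through the marginal $\sigma_A$, so the infimum over $\Den(AB)$ reduces to one over $\Den(A)$. Introducing the diagonal operators $A_{\mathbb{X}} = \sum_x (\max_z c_{x,z})^{1/\delta'} \ketbra{x}{x}$ and $A_{\mathbb{Z}} = \sum_z (\max_x c_{x,z})^{1/\delta'} \ketbra{z}{z}$, so that $\Delta_\delta(p) = p A_{\mathbb{X}} + (1-p) A_{\mathbb{Z}}$, the goal becomes
\begin{equation*}
q_\delta = \min_{\sigma_A \in \Den(A)} \max\bigl\{-\delta' \log \tr(\sigma_A A_{\mathbb{X}}),\; -\delta' \log \tr(\sigma_A A_{\mathbb{Z}})\bigr\}.
\end{equation*}

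In the representative case $\delta' > 0$ the prefactor $-\delta'$ is negative, so pulling it outside both the outer $\max$ and the outer $\min_{\sigma_A}$ converts the problem to evaluating $\max_{\sigma_A} \min\{\tr(\sigma_A A_{\mathbb{X}}), \tr(\sigma_A A_{\mathbb{Z}})\}$. Using the elementary identity $\min\{a,b\} = \min_{p\in[0,1]}(pa+(1-p)b)$ and applying Sion's minimax theorem to the bilinear function $(\sigma_A, p) \mapsto \tr(\sigma_A \Delta_\delta(p))$ on the convex compact sets $\Den(A)$ and $[0,1]$, this reduces to $\min_p \max_{\sigma_A} \tr(\sigma_A \Delta_\delta(p)) = \min_p \lambda_{\max}(\Delta_\delta(p))$. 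Since $\lambda_{\max}(M^{\delta'}) = \lambda_{\max}(M)^{\delta'}$ for positive $M$ and $\delta' > 0$, this rearranges to the stated $-\min_p \log \lambda_{\max}[\Delta_\delta(p)^{\delta'}]$.

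The case $\delta' < 0$ follows the same outline with signs reversed: $-\delta' \log(\cdot)$ is now increasing, the outer $\max$ survives the push, the relevant elementary identity becomes $\max\{a,b\} = \max_p(pa+(1-p)b)$, and Sion's minimax instead swaps a $\min_{\sigma_A} \max_p$ into $\max_p \min_{\sigma_A} = \max_p \lambda_{\min}(\Delta_\delta(p))$. Invoking the negative-power spectral identity $\lambda_{\max}(M^{\delta'}) = \lambda_{\min}(M)^{\delta'}$ for $\delta' < 0$ then recovers the same uniform closed form $-\min_p \log \lambda_{\max}[\Delta_\delta(p)^{\delta'}]$. The principal obstacle throughout is this sign bookkeeping: the two regimes genuinely require different minimax formulations (max--min versus min--max) and different monotonicity facts, and they must be reconciled into a single expression valid for all admissible $\delta$. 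With the signs tracked carefully, Sion's minimax is the only non-elementary ingredient needed.
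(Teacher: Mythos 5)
Your proposal is correct and follows essentially the same route as the paper: rewrite $q_\delta(\sigma)$ as an optimisation over $p$ of the quantity linear in $\sigma_A$, then swap the optimisation over states with the optimisation over $p$ by a minimax theorem (the paper cites von Neumann's, you cite Sion's) and identify $\max_{\sigma_A}\tr\sigma_A\Delta_\delta(p)$ with $\lambda_{\max}[\Delta_\delta(p)]$. Your explicit sign bookkeeping for $\delta'>0$ versus $\delta'<0$ just spells out the monotonicity argument the paper leaves implicit when it pushes the $(\cdot)^{\delta'}$ and the logarithm through the optimisations.
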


It is relatively straight-forward to show that these results generalise those given in Section~\ref{sec:improvements-and-extensions}. Indeed, when all R\'enyi parameters go to $1$ we recover the relations of Coles and Piani. Otherwise, when $\delta \rightarrow 0$ we recover Theorem~\ref{GBUR}.

The relationships between the $\delta$ dependent bounds are summarised in the following proposition:
\begin{proposition}\label{constToOne}
	With $q_\delta(\rho, \mathbb{X}, \mathbb{Z})$ defined as in Theorem~\ref{SDGBUR} we have
	\begin{align}
		\lim_{\delta \rightarrow 1}q_\delta(\rho, \mathbb{X}, \mathbb{Z}) &= q(\rho, \mathbb{X}, \mathbb{Z}),\label{dto1} \\
		\lim_{\delta \rightarrow 0}q_\delta(\rho, \mathbb{X}, \mathbb{Z}) &= \qmu.\label{dto0}
	\end{align}
It follows that
\begin{align}
	\lim_{\delta \rightarrow 1}q_\delta(\rho) &= q(\rho), \\
	\lim_{\delta \rightarrow 1}q_\delta &= q_{\textsc{CP}}.
\end{align}\end{proposition}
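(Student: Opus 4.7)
The plan is to reduce $q_\delta(\rho, \mathbb{X}, \mathbb{Z})$ to a simple scalar expression and then analyse the two regimes $\delta \to 1$ and $\delta \to 0$ as standard limits of $p$-norm-like sums. Since $t \mapsto t^{1/\delta'}$ is monotone increasing on $[0,1]$ for $\delta' > 0$, the maximum over $z$ commutes with the power; writing $c_x := \max_z c_{x,z}$ and letting $p(x) := \bra{x}\rho_A\ket{x}$ denote the probabilities of the $\mathbb{X}$-measurement, this will give
\begin{equation}
    q_\delta(\rho, \mathbb{X}, \mathbb{Z}) = -\delta'\log \sum_x p(x)\, c_x^{1/\delta'}.
\end{equation}
Note that $c_x > 0$ for every $x$ because $\sum_z |\braket{x}{z}|^2 = 1$ whenever $\mathbb{X}$ and $\mathbb{Z}$ are ONBs of the same space, so no factor inside the logarithm is ill-defined.

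For $\delta \to 1$ one has $\delta' \to \infty$, so the first-order Taylor expansion $c_x^{1/\delta'} = 1 + (\log c_x)/\delta' + O(1/\delta'^2)$ combined with $-\delta'\log(1 + y/\delta') \to -y$ will yield $-\sum_x p(x)\log c_x = q(\rho, \mathbb{X}, \mathbb{Z})$, establishing Eq.~\eqref{dto1}. For $\delta \to 0^+$ one has $\delta' \to 0^+$ and $1/\delta' \to \infty$; the sum $\sum_x p(x) c_x^{1/\delta'}$ is then dominated, up to a sub-exponential prefactor in $1/\delta'$, by the term with the largest $c_x$ that carries positive weight. Applying $-\delta' \log$ kills the prefactor and leaves $-\log \max_x c_x = \qmu$, giving Eq.~\eqref{dto0}. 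The edge case in which the $x$ attaining the global maximum has $p(x) = 0$ can be handled by a short perturbation argument in $\rho$, exploiting that the right-hand side is lower semi-continuous in the support of $\rho_X$.

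The compound statement $\lim_{\delta \to 1} q_\delta(\rho) = q(\rho)$ then follows immediately from Eq.~\eqref{dto1} applied to both orderings of the bases, combined with continuity of $\max\{\cdot,\cdot\}$. For the state-independent limit $\lim_{\delta \to 1} q_\delta = q_{\textsc{CP}}$ we must exchange the limit with the minimum over the compact set $\Den(AB)$, and this is the main technical obstacle of the proof. The plan is to establish uniform convergence of $\sigma \mapsto q_\delta(\sigma)$ to $\sigma \mapsto q(\sigma)$ on $\Den(AB)$ as $\delta \to 1$; this should follow because the Taylor remainders in the first paragraph are controlled uniformly in $\sigma$ — the coefficients $c_x, c_z$ are fixed constants determined only by $\mathbb{X}$ and $\mathbb{Z}$, and the probability weights lie in a compact simplex with a uniform lower bound on $\max_x c_x$. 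With uniform convergence in hand, the standard exchange of limit and infimum over a compact domain will finish the proof.
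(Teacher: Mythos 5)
Your reduction of $q_\delta(\rho,\mathbb{X},\mathbb{Z})$ to the scalar expression $-\delta'\log\sum_x p(x)\,c_x^{1/\delta'}$ and the ensuing limit analysis is essentially the paper's own proof: the paper applies l'H\^opital's rule to $f(y)=\log\sum_x\lambda_x c_x^{1/y}$ and $g(y)=1/y$, which is the same calculus as your first-order expansion for $\delta\to1$ and your dominant-term estimate for $\delta\to0$. Your insistence on uniform convergence of $\sigma\mapsto q_\delta(\sigma)$ before exchanging the limit with the minimum over $\Den(AB)$ is more care than the paper takes (it declares the last two statements evident), and it does go through since $1/d\le c_x\le 1$ gives uniform control of the remainder over the simplex.

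One concrete slip to fix: for $\delta\to0^+$ you assert $\delta'\to0^+$, but $\delta'=\delta/(\delta-1)$, so $\delta\to0^+$ gives $\delta'\to0^-$ and $1/\delta'\to-\infty$; in that regime the sum is dominated by the \emph{smallest} $c_x$ on $\supp(\rho_X)$ and the limit is $-\log\min_{x\in\supp(\rho_X)}\max_z c_{x,z}$, not $\qmu$. The value $\qmu$ is obtained from the side $\delta\to0^-$ (equivalently $\delta'\to0^+$, $1/\delta'\to+\infty$), which is the side the paper's l'H\^opital computation implicitly uses and the one relevant in the application (the proof of Theorem~\ref{IER} takes $\delta\to0^-$); so state the side of approach explicitly rather than $\delta\to0^+$. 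Relatedly, your concern about a maximising $x$ with $p(x)=0$ is genuine, but a perturbation of $\rho$ cannot restore the pointwise equality in Eq.~\eqref{dto0} for such a fixed state: there the limit equals $-\log\max_{x\in\supp(\rho_X)}c_x$, which can exceed $\qmu$; this is a caveat (implicit full-support assumption) that the paper's proof also glosses over, so it is not a defect of your route relative to the paper's.
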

The proof is mainly an application of l'H\^opital's rule in the same vein as Proposition~\ref{ato1}.

\section{Information exclusion relations}
We may now adapt the above results to derive a R\'enyi extension of the Hall relation, Theorem~\ref{Hall}, and further generalise to the improved bounds of Coles and Piani.
\begin{theorem}\label{result2}
	Let $\Map_X\in\CPTP(A,X)$ and $\Map_Z\in \CPTP(A,Z)$ be two incompatible measurement maps, defined by the ONBs $\mathbb{X}$ of $\Hil_X$ and $\mathbb{Z}$ of $\Hil_Z$.
	
	For $\alpha,\gamma\geq \frac{1}{2}$ and $\beta>\frac{1}{2}$, satisfying both $\alpha\beta\gamma  -2\beta\gamma - \alpha + \beta +\gamma = 0 $ and $\frac{1}{\beta} + \frac{1}{\gamma} \geq 2$, then for all $\rho_{AB}\in\Den(AB)$ and $\tau_B\in \Pos(B)$
	\begin{equation}\label{MICR}
		I^\downarrow_\beta(X:B)_\rho + I_\gamma(\Map_Z(\rho_{AB})\|\tau_B) \leq \rh - H_\alpha(\rho_{AB}\|\tau_B).
	\end{equation}
\end{theorem}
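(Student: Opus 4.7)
The plan is to mirror Hall's classical derivation of Theorem~\ref{Hall} from a Maassen-Uffink-like bound, using Theorem~\ref{GBUR} in place of the Shannon Maassen-Uffink relation and the R\'enyi decomposition rule (Theorem~\ref{decomp}) in place of the Shannon identity $I(X{:}B)=H(\rho_X)-H(X|B)$. Starting from the uncertainty relation gives a lower bound on a sum of conditional entropies; combining it with the decomposition rule applied to each of the two measurement bases will convert those conditional entropies into mutual informations, at the price of a marginal-entropy contribution that I can bound by $\log d$.

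Concretely, the first step is to invoke Theorem~\ref{GBUR}, which for the stated parameters yields
\[
H^\uparrow_\beta(X|B)_\rho + H_\gamma(\Map_Z(\rho_{AB})\|\tau_B) \;\geq\; H_\alpha(\rho_{AB}\|\tau_B) + \qmu.
\]
For the second step I would use Theorem~\ref{decomp} in the degenerate regime identified in Remark~\ref{diffForms}: taking the decomposition-rule parameter $\gamma_{\mathrm{dec}} \to 0^+$ forces $\alpha_{\mathrm{dec}} = \beta_{\mathrm{dec}}$ and automatically satisfies the condition $\tfrac{1}{\beta_{\mathrm{dec}}} + \tfrac{1}{\gamma_{\mathrm{dec}}} \geq 2$. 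Applied to $\rho_{XB}$ with $\beta_{\mathrm{dec}} = \beta$ (relabeling the systems of Theorem~\ref{decomp} so that $\tau_B$ sits on the side not being optimised), the rule becomes $I_\beta(\rho_{XB}\|\tau_B) + H_\beta(\rho_{XB}\|\tau_B) \leq H_0(\rho_X) \leq \log d$ for every $\tau_B \in \Pos(B)$; taking the infimum over $\tau_B$ collapses the left-hand side to
\[
I^\downarrow_\beta(X{:}B)_\rho + H^\uparrow_\beta(X|B)_\rho \;\leq\; \log d.
\]
An identical argument applied to the measured state $\Map_Z(\rho_{AB})$, with $\beta_{\mathrm{dec}} = \gamma$, delivers, for the specific $\tau_B$ at hand,
\[
I_\gamma(\Map_Z(\rho_{AB})\|\tau_B) + H_\gamma(\Map_Z(\rho_{AB})\|\tau_B) \;\leq\; \log d.
\]

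Summing these two decomposition bounds and substituting the uncertainty relation to eliminate $H^\uparrow_\beta(X|B)_\rho + H_\gamma(\Map_Z(\rho_{AB})\|\tau_B)$ on the right-hand side then gives
\[
I^\downarrow_\beta(X{:}B)_\rho + I_\gamma(\Map_Z(\rho_{AB})\|\tau_B) \;\leq\; 2\log d - \qmu - H_\alpha(\rho_{AB}\|\tau_B),
\]
after which the identity $2\log d - \qmu = \log d^2 + \log c = \log(d^2 c) = \rh$ closes the argument. The whole proof rides on the observation that the cubic constraint of Theorem~\ref{decomp} accommodates the endpoint $\gamma_{\mathrm{dec}} = 0$ with $\alpha_{\mathrm{dec}} = \beta_{\mathrm{dec}}$ while preserving $\tfrac{1}{\beta_{\mathrm{dec}}} + \tfrac{1}{\gamma_{\mathrm{dec}}} \geq 2$; this is the only step that is not pure bookkeeping, and it lies inside the parameter range already stated in Theorem~\ref{decomp}. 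I therefore anticipate no genuine obstacle, only the care required to match the conditioning systems and the $\inf$/$\sup$ ordering when transferring Theorem~\ref{decomp} from its $(A,B)$-statement to the $(X,B)$ and $(Z,B)$ settings, and to verify that both $\beta > \tfrac12$ and $\gamma \geq \tfrac12$ suffice to legitimise the two instances of the decomposition rule.
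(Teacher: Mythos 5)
Your proposal is correct and is essentially the paper's own argument: the paper likewise starts from Theorem~\ref{GBUR}, applies the reverse decomposition rule of Theorem~\ref{decomp} to each of $H^\uparrow_\beta(X|B)_\rho$ and $H_\gamma(\Map_Z(\rho_{AB})\|\tau_B)$, bounds the resulting marginal R\'enyi entropies by $\log d$, and recovers $\rh = 2\log d - \qmu$; the only cosmetic difference is that the paper carries general auxiliary orders $\tilde\beta,\tilde\gamma$ and then sets them to $0$ (forcing $\bar\beta=\beta$, $\bar\gamma=\gamma$), whereas you go directly to the degenerate endpoint $\gamma_{\mathrm{dec}}\to 0^+$ with $\alpha_{\mathrm{dec}}=\beta_{\mathrm{dec}}$. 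Your inf/sup handling (choosing $\tau_B$ near-optimal for $H^\uparrow_\beta(X|B)_\rho$ so that $I^\downarrow_\beta(X{:}B)_\rho + H^\uparrow_\beta(X|B)_\rho \leq \log d$) matches what the paper implicitly does as well.
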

The technique employed in the demonstration of this result is almost identical to the technique used for Theorem~\ref{Hall}. In this case however, the relevant comparisons are possible due to the decomposition and chain rules found in Chapter~\ref{sec:renyi-entropy-divergence-inequalities}.
 
Choosing $\alpha \rightarrow\infty$, we have the following corollary which summarises the possible choices of parameters which produce an optimal inequality.
\begin{corollary}\label{res2c}
	Given the same conditions as Theorem~\ref{result2}, for $\frac{1}{2}<\alpha<2$, we have
	\begin{equation}\label{optIER}
		I^\downarrow_\alpha(X:B)_\rho + I^\uparrow_\frac{1}{\alpha}(B\;;\>\!Z)_\rho \leq \rh-H_{\min}(A|B)_\rho.
	\end{equation}
\end{corollary}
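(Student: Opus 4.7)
The plan is to derive the corollary directly by specialising Theorem~\ref{result2} with a judicious choice of the parameter correspondence and then letting the theorem's $\alpha$-parameter tend to infinity, as hinted in the sentence preceding the statement. To avoid notational clashes I will write $(\alpha_T,\beta_T,\gamma_T)$ for the orders appearing in Theorem~\ref{result2} and reserve $\alpha$ for the corollary's order.

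First I would set $\beta_T = \alpha$ and determine $\gamma_T$ from the algebraic constraint $\alpha_T' = \beta_T' + \gamma_T'$, the dual form of $\alpha\beta\gamma - 2\beta\gamma - \alpha + \beta + \gamma = 0$ recorded in Remark~\ref{diffForms}. Sending $\alpha_T \to \infty$ drives $\alpha_T' \to 1$, so $\gamma_T$ must satisfy $\frac{\alpha}{\alpha-1} + \frac{\gamma_T}{\gamma_T-1} = 1$; a direct computation yields $\gamma_T = 1/\alpha$. With this choice the remaining hypotheses of Theorem~\ref{result2} can be checked in turn: $\beta_T = \alpha > 1/2$ is assumed, $\gamma_T = 1/\alpha \geq 1/2$ is equivalent to $\alpha \leq 2$, and the auxiliary condition $1/\beta_T + 1/\gamma_T \geq 2$ reads $\alpha + 1/\alpha \geq 2$, which is automatic by AM--GM with equality precisely at $\alpha = 1$. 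This confirms the stated range $1/2 < \alpha < 2$ (the endpoint $\alpha = 2$ being accessible by the limiting continuity of the underlying divergence).

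Next I would apply Theorem~\ref{result2} with $\tau_B = \rho_B$. The second term on the left-hand side then becomes
\begin{equation*}
I_{1/\alpha}(\Map_Z(\rho_{AB})\|\rho_B) = \inf_{\sigma_Z \in \Den(Z)} D_{1/\alpha}(\rho_{ZB}\|\sigma_Z \otimes \rho_B),
\end{equation*}
which is exactly $I^\uparrow_{1/\alpha}(B\;;\>\!Z)_\rho$ by the definition in Eq.~\eqref{maxmut} after a relabelling of subsystems. On the right-hand side, the continuity of the R\'enyi divergence in its arguments (Proposition~\ref{continuity}) together with its monotonicity in the order (Eq.~\eqref{monot}) allow me to pass to the limit $\alpha_T \to \infty$, yielding $H_\infty(\rho_{AB}\|\rho_B) = H_{\min}(A|B)_\rho$. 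Combining these identifications produces exactly Eq.~\eqref{optIER}.

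The only genuinely delicate step is the parameter bookkeeping: one must check that the single algebraic constraint of Theorem~\ref{result2} is simultaneously compatible with $\alpha_T \to \infty$, $\beta_T = \alpha$ and $\gamma_T = 1/\alpha$, and that the auxiliary inequality $1/\beta + 1/\gamma \geq 2$ holds throughout the claimed range with no further restriction on $\alpha$. Once this is in place, the argument reduces to a substitution, and the word \emph{optimal} in the statement is simply the observation that this choice saturates the relation $\beta_T' + \gamma_T' = 1$ forced by $\alpha_T \to \infty$.
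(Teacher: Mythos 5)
Your proposal is correct and follows essentially the same route as the paper's proof: take the limit $\alpha_T\to\infty$ in the constraint to force $\beta_T\gamma_T=1$ (hence $\gamma_T=1/\alpha$), choose $\tau_B=\rho_B$, and identify the resulting quantities as $I^\uparrow_{1/\alpha}(B\;;\>\!Z)_\rho$ and $H_{\min}(A|B)_\rho$. The extra bookkeeping you supply (the check $\alpha+1/\alpha\geq 2$ and the range $\tfrac12<\alpha\leq 2$ from $\gamma_T\geq\tfrac12$) is exactly what the paper leaves implicit.
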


This brings us to our final main result, the R\'enyi generalisation of Theorem~\ref{CPIER}. Note that, compared to the relations in the previous section, the bounds here are not order-dependent nor state-dependent, rather we find that the bounds coincide with those of the Shannon and von Neumann situations.
\begin{theorem}[Improved R\'enyi information exclusion relations]\label{IER}
	Let $\Map_X\in\CPTP(A,X)$ and $\Map_Z\in \CPTP(A,Z)$ be two incompatible measurement maps, defined by the ONBs $\mathbb{X}$ of $\Hil_X$ and $\mathbb{Z}$ of $\Hil_Z$.
	
	Given $\alpha,\gamma\geq\frac{1}{2}$, $\beta \in \left[\frac{1}{2}, 2\right]$ with
	\begin{equation}
		\label{IERconstraints}\frac{\alpha-\gamma}{\alpha\gamma -2\gamma +1} = \frac{1}{2-\beta}\text{ and }\beta\gamma\leq1,
	\end{equation}
	
	then for all $\rho_{AB}\in\Den(AB)$ and $\tau_B\in \Pos(B)$
	\begin{equation}
		I^\downarrow_\beta(X:B)_\rho + I_\gamma(\mathcal{M}_Z(\rho_{AB})\|\tau_B)  \leq  r(\mathbb{X},\mathbb{Z})-H_{\alpha}(\rho_{AB}\|\tau_B)
	\end{equation}
	and
	\begin{equation}
		I_\gamma(\mathcal{M}_X(\rho_{AB})\|\tau_B) + I^\downarrow_\beta(Z:B)_\rho  \leq  r(\mathbb{Z},\mathbb{X})-H_{\alpha}(\rho_{AB}\|\tau_B).
	\end{equation}
	Moreover, if $\gamma\leq 2-\beta$ and
	\begin{align}
		\label{IERconstraints2}\frac{2\alpha-\alpha\gamma-1}{\alpha-\gamma} = \frac{1}{2-\beta}, \quad
		\frac{2\alpha-\alpha\beta-1}{\alpha-\beta} = \frac{1}{2-\gamma}
	\end{align}
	then
	\begin{equation}\label{IEREqSym}
		I^\downarrow_\beta(X:B)_\rho + I^\downarrow_\gamma(Z:B)_\rho  \leq  \rcp-H^\uparrow_{\alpha}(A|B)_\rho.
	\end{equation}
\end{theorem}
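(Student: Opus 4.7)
The plan is to follow the Coles--Piani strategy of Theorem~\ref{CPIER}, adapted to the R\'enyi setting via the three principal tools developed in this thesis: the decomposition rule (Theorem~\ref{decomp}); the state-dependent improved R\'enyi uncertainty relation (Theorem~\ref{SDGBUR}); and a classical H\"older inequality that plays the role of the Shannon Jensen step $H(X)-q(\rho,\mathbb{X},\mathbb{Z})\leq\log\sum_x\max_z c_{x,z}$.

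For the first asymmetric inequality, I would apply Theorem~\ref{decomp} in its upper-bound direction to $I^\downarrow_\beta(X:B)_\rho$, obtaining a bound of the form $I^\downarrow_\beta(X:B)_\rho\leq H_\delta(\rho_X)-H^\uparrow_{\beta_{\mathrm{MU}}}(X|B)_\rho$ for auxiliary parameters $\delta$ and $\beta_{\mathrm{MU}}$; the hypothesis $\beta\gamma\leq 1$ is exactly what makes the decomposition constraint $\frac{1}{\beta}+\frac{1}{\delta}\geq 2$ achievable for a choice of $\delta$ that will match the $\delta_{\mathrm{MU}}$ of the uncertainty relation to follow. I would then apply Eq.~\eqref{IMUlike} of Theorem~\ref{SDGBUR} with $\gamma_{\mathrm{MU}}=\gamma$, $\alpha_{\mathrm{MU}}=\alpha$ and $\delta_{\mathrm{MU}}=\delta$ to obtain
\begin{equation*}
H^\uparrow_{\beta_{\mathrm{MU}}}(X|B)_\rho+H_\gamma(\mathcal{M}_Z(\rho_{AB})\|\tau_B)\geq H_\alpha(\rho_{AB}\|\tau_B)+q_\delta(\rho,\mathbb{X},\mathbb{Z}).
\end{equation*}
Finally, the direct bound $I_\gamma(\rho_{ZB}\|\tau_B)\leq\log d-H_\gamma(\rho_{ZB}\|\tau_B)$, obtained by choosing $\sigma_Z=\id_Z/d$ in the infimum defining $I_\gamma$ and invoking the scaling identity $D_\gamma(\rho\|c\sigma)=D_\gamma(\rho\|\sigma)-\log c$, controls $I_\gamma$ in terms of $H_\gamma$. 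Adding the three inequalities and cancelling the matching conditional-entropy terms yields
\begin{equation*}
I^\downarrow_\beta(X:B)_\rho+I_\gamma(\mathcal{M}_Z(\rho_{AB})\|\tau_B)\leq\bigl[H_\delta(\rho_X)-q_\delta(\rho,\mathbb{X},\mathbb{Z})\bigr]+\log d-H_\alpha(\rho_{AB}\|\tau_B).
\end{equation*}

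The bracketed difference is bounded by $\log\sum_x\max_z c_{x,z}$ via the classical H\"older inequality
\begin{equation*}
\sum_x p_X(x)(\max_z c_{x,z})^{1/\delta'}\leq\Bigl(\sum_x p_X(x)^\delta\Bigr)^{1/\delta}\Bigl(\sum_x\max_z c_{x,z}\Bigr)^{1/\delta'}
\end{equation*}
applied to the conjugate exponents $\delta,\delta'$; after taking $-\delta'$ times logarithms this rearranges to $H_\delta(\rho_X)-q_\delta(\rho,\mathbb{X},\mathbb{Z})\leq\log\sum_x\max_z c_{x,z}$, and together with the $\log d$ term gives $r(\mathbb{X},\mathbb{Z})=\log(d\sum_x\max_z c_{x,z})$. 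The second asymmetric inequality follows by swapping $\mathbb{X}$ and $\mathbb{Z}$ and invoking Eq.~\eqref{IMUlike2} in place of Eq.~\eqref{IMUlike}. For the symmetric bound~\eqref{IEREqSym}, I would carry out the decomposition step on both $I^\downarrow_\beta(X:B)_\rho$ and $I^\downarrow_\gamma(Z:B)_\rho$ and invoke the two-sided improved relation Eq.~\eqref{MUlike3}, whose bound involves $q_\delta(\rho)=\max\{q_\delta(\rho,\mathbb{X},\mathbb{Z}),q_\delta(\rho,\mathbb{Z},\mathbb{X})\}$; lower-bounding $q_\delta(\rho)$ by each argument of the max in turn and running the H\"older step twice yields the candidate bounds $r(\mathbb{X},\mathbb{Z})$ and $r(\mathbb{Z},\mathbb{X})$, whose minimum is $\rcp$. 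The auxiliary hypothesis $\gamma\leq 2-\beta$ paired with~\eqref{IERconstraints2} is the book-keeping that keeps both applications of Eq.~\eqref{MUlike3} simultaneously admissible.

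The main obstacle will be the parameter matching: verifying that the conditions~\eqref{IERconstraints} and~\eqref{IERconstraints2} stated in the theorem are precisely those generated by chaining the decomposition-rule condition $\alpha\beta\gamma-2\beta\gamma-\alpha+\beta+\gamma=0$ with $\frac{1}{\beta}+\frac{1}{\gamma}\geq 2$, the $\mu$-condition~\eqref{MUconstraints} of Theorem~\ref{SDGBUR}, and the forced identification of the H\"older exponent with the order of the $q_\delta$ term. This system is intricate but essentially linear in the auxiliary parameters after clearing denominators; the substantive analytic ingredient is the H\"older step itself, whose order-independent right-hand side is the reason $r(\mathbb{X},\mathbb{Z})$ and $\rcp$ take the same form as in the Shannon case.
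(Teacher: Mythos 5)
Your overall architecture (decomposition rule, then the improved state-dependent relation of Theorem~\ref{SDGBUR}, then a bound of the form $H-q\leq\log\sum_x\max_z c_{x,z}$ plus a $\log d$ term) is the same as the paper's, and two of your substitutions are unobjectionable: bounding $I_\gamma(\rho_{ZB}\|\tau_B)\leq\log d-H_\gamma(\rho_{ZB}\|\tau_B)$ by choosing $\sigma_Z=\id_Z/d$ is equivalent to the paper's $\tilde\gamma\to0$ decomposition step, and your classical H\"older argument is a legitimate, more elementary route to the key estimate that the paper obtains from Lemma~\ref{constCompLemma}. The genuine gap is in the parameter bookkeeping, and it is not merely deferred detail: you insist that the marginal-entropy order produced by Theorem~\ref{decomp}, the order $\delta$ of the $q_\delta$ term in Theorem~\ref{SDGBUR}, and the H\"older exponent all coincide. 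Chaining the two constraints under that identification, the decomposition condition $\bar\beta\beta\delta-2\beta\delta-\bar\beta+\beta+\delta=0$ gives $\bar\beta=\frac{2\beta\delta-\beta-\delta}{\beta\delta-1}$, and substituting into the $\mu$-condition \eqref{MUconstraints} yields $\mu=\frac{\bar\beta-\delta}{\bar\beta\delta-2\delta+1}=\frac{-\beta(\delta-1)^2}{-(\delta-1)^2}=\beta$, independently of $\delta$. So your route forces $\frac{\alpha-\gamma}{\alpha\gamma-2\gamma+1}=\beta$, which is exactly the constraint of the \emph{unimproved} Theorem~\ref{result2}, not the stated hypothesis \eqref{IERconstraints}; under the theorem's hypothesis $\frac{\alpha-\gamma}{\alpha\gamma-2\gamma+1}=\frac{1}{2-\beta}$ your chain can only be instantiated when $\beta=1$.

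The condition \eqref{IERconstraints} encodes a deliberate \emph{mismatch} of these exponents: the paper takes the marginal-entropy order $\tilde\beta=\frac{1}{2}$ in the decomposition (so $\bar\beta=\frac{1}{2-\beta}$) while sending the $q$-order $\delta\to0$, whence $q_\delta\to\qmu$ by Proposition~\ref{constToOne}, and the analytic ingredient then needed is the mismatched bound $H_{1/2}(\rho_X)-\qmu\leq\log\sum_x\max_z c_{x,z}$, i.e.\ Eq.~\eqref{constCompOpt}. Your matched H\"older inequality can still supply this after the fact — take $\delta\to0$ in $H_\delta(\rho_X)-q_\delta(\rho,\mathbb{X},\mathbb{Z})\leq\log\sum_x\max_z c_{x,z}$ and use $H_{1/2}(\rho_X)\leq H_0(\rho_X)$ — but you must then abandon the identification $\delta_{\mathrm{MU}}=\tilde\beta$ around which your plan is built, and rerun the bookkeeping with $\tilde\beta=\frac{1}{2}$, $\delta\to0$ (and, for the symmetric bound, $\tilde\beta=\tilde\gamma=\frac{1}{2}$, which is what produces \eqref{IERconstraints2} and the optimisation to $H^\uparrow_\alpha(A|B)_\rho$). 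Note also that $\beta\gamma\leq1$ arises from the condition $2-\frac{1}{\mu}\leq\frac{1}{\gamma}$ with $\mu=\frac{1}{2-\beta}$ as $\delta\to0^-$, not from making $\frac{1}{\beta}+\frac{1}{\delta}\geq2$ achievable; with $\tilde\beta=\frac{1}{2}$ that direction condition is automatic.
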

The proof follows the structure of the derivation of the von Neumann result in Theorem~\ref{CPIER}, now achievable with a reduced version of the general comparison in Theorem~\ref{DivFormIntLem}.

Similarly, we find an optimal version of the above by choosing $\alpha \rightarrow \infty$.
\begin{corollary}\label{IIERopt}
	Given $\frac{1}{2}\leq\alpha\leq\frac{3}{2}$, then for all $\rho_{AB}\in\Den(AB)$ and $\tau_B\in \Pos(B)$
	\begin{equation}
		I^\downarrow_\alpha(X:B)_\rho + I_{2-\alpha}(\mathcal{M}_Z(\rho_{AB})\|\tau_B)  \leq  r(\mathbb{X},\mathbb{Z})-H_{\min}(A|B)_\rho.
	\end{equation}
	Specifically,
	\begin{equation}
		I^\downarrow_\frac{1}{2}(X:B)_\rho + I^\downarrow_\frac{3}{2}(Z:B)_\rho  \leq  \rcp-H_{\min}(A|B)_{\rho}.
	\end{equation}
\end{corollary}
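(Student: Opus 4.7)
The plan is to derive Corollary~\ref{IIERopt} as the $\alpha \to \infty$ limit of Theorem~\ref{IER}, applied with the theorem's $\beta$ playing the role of the corollary's parameter. In this limit the R\'enyi quantity $H_{\alpha}(\rho_{AB}\|\tau_B)$ on the right-hand side of the theorem collapses to $H_{\infty}(\rho_{AB}\|\tau_B)$, which, upon choosing $\tau_B$ so as to attain $\sup_{\tau_B} H_{\infty}(\rho_{AB}\|\tau_B)$, becomes $H_{\min}(A|B)_\rho = H^\uparrow_\infty(A|B)_\rho$.

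For the first inequality of the corollary, I would take the first inequality of Theorem~\ref{IER} and send $\alpha \to \infty$. A leading-order expansion of the constraint
\begin{equation*}
\frac{\alpha - \gamma}{\alpha \gamma - 2\gamma + 1} = \frac{1}{2 - \beta}
\end{equation*}
forces $\gamma = 2 - \beta$, so that, identifying $\beta$ with the corollary's parameter (temporarily called $\bar\alpha$), the second term of the theorem's LHS becomes $I_{2-\bar\alpha}(\mathcal{M}_Z(\rho_{AB})\|\tau_B)$. The ranges $\beta \in [1/2,2]$ and $\gamma \geq 1/2$ together restrict $\bar\alpha \in [1/2,3/2]$, and the residual constraint $\beta\gamma = \bar\alpha(2-\bar\alpha) \leq 1$ is automatic since $(\bar\alpha - 1)^2 \geq 0$. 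Relabelling $\bar\alpha$ back to $\alpha$ then yields the first inequality of the corollary, once $H_\infty(\rho_{AB}\|\tau_B)$ has been optimised over $\tau_B$ to produce $H^\uparrow_\infty(A|B)_\rho = H_{\min}(A|B)_\rho$.

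For the specific symmetric statement I would invoke Theorem~\ref{IER} twice with complementary parameter assignments. Its first inequality with $\alpha \to \infty$, $\beta = 1/2$, $\gamma = 3/2$ (so $\gamma = 2-\beta$ and $\beta\gamma = 3/4 \leq 1$), combined with the weakening $I^\downarrow_{3/2}(Z:B)_\rho \leq I_{3/2}(\mathcal{M}_Z(\rho_{AB})\|\tau_B)$ and a subsequent free minimisation of the right-hand side over $\tau_B$, gives
\begin{equation*}
I^\downarrow_{1/2}(X:B)_\rho + I^\downarrow_{3/2}(Z:B)_\rho \leq r(\mathbb{X},\mathbb{Z}) - H_{\min}(A|B)_\rho.
\end{equation*}
The second inequality of Theorem~\ref{IER}, taken at $\alpha \to \infty$ with $\beta = 3/2$, $\gamma = 1/2$ and processed in the same way, bounds the same left-hand side by $r(\mathbb{Z},\mathbb{X}) - H_{\min}(A|B)_\rho$. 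Taking the smaller of the two right-hand sides produces $r_{\textsc{CP}} = \min\{r(\mathbb{X},\mathbb{Z}), r(\mathbb{Z},\mathbb{X})\}$, as desired.

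The main obstacle is the careful handling of the $\alpha \to \infty$ limit of Theorem~\ref{IER}: both the continuity of the R\'enyi quantities in the order parameter (invoking Proposition~\ref{continuity}) and the algebraic reductions of the parameter constraints must be checked. A secondary subtlety is the passage from $H_\infty(\rho_{AB}\|\tau_B)$ to $H_{\min}(A|B)_\rho$. For the symmetric inequality this is clean, because the $I^\downarrow$-replacement removes all $\tau_B$-dependence from the LHS so the RHS can be minimised freely over $\tau_B$; for the first claim the corresponding supremum over $\tau_B$ must be understood as optimising the bound into the stated form.
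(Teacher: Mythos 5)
Your proposal is correct and follows essentially the same route as the paper's proof: both derive the corollary by taking $\alpha\to\infty$ in Theorem~\ref{IER}, under which the constraint in Eq.~\eqref{IERconstraints} collapses to $\gamma=2-\beta$ (giving the range $[\tfrac12,\tfrac32]$ after relabelling), and both obtain the symmetric statement by invoking the relation for both measurement orderings, passing to $I^\downarrow$ via minimisation over $\tau_B$, and taking the minimum of $r(\mathbb{X},\mathbb{Z})$ and $r(\mathbb{Z},\mathbb{X})$ to reach $\rcp$. If anything, your write-up is slightly more explicit than the paper's terse argument, e.g.\ in checking that $\beta\gamma\le 1$ holds automatically and in justifying why the right-hand side may be optimised over $\tau_B$ once the left-hand side is $\tau_B$-independent.
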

\section{Proofs}
We now include the detailed proofs of the above results.
\subsection{Proofs of the bipartite conditional entropy relations}
We first demonstrate the generalised form of the bipartite R\'enyi uncertainty relation.
\begingroup
\def\thetheorem{\ref*{GBUR}}
\begin{theorem}
	Let $\Map_X\in\CPTP(A,X)$ and $\Map_Z\in \CPTP(A,Z)$ be two incompatible measurement maps, defined by the ONBs $\mathbb{X}$ of $\Hil_X$ and $\mathbb{Z}$ of $\Hil_Z$.
	
	For $\alpha,\gamma \geq \frac{1}{2}$ and $\beta>\frac{1}{2}$ such that $\alpha\beta\gamma  -2\beta\gamma - \alpha + \beta +\gamma = 0 $ and $\frac{1}{\beta} + \frac{1}{\gamma} \geq 2$, then for all $\rho_{AB}\in\Den(AB)$ and $\tau_B\in \Pos(B)$
	\begin{equation}\label{MUlikeIL}
		H^\uparrow_\beta\left(X|B\right)_\rho + H_\gamma(\Map_Z(\rho_{AB})\|\tau_B) \geq H_\alpha(\rho_{AB}\|\tau_B)+\qmu.
	\end{equation}
\end{theorem}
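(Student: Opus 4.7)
The plan is to rewrite the desired inequality in terms of sandwiched R\'enyi divergences and then apply the three‑line theorem (Theorem \ref{weighted3lines}) to a carefully engineered operator‑valued function that encodes both measurement bases, with the Maassen--Uffink constant $\qmu = -\log c$ emerging from the uniform bound $\max_{x,z}|\braket{x}{z}|^2 = c$ along the imaginary boundary of the strip.

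First, I would translate the inequality into divergence form. Since $H^\uparrow_\beta(X|B)_\rho = -\inf_{\sigma_B \in \Den(B)} D_\beta(\rho_{XB} \| \id_X \otimes \sigma_B)$, it suffices to exhibit a single $\sigma_B \in \Den^*(B)$ --- a natural candidate being $\sigma_B = \tau_B$ (possibly after an additional optimisation) --- for which
\[
D_\beta(\rho_{XB}\|\id_X \otimes \sigma_B) + \qmu \leq D_\alpha(\rho_{AB}\|\id_A \otimes \tau_B) - D_\gamma(\Map_Z(\rho_{AB})\|\id_A \otimes \tau_B).
\]
By Lemma \ref{QuantEq} each of these divergences is the logarithm of a weighted Schatten norm of a purification $M$ of $\rho_{AB}$, so the task reduces to a multiplicative norm inequality of exactly the shape produced by Theorem \ref{weighted3lines} after exponentiation.

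Second, I would follow the template used for Theorems \ref{DivFormIntLem} and \ref{chain}. The constraint $\alpha\beta\gamma - 2\beta\gamma - \alpha + \beta + \gamma = 0$ rewrites as $\alpha' = \beta' + \gamma'$ (Remark \ref{diffForms}), so setting $\theta = \gamma'/\alpha'$ gives $1-\theta = \beta'/\alpha'$, and this lies in $(0,1)$ precisely in the regime $\frac{1}{\beta} + \frac{1}{\gamma} \geq 2$ needed for the inequality to point in the right direction. I would then construct an operator‑valued function $F$ on the strip $S$ of the schematic form
\[
F(z) = \Gamma_{\sigma_B,\tau_B}^{g(z)}\!\Bigl(\Map_X\!\bigl(\Omega(z)\,M\bigr)\Bigr),
\]
where $\Omega(z)$ inserts the completeness relation $\sum_{z'} \ketbra{z'}{z'} = \id_A$ for the $Z$‑basis with complex‑powered weights, so that $F(\theta)$ reproduces the weighted $\beta$‑norm on the left‑hand side while $F(\im t)$ and $F(1+\im t)$ factor into the norms associated with $H_\alpha(\rho_{AB}\|\tau_B)$ and $H_\gamma(\Map_Z(\rho_{AB})\|\tau_B)$. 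The cross terms $\braket{x}{z'}$ appearing in $F$ along the two boundary lines are uniformly bounded in modulus by $c^{1/2}$, and this is precisely where $\qmu$ enters after Theorem \ref{weighted3lines} is applied and logarithms are taken.

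Proposition \ref{continuity} then extends the result from strictly positive $\tau_B$ to general $\tau_B \in \Pos(B)$. The main obstacle will be the bookkeeping in defining $F(z)$: the two non‑commuting projections $\ketbra{x}{x}$ and $\ketbra{z'}{z'}$ must be interleaved with the fractional powers $\sigma_B^{g(z)}$ and $\tau_B^{g(z)}$ so that the resulting function is simultaneously holomorphic on $S$, attains controllable weighted norms on both boundary lines, and evaluates at $z = \theta$ to the operator whose $2\beta$‑norm realises $D_\beta(\rho_{XB}\|\id_X \otimes \sigma_B)$. This is the same technical difficulty confronted in the proofs of Theorems \ref{DivFormIntLem} and \ref{chain}, and once the function is in place the Maassen--Uffink factor and the stated parameter ranges fall out in exactly the same way.
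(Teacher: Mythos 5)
There is a genuine gap, and it sits exactly where you flag the ``main obstacle'': the interpolation function $F(z)$ that is supposed to carry the whole argument is never constructed, and the analogy you invoke does not supply it. In the proofs of Theorems~\ref{DivFormIntLem} and~\ref{chain} the interpolated object is always $\Gamma$-powers of a \emph{single fixed} operator $M$ (a purification of $\rho_{AB}$), with weights of the form $\sigma_A\otimes\tau_B$ that commute with each other; no measurement map appears, and the three boundary/interior quantities differ only in the exponents of these weights. Here, by contrast, the three divergences you need involve three \emph{different states} --- $\rho_{XB}=\Map_X(\rho_{AB})$, $\rho_{AB}$ itself, and $\Map_Z(\rho_{AB})$ --- and a pinching map $\sum_x\ketbra{x}{x}(\cdot)\ketbra{x}{x}$ is not of the form $\Gamma^{f(z)}_{\sigma,\tau}$, so it cannot be produced by varying a complex power along the strip. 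Your schematic $F(z)=\Gamma^{g(z)}_{\sigma_B,\tau_B}\bigl(\Map_X(\Omega(z)M)\bigr)$ leaves unspecified how $\Omega(z)$ interleaves the non-commuting projectors $\ketbra{x}{x}$ and $\ketbra{z'}{z'}$ so that the function is holomorphic, evaluates at $\theta$ to the operator realising $D_\beta(\rho_{XB}\|\id_X\otimes\sigma_B)$, and factors on the two boundary lines into the $\alpha$- and $\gamma$-norms \emph{with} the overlap constant $c$ split off; asserting that the cross terms are bounded by $c^{1/2}$ is not a proof that the boundary suprema equal $\exp$ of the claimed divergences times a power of $c$. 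A secondary inaccuracy: $\theta=\gamma'/\alpha'\in(0,1)$ does not characterise the regime $\frac{1}{\beta}+\frac{1}{\gamma}\geq 2$; by Lemma~\ref{ana} and Corollary~\ref{dualcond} that regime corresponds to $\alpha\geq\beta$, and within the range $\theta\in(0,1)$ the direction of the final inequality is governed by the sign of $\alpha'$, so the case analysis you sketch would not come out as stated.

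For comparison, the paper does not perform any new interpolation at this point. It takes a Stinespring dilation $\mathcal{S}_Z\in\CPTP(A,ZZ')$ of $\Map_Z$, and the constant $\qmu$ enters through the data-processing inequality (Proposition~\ref{DPI}) combined with the operator inequality $\Map_X\bigl(\mathcal{S}_Z^{-1}(\id_Z\otimes\sigma_{Z'B})\bigr)\leq c\,\id_X\otimes\sigma_B$, which yields $H^\uparrow_\beta(X|B)_\rho\geq H^\uparrow_\beta(Z|Z'B)_{\mathcal{S}_Z(\rho)}+\qmu$; the already-established tripartite chain rule, Theorem~\ref{chain}, is then applied to the dilated state, and invariance of the entropies under the local isometry (together with equality of the $ZB$ and $Z'B$ marginals) gives Eq.~\eqref{MUlikeIL}. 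If you want to salvage your route, you would either have to genuinely build and verify such an $F(z)$ --- which amounts to reproving a Maassen--Uffink-type bound by interpolation from scratch --- or, more economically, replace the bespoke interpolation by the dilation-plus-chain-rule argument, which uses only results already proved in Chapter~\ref{sec:renyi-entropy-divergence-inequalities}. Your opening reduction (it suffices to exhibit one $\sigma_B$ in the supremum defining $H^\uparrow_\beta(X|B)_\rho$) is fine, but it is the only step of the proposal that is actually secured.
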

\addtocounter{theorem}{-1}
\endgroup

Before we treat the proof of Theorem~\ref{GBUR} we first introduce a specific form of the Stinespring dilation~\cite{Stine}.
\begin{definition}[Stinespring dilation]
	A map $\Map \in \CPTP(A, B)$ if and only if there exists an isometry $U\in \Lin(A, BC)$ such that $\Map(\rho) = \tr_C(U\rho U^\dagger) \text{ for all } \rho\in\Den(A)$.
\end{definition}
\begin{proof}[Proof of Theorem~\ref{GBUR}]
	Let $\mathcal{S}_Z\in \CPTP(A, ZZ')$ be a Stinespring dilation of $\Map_Z$ such that
	\begin{equation}
		\mathcal{S}_Z(\rho_A) = \sum_{z, z'}\bra{z}\rho_A\ket{z'}\ketbra{z}{z'}\otimes\ketbra{z}{z'}.
	\end{equation}
	We use the same argument as the proof of~\cite[Theorem~7.6]{mybook}, to arrive at
	\begin{equation}
		H^\uparrow_\beta\left(X|B\right)_\rho\geq H^\uparrow_\beta\left(Z|Z'B\right)_{\mathcal{S}_Z(\rho)} +\qmu.\label{initineq}
	\end{equation}
	The two main components of this argument are the comparisons: for $\beta\geq \frac{1}{2}$,
	\begin{align}
		\label{comp1}H^\uparrow_\beta\left(Z|Z'B\right)_{\mathcal{S}_Z(\rho)}&\leq -\inf_{\sigma_{Z'B}\in \Den(Z'B)}D_\beta\left(\Map_X(\rho_{AB})\|\Map_X\left(\mathcal{S}_Z^{-1}(\id_{Z}\otimes\sigma_{Z'B})\right)\right)\quad\text{and}\\
		\label{comp2}\Map_X\left(\mathcal{S}_Z^{-1}(\id_{Z}\otimes\sigma_{Z'B})\right) &= \sum_{x,z}\left|\braket{x}{z}\right|^2\ketbra{x}{x}\otimes\bra{z'}\sigma_{Z'B}\ket{z'}\leq c\id_X\otimes\sigma_B.
	\end{align}
The first comparison is a result of the data-processing inequality~(see Proposition~\ref{DPI}) and for the second we maximise $\left|\braket{x}{z}\right|$ over $x$ and $z$.

	Substituting Eq.~\eqref{comp2} into Eq.~\eqref{comp1} yields Eq.~\eqref{initineq}.\\
	
	Let $\alpha\beta\gamma  -2\beta\gamma - \alpha + \beta +\gamma = 0 $ with $\alpha,\gamma \geq \frac{1}{2}$, $\beta>\frac{1}{2}$ and $\frac{1}{\beta} + \frac{1}{\gamma} \geq 2$. Then by Theorem~\ref{chain} we can write for all $\tau_B \in \Den(B)$,
	\begin{equation}
	H^\uparrow_\beta(Z|Z'B)_\rho	\geq  H_\alpha(\mathcal{S}_Z(\rho_{AB})\|\tau_B) - H_\gamma(\tr_Z(\mathcal{S}_Z(\rho_{AB}))\|\tau_B).\label{genchain}
	\end{equation}Substituting Eq.~\eqref{genchain} into Eq.~\eqref{initineq} we have
	\begin{equation}
		H^\uparrow_\beta\left(X|B\right)_\rho\geq H_\alpha(\mathcal{S}_Z(\rho_{AB})\|\tau_B) - H_\gamma(\tr_Z(\mathcal{S}_Z(\rho_{AB}))\|\tau_B)+\qmu.
	\end{equation}
	Using the fact that the marginals on $ZB$ and $Z'B$ of the state $\mathcal{S}_Z(\rho_{AB})$ are equivalent  and that the conditional entropies are invariant under local isometries we obtain Eq.~\eqref{MUlikeIL}.
\end{proof}

The following proof Theorem~\ref{SDGBUR} has the same broad strokes as the previous proof, the main difference being a tighter comparison when applying the measurement in $\mathbb{X}$.
\begingroup
\def\thetheorem{\ref*{SDGBUR}}
\begin{theorem}
	Let $\Map_X\in\CPTP(A,X)$ and $\Map_Z\in \CPTP(A,Z)$ be two incompatible measurement maps, defined by the ONBs $\mathbb{X}$ of $\Hil_X$ and $\mathbb{Z}$ of $\Hil_Z$.
	
	For $\alpha,\gamma\geq\frac{1}{2},\beta>\frac{1}{2},\delta\in \R$ if there exists a $\mu\geq\frac{1}{2}$ such that
	\begin{equation}
		\frac{\alpha-\gamma}{\alpha\gamma-2\gamma + 1} = \frac{\beta-\delta}{\beta\delta-2\delta +1}  = \mu \text{ and }\frac{1}{\delta}\leq2-\frac{1}{\mu}\leq \frac{1}{\gamma},\label{MUconstraintsIL}
	\end{equation}
	then for all $\rho_{AB}\in\Den(AB)$ and $\tau_B\in \Pos(B)$
	\begin{equation}\label{IMUlikeIL}
		H^\uparrow_\beta(X|B)_\rho +H_{\gamma}(\mathcal{M}_Z(\rho_{AB})\|\tau_B)	\geq H_{\alpha}(\rho_{AB}\|\tau_B)  + q_\delta(\rho, \mathbb{X}, \mathbb{Z})
	\end{equation}
	and
	\begin{equation}\label{IMUlike2IL}
		H_{\gamma}(\mathcal{M}_X(\rho_{AB})\|\tau_B) +H^\uparrow_{\beta}(Z|B)_\rho	\geq H_{\alpha}(\rho_{AB}\|\tau_B)  + q_\delta(\rho, \mathbb{Z}, \mathbb{X}),
	\end{equation}
	where $q_\delta(\rho, \mathbb{X}, \mathbb{Z}) = -\log\left(\tr\rho_{X}\sum_x \max_z(c_{x,z})^\frac{1}{\delta'}\ketbra{x}{x}\right)^{\delta'}$ and $c_{x,z} = \left|\braket{x}{z}\right|^2$.
	
	Moreover, if $\gamma>\frac{1}{2}$ and there exists a $\tilde \mu\geq \frac{1}{2}$ such that
	\begin{equation}
		\frac{\alpha-\beta}{\alpha\beta-2\beta + 1} = \frac{\gamma-\delta}{\gamma\delta-2\delta +1}  = \tilde\mu \text{ and }\frac{1}{\delta}\leq2-\frac{1}{\tilde\mu}\leq \frac{1}{\beta},\label{MUconstraints2IL}
	\end{equation}
	then
	\begin{equation}\label{MUlike3IL}
		H^\uparrow_\beta(X|B)_\rho +H^\uparrow_{\gamma}(Z|B)_\rho	\geq H_{\alpha}(\rho_{AB}\|\tau_B)  + q_\delta(\rho),
	\end{equation}
	where $q_\delta(\rho) = \max\{q_\delta(\rho, \mathbb{X}, \mathbb{Z}) , q_\delta(\rho, \mathbb{Z}, \mathbb{X})\}$.
\end{theorem}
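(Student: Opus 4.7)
The plan is to mirror the structure of the proof of Theorem~\ref{GBUR} but to replace the crude pinching bound $\qmu$ with a finer state-dependent comparison, in the spirit of Coles--Piani (Theorem~\ref{CPBUR}). The key new ingredient available to us is the general divergence inequality Theorem~\ref{DivFormIntLem}, which accommodates an arbitrary $\sigma_A \in \PosS(A)$ in the second argument of the divergence rather than just $\id_A$. Choosing $\sigma_A$ to encode the state-dependent incompatibility of $\mathbb{X}$ and $\mathbb{Z}$ via $\omega_X := \sum_x (\max_z c_{x,z})\ketbra{x}{x}$ will produce exactly the term $q_\delta(\rho,\mathbb{X},\mathbb{Z})$.

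To prove Eq.~\eqref{IMUlikeIL}, I would first apply Theorem~\ref{DivFormIntLem} to the measured state $\Map_X(\rho_{AB})$ with $\sigma_A = \omega_X$, matching the theorem's three parameters to our $\beta,\mu,\delta$ so that $\beta' = \mu' + \delta'$ (the first half of Eq.~\eqref{MUconstraintsIL}). The $\log$ term on the right-hand side then evaluates to $-q_\delta(\rho,\mathbb{X},\mathbb{Z})$, while the condition $\tfrac{1}{\delta}\leq 2-\tfrac{1}{\mu}$ supplies the required $\leq$-direction. Infimising over $\tau_B \in \Den(B)$ turns the left-hand side into $-H^\uparrow_\beta(X|B)_\rho$. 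To handle the resulting infimum on the right-hand side, let $U:\Hil_A \to \Hil_Z\otimes\Hil_{Z'}$ be the Stinespring isometry of $\Map_Z$ and set $\Phi(\cdot) := \Map_X(U^\dagger \cdot\, U)$. A direct computation gives
\begin{equation}
\Phi(\id_Z \otimes \sigma_{Z'B}) = \sum_{x,z} c_{x,z}\, \ketbra{x}{x}\otimes \bra{z}\sigma_{Z'B}\ket{z} \leq \omega_X \otimes \sigma_B,
\end{equation}
so combining monotonicity of $D_\mu$ in its second argument with the data-processing inequality (Proposition~\ref{DPI}) applied to $\Phi$ yields $\inf_{\sigma_B} D_\mu(\Map_X(\rho_{AB})\|\omega_X\otimes\sigma_B) \leq -H^\uparrow_\mu(Z|Z'B)_{\mathcal{S}_Z(\rho)}$. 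Finally, the chain rule Theorem~\ref{chain} in its reverse ($\leq$) direction, requiring $\tfrac{1}{\mu}+\tfrac{1}{\gamma}\geq 2$ and $\alpha' = \mu'+\gamma'$, produces $H^\uparrow_\mu(Z|Z'B)_{\mathcal{S}_Z(\rho)} \geq H_\alpha(\rho_{AB}\|\tau_B) - H_\gamma(\Map_Z(\rho_{AB})\|\tau_B)$, after using isometric invariance of $H_\alpha$ and $\tr_Z \mathcal{S}_Z(\rho) = \Map_Z(\rho)$ (up to relabelling of the classical register). Chaining the three estimates gives Eq.~\eqref{IMUlikeIL}.

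Eq.~\eqref{IMUlike2IL} follows from the identical argument with the roles of $\mathbb{X}$ and $\mathbb{Z}$ swapped throughout (so $\mathcal{S}_X$ is dilated and $\omega_Z$ plays the role of $\omega_X$). For Eq.~\eqref{MUlike3IL}, applying Eq.~\eqref{IMUlikeIL} and then using $H_\gamma(\Map_Z(\rho_{AB})\|\tau_B) \leq H^\uparrow_\gamma(Z|B)_\rho$ yields the bound with $q_\delta(\rho,\mathbb{X},\mathbb{Z})$; relabelling $\beta \leftrightarrow \gamma$ in Eq.~\eqref{IMUlike2IL} (which transforms the $\mu$-conditions into the $\tilde\mu$-conditions of Eq.~\eqref{MUconstraints2IL} and demands $\gamma > \tfrac{1}{2}$ for the underlying chain-rule step), together with $H_\beta(\Map_X(\rho_{AB})\|\tau_B) \leq H^\uparrow_\beta(X|B)_\rho$, yields the bound with $q_\delta(\rho,\mathbb{Z},\mathbb{X})$. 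Taking the maximum of the two bounds gives the claim.

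The main obstacle will be the careful bookkeeping of the parameter ranges. One must verify that the joint relation $\mu' = \alpha'-\gamma' = \beta'-\delta'$ of Eq.~\eqref{MUconstraintsIL} aligns simultaneously with the parameter relation imposed by Theorem~\ref{DivFormIntLem} (for the $\omega_X$-step) and by Theorem~\ref{chain} (for the chain-rule step), and that the sandwich $\tfrac{1}{\delta}\leq 2-\tfrac{1}{\mu}\leq \tfrac{1}{\gamma}$ is precisely the conjunction of the $\tfrac{1}{\beta_0}+\tfrac{1}{\gamma_0}\leq 2$ hypothesis of Theorem~\ref{DivFormIntLem} and the reverse $\tfrac{1}{\beta_0}+\tfrac{1}{\gamma_0}\geq 2$ hypothesis of Theorem~\ref{chain}, under the chosen parameter matchings. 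Once this bookkeeping is in place, every individual estimate above is a direct invocation of a previously established result.
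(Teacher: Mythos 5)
Your proposal is correct and follows essentially the same route as the paper's own proof: a Stinespring dilation of $\Map_Z$ with the data-processing step through $\Map_X(U^\dagger\cdot U)$ and the dominance comparison $\leq \omega_X\otimes\sigma_B$, the general divergence inequality (Theorem~\ref{DivFormIntLem}) with $\sigma_A=\omega_X$ producing the $q_\delta$ term, the tripartite chain rule (Theorem~\ref{chain}) in its reverse direction, and the same parameter bookkeeping identifying $\mu'=\alpha'-\gamma'=\beta'-\delta'$ with the sandwich condition. The only differences are presentational (order of the steps, and making explicit the swap $\beta\leftrightarrow\gamma$ plus $H_\gamma(\Map_Z(\rho_{AB})\|\tau_B)\leq H^\uparrow_\gamma(Z|B)_\rho$ for the third claim, which the paper leaves implicit).
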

\addtocounter{theorem}{-1}
\endgroup
We use a similar argument to the proof of Theorem~\ref{GBUR}, but following the structure given in the proof of~\cite[Theorem 2]{coles14}.
\begin{proof}
	Let $\mathcal{S}_Z\in \CPTP(A, ZZ')$ be a Stinespring dilation of $\Map_Z$ such that
	\begin{equation}
		\mathcal{S}_Z(\rho_A) = \sum_{z, z'}\bra{z}\rho_A\ket{z'}\ketbra{z}{z'}\otimes\ketbra{z}{z'}.
	\end{equation}
	
	By the data-processing inequality:
	\begin{align}
		H^\uparrow_\mu(Z|Z'B)_{\mathcal{S}_Z(\rho)} & \leq -\inf_{\sigma_{Z'B}\in \Den(Z'B)}D_\mu(\rho_{AB}\|\mathcal{S}_Z^{-1}(\id_Z\otimes \sigma_{Z'B}))                                                         \\
		& \leq-\inf_{\sigma_{Z'B}\in \Den(Z'B)}D_\mu\left(\Map_X(\rho_{AB})\|\Map_X\left(\mathcal{S}_Z^{-1}(\id_{Z}\otimes\sigma_{Z'B})\right)\right).
	\end{align}
	We may compare
	\begin{align}
		\Map_X\left(\mathcal{S}_Z^{-1}(\id_{Z}\otimes\sigma_{Z'B})\right) & = \sum_{x,z}\left|\braket{x}{z}\right|^2\ketbra{x}{x}\otimes\bra{z'}\sigma_{Z'B}\ket{z'} \\
		& \leq \underbrace{\sum_x \max_z(c_{x,z})\ketbra{x}{x}}_{\omega_X}\otimes\sigma_B.
	\end{align}
Therefore,
\begin{equation}
	H^\uparrow_\mu(Z|Z'B)_{\mathcal{S}_Z(\rho)}  \leq \sup_{\sigma_{B}\in \Den(B)}-D_\mu(\mathcal{M}_X(\rho_{AB})\|\omega_X\otimes\sigma_B).\label{firstComp}
\end{equation}
	By Theorem~\ref{DivFormIntLem} we conclude for $\beta\geq \frac{1}{2}, \mu> 0$, $\delta\in \R$ with $\beta\mu\delta -2\mu\delta - \beta +\mu +\delta=0$ and $\frac{1}{\mu} + \frac{1}{\delta} \leq 2$,
	\begin{equation}
		-H_\beta(\mathcal{M}_X(\rho_{AB})\|\sigma_B) \leq D_\mu(\mathcal{M}_X(\rho_{AB})\|\omega_X\otimes\sigma_B)+\log\tr\left(\rho_{X}\omega_X^\frac{1}{\delta'}\right)^{\delta'}.
	\end{equation}
	Substituting this into Eq.~\eqref{firstComp} we obtain
	\begin{equation}
		H^\uparrow_\mu(Z|Z'B)_{\mathcal{S}_Z(\rho)}\leq \sup_{\sigma_{B}\in \Den(B)}H_\beta(\mathcal{M}_X(\rho_{AB})\|\sigma_B)+\log\tr\left(\rho_{X}\omega_X^\frac{1}{\delta'}\right)^{\delta'}.
	\end{equation}
	By Theorem~\ref{chain}: for $\alpha, \mu, \gamma \geq \frac{1}{2}$ such that $\alpha\mu\gamma -2\mu\gamma - \alpha + \mu + \gamma = 0$ and $\frac{1}{\mu} + \frac{1}{\gamma}\geq2$
	\begin{equation}
		H_{\alpha}(\mathcal{S}_Z(\rho_{AB})\|\tau_B)-H_{\gamma}(\tr_Z(\mathcal{S}_Z(\rho_{AB}))\|\tau_B)	\leq H^\uparrow_\mu(Z|Z'B)_{\mathcal{S}_Z(\rho)}
	\end{equation}
	for all $\tau_B\in \Den(B)$. Hence
	\begin{equation}
		H_{\alpha}(\mathcal{S}_Z(\rho_{AB})\|\tau_B) - H_{\gamma}(\tr_Z(\mathcal{S}_Z(\rho_{AB})\|\tau_B)	\leq H^\uparrow_\beta(X|B)_\rho+\log\tr\left(\rho_X\omega_X^\frac{1}{\delta'}\right)^{\delta'}.
	\end{equation}
	As in the proof of Theorem~\ref{GBUR}, the marginals on $ZB$ and $ZB'$ of the state $\mathcal{S}_Z(\rho_{AB})$ are equivalent and the conditional entropies are invariant under local isometries, therefore we obtain Eq.~\eqref{IMUlikeIL}.
	
	For the conditions on the parameters note that we can express $\mu$ in terms of $\beta$ and $\delta$:
	\begin{align}
		\beta\mu\delta -2\mu\delta - \beta +\mu +\delta&=0\\
		\implies \mu(\beta\delta - 2\delta + 1) &= \beta - \delta\\
		\implies \mu = \frac{\beta - \delta}{\beta\delta - 2\delta + 1},
	\end{align}
	and similarly for $\alpha$ and $\gamma$.
	
	Moreover the conditions determining the direction of the inequalities can be combined via $\mu$:
	\begin{align}
		\frac{1}{\mu} + \frac{1}{\delta} &\leq 2\\
		\implies \frac{1}{\delta} &\leq 2-\frac{1}{\mu} \leq \frac{1}{\gamma}.
	\end{align}
\end{proof}
This leads us to the state-independent version:
\begingroup
\def\thetheorem{\ref*{SIGBUR}}
\begin{theorem}
	With the same conditions required for Eq.~\eqref{MUlike3IL}, we have
	\begin{equation}\label{MUlikeSIDIL}
		H^\uparrow_\beta(X|B)_\rho +H^\uparrow_{\gamma}(Z|B)_\rho	\geq H_{\alpha}(\rho_{AB}\|\tau_B)  + q_\delta,
	\end{equation}
	where
	\begin{equation}\label{SIDboundIL}
		q_\delta = \min_{\sigma \in \Den(AB)}q_\delta(\sigma) = -\min_{0\leq p\leq 1}\log	\lambda_{\max}\left[\Delta_\delta(p)^{\delta'}\right]
	\end{equation}
	with $\Delta_\delta(p) = p\sum_{x} (\max_{z}c_{x,z})^\frac{1}{\delta'}\ketbra{x}{x}+(1-p)\sum_{z} (\max_{x}c_{x,z})^\frac{1}{\delta'}\ketbra{z}{z}$.
\end{theorem}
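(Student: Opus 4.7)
The plan is to start from Theorem~\ref{SDGBUR} and weaken the state-dependent bound $q_\delta(\rho)$ to a state-independent one by minimising over all $\sigma\in\Den(AB)$: since the inequality
\begin{equation}
H^\uparrow_\beta(X|B)_\rho + H^\uparrow_\gamma(Z|B)_\rho \geq H_\alpha(\rho_{AB}\|\tau_B) + q_\delta(\rho)
\end{equation}
holds for the specific state $\rho$, replacing $q_\delta(\rho)$ by $q_\delta := \min_{\sigma\in\Den(AB)}q_\delta(\sigma)$ yields a weaker but still valid inequality. All the work then lies in evaluating this minimum and showing it equals $-\min_{0\leq p\leq 1}\log\lambda_{\max}[\Delta_\delta(p)^{\delta'}]$.

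First I would rewrite the state-dependent constant in linear form. Setting $\omega_X := \sum_x (\max_z c_{x,z})^{1/\delta'}\ketbra{x}{x}$ and $\omega_Z := \sum_z (\max_x c_{x,z})^{1/\delta'}\ketbra{z}{z}$, and using that $\Map_X$ does not affect the trace with an operator diagonal in $\mathbb{X}$, we obtain
\begin{equation}
q_\delta(\sigma,\mathbb{X},\mathbb{Z}) = -\delta'\log\tr(\sigma_A \omega_X),\qquad q_\delta(\sigma,\mathbb{Z},\mathbb{X}) = -\delta'\log\tr(\sigma_A \omega_Z).
\end{equation}
Thus $q_\delta(\sigma)$ depends on $\sigma$ only through its marginal $\sigma_A$ and is the max of two functions which are monotone in the linear quantities $\tr(\sigma_A\omega_X)$ and $\tr(\sigma_A\omega_Z)$.

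Next I would handle the case split on the sign of $\delta'$ (equivalently on whether $\delta>1$ or $\delta<1$) by using the respective monotonicity of $t\mapsto -\delta'\log t$ to pass the $\max\{\cdot,\cdot\}$ through the logarithm onto either $\min$ or $\max$ of the traces. Parametrising the inner operation via $p\in[0,1]$ in the usual way, one obtains in each case a bilinear saddle-point problem in $(p,\sigma_A)\in[0,1]\times\Den(A)$ with both objective $p\,\tr(\sigma_A\omega_X)+(1-p)\tr(\sigma_A\omega_Z)=\tr(\sigma_A\Delta_\delta(p))$ linear in each variable and both domains compact and convex. Sion's minimax theorem then allows the order of optimisation to be exchanged, and the remaining optimisation over $\sigma_A\in\Den(A)$ of the linear functional $\tr(\sigma_A\Delta_\delta(p))$ produces $\lambda_{\max}[\Delta_\delta(p)]$ when we maximise and $\lambda_{\min}[\Delta_\delta(p)]$ when we minimise.

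The final step is to verify that the two cases collapse to the single unified expression in Eq.~\eqref{SIDbound}. For $\delta'>0$ one has $\lambda_{\max}[\Delta_\delta(p)^{\delta'}]=\lambda_{\max}[\Delta_\delta(p)]^{\delta'}$, while for $\delta'<0$ the power inverts the order of eigenvalues so that $\lambda_{\max}[\Delta_\delta(p)^{\delta'}]=\lambda_{\min}[\Delta_\delta(p)]^{\delta'}$; combined with the opposing monotonicity of $t\mapsto -\delta'\log t$, both branches produce $-\log\lambda_{\max}[\Delta_\delta(p)^{\delta'}]$, and the residual $\min_p$ or $\max_p$ is uniformly absorbed into $-\min_{0\leq p\leq 1}\log\lambda_{\max}[\Delta_\delta(p)^{\delta'}]$. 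The main subtlety I anticipate is simply bookkeeping this case split cleanly so that the sign of $\delta'$ never produces a spurious swap, and in particular checking that the hypotheses of Sion's theorem (compactness of the simplex and of $\Den(A)$, and joint linearity, hence continuity, of the objective) hold in both regimes.
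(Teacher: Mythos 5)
Your proposal is correct and follows essentially the same route as the paper: rewrite $q_\delta(\sigma)$ as an optimisation over $p\in[0,1]$ of the bilinear quantity $\tr\bigl(\sigma_A\Delta_\delta(p)\bigr)$, swap the optimisations by a minimax theorem (the paper cites von Neumann where you invoke Sion), and evaluate the inner optimisation over $\Den(A)$ as an extremal eigenvalue, yielding $-\min_{0\leq p\leq 1}\log\lambda_{\max}\left[\Delta_\delta(p)^{\delta'}\right]$. Your explicit bookkeeping of the sign of $\delta'$ (applying minimax to the bilinear trace and only then commuting with the monotone map $t\mapsto -\delta'\log t$) is a slightly more careful rendering of the step the paper compresses into ``by the linearity in the arguments,'' but it is the same argument.
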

\addtocounter{theorem}{-1}
\endgroup

\begin{proof}
	We may rewrite
	\begin{align}
		q_\delta(\rho)                       & =\max_{0\leq p\leq 1}\left(-\log\left[\tr\rho_A\Delta_\delta(p)\right]^{\delta'} \right) \\
		\implies \min_{\sigma \in \Den(AB)}q_\delta(\sigma) & =
		-\log\max_{\sigma \in \Den(AB)}\min_{0\leq p\leq 1}\left[\tr\sigma_A\Delta_\delta(p)\right]^{\delta'}.
	\end{align}
	By the linearity in the arguments we may use the minimax theorem~\citep{v.Neumann1928} to swap the optimisations, hence we obtain Eq.~\eqref{SIDboundIL}.
\end{proof}
We conclude this section with the summary of the $\delta$-dependent bounds:
\begingroup
\def\thetheorem{\ref*{constToOne}}
\begin{proposition}
	With $q_\delta(\rho, \mathbb{X}, \mathbb{Z})$ defined as in Theorem~\ref{SDGBUR} we have
	\begin{align}
		\lim_{\delta \rightarrow 1}q_\delta(\rho, \mathbb{X}, \mathbb{Z}) &= q(\rho, \mathbb{X}, \mathbb{Z}),\label{dto1IL} \\
		\lim_{\delta \rightarrow 0}q_\delta(\rho, \mathbb{X}, \mathbb{Z}) &= \qmu.\label{dto0IL}
	\end{align}
	It follows that
	\begin{align}
		\lim_{\delta \rightarrow 1}q_\delta(\rho) &= q(\rho), \\
		\lim_{\delta \rightarrow 1}q_\delta &= q_{\textsc{CP}}.
\end{align}\end{proposition}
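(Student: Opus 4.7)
The plan is to reduce both limits to scalar computations on a single closed-form expression. Because $\rho_X = \sum_x p(x)\ketbra{x}{x}$ is diagonal in the $\mathbb{X}$-basis, the defining expression collapses to
\begin{equation}
q_\delta(\rho,\mathbb{X},\mathbb{Z}) = -\delta'\log\sum_x p(x)\, M_x^{1/\delta'},\qquad M_x := \max_z c_{x,z},\ p(x) := \bra{x}\rho_X\ket{x}.
\end{equation}
Both of the targeted limits produce $0\cdot\infty$ indeterminate forms that are handled by Taylor expansion / l'H\^opital, exactly paralleling the $\alpha\to 1$ argument used in Proposition~\ref{ato1}.

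For Eq.~\eqref{dto1IL}, as $\delta\to 1$ one has $\delta'\to\pm\infty$ and $1/\delta'\to 0$. Expanding $M_x^{1/\delta'} = 1 + (\log M_x)/\delta' + O(\delta'^{-2})$ and then $\log(1+u) = u + O(u^2)$ causes the outer factor of $\delta'$ to cancel the leading $1/\delta'$ inside $\log\sum_x p(x) M_x^{1/\delta'}$, leaving $-\sum_x p(x)\log M_x$, which is by definition $q(\rho,\mathbb{X},\mathbb{Z})$.

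For Eq.~\eqref{dto0IL}, as $\delta\to 0$ the prefactor $\delta'\to 0$ but $1/\delta'\to\pm\infty$, so the sum is now dominated asymptotically by a single index. Taking the limit from the side on which $1/\delta'\to+\infty$ (the side consistent with the parameter range that recovers Theorem~\ref{GBUR} from Theorem~\ref{SDGBUR}), the terms $M_x^{1/\delta'}$ with $M_x<M_{\max}:=\max_x M_x$ are exponentially suppressed, and
\begin{equation}
\sum_x p(x) M_x^{1/\delta'} = M_{\max}^{1/\delta'}\bigl(P_{\max} + o(1)\bigr),\qquad P_{\max} := \sum_{x:\,M_x=M_{\max}} p(x).
\end{equation}
Multiplying by $-\delta'$ after taking the log suppresses the subleading $\log P_{\max}$ term through the vanishing prefactor, leaving $-\log M_{\max} = -\log c = \qmu$.

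The two corollaries follow as formal consequences. For $q_\delta(\rho) = \max\{q_\delta(\rho,\mathbb{X},\mathbb{Z}), q_\delta(\rho,\mathbb{Z},\mathbb{X})\}$ the limit passes through the (finite) maximum directly. For $q_\delta = \min_{\sigma\in\Den(AB)}q_\delta(\sigma)$ the limit must be exchanged with the infimum over the compact set $\Den(AB)$; I expect this step, rather than the pointwise calculations above, to be the main technical obstacle, requiring a uniform-convergence or equicontinuity argument justified by the smooth joint dependence of $q_\delta(\sigma)$ on $\delta$ and $\sigma$ in this finite-dimensional setting, with some care at boundary states whose $\sigma_X$ marginal has deficient support.
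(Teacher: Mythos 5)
Your proposal is correct and follows essentially the same route as the paper's proof: reduce $q_\delta(\rho,\mathbb{X},\mathbb{Z})$ to the scalar expression $-\delta'\log\sum_x p(x)\bigl(\max_z c_{x,z}\bigr)^{1/\delta'}$ and evaluate the two limits, the paper using l'H\^opital where you use a Taylor expansion and a dominant-term argument, and likewise (implicitly) taking the side on which $1/\delta'\to+\infty$ for the $\delta\to 0$ limit. The exchange of limit and minimisation that you flag as the main obstacle is simply declared evident in the paper; in this finite-dimensional setting it is indeed routine, e.g.\ via the representation $q_\delta=-\min_{0\leq p\leq 1}\log\lambda_{\max}\left[\Delta_\delta(p)^{\delta'}\right]$ from Theorem~\ref{SIGBUR}, whose objective depends jointly continuously on $(p,\delta)$ over the compact interval $[0,1]$, so no separate equicontinuity argument is needed.
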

\addtocounter{theorem}{-1}
\endgroup

\begin{proof}
	Let $\{\lambda_x\}_x$ be the eigenvalues of $\rho_X$, i.e.
	\begin{equation}
		\rho_X = \sum_x \lambda_x\ketbra{x}{x}.
	\end{equation}
	Note therefore, that $\rho_X$ and $\omega_X$ are diagonal in the same basis and we can write
	\begin{equation}
		q_\delta(\rho, \mathbb{X}, \mathbb{Z}) = -\log\tr\left(\rho_X\sum_x\left(\max_z c_{x,z}\right)^\frac{1}{\delta'}\ketbra{x}{x}\right)^{\delta'} = -\delta'\log\sum_x \lambda_x\left(\max_z c_{x,z}\right)^\frac{1}{\delta'}.
	\end{equation}
	We use l'H\^opital's rule, choosing
	\begin{gather}
		f(y) = \log\sum_x \lambda_x\left(\max_z c_{x,z}\right)^\frac{1}{y} \implies \lim_{y\rightarrow \infty}f(y) = 0,\\
		g(y) = \frac{1}{y} \implies \lim_{y\rightarrow \infty}g(y) = 0.
	\end{gather}
	We have
	\begin{align}
		f'(y) & = \frac{d}{dy}\left(\log\sum_x \lambda_x\left(\max_z c_{x,z}\right)^\frac{1}{y} \right)                                                      \\
		& = \frac{\sum_x \lambda_x\frac{d}{dy}\left(\left(\max_z c_{x,z}\right)^\frac{1}{y}\right)}{\sum_x \lambda_x\left(\max_z c_{x,z}\right)^\frac{1}{y}} \\
		& = \frac{-\sum_x \lambda_x\left(\max_z c_{x,z}\right)^\frac{1}{y}\log\max_z c_{x,z}}{y^2\sum_x \lambda_x\left(\max_z c_{x,z}\right)^\frac{1}{y}}
	\end{align}
	and $g'(y) = \frac{-1}{y^2}$. Hence
	\begin{equation}
		\lim_{y\rightarrow \infty} \frac{f(y)}{g(y)} = \lim_{y\rightarrow \infty} \frac{f'(y)}{g'(y)} =\lim_{y\rightarrow \infty}\frac{\sum_x \lambda_x\left(\max_z c_{x,z}\right)^\frac{1}{y}\log\max_z c_{x,z}}{\sum_x \lambda_x\left(\max_z c_{x,z}\right)^\frac{1}{y}} = \sum_x \lambda_x\log\max_zc_{x,z}.
	\end{equation}
	Observe that $\delta' \rightarrow \infty \implies \delta \rightarrow 1$, therefore
	\begin{equation}
		\lim_{\delta \rightarrow 1}q_\delta(\rho, \mathbb{X}, \mathbb{Z}) = \sum_x \lambda_x\log(1/\max_zc_{x,z}).
	\end{equation}
	Moreover,
$
		\lim_{y\rightarrow 0}f(y) = \infty$ and $
		\lim_{y\rightarrow 0}g(y) = \infty
$.
	So, again by l'H\^opital's rule,
	\begin{equation}
		\lim_{y\rightarrow 0} \frac{f(y)}{g(y)} =\lim_{y\rightarrow 0}\frac{\sum_x \lambda_x\left(\max_z c_{x,z}\right)^\frac{1}{y}\log\max_z c_{x,z}}{\sum_x \lambda_x\left(\max_z c_{x,z}\right)^\frac{1}{y}} =\log\max_{x,z} c_{x,z} = -\qmu.
	\end{equation}
The last two statements of the proposition are evident from Eq.~\eqref{dto1IL} and the definitions of the relevant quantities.
\end{proof}
\subsection{Proofs of the R\'enyi information exclusion relations}
We begin with the unimproved R\'enyi information exclusion relations, which more directly generalise the Hall relation.
\begingroup
\def\thetheorem{\ref*{result2}}
\begin{theorem}
	Let $\Map_X\in\CPTP(A,X)$ and $\Map_Z\in \CPTP(A,Z)$ be two incompatible measurement maps, defined by the ONBs $\mathbb{X}$ of $\Hil_X$ and $\mathbb{Z}$ of $\Hil_Z$.
	
	Let $\alpha,\gamma\geq \frac{1}{2}$ and $\beta>\frac{1}{2}$, satisfying both $\alpha\beta\gamma  -2\beta\gamma - \alpha + \beta +\gamma = 0 $ and $\frac{1}{\beta} + \frac{1}{\gamma} \geq 2$, then for all $\rho_{AB}\in\Den(AB)$ and $\tau_B\in \Pos(B)$
	\begin{equation}\label{MICRIL}
		I^\downarrow_\beta(X:B)_\rho + I_\gamma(\Map_Z(\rho_{AB})\|\tau_B) \leq \rh - H_\alpha(\rho_{AB}\|\tau_B).
	\end{equation}
\end{theorem}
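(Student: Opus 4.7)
The plan is to mirror the classical derivation of the Hall relation (Theorem~\ref{Hall}) from the Maassen--Uffink inequality (Eq.~\eqref{SMU}), but now starting from the R\'enyi bipartite uncertainty relation in Theorem~\ref{GBUR}. The hypotheses on $(\alpha,\beta,\gamma)$ in Theorem~\ref{result2} are \emph{exactly} those required by Theorem~\ref{GBUR}, so no additional interpolation machinery is needed: what remains is to convert each of the two conditional-entropy-style terms appearing on the left-hand side of Theorem~\ref{GBUR} into the corresponding mutual-information-style term, at the cost of an additive $\log d$ per term.

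The first thing I would record is two elementary ``dimension-penalty'' upper bounds. For the first term, I would restrict the infimum defining $I^\downarrow_\beta(X:B)_\rho$ to the uniform candidate $\sigma_X = \id_X/d$ on $\Hil_X$ and apply the scaling identity $D_\beta(\rho\|c\sigma) = D_\beta(\rho\|\sigma) - \log c$ for $c>0$, which is immediate from the definition of R\'enyi divergence. This gives
\begin{equation*}
I^\downarrow_\beta(X:B)_\rho \leq \log d - H^\uparrow_\beta(X|B)_\rho.
\end{equation*}
For the second term, the same device applied to $I_\gamma(\Map_Z(\rho_{AB})\|\tau_B) = \inf_{\sigma_Z} D_\gamma(\rho_{ZB}\|\sigma_Z\otimes \tau_B)$, now with $\sigma_Z = \id_Z/d$, yields
\begin{equation*}
I_\gamma(\Map_Z(\rho_{AB})\|\tau_B) \leq \log d - H_\gamma(\Map_Z(\rho_{AB})\|\tau_B).
\end{equation*}
Here I would note that $\dim \Hil_X = \dim \Hil_Z = d$ because $\Map_X,\Map_Z$ are measurement maps in orthonormal bases of $\Hil_A$.

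The conclusion will then follow by adding these two bounds and invoking Theorem~\ref{GBUR} on the right-hand side:
\begin{equation*}
I^\downarrow_\beta(X:B)_\rho + I_\gamma(\Map_Z(\rho_{AB})\|\tau_B) \leq 2\log d - \bigl( H^\uparrow_\beta(X|B)_\rho + H_\gamma(\Map_Z(\rho_{AB})\|\tau_B) \bigr) \leq 2\log d - \qmu - H_\alpha(\rho_{AB}\|\tau_B),
\end{equation*}
after which identifying $2\log d - \qmu = \log(d^2 c) = \rh$, with $c = 2^{-\qmu}$, delivers Eq.~\eqref{MICR}.

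I do not expect a serious obstacle: the argument is one line of bookkeeping once Theorem~\ref{GBUR} is in hand. The only delicate points are (i) ensuring that allowing $\tau_B \in \Pos(B)$ rather than $\PosS(B)$ is admissible, which is inherited from Theorem~\ref{GBUR} via the continuity statement in Proposition~\ref{continuity} (and if either side diverges to $\pm\infty$ the inequality holds trivially), and (ii) confirming that the parameter constraint $\alpha\beta\gamma - 2\beta\gamma - \alpha + \beta + \gamma = 0$ together with $\tfrac{1}{\beta} + \tfrac{1}{\gamma} \geq 2$ is passed through from Theorem~\ref{GBUR} unchanged, which is visible by inspection.
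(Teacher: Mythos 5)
Your proposal is correct, and its skeleton coincides with the paper's: both start from Theorem~\ref{GBUR} with the very parameters $(\alpha,\beta,\gamma)$ of the statement and then trade each conditional-entropy term for the corresponding mutual-information term at a cost of $\log d$, identifying $2\log d - \qmu = \rh$. The difference lies in how that trade is justified. The paper invokes its decomposition rule, Eq.~\eqref{decompEq2} of Theorem~\ref{decomp}, with auxiliary orders $\bar\beta,\tilde\beta$ and $\bar\gamma,\tilde\gamma$, obtaining $H^\uparrow_{\bar\beta}(X|B)_\rho \leq H_{\tilde\beta}(\rho_X) - I^\downarrow_\beta(X:B)_\rho$ (and likewise for $Z$), bounds $H_{\tilde\beta}(\rho_X),H_{\tilde\gamma}(\rho_Z)\leq\log d$, and only then specialises $\tilde\beta,\tilde\gamma\to 0$, which forces $\bar\beta=\beta$, $\bar\gamma=\gamma$ and reproduces the stated constraints. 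You instead bypass the decomposition rule: restricting the optimisation to $\sigma_X=\id_X/d$ (resp.\ $\sigma_Z=\id_Z/d$) and using $D_\beta(\rho\|c\sigma)=D_\beta(\rho\|\sigma)-\log c$ gives $I^\downarrow_\beta(X:B)_\rho\leq\log d - H^\uparrow_\beta(X|B)_\rho$ and $I_\gamma(\Map_Z(\rho_{AB})\|\tau_B)\leq\log d - H_\gamma(\Map_Z(\rho_{AB})\|\tau_B)$ directly, which is exactly the degenerate $\tilde\beta=\tilde\gamma=0$ case of the paper's step (the paper's intermediate bound is marginally tighter, $H_0(\rho_X)$ versus your $\log d$, but both are weakened to $\log d$ in the end). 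What your route buys is self-containedness and transparency: the conditions on $(\alpha,\beta,\gamma)$ are visibly just those of Theorem~\ref{GBUR}, with no bookkeeping of auxiliary orders. What the paper's route buys is that keeping $\tilde\beta,\tilde\gamma$ general exhibits the result as one member of a family (retaining $H_{\tilde\beta}(\rho_X)+H_{\tilde\gamma}(\rho_Z)$ in place of $2\log d$), a flexibility that is exploited later, e.g.\ the choice $\tilde\beta=\tfrac12$ in the proof of Theorem~\ref{IER}. Your handling of $\tau_B\in\Pos(B)$ and of divergent cases matches the paper's own level of care, so there is no gap.
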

\addtocounter{theorem}{-1}
\endgroup

\begin{proof}[Proof of Theorem~\ref{result2}]
	Starting with Eq.~\eqref{MUlikeIL}, and choosing parameters which satisfy the conditions, we can write
	\begin{equation}
		H^\uparrow_{\bar\beta}(X|B)_\rho + H_{\bar\gamma}(\Map_Z(\rho_{AB})\|\tau_B) \geq H_\alpha(\rho_{AB}\|\tau_B)+\qmu.
	\end{equation}
	For each conditional entropy on the left-hand side we can derive the following inequalities from Eq.~\eqref{decompEq2}:
	\begin{align}
		H^\uparrow_{\bar\beta}(X|B)_{\rho}&\leq H_{\tilde{\beta}}(\rho_X)-I^\downarrow_\beta(X:B)_\rho,\\
		H_{\bar\gamma}(\Map_Z(\rho_{AB})\|\tau_B)&\leq H_{\tilde\gamma}(\rho_Z)-I_\gamma(\Map_Z(\rho_{AB})\|\tau_B).
	\end{align}
	We can then write
	\begin{align}
		-I^\downarrow_\beta(X:B)_\rho -I_\gamma(\Map_Z(\rho_{AB})\|\tau_B) &\geq H_\alpha(\rho_{AB}\|\tau_B)- H_{\tilde{\beta}}(\rho_X) -  H_{\tilde\gamma}(\rho_Z)+\qmu\\
		\implies I^\downarrow_\beta(X:B)_\rho + I_\gamma(\Map_Z(\rho_{AB})\|\tau_B) &\leq H_{\tilde{\beta}}(\rho_X)+H_{\tilde\gamma}(\rho_Z) - H_\alpha(\rho_{AB}\|\tau_B)-\qmu\\
		&\leq \log(d^2c) - H_\alpha(\rho_{AB}\|\tau_B).
	\end{align}
	The last line is due to $H_{\alpha}(\rho)\leq \log d$ for all $\alpha$.
	
	We can optimise the parameters when $\tilde\beta, \tilde\gamma = 0$, which implies $\bar\beta=\beta$ and $\bar\gamma = \gamma$. We therefore have the familiar condition, $\alpha\beta\gamma  -2\beta\gamma - \alpha + \beta +\gamma = 0 $. Finally, note that for all $0\leq \tilde\beta\leq \frac{1}{2} \implies \frac{1}{\tilde\beta} \geq 2$, hence we also have the condition $\frac{1}{\beta}+\frac{1}{\gamma}\geq 2$.
\end{proof}
The following optimal case then follows:
\begingroup
\def\thetheorem{\ref*{res2c}}
\begin{corollary}
	Given the same conditions as Theorem~\ref{result2}, for $\frac{1}{2}<\alpha<2$, we have
	\begin{equation}\label{optIERIL}
		I^\downarrow_\alpha(X:B)_\rho + I^\uparrow_\frac{1}{\alpha}(B\;;\>\!Z)_\rho \leq \rh-H_{\min}(A|B)_\rho.
	\end{equation}
\end{corollary}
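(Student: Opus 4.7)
The plan is to obtain this corollary as the $\tilde\alpha \to \infty$ limit of Theorem \ref{result2} followed by the specialization $\tau_B = \rho_B$. To keep the notation clear I temporarily label Theorem \ref{result2}'s parameters $(\tilde\alpha, \tilde\beta, \tilde\gamma)$ and set $\alpha := \tilde\beta$ in the corollary.

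First, by Remark \ref{diffForms} the constraint $\tilde\alpha\tilde\beta\tilde\gamma - 2\tilde\beta\tilde\gamma - \tilde\alpha + \tilde\beta + \tilde\gamma = 0$ is equivalent to $\tilde\alpha' = \tilde\beta' + \tilde\gamma'$. Sending $\tilde\alpha \to \infty$ gives $\tilde\alpha' \to 1$ and hence $\tilde\beta' + \tilde\gamma' = 1$; clearing denominators in
\begin{equation}
\frac{\tilde\beta}{\tilde\beta-1} + \frac{\tilde\gamma}{\tilde\gamma-1} = 1
\end{equation}
reduces to $\tilde\beta\tilde\gamma = 1$, i.e., $\tilde\gamma = 1/\tilde\beta$. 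The complementary condition $\tfrac{1}{\tilde\beta}+\tfrac{1}{\tilde\gamma} \geq 2$ then becomes $\tilde\beta + 1/\tilde\beta \geq 2$, which is $(\tilde\beta-1)^2/\tilde\beta \geq 0$ and thus automatic. Combined with $\tilde\beta > \tfrac12$ and $\tilde\gamma = 1/\tilde\beta \geq \tfrac12$, this pins $\tilde\beta$ to $(\tfrac12, 2)$, which matches the corollary's range after the relabelling.

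Second, I would apply Theorem \ref{result2} along a sequence $\tilde\alpha_n \to \infty$ with $\tilde\gamma_n = (\tilde\alpha_n - \alpha)/(\tilde\alpha_n\alpha - 2\alpha + 1)$ (the unique $\tilde\gamma_n$ satisfying the constraint for each $\tilde\alpha_n$), and use the continuity of R\'enyi divergence in its order (Proposition \ref{continuity}) to pass to the limit. This yields, for every $\tau_B \in \Pos(B)$,
\begin{equation}
I^\downarrow_\alpha(X:B)_\rho + I_{1/\alpha}(\Map_Z(\rho_{AB})\|\tau_B) \leq \rh - H_\infty(\rho_{AB}\|\tau_B).
\end{equation}
Specializing $\tau_B = \rho_B$, the definition in Eq.~\eqref{maxmut} identifies the second term on the left as $I^\uparrow_{1/\alpha}(B\;;\>\!Z)_\rho$, and on the right $H_\infty(\rho_{AB}\|\rho_B) = -D_\infty(\rho_{AB}\|\id_A\otimes\rho_B) = H_{\min}(A|B)_\rho$ under the paper's convention for the conditional min-entropy. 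Combining the two substitutions produces the stated inequality.

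The main technical hurdle is simply verifying that the $\tilde\alpha \to \infty$ limit passes through Theorem \ref{result2}: one must check that the sequence $(\tilde\alpha_n, \tilde\gamma_n)$ stays within the admissible range of Theorem \ref{result2}, that $\tilde\gamma_n \to 1/\alpha$, and that all divergences and optimisations converge continuously so the inequality survives at the limit. Given Proposition \ref{continuity} and the explicit rational form of the constraint this is routine, but it is where essentially all of the real work of the proof lies — every other step is formal substitution of definitions.
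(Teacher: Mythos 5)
Your proposal is correct and follows essentially the same route as the paper's own proof: take the $\tilde\alpha\to\infty$ limit of the constraint in Theorem~\ref{result2} to force $\beta\gamma=1$ (with the condition $\tfrac{1}{\beta}+\tfrac{1}{\gamma}\geq 2$ becoming automatic), then set $\tau_B=\rho_B$ so that the measured term becomes $I^\uparrow_{1/\alpha}(B\;;\>\!Z)_\rho$ and the divergence term becomes $H_{\min}(A|B)_\rho$. The only quibble is your citation of Proposition~\ref{continuity}, which concerns continuity in the second argument rather than in the R\'enyi order, but the limiting step you describe is exactly the (equally informal) limit the paper itself takes, so this does not affect the verdict.
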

\addtocounter{theorem}{-1}
\endgroup

\begin{proof}
	Taking the limit
	\begin{equation}
		\lim_{\alpha\rightarrow\infty}\beta\gamma  -\frac{2\beta\gamma}{\alpha} - 1 + \frac{\beta}{\alpha} +\frac{\gamma}{\alpha} = \beta\gamma - 1,
	\end{equation}
we may conclude $\beta = \frac{1}{\gamma}$. 
	We then let $\tau_B = \rho_B$ to have a tighter inequality, obtaining Eq.~\eqref{optIERIL}.
\end{proof}
This brings us to our final theorem, determining the improved R\'enyi information exclusion relations.
\begingroup
\def\thetheorem{\ref*{IER}}
\begin{theorem}
	Let $\Map_X\in\CPTP(A,X)$ and $\Map_Z\in \CPTP(A,Z)$ be two incompatible measurement maps, defined by the ONBs $\mathbb{X}$ of $\Hil_X$ and $\mathbb{Z}$ of $\Hil_Z$.
	
	Given $\alpha,\gamma\geq\frac{1}{2}$, $\beta \in [1/2, 2]$ with
	\begin{equation}
		\label{IERconstraintsIL}\frac{\alpha-\gamma}{\alpha\gamma -2\gamma +1} = \frac{1}{2-\beta}\text{ and }\beta\gamma\leq1,
	\end{equation}
	
	then for all $\rho_{AB}\in\Den(AB)$ and $\tau_B\in \Pos(B)$
	\begin{equation}
		I^\downarrow_\beta(X:B)_\rho + I_\gamma(\mathcal{M}_Z(\rho_{AB})\|\tau_B)  \leq  r(\mathbb{X},\mathbb{Z})-H_{\alpha}(\rho_{AB}\|\tau_B)
	\end{equation}
	and
	\begin{equation}
		I_\gamma(\mathcal{M}_X(\rho_{AB})\|\tau_B) + I^\downarrow_\beta(Z:B)_\rho  \leq  r(\mathbb{Z},\mathbb{X})-H_{\alpha}(\rho_{AB}\|\tau_B).
	\end{equation}
	Moreover, if $\gamma\leq 2-\beta$ and
	\begin{align}
		\label{IERconstraints2IL}\frac{2\alpha-\alpha\gamma-1}{\alpha-\gamma} = \frac{1}{2-\beta}, \quad
		\frac{2\alpha-\alpha\beta-1}{\alpha-\beta} = \frac{1}{2-\gamma}
	\end{align}
	then
	\begin{equation}\label{IEREqSymIL}
		I^\downarrow_\beta(X:B)_\rho + I^\downarrow_\gamma(Z:B)_\rho  \leq  \rcp-H^\uparrow_{\alpha}(A|B)_\rho.
	\end{equation}
\end{theorem}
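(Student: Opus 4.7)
The plan is to R\'enyi-ify the Coles--Piani proof of Theorem~\ref{CPIER}: in the Shannon case one combines the improved bipartite uncertainty relation~\eqref{BMU}, the decomposition $H(X|B) = H(X)-I(X:B)$, and Jensen's inequality $H(X)-q(\rho,\mathbb{X},\mathbb{Z})\leq\log\sum_x\max_z|\braket{x}{z}|^2$. For the R\'enyi version these three ingredients are replaced respectively by Theorem~\ref{SDGBUR}, the decomposition rule Theorem~\ref{decomp}, and a R\'enyi Jensen-type inequality.

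For the first (asymmetric) statement I would start from Eq.~\eqref{IMUlikeIL} and apply the decomposition rule to each conditional entropy, obtaining $H^\uparrow_{\bar\beta}(X|B)\leq H_{\tilde\beta}(\rho_X)-I^\downarrow_\beta(X:B)$ and $H_{\bar\gamma}(\Map_Z(\rho)\|\tau_B)\leq H_{\tilde\gamma}(\rho_Z)-I_\gamma(\Map_Z(\rho)\|\tau_B)$. Choosing $\tilde\gamma=0$ forces $\bar\gamma=\gamma$ and gives the trivial bound $H_0(\rho_Z)\leq\log d$; rearranging then yields
\begin{equation*}
I^\downarrow_\beta(X:B)+I_\gamma(\Map_Z(\rho)\|\tau_B)\leq\bigl[H_{\tilde\beta}(\rho_X)-q_\delta(\rho,\mathbb{X},\mathbb{Z})\bigr]+\log d-H_\alpha(\rho_{AB}\|\tau_B).
\end{equation*}

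The crucial technical step is the R\'enyi Jensen-type bound
\begin{equation*}
H_{\tilde\beta}(\rho_X)-q_\delta(\rho,\mathbb{X},\mathbb{Z})\leq\log\textstyle\sum_x\max_z|\braket{x}{z}|^2,
\end{equation*}
which I claim holds under the parameter relation $\tfrac{1}{\tilde\beta}+\delta=2$. Writing $b_x=\max_z|\braket{x}{z}|^2$ and $s=\tilde\beta$, this is equivalent to $(\sum_x p(x)\,b_x^{1/\delta'})^{2s-1}\leq(\sum_x p(x)^s)(\sum_x b_x)^{s-1}$. I would prove it by first applying H\"older's inequality with exponents $(2s-1)/s$ and $(2s-1)/(s-1)$ to obtain the upper bound $(\sum p(x)^{(2s-1)/s})^s(\sum b_x)^{s-1}$, and then invoking Jensen's inequality with the convex function $x\mapsto x^s$ to replace $(\sum p(x)^{(2s-1)/s})^s$ by $\sum p(x)^s$. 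The Shannon limit $s=\delta=1$ recovers the classical Jensen step of~\cite{coles14}.

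It remains to verify the parameter conditions. Substituting $\tilde\beta=1/(2-\delta)$ into the decomposition constraint $\bar\beta\beta\tilde\beta-2\beta\tilde\beta-\bar\beta+\beta+\tilde\beta=0$ gives $\bar\beta=(\beta\delta-1)/(\beta+\delta-2)$, and plugging this into the SDGBUR condition $\mu=(\bar\beta-\delta)/(\bar\beta\delta-2\delta+1)$ collapses, after the factor $(\delta-1)^2$ cancels, to $\mu=1/(2-\beta)$; together with $\bar\gamma=\gamma$ this produces $\frac{\alpha-\gamma}{\alpha\gamma-2\gamma+1}=\frac{1}{2-\beta}$, and the SDGBUR direction bound $1/\delta\leq 2-1/\mu\leq 1/\gamma$ becomes $\beta\gamma\leq 1$. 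The symmetric statement~\eqref{IEREqSymIL} is obtained analogously by starting from Eq.~\eqref{MUlike3IL} (which requires $\gamma>1/2$ so that $H^\uparrow_\gamma(Z|B)$ also admits a decomposition), using the R\'enyi Jensen inequality and $q_\delta(\rho)\geq q_\delta(\rho,\mathbb{X},\mathbb{Z})$ in each arm, and taking the minimum of the two resulting bounds to obtain $\rcp$. The main obstacle is identifying the correct H\"older+Jensen relation $\tfrac{1}{\tilde\beta}+\delta=2$; once that is in hand the remaining parameter algebra is routine, and Corollary~\ref{IIERopt} (the limit $\alpha\to\infty$, forcing $\gamma=2-\beta$) provides a useful consistency check.
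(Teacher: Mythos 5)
Your skeleton is the same as the paper's: start from the state-dependent relation of Theorem~\ref{SDGBUR}, peel off the two arms with the decomposition rule of Theorem~\ref{decomp}, and absorb $H_{\tilde\beta}(\rho_X)-q_\delta(\rho,\mathbb{X},\mathbb{Z})$ into $\log\sum_x\max_z c_{x,z}$; and your H\"older-plus-Jensen proof of that scalar bound is indeed valid for $\tilde\beta\geq 1$ (it is the ``hidden $\beta=\tfrac12$'' slice of Lemma~\ref{constCompLemma}). The genuine gap is in the parameter bookkeeping on the $X$ arm. Using Eq.~\eqref{decompEq2IL} to get $H^\uparrow_{\bar\beta}(X|B)_\rho\leq H_{\tilde\beta}(\rho_X)-I^\downarrow_\beta(X:B)_\rho$ requires the direction condition $\frac1\beta+\frac1{\tilde\beta}\geq2$, which under your link $\tilde\beta=\frac{1}{2-\delta}$ reads $\beta\delta\leq1$; Theorem~\ref{SDGBUR} requires, besides $2-\frac1\mu\leq\frac1\gamma$ (your $\beta\gamma\leq1$), also $\frac1\delta\leq2-\frac1\mu=\beta$, i.e.\ $\beta\delta\geq1$ when $\delta>0$. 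Hence $\delta=\frac1\beta$ is forced, and then your own formula $\bar\beta=\frac{\beta\delta-1}{\beta+\delta-2}$ gives $\bar\beta=0$ for every $\beta\neq1$, outside the range $\bar\beta>\frac12$ (resp.\ $\bar\beta\geq\frac12$) demanded by Theorem~\ref{SDGBUR} (resp.\ Theorem~\ref{decomp}). Since your Jensen step needs $x\mapsto x^{\tilde\beta}$ convex, i.e.\ $\tilde\beta\geq1$, i.e.\ $\delta\geq1$, there is no admissible non-degenerate choice: as written the argument only recovers the point $\beta=\delta=\tilde\beta=1$ (the Coles--Piani case), not the family $\beta\in[\tfrac12,2]$ with conditions \eqref{IERconstraintsIL}; taking $\delta<0$ instead makes both direction conditions vacuous, but then $\tilde\beta<\tfrac12$ and both your H\"older exponents and the convexity argument break down. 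The symmetric statement \eqref{IEREqSymIL} inherits the same problem.

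The paper escapes exactly this bind by decoupling the two roles: it fixes $\tilde\beta=\tfrac12$, so that $\frac1{\tilde\beta}=2$ and the decomposition direction condition holds for every $\beta$, and it sends $\delta\rightarrow0^-$, so that $\frac1\delta\leq2-\frac1\mu$ is trivially satisfied and $q_\delta\rightarrow\qmu$; the bound $H_{1/2}(\rho_X)-\qmu\leq\log\sum_x\max_z c_{x,z}$ is then supplied not by H\"older/Jensen (which degenerates at $\tilde\beta=\tfrac12$) but by the interpolation-based Lemma~\ref{constCompLemma}, Eq.~\eqref{constCompOpt}, combined with data processing, after which the algebra yields $\mu=\bar\beta=\frac{1}{2-\beta}$, $\bar\gamma=\gamma$ and $\beta\gamma\leq1$. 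Your observation that the relation $\frac1{\tilde\beta}+\delta=2$ parametrises a slice of that lemma is worthwhile, but to make your elementary route work you would have to decouple $\tilde\beta$ from $\delta$ (the scalar bound holds on a two-parameter region), or otherwise neutralise the condition $\frac1\delta\leq2-\frac1\mu$, which is precisely what the paper's degenerate choice $\tilde\beta=\tfrac12$, $\delta\rightarrow0^-$ accomplishes.
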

\addtocounter{theorem}{-1}
\endgroup
Before the proof of this result, we first show how to derive a state-independent bound analogous to that in~\cite{coles14}.
\begin{lemma}\label{constCompLemma}
	For $\alpha \geq \frac{1}{2}$, $\delta \in(-\infty, \alpha)$ such that $\frac{\alpha -\delta}{\alpha\delta-2\delta + 1}\geq \frac{1}{2}$ and $\frac{(\alpha-1)(\delta-1)}{\alpha-\delta} + \frac{1}{\delta} \geq 1$,
	\begin{equation}
		H_\alpha(\rho_X)-q_\delta(\rho,\mathbb{X},\mathbb{Z}) \leq \log\sum_x \max_{z} c_{x,z}.\label{constComp}
	\end{equation}
	In particular
	\begin{equation}
		H_\frac{1}{2}(\rho_X)-\qmu \leq \log\sum_x \max_{z} c_{x,z}.\label{constCompOpt}
	\end{equation}
\end{lemma}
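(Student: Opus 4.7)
The plan is to derive this directly from the general bipartite divergence inequality, Theorem~\ref{DivFormIntLem}, combined with positivity of R\'enyi divergence.

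I would specialize Eq.~\eqref{genRev} by taking $B$ trivial (so $\tau_B = 1$), relabelling $\gamma$ as $\delta$, and choosing the classical state $\sigma_A = \omega_X/\tr\omega_X$, where $\omega_X := \sum_x \max_z(c_{x,z})\ketbra{x}{x}$. Because $\rho_X$ and $\omega_X$ commute, the rescaling identity $D_\beta(\rho\|c\sigma) = D_\beta(\rho\|\sigma) - \log c$ gives $D_\beta(\rho_X\|\sigma_A) = D_\beta(\rho_X\|\omega_X) + \log\tr\omega_X$, and a brief computation yields $\log(\tr\rho_X\sigma_A^{1/\delta'})^{\delta'} = -q_\delta(\rho,\mathbb{X},\mathbb{Z}) - \log\tr\omega_X$. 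Substituting these into Eq.~\eqref{genRev} and cancelling the two $\log\tr\omega_X$ contributions produces
\begin{equation*}
H_\alpha(\rho_X) - q_\delta(\rho,\mathbb{X},\mathbb{Z}) \leq -D_\beta(\rho_X\|\omega_X).
\end{equation*}
The finishing step is to apply positivity of R\'enyi divergence to the normalized density pair $(\rho_X, \omega_X/\tr\omega_X)$, which yields $-D_\beta(\rho_X\|\omega_X) \leq \log\tr\omega_X = \log\sum_x \max_z c_{x,z}$; chaining this to the above inequality gives Eq.~\eqref{constComp}.

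The parameter hypotheses in the lemma match exactly those required by Theorem~\ref{DivFormIntLem} after setting $\gamma = \delta$: solving the constraint $\alpha\beta\delta - 2\beta\delta - \alpha + \beta + \delta = 0$ for $\beta$ gives $\beta = (\alpha-\delta)/(\alpha\delta - 2\delta + 1)$, so the first stated condition is precisely $\beta \geq 1/2$ and the second is the rewriting of $1/\beta + 1/\delta \geq 2$ under this substitution. The restriction $\delta < \alpha$ selects the branch on which the numerator and denominator of $\beta$ carry the same sign, and the support conditions of Theorem~\ref{DivFormIntLem} are automatic because $\tau_B = 1$ and $\omega_X$ has full support whenever $\mathbb{X}$ and $\mathbb{Z}$ are ONBs of the same Hilbert space.

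The special case Eq.~\eqref{constCompOpt} follows by taking $\alpha = 1/2$ and $\delta \to 0^+$: in this limit $\beta \to 1/2$ saturates the first parameter condition while $1/\delta \to +\infty$ saturates the second, and Proposition~\ref{constToOne} delivers $q_\delta \to \qmu$, so continuity of $H_\alpha$ and of $D_\beta$ in their parameters transfers Eq.~\eqref{constComp} to the limit. The main obstacle is purely algebraic bookkeeping: translating the two parameter conditions back and forth through the substitution for $\beta$, and checking that no sign flip is incurred when applying the rescaling identity across the regime $\delta \lessgtr 1$. Once Theorem~\ref{DivFormIntLem} is invoked, the rest is rearrangement and a single application of divergence positivity.
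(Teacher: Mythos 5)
Your proposal is correct and essentially reproduces the paper's own proof: the paper likewise specialises Theorem~\ref{DivFormIntLem} to trivial $B$ with $\gamma=\delta$, except that it plugs in $\sigma_A=\omega_X$ unnormalised and bounds $-D_\beta(\rho_X\|\omega_X)\le\log\tr\omega_X$ by data processing under the trace map, which is the same fact you phrase as positivity of the divergence after normalising, so your rescaling step is just cosmetic bookkeeping. One small hair: your limit path for Eq.~\eqref{constCompOpt} (fixing $\alpha=\tfrac{1}{2}$ and sending $\delta\to0^+$) makes $\beta=\frac{\alpha-\delta}{\alpha\delta-2\delta+1}$ dip just below $\tfrac{1}{2}$, so it is cleaner to fix $\beta=\tfrac{1}{2}$ and let $\alpha\to\tfrac{1}{2}^+$ together with $\delta\to0^+$ before invoking continuity and $q_\delta\to\qmu$ --- though the paper's own treatment of this endpoint is no more careful than yours.
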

\begin{remark}
	The conditions on $\alpha$ and $\delta$ in the above lemma are derived from certain conditions on a $\beta$ parameter that appears in the proof but is not required in the final statement.
	
	Indeed, including this parameter provides an arguably nicer form of the condition: For $\alpha \geq \beta, \delta$ such that $\beta \geq \frac{1}{2}$ and $\frac{1}{\beta} + \frac{1}{\delta}\geq 2$. However, in order to present the lemma as a more self-contained statement, the $\beta$ has been omitted.
\end{remark}
\begin{proof}
	Note that the trace is a CPTP, thus by the data-processing inequality~(see Proposition~\ref{DPI}) for all $\beta\geq\frac{1}{2}$,
	\begin{align}
		D_\beta(\rho_A\|\sigma_A) & \geq D_\beta(\tr\rho_A\|\tr\sigma_A)                                                   \\
		& =\log\left[(\tr\sigma_A)^\frac{-1}{2\beta'}(\tr\sigma_A)^\frac{-1}{2\beta'}\right]^{\beta'} \\
		& =-\log\tr\sigma_A.\label{divComp}
	\end{align}
	
	To derive Eq.~\eqref{constComp} we consider a similar interpolation as in Theorem~\ref{DivFormIntLem}. By choosing $B$ to be trivial and given $\alpha, \beta \geq \frac{1}{2}, \delta \in \R$ such that $\alpha\beta\delta -2\beta\delta -\alpha + \beta +\delta = 0$ and $\frac{1}{\beta} + \frac{1}{\delta}\geq 2$ we may conclude
	\begin{equation}
		H_\alpha(\rho_X) + \log\tr\left(\rho_X\omega_X^\frac{1}{\delta'}\right)^{\delta'}\leq -D_\beta(\rho_X\|\omega_X).
	\end{equation}
	Let $\omega_X = \sum_x \max_z(c_{x,z})\ketbra{x}{x}$, then by Eq.~\eqref{divComp} we can bound the left-hand side
	\begin{equation}
		H_\alpha(\rho_X) + \log\tr\left(\rho_X\omega_X^\frac{1}{\delta'}\right)^{\delta'} \leq \log\tr\omega_X = \log\sum_x \max_z(c_{x,z}),
	\end{equation} 
	hence we obtain Eq.~\eqref{constComp}.
	
	As in the proof of Theorem~\ref{GBUR}, we have $\frac{\alpha -\delta}{\alpha\delta-2\delta + 1} = \beta$. Since we can choose any $\beta \geq \frac{1}{2}$ we have the condition in the lemma. Moreover
	\begin{align}
		\frac{1}{\beta} + \frac{1}{\delta} &\geq 2\\
		\implies \frac{\alpha\delta-2\delta + 1}{\alpha -\delta} + \frac{1}{\delta} &\geq 2\\
		\implies \frac{\alpha\delta - 2\delta + 1 + \alpha - \delta - \alpha + \delta}{\alpha -\delta} + \frac{1}{\delta} &\geq 2\\
		\implies \frac{\alpha\delta - \delta - \alpha + 1 }{\alpha -\delta} + 1 + \frac{1}{\delta} &\geq 2\\
		\implies \frac{(\alpha-1)(\delta-1)}{\alpha - \delta} +\frac{1}{\delta} &\geq 1.
	\end{align}
	
	To achieve the optimal situation in Eq.~\eqref{constCompOpt} we note the R\'enyi divergence is monotonically increasing in its parameter~(see Eq.~\eqref{monot}) hence the choice which provides the tightest inequality in Eq.~\eqref{divComp} is $\beta = \frac{1}{2}$. Eq.~\eqref{aGreat} requires $\alpha \geq \beta$ so, again through monotonicity, $\alpha = \frac{1}{2}$ yields the tightest inequality in Eq.~\eqref{constComp}. Substituting these values into our conditions we have
	\begin{align}
		\delta = \frac{\alpha -\beta}{\alpha\beta-2\beta + 1} = \frac{\frac{1}{2} - \frac{1}{2}}{\frac{1}{4} - 1 + 1} = 0
	\end{align}
	and
	\begin{align}
		\lim_{\delta\rightarrow 0^+}\frac{1}{\beta} + \frac{1}{\delta} = \lim_{\delta\rightarrow 0^+}2 + \frac{1}{\delta} = \infty \geq 2.
	\end{align}
\end{proof}

We now have all the tools required to prove Theorem~\ref{IER}.
\begin{proof}[Proof of Theorem~\ref{IER}]
	We begin with Eq.~\eqref{MUlikeIL} and apply Theorem~\ref{decomp} to the measured conditional entropies:
	\begin{align}
		H^\uparrow_{\bar\beta}(X|B)_\rho +H_{\bar\gamma}(\mathcal{M}_Z(\rho_{AB})\|\tau_B)  & \geq H_{\alpha}(\rho_{AB}\|\tau_B)  + q_\delta(\rho, \mathbb{X}, \mathbb{Z})                                                  \\
		\implies  I^\downarrow_\beta(X:B)_\rho + I_\gamma(\mathcal{M}_Z(\rho_{AB})\|\tau_B) & \leq H_{\tilde \beta}(\rho_X) + H_{\tilde \gamma}(\rho_Z) -H_{\alpha}(\rho_{AB}\|\tau_B)-q_\delta(\rho, \mathbb{X}, \mathbb{Z}),
	\end{align}
	where $\bar\beta, \beta, \bar\gamma, \gamma \geq \frac{1}{2}$ and $\tilde\beta,\tilde\gamma\geq 0$ such that $\bar\beta\beta\tilde\beta -2\beta\tilde\beta -\bar\beta + \beta +\tilde\beta =0$, $\bar\gamma\gamma\tilde\gamma -2\gamma\tilde\gamma -\bar\gamma + \gamma +\tilde\gamma =0$ and $\frac{1}{\beta} + \frac{1}{\tilde\beta}, \frac{1}{\gamma} + \frac{1}{\tilde\gamma}\geq 2$.
	
	We choose $\tilde\beta = \frac{1}{2}, \tilde\gamma \rightarrow 0$ and $\delta \rightarrow 0$. 
	Hence we can use Eq.~\eqref{constCompOpt} and the fact that ${H_{\tilde\gamma}(\rho_Z)\leq \log d}$ for all $\tilde\gamma>0$ to determine
	\begin{align}
		I^\downarrow_\beta(X:B)_\rho + I_\gamma(\mathcal{M}_Z(\rho_{AB})\|\tau_B) & \leq  \log\left(d\sum_x \max_{z} c_{x,z}\right)-H_{\alpha}(\rho_{AB}\|\tau_B).
	\end{align}
	From the above choices we deduce
	\begin{align}
		\bar\beta\beta\tilde\beta -2\beta\tilde\beta -\bar\beta + \beta +\tilde\beta &=0\\
		\frac{\bar\beta\beta}{2} - \beta - \bar\beta + \beta + \frac{1}{2} &= 0\\
		\bar\beta\beta - 2\bar\beta &= -1\\
		\bar\beta &= \frac{1}{2-\beta},
	\end{align}
	\begin{align}
		\lim_{\tilde\gamma\rightarrow 0}\bar\gamma\gamma\tilde\gamma -2\gamma\tilde\gamma -\bar\gamma + \gamma +\tilde\gamma &=0\\
		\bar\gamma - \gamma &= 0\\
		\bar\gamma &= \gamma.
	\end{align}
	Hence we can write
	\begin{align}
		\frac{\alpha-\bar\gamma}{\alpha\bar\gamma-2\bar\gamma + 1} = \frac{\bar\beta-\delta}{\bar\beta\delta-2\delta +1}  &= \mu\\
		\implies \frac{\alpha-\gamma}{\alpha\gamma-2\gamma + 1} = \frac{1}{2-\beta}  &= \mu.
	\end{align}
	Moreover,
	\begin{align}
		\lim_{\delta\rightarrow 0^-}\frac{1}{\delta}\leq2-\frac{1}{\mu}&\leq \frac{1}{\bar\gamma} = \frac{1}{\gamma}\\
		\implies 2-2+\beta &\leq \frac{1}{\gamma}\\
		\implies \beta\gamma &\leq 1.
	\end{align}
	Noting $\mu\geq \frac{1}{2}$ we have the conditions on the first part of the theorem.
	
	For the second part of the theorem, we perform the same procedure but instead start with $q_\delta(\rho, \mathbb{Z}, \mathbb{X})$ and choose both $\tilde\beta, \tilde\gamma = \frac{1}{2}$. This yields
	\begin{align}
		\frac{\alpha-\bar\gamma}{\alpha\bar\gamma-2\bar\gamma + 1}&=\frac{\alpha-\frac{1}{2-\gamma}}{\alpha\frac{1}{2-\gamma}-2\frac{1}{2-\gamma} + 1}\\ &=\frac{2\alpha - \alpha\gamma - 1}{\alpha - 2 + 2 - \gamma}\\
		&= \frac{2\alpha - \alpha\gamma - 1}{\alpha - \gamma}
	\end{align}
	and
	\begin{align}
		2-\frac{1}{\mu}&\leq \frac{1}{\bar\gamma}\\
		\implies \beta &\leq 2-\gamma\\
		\implies \beta + \gamma &\leq 2.
	\end{align}
	Then, by choosing the minimum over the order of the measurements, we obtain Eq.~\eqref{IEREqSymIL}.
\end{proof}
This leaves us with the proof of the optimal version of the above relations.
\begingroup
\def\thetheorem{\ref*{IIERopt}}
\begin{corollary}
	Given $\frac{1}{2}\leq\alpha\leq\frac{3}{2}$, then for all $\rho_{AB}\in\Den(AB)$ and $\tau_B\in \Pos(B)$
	\begin{equation}
		I^\downarrow_\alpha(X:B)_\rho + I_{2-\alpha}(\mathcal{M}_Z(\rho_{AB})\|\tau_B)  \leq  r(\mathbb{X},\mathbb{Z})-H_{\min}(A|B)_\rho.
	\end{equation}
	Specifically,
	\begin{equation}
		I^\downarrow_\frac{1}{2}(X:B)_\rho + I^\downarrow_\frac{3}{2}(Z:B)_\rho  \leq  \rcp-H_{\min}(A|B)_{\rho}.
	\end{equation}
\end{corollary}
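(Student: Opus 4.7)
The plan is to obtain the corollary as the $\alpha \to \infty$ specialization of Theorem~\ref{IER}, mirroring how Corollary~\ref{res2c} was derived from Theorem~\ref{result2}. I first analyze the first constraint in Theorem~\ref{IER}: the relation $\frac{\alpha-\gamma}{\alpha\gamma-2\gamma+1} = \frac{1}{2-\beta}$ has left-hand side tending to $1/\gamma$ as $\alpha \to \infty$, so the limit forces $\gamma = 2-\beta$. The auxiliary bound $\beta\gamma \leq 1$ then becomes $\beta(2-\beta) \leq 1$, i.e.\ $(\beta-1)^2 \geq 0$, which is automatic. Together with $\beta \in [1/2, 2]$ and $\gamma = 2-\beta \geq 1/2$, this restricts $\beta$ (which I rename to $\alpha$ to match the corollary) to $[1/2, 3/2]$.

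Plugging these values into Theorem~\ref{IER}, and using the continuity of R\'enyi divergence in its parameter (Proposition~\ref{continuity}) together with its monotonicity in $\alpha$ (Eq.~\eqref{monot}) to justify the limit, I obtain, for every $\tau_B \in \Pos(B)$,
\[
I^\downarrow_\alpha(X:B)_\rho + I_{2-\alpha}(\mathcal{M}_Z(\rho_{AB})\|\tau_B) \leq r(\mathbb{X},\mathbb{Z}) - H_\infty(\rho_{AB}\|\tau_B).
\]
Identifying $H_{\min}(A|B)_\rho$ with $H^\uparrow_\infty(A|B)_\rho = \sup_{\tau_B} H_\infty(\rho_{AB}\|\tau_B)$ and selecting $\tau_B$ to realize (or $\varepsilon$-approximate) this supremum yields the first claimed inequality.

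For the symmetric specialization I apply the same limit to \emph{both} non-symmetric inequalities of Theorem~\ref{IER}: once with $(\beta,\gamma) = (1/2, 3/2)$ in the first inequality and once with $(\beta, \gamma) = (3/2, 1/2)$ in the second (which swaps the roles of $\mathbb{X}$ and $\mathbb{Z}$, producing the bound $r(\mathbb{Z},\mathbb{X})$). Using the elementary bound $I^\downarrow_\beta(C:B)_\rho \leq I_\beta(\mathcal{M}_C(\rho_{AB})\|\tau_B)$, which follows directly from the definition of $I^\downarrow$ as an additional infimum over the remaining marginal, each inequality simplifies to $I^\downarrow_{1/2}(X:B)_\rho + I^\downarrow_{3/2}(Z:B)_\rho \leq r(\cdot,\cdot) - H_\infty(\rho_{AB}\|\tau_B)$, with the two corresponding choices of the leading constant. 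Optimizing $\tau_B$ in each and then taking the tighter of the two bounds produces $\rcp = \min\{r(\mathbb{X},\mathbb{Z}), r(\mathbb{Z},\mathbb{X})\}$, as required.

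The main technical obstacle is controlling the $\alpha \to \infty$ limit uniformly in $\tau_B$; this is handled by continuity and monotonicity of the R\'enyi divergence in its parameter. A secondary subtlety is reconciling the ``for all $\tau_B$'' phrasing of the corollary with the need to pick an optimizing $\tau_B$ on the right-hand side: since the bound extracted from Theorem~\ref{IER} holds for every admissible $\tau_B$ and the corollary's right-hand side is $\tau_B$-independent, once $H_{\min}$ is read as $H^\uparrow_\infty$ the conclusion follows by specializing $\tau_B$ to the optimizer.
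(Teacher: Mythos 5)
Your proposal is correct and takes essentially the same route as the paper: letting $\alpha\to\infty$ in the constraint of Theorem~\ref{IER} to force $\gamma=2-\beta$ with $\beta\in\left[\frac{1}{2},\frac{3}{2}\right]$, and then obtaining the symmetric bound by using both measurement orderings, replacing the measured mutual-information term by its fully optimised version and taking the minimum of $r(\mathbb{X},\mathbb{Z})$ and $r(\mathbb{Z},\mathbb{X})$ to get $\rcp$. The paper's own proof is no more careful than yours about continuity in the R\'enyi parameter or the $\tau_B$ quantifier, so your additional remarks merely make explicit what the thesis leaves implicit.
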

\addtocounter{theorem}{-1}
\endgroup
\begin{proof}
	Taking the limit $\alpha \rightarrow \infty$ in Eq.~\eqref{IERconstraintsIL} we obtain
	\begin{align}
		\frac{1}{\gamma} &= \frac{1}{2-\beta} \geq \frac{1}{2}\\
		\implies 2\geq \gamma &=2-\beta \geq 0.
	\end{align}
	To obtain the second inequality we choose $\alpha = \frac{1}{2}$ or $\alpha = \frac{3}{2}$. Noting that, by minimising the $Z$ measurement term, this inequality no longer depends on the order of the measurement and we may take the minimum bound.
\end{proof}
\chapter{Discussion}
\section{Immediate observations}
There are a couple of notions which follow from the main results.
\subsection{Specialising to classical R\'enyi entropies}
We have the option to consider classical states as a density operator whose eigenvectors form an ONB and are considered as a `classical register'~(see Section~\ref{sec:classical-quantum-states}). In this case the eigenvalues of this density operator represent the values of the probability mass function of a classical random variable.

It is known that the quantum R\'enyi entropy $H_\alpha(\rho_X)$ of such a classical state is exactly the classical R\'enyi entropy $H_\alpha(X)$. By choosing the arguments as classical states in the R\'enyi divergence we recover the classical R\'enyi divergence~(see Eq.~\eqref{Rdiv}) and the associated conditional entropy and mutual information would then be strong candidates for classical R\'enyi versions of the well-known Shannon entropic quantities. Moreover, the chain and decomposition rules and any subsequent uncertainty relations would be equally applicable, providing useful tools for research in classical information theory.
\subsection{Monotonicity in $\alpha$}
Each of the possible inequalities given by the families of inequalities in Chapters~\ref{sec:renyi-entropy-divergence-inequalities} and~\ref{sec:generalised-renyi-divergence-uncertainty-relations} additionally allows for a whole spectrum of weaker ancillary inequalities via the monotonicity of the R\'enyi divergence.

Considering this fact, we introduce a great deal more freedom in choosing specific parameters beyond those stipulated in the conditions of each theorem. This significantly widens the applicability of each result and provides an extra level of generality.
\section{Advantages and drawbacks}
Clearly, the divergence equations presented not only give us some insight into the fundamental relationships between these quantities but also provide important tools in the derivation of R\'enyi versions of the routine comparisons for Shannon and von Neumann quantities and beyond. The structure afforded may help in consolidating the theory of generalised R\'enyi quantities and provide some possible candidates for further study. The results themselves are quite general, which is usually desirable given that they are more flexible and therefore applicable in more situations. However, this level of generality has the unwanted consequence of obfuscating the actual utility of the results.
\subsection{Place within the broader theory}
\subsubsection{Smooth entropies}

Smooth entropies are defined as optimisations of R\'enyi conditional entropies, but in particular we are concerned with the min- and max-entropies,
\begin{equation}
	H^\uparrow_\infty(A|B)_\rho = H_{\min}(A|B)_\rho\quad\text{and}\quad H^\uparrow_\frac{1}{2}(A|B)_\rho = H_{\max}(A|B)_\rho.
\end{equation}
These particular entropies are of interest as they can be calculated through semi-definite programs which are generally more efficient than direct computation.
The smooth min- and max-entropies are then considered as optimisations over states $\tau$ that are $\varepsilon$-close in the \textit{purified distance} to the given state $\rho$.

We define the \textit{smoothing ball} $\mathcal{B}^\varepsilon(A, \rho) : = \{\tau\in \Den(A) \,:\, P(\tau, \rho) \leq \varepsilon\}$, where 
\begin{equation}
P(\tau, \rho) = \sqrt{1-\left(\tr\left|\sqrt{\rho}\sqrt{\tau}\right| + \sqrt{(1-\tr \rho)(1-\tr\tau) } \right)^2}.
\end{equation}

Then the smooth entropies are defined
\begin{align}
		H^\varepsilon_{\min}(A|B)_{\rho} = \max_{\bar\rho_{AB}\in \mathcal{B}^\varepsilon(\rho_{AB})} H_{\min}(A|B)_{\bar\rho},\\
		H^\varepsilon_{\max}(A|B)_{\rho} = \min_{\bar\rho_{AB}\in \mathcal{B}^\varepsilon(\rho_{AB})} H_{\max}(A|B)_{\bar\rho}.
\end{align}
These `smoothed' entropies are also calculable by semi-definite programs and exhibit similar duality relations as the R\'enyi entropies they are based on. Of particular note is the asymptotic equipartition property which essentially states for normalised smooth min- or max-entropies on a number of copies of the same state, the limit as the number of copies approaches infinity is the von Neumann entropy of the original state, i.e for $\rho_{AB} \in \Den(AB)$ and $\rho_{A^nB^n} = \rho_{AB}^{\otimes n}= \underbrace{\rho_{AB}\otimes\cdots\otimes \rho_{AB}}_{n \text{ times}}\in \Den(A^nB^n)$, 
\begin{equation}
	\lim_{n\rightarrow\infty} \frac{1}{n}H^\varepsilon_{\min}(A^n|B^n)_{\rho^{\otimes n}} = \lim_{n\rightarrow\infty} \frac{1}{n}H^\varepsilon_{\max}(A^n|B^n)_{\rho^{\otimes n}} = H(A|B)_\rho.
\end{equation}

See~\cite{mythesis} and \cite[Chapter 6]{mybook} for a detailed treatment of these quantities, their properties and applications.

Some chain and decomposition rules similar to those covered in Chapter~\ref{sec:renyi-entropy-divergence-inequalities} have been established for smooth min and max-entropies~(see~\cite{vitanov12, ciganovic13}) but it is unknown whether these rules can be extended to smooth entropies of general R\'enyi order. This work may provide some tools or scaffolding to that end and aid in establishing a more coherent theory of smooth entropies.
\subsubsection{Conditional mutual information}
The relationships between the established divergence inequalities would indicate that there is a more general structure available for tripartite and, in turn, multipartite systems.

The following diagram summarises what we have established and the `gaps' in this structure:
\[
\begin{tikzcd}[cramped, column sep=tiny, row sep=huge]
	\text{General tripartite divergence inequality}\arrow[dashed]{r}{\displaystyle{\inf_{\substack{\sigma_{BC}\in \Den^*(BC)}}}}\arrow[dashed,swap]{d}{C\text{ trivial}}\arrow[dotted]{rd}&I^*_\beta(B;A|C)_\rho \gtreqless H_\gamma(\rho_{BC}\|\tau_C)-H_\alpha(\rho_{ABC}\|\tau_{AC})\arrow[dashed]{d}{\tau_{AC} = \tau_C}\arrow{ldd}{\substack{\quad\\\\\\\!\!\!\!\!\!\!\!\!\!\!\!\!\!\!\!\!\!\!\!\!\!\!\!\!\!C\text{ trivial}}}\\
	-H_\alpha\left(\rho_{AB}\left\| \tau_A \right.\right)\lesseqgtr D_\beta\left(\rho_{AB}\left\|\sigma_B\otimes \tau_B \right.\right)+\log\left(\tr\rho_{B}\sigma_B^\frac{1}{\gamma'} \right)^{\gamma'}\arrow[swap]{d}{\displaystyle{\inf_{\sigma_B\in \Den^*(B)}}}\arrow[dotted]{dr}&H_\alpha(\rho_{ABC}\|\tau_C)\gtreqless H^\uparrow_\beta(A|BC)_\rho + H_\gamma(\rho_{BC}\|\tau_C)\arrow{d}{C\text{ trivial}}\\
	I_\beta(\rho_{AB}\|\tau_A) \gtreqless H_\gamma(\rho_{B})-H_\alpha(\rho_{AB}\|\tau_A)\arrow[swap]{r}{\tau_A = \id_A}&H_\alpha(\rho_{AB})\gtreqless H^\uparrow_\beta(A|B)_\rho + H_\gamma(\rho_{B})
\end{tikzcd}
\]
The dashed lines are proposed relationships and the dotted lines are relationships formed by the composition of the explicit transformations.

In essence, this examination implies that there is a general tripartite comparison of R\'enyi quantities from which all inequalities of this type follow. Perhaps of more immediate interest is the indication that there are decomposition/chain rules for some R\'enyi divergence based conditional mutual information (denoted $I^*_\beta(B;A|C)_\rho$ in the diagram). The (von Neumann) quantum conditional mutual information has found many uses in quantum information theory, hence it is currently an important open problem to derive an operationally significant R\'enyi conditional mutual information which is useful in applications. There are several candidates for a quantity of this type~(see~\cite{bertawilde14} and references therein), but one that is compatible with this pre-existing structure would be a strong choice for further study.
\subsection{Applicability}
\subsubsection{Alternative generalised entropies}
There are some other possible candidates for a generalised divergence from which entropic quantities could be similarly derived. Notable are the Tsallis entropies~\citep{tsallis88} and the Petz divergence~\citep{petz86}. However, these quantities do not have the structure which allows us to derive comparisons via interpolation. That said, the Petz divergence is still closely related to the R\'enyi divergence, sharing many of the same properties and there exist duality relations involving both quantities~\citep{mybook}.

Although the quantum R\'enyi divergence has already been established as one of the forerunners in the possible choices for the basis of a generalised quantum R\'enyi entropy framework, having found many applications and utility in quantum information theory, the structure revealed in this work further reinforces the compatibility and usefulness of this definition.
\subsubsection{Generality vs. usability}
The comparisons in Chapter~\ref{sec:renyi-entropy-divergence-inequalities} are, in isolation, relatively simple expressions. However, in applications they are often used in conjunction with one another, as evidenced by the proofs in Chapter~\ref{sec:generalised-renyi-divergence-uncertainty-relations}.

The results of composing the conditions on the parameters for each individual comparison are rather unwieldy and generally take the form of obscure polynomial expressions. An immediate solution is to make choices which yield the tightest versions of the relevant inequality. This notion has already been applied in this thesis, albeit the final conditions still often being difficult to interpret.

Clearly, there is a trade-off in how general we can formulate these types of statements and their accessibility for a broader audience. For the moment, it seems the possible loss of a small amount of information at the gain of usability and consistency is worth the cost.

Overall, the analysis of the relationship between, and valid ranges of, the parameters given a particular set of comparisons has proved time-consuming and beyond the scope of this work. Moreover, any attempts to do so have been unfruitful in establishing a consistent underlying structure, given one exists.
\section{Future work}
We now suggest some possible directions for research or study which follow from the material covered in this thesis.
\subsection{Numerical simulations}
At the moment we have only the abstract representation of most of the relationships detailed in this thesis. By running numeric simulations and producing visualisations thereof we may gain some insights into the relationships between the R\'enyi orders, dimension, degree of entanglement and the relative tightness of each comparisons. An analysis of these trends and correlations may illuminate possible avenues for further refinements and improvements. These simulations are relatively easy to perform when calculating the divergence of known matrices but the problem becomes more involved when considering the necessary optimisations~--~fortunately there are packages and/or libraries available for most high-level programming languages that allow for efficient optimisation over convex sets. The nature of the models require exponentially more processing power for calculations involving higher dimensions, so obtaining sufficiently broad samples may still be a significant undertaking, requiring dedicated time and resources.
\subsection{Multiple measurements}
An interesting direction to take these general uncertainty relations would be to formulate statements involving more than two measurements. Of course, given uncertainty relations of the form
\begin{equation}
	H_{\alpha_1}(X_1) + H_{\alpha_2}(X_2) \geq q_{1,2}\quad\text{and}\quad	H_{\alpha_3}(X_3) + H_{\alpha_4}(X_4) \geq q_{3,4},
\end{equation}
we may immediately derive
\begin{equation}
	H_{\alpha_1}(X_1) + H_{\alpha_2}(X_2) +	H_{\alpha_3}(X_3) + H_{\alpha_4}(X_4) \geq q_{\max},
\end{equation}
where $q_{\max} = \max_{\{i,j,k,l\}=\{1,2,3,4\}}q_{i,j} + q_{k,l}$. However, this does not yield any advantage compared to the uncertainty relations from which it was derived.

We would instead be investigating inequalities of the form
\begin{equation}
	\sum_n H_{\alpha_n}(X_n|B)_\rho\geq q,
\end{equation}
such that the bound $q\geq q_{\max}$. Given the R\'enyi decomposition rule and general comparisons we could then derive 
\begin{equation}
	\sum_n I_{\alpha_n}(X_n:B)_\rho\leq r
\end{equation} 
with $r$ dependent on $q$.

For a more complete review of multiple measurement uncertainty relations see~\cite[Section~III.G]{coles17} and the references therein.
\subsection{POVMs and tripartite uncertainty relations}
In Chapter~\ref{sec:generalised-renyi-divergence-uncertainty-relations} we only consider measurements on ONBs as it provides a simpler scaffold. Indeed, most of the results in Section~\ref{sec:ED} have at least one generalisation to POVMs. It is currently an open question whether these generalisations extend to general R\'enyi uncertainty relations.

A feature of these POVM-based von Neumann relations is that they are often expressed on tripartite systems. It is then likely any R\'enyi version would also require the consideration of tripartite systems. For a more detailed treatment of these situations see~\cite{coles14} and~\cite{mybook}.
\subsection{Adapting Pisier norms}
As stated in Section~\ref{sec:PN}, the full generality of Pisier's norms has not been used in the interpolation employed in Chapter~\ref{sec:renyi-entropy-divergence-inequalities}. This does not mean that the close relationship between these norms and R\'enyi divergence based quantities does not merit further thought. Indeed, the general framework of Pisier contains some useful results which could provide important insights into the nature and structure of these quantities.

The main issue with this pursuit is the ability to make use of this theory without having to rely too heavily on the unnecessarily abstract mathematics it is based on. There is currently work underway to adapt this theory in a cohesive way, accessible from a quantum information theory perspective, and to investigate the implications of such a framework.
\appendix
\chapter{An alternative proof of the decomposition rules}\label{sec:an-alternative-proof-of-the-decomposition-rule}
We include a different approach\footnote{This approach was used to produce the results in~\cite{mckinlay19} which in turn were applied to the derivation of information exclusion relations. These relations have subsequently been improved in this thesis by the more amenable ranges found in Theorem~\ref{decomp}.} to the decomposition rules which follows the method employed in~\cite{Dup} in establishing Theorem~\ref{chainDup}. In a similar vein to the differences between the two versions of the bipartite chain rule (Theorem \ref{Bchain} and Corollary \ref{BchainCor}), this approach produces slightly different valid ranges for the parameters compared to Theorem~\ref{decomp}. Although for certain applications the ranges on this alternative result may be preferable, for our purposes in the main body of the thesis they are more constrained than desirable.\\

Reproduced from A. McKinlay and M. Tomamichel, ``Decomposition rules for quantum R\'enyi mutual
information with an application to information exclusion relations,'' \textit{Journal of Mathematical
Physics}, vol. 61, no. 7, p. 072202, 2020., with the permission of AIP Publishing.\\

\begin{theorem}\label{decompDup}
	Let $\alpha\beta\gamma  -2\beta\gamma - \alpha + \beta +\gamma = 0$ with $\alpha\geq0$, $\beta> \frac{1}{2}$ and $\gamma\geq \frac{1}{2}$. Then, for $\rho_{AB}\in\Den(AB)$ and $\tau\in\Pos(A)$, if $\frac{1}{\beta} + \frac{1}{\gamma} \leq 2$
	\begin{equation}\label{decompDupEq1}
		I_\beta(\rho_{AB}\|\tau_A) \geq H_\gamma(\rho_{B})-H_\alpha(\rho_{AB}\|\tau_A).
	\end{equation}
	Otherwise, if $\frac{1}{\beta} + \frac{1}{\gamma} \geq 2$
	\begin{equation}\label{decompDupEq2}
		I_\beta(\rho_{AB}\|\tau_A) \leq H_\gamma(\rho_{B})-H_\alpha(\rho_{AB}\|\tau_A).
	\end{equation}
\end{theorem}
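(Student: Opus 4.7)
The plan is to mirror Dupuis' proof of the tripartite chain rule (Theorem \ref{chainDup}), adapted to the decomposition-rule structure, rather than invoking the general bipartite divergence inequality of Theorem \ref{DivFormIntLem}. First, exponentiate the inequality and use Lemma \ref{QuantEq} to rewrite it as a multiplicative comparison of Schatten norms. Picking $M\in\Lin(AB,C)$ with $M^\dagger M = \rho_{AB}$ and using Eq.~\eqref{EtoD}, Eq.~\eqref{CEtoD} and Eq.~\eqref{MItoD}, the desired inequality (in the case $\alpha'>0$) reduces to establishing, for every $\sigma_B\in\Den^*(B)$ that is near-optimal for the mutual information,
\begin{equation}
\left\|\Gamma_{\id_C,\tau_A}^{-1/\alpha'}(M)\right\|_{2\alpha}^{\alpha'}
\le
\left\|\Gamma_{\id_C,\tau_A\otimes\sigma_B}^{-1/\beta'}(M)\right\|_{2\beta}^{\beta'}
\cdot
\left\|\Gamma_{\id_C,\sigma_B}^{1/\gamma'}(M)\right\|_{2}^{\gamma'}.
\end{equation}
The reverse direction for the case $\alpha'<0$ is obtained analogously, with the roles of inf/sup swapped.

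Next, build a holomorphic interpolation. Define $F:S\to\Lin(AB,C)$ by
\begin{equation}
F(z) \;=\; \Gamma_{\id_C,\sigma_B}^{f(z)}\!\left(\Gamma_{\id_C,\tau_A}^{g(z)}(M)\right),
\end{equation}
with affine $f,g$ chosen so that $F(\theta)$ recovers $\Gamma_{\id_C,\tau_A}^{-1/\alpha'}(M)$, while on the boundary lines $\Re z=0$ and $\Re z=1$ the operators match, up to unitaries of the form $\sigma_B^{\im t}$ and $\tau_A^{\im t}$, the operators appearing in the $\beta$- and $\gamma$-norm terms respectively. Setting $\theta=\gamma'/\alpha'$ forces the barycentric relation $\alpha'=\beta'+\gamma'$, which is exactly the defining polynomial condition $\alpha\beta\gamma-2\beta\gamma-\alpha+\beta+\gamma=0$ as verified at the end of the proof of Theorem \ref{DivFormIntLem}. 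Applying the three-line theorem (Theorem \ref{weighted3lines}) with $p_\theta=2\alpha$, $p_0=2\beta$, $p_1=2$ then yields the multiplicative Schatten inequality above; the unitary invariance of the Schatten norm removes the $\im t$ factors on the boundary.

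The final step is the optimization and sign analysis. Taking the logarithm gives $-H_\alpha(\rho_{AB}\|\tau_A)\le -H_\beta(\rho_{AB}\|\tau_A\otimes\sigma_B)-\log\|\Gamma_{\id_C,\sigma_B}^{1/\gamma'}(M)\|_2^{2\gamma'}$; choosing $\sigma_B$ $\varepsilon$-close to the infimum of $D_\beta(\rho_{AB}\|\tau_A\otimes\sigma_B)$ and then bounding the last term below by the supremum over $\Den^*(B)$, which by Eq.~\eqref{EtoD} equals $-H_\gamma(\rho_B)$, gives the first inequality in the theorem. For the reverse direction we instead use $\theta=\alpha'/\gamma'\in(0,1)$ or $\theta=\alpha'/\beta'\in(0,1)$, with the corresponding reshuffling of $f,g$, exactly as in the three cases treated in the proof of Theorem \ref{DivFormIntLem}.

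The main obstacle is step two: choosing $f$ and $g$ so that the boundary values of $F$ coincide with the exponents appearing in the $\beta$- and $\gamma$-norm expressions, given that $I_\beta$ uses the weight $\tau_A\otimes\sigma_B$ while $H_\gamma$ uses only $\sigma_B$ and $H_\alpha(\rho_{AB}\|\tau_A)$ uses $\tau_A\otimes\id_B$. This asymmetry in which subsystem carries the weight, together with the fact that $\tau_A$ and $\sigma_B$ act on different tensor factors (so they commute trivially and can be handled independently in the $\Gamma$ super-operator), is what allows the interpolation to close; but it is also what forces the stronger lower bound $\beta,\gamma\ge\tfrac12$ required for the Schatten $2\beta$- and $2\gamma$-norms to be genuine norms, while leaving $\alpha\ge 0$ free. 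Tracing through the admissible ranges via the analogue of Lemma \ref{ana} and Corollary \ref{dualcond} produces the hypothesis $\alpha\ge 0$, $\beta>\tfrac12$, $\gamma\ge\tfrac12$, and the splitting $1/\beta+1/\gamma\lessgtr 2$ controls the sign of $\alpha'$ and hence the direction of the final inequality.
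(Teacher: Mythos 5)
There is a genuine gap: the interpolation you set up cannot yield the parameter ranges claimed in this theorem. By choosing $p_\theta=2\alpha$, $p_0=2\beta$, $p_1=2$ and interpolating directly on $M\in\Lin(AB,C)$ with weights $\tau_A$ and $\sigma_B$, you are re-deriving the main-body argument (the proof of Theorem~\ref{DivFormIntLem} followed by the optimisation in Theorem~\ref{decomp}). Theorem~\ref{weighted3lines} requires all three norm indices to lie in $[1,\infty]$, so your scheme forces $\alpha\geq\frac12$ and $\beta\geq\frac12$, while $\gamma$ only enters through the exponent $\frac{1}{\gamma'}$ of the weight and is therefore the parameter that may go to $0$. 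That is exactly the hypothesis of Theorem~\ref{decomp}, not of Theorem~\ref{decompDup}, whose whole point is the different range $\alpha\geq0$, $\beta>\frac12$, $\gamma\geq\frac12$ (the thesis explicitly flags that the two routes give different ranges and leaves the discrepancy open). Your final paragraph is internally inconsistent on this point: you claim the Schatten $2\beta$- and $2\gamma$-norms force $\beta,\gamma\geq\frac12$ while leaving $\alpha\geq0$ free, but in your construction no $2\gamma$-norm appears and a $2\alpha$-norm does, so the case $\alpha\in[0,\tfrac12)$ allowed by the statement is not covered.

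The paper's proof of this theorem takes a different route precisely to move $\alpha$ out of the norm index. It purifies $\rho_{AB}$ to $\ket{\varphi}\in\Hil_{ABC}$ and works with $X=\Op_{B\rightarrow AC}(\ket{\varphi})$; Lemma~\ref{threeEnt} then expresses $H_\alpha(\rho_{AB}\|\tau_A)$ as a supremum over $\sigma_C$ of a weighted Schatten $2$-norm (so $\alpha$ appears only in the exponent of $(\tau_A^{-1}\otimes\sigma_C)^{\frac{1}{2\alpha'}}$), expresses $H_\gamma(\rho_B)$ via $\|X\|_{2\gamma}$, and, crucially, uses the duality of the generalised mutual information (Proposition~\ref{Dual}) to rewrite $I_\beta(\rho_{AB}\|\tau_A)$ as a supremum over states on the purifying system $C$ with norm index $2\hat\beta$. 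The three-line theorem is then applied with indices $2$, $2\gamma$, $2\hat\beta$, which is what makes $\alpha\geq0$ admissible and imposes $\gamma\geq\frac12$, $\beta>\frac12$; it also lets one take the supremum over the same $\sigma_C$ on both sides, avoiding the $\varepsilon$-near-optimal handling of the infimum over $\sigma_B$ (where, incidentally, you would need to bound the remaining term \emph{above} by its supremum, not ``below''). To repair your proposal you would need to incorporate this purification-plus-duality step, as in Lemma~\ref{threeEnt} and Propositions~\ref{part1}--\ref{part3}; as written, your argument establishes Theorem~\ref{decomp} but not the statement at hand.
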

This method draws on several components, which we break up into the following sections.
\section{Operator-vector correspondence}\label{sec:OPV}
We may establish a correspondence between vectors on a composite Hilbert space and operators mapping from one subspace to another. Given bases $\{\ket{e_i}\}_i$ and $\{\ket{f_i}\}_i$ for $\Hil_{A}$ and $\Hil_{B}$ respectively, define
\begin{equation}
	\Op_{A\rightarrow B}\left(\ket{e_i}\otimes\ket{f_i}\right)  = \ketbra{f_i}{e_i}.
\end{equation}
Accordingly, for $\ket{\psi}\in \Hil_{AB}$ such that
$
	\ket{\psi} = \sum_{i,j} \lambda_{ij}\ket{e_i}\otimes\ket{f_j}
$
we have
$
	\Op_{A\rightarrow B}\left(\ket{\psi}\right)  = \sum_{i,j} \lambda_{ij}\ketbra{f_j}{e_i}
$.
There are some useful properties of this correspondence, for proofs see~\cite[Section~1.1]{Wat}.
\begin{lemma}\label{OV1}
	Let $\ket{\psi} \in \Hil_{AB}$, $M\in \Lin(A)$ and $N\in \Lin(B)$. Then \begin{equation} \Op_{A\rightarrow B}\left[(M\otimes N) \ket{\psi}\right] = N \Op_{A\rightarrow B}(\ket{\psi})M^\top.\end{equation}
\end{lemma}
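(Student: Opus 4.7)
The plan is to verify this identity by a direct computation on a basis, leveraging the bilinearity of both sides. First I would fix orthonormal bases $\{\ket{e_i}\}$ of $\Hil_A$ and $\{\ket{f_j}\}$ of $\Hil_B$, and expand $\ket{\psi} = \sum_{i,j} \lambda_{ij}\,\ket{e_i}\otimes\ket{f_j}$. Since the maps $\ket{\psi}\mapsto \Op_{A\to B}[(M\otimes N)\ket{\psi}]$ and $\ket{\psi}\mapsto N\,\Op_{A\to B}(\ket{\psi})\,M^\top$ are both linear in $\ket{\psi}$, it suffices to verify the identity on product basis vectors $\ket{e_i}\otimes\ket{f_j}$ and then extend by linearity. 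This reduces the problem to a scalar index-chasing exercise.

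Next I would expand the left-hand side explicitly. Writing $M\ket{e_i} = \sum_k M_{ki}\ket{e_k}$ and $N\ket{f_j} = \sum_l N_{lj}\ket{f_l}$, I get
\begin{equation}
(M\otimes N)\bigl(\ket{e_i}\otimes\ket{f_j}\bigr) = \sum_{k,l} M_{ki} N_{lj}\,\ket{e_k}\otimes\ket{f_l}.
\end{equation}
Applying $\Op_{A\to B}$ to each basis term, and using linearity, the left-hand side becomes $\sum_{k,l} M_{ki} N_{lj}\,\ketbra{f_l}{e_k}$. For the right-hand side, using the definition $\Op_{A\to B}(\ket{e_i}\otimes\ket{f_j}) = \ketbra{f_j}{e_i}$, I compute $N\ketbra{f_j}{e_i}M^\top$ by inserting the matrix expansions of $N$ and $M^\top$, where the entries of $M^\top$ satisfy $(M^\top)_{il} = M_{li}$. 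Carrying out the multiplication in components then yields the same double sum $\sum_{k,l} M_{ki} N_{lj}\,\ketbra{f_l}{e_k}$, matching the left-hand side.

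The only subtlety — and the thing most likely to cause a sign-or-transpose mistake — is the appearance of $M^\top$ rather than $M$ on the right. This is forced by the fact that $M$ originally acts on the \emph{first} tensor factor, which corresponds under the $\Op$ correspondence to the \emph{domain} of the resulting operator. Acting on the domain from the original definition requires taking the dual action, which explains why $M$ gets transposed while $N$ does not. I would flag this step as the one to double-check, but once the bases are fixed and the indices are tracked carefully the computation is routine and no further machinery is required.
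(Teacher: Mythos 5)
Your computation is correct: expanding on the product basis, tracking the matrix entries of $M$ and $N$, and noting that $\bra{e_i}M^\top = \sum_k M_{ki}\bra{e_k}$ indeed reproduces $\sum_{k,l}M_{ki}N_{lj}\ketbra{f_l}{e_k}$ on both sides, and your remark about why the transpose is forced (because $M$ acts on the tensor factor that becomes the \emph{domain} under the correspondence) is exactly the right point to flag. The paper itself does not prove this lemma at all --- it simply cites Watrous, Section 1.1, for the properties of the operator--vector correspondence --- so your direct basis verification supplies the routine argument the paper omits; there is nothing to compare beyond that, and no gap in what you wrote.
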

\begin{lemma}\label{Isom}
	Let $\ket{\psi}\in\Hil_{AB}$. Then
	\begin{equation}
		\|\ket{\psi}\|_2 = \sqrt{\braket{\psi}{\psi}} = \left(\tr\left[\Op_{A\rightarrow B}(\ket{\psi})^\dagger \Op_{A\rightarrow B}(\ket{\psi})\right]\right)^\frac{1}{2} =\|\Op_{A\rightarrow B}(\ket{\psi})\|_2.
	\end{equation}
\end{lemma}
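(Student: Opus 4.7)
The plan is to prove Lemma~\ref{Isom} by direct computation in a chosen product basis, exploiting the fact that the operator-vector correspondence is defined through such a basis. Since all three expressions in the displayed equation are manifestly basis-independent (they are defined intrinsically via the inner product on $\Hil_{AB}$, a trace, and a Schatten 2-norm), it suffices to verify the chain of equalities once in any convenient basis.

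First, I would fix orthonormal bases $\{\ket{e_i}\}_i$ for $\Hil_A$ and $\{\ket{f_j}\}_j$ for $\Hil_B$, and expand the vector as
\begin{equation}
\ket{\psi} = \sum_{i,j}\lambda_{ij}\,\ket{e_i}\otimes\ket{f_j},
\end{equation}
with coefficients $\lambda_{ij}\in\C$. Using orthonormality together with the inner product on $\Hil_{AB}$ defined via Eq.~\eqref{IP}, a direct computation yields $\braket{\psi}{\psi} = \sum_{i,j}|\lambda_{ij}|^2$, so that $\|\ket{\psi}\|_2 = \sqrt{\braket{\psi}{\psi}} = \bigl(\sum_{i,j}|\lambda_{ij}|^2\bigr)^{1/2}$. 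This handles the first equality in the statement, which is just unfolding the definition of the norm induced by the inner product.

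Next, I would apply the operator-vector correspondence. By linearity of $\Op_{A\rightarrow B}$ and its action on basis elements given in Section~\ref{sec:OPV}, we have
\begin{equation}
M := \Op_{A\rightarrow B}(\ket{\psi}) = \sum_{i,j}\lambda_{ij}\,\ketbra{f_j}{e_i}.
\end{equation}
Taking the adjoint gives $M^\dagger = \sum_{i,j}\overline{\lambda_{ij}}\,\ketbra{e_i}{f_j}$, and multiplying yields
\begin{equation}
M^\dagger M = \sum_{i,j,k,\ell}\overline{\lambda_{ij}}\lambda_{k\ell}\,\braket{f_j}{f_\ell}\,\ketbra{e_i}{e_k} = \sum_{i,j,k}\overline{\lambda_{ij}}\lambda_{kj}\,\ketbra{e_i}{e_k},
\end{equation}
where I used $\braket{f_j}{f_\ell}=\delta_{j\ell}$. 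Taking the trace then kills all off-diagonal terms in the $e$ basis, giving $\tr(M^\dagger M) = \sum_{i,j}|\lambda_{ij}|^2$, which matches $\braket{\psi}{\psi}$ exactly.

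Finally, I would observe that by the definition of the Schatten $p$-norm in Section~\ref{sec:the-schatten-operator-norm}, $\|M\|_2 = \bigl[\tr(M^\dagger M)\bigr]^{1/2}$, so the last equality $\|\Op_{A\rightarrow B}(\ket{\psi})\|_2 = \bigl(\tr[M^\dagger M]\bigr)^{1/2}$ is immediate. Chaining these three equalities completes the proof. There is no real obstacle here: the lemma is essentially the statement that $\Op_{A\rightarrow B}$ is an isometry from $(\Hil_{AB},\langle\cdot,\cdot\rangle)$ to $(\Lin(A,B),\langle\cdot,\cdot\rangle_{\mathrm{HS}})$ where the latter is the Hilbert-Schmidt inner product, and the computation above makes this explicit. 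The only point requiring minor care is bookkeeping of the Kronecker deltas when multiplying $M^\dagger M$, to ensure the correct index is summed out.
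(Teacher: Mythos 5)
Your proof is correct. The paper itself does not supply a proof of Lemma~\ref{Isom}; it simply defers to Watrous (``for proofs see [Wat, Section 1.1]''), and your direct basis computation---expanding $\ket{\psi}=\sum_{i,j}\lambda_{ij}\ket{e_i}\otimes\ket{f_j}$, forming $M^\dagger M$, and matching $\tr(M^\dagger M)=\sum_{i,j}|\lambda_{ij}|^2=\braket{\psi}{\psi}$---is exactly the standard argument that would appear there, with the index bookkeeping handled correctly.
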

\begin{lemma}\label{OV3}
	Let $\ket{\psi}\in \Hil_{AB}$, $\rho_{AB} = \ketbra{\psi}{\psi}$. Then
	\begin{equation}\rho_A = \Op_{A\rightarrow B}(\ket{\psi})^\dagger \Op_{A\rightarrow B}(\ket{\psi})\quad\text{and}\quad
		{\rho_B = \Op_{A\rightarrow B}(\ket{\psi})\Op_{A\rightarrow B}(\ket{\psi})^\dagger}
		.\end{equation}
\end{lemma}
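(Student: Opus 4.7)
The plan is to verify both identities by a direct coordinate computation in the two fixed orthonormal bases $\{\ket{e_i}\}_i \subset \Hil_A$ and $\{\ket{f_j}\}_j \subset \Hil_B$ underlying the definition of $\Op_{A\rightarrow B}$ from the Notation table. Expand $\ket{\psi} = \sum_{i,j} \lambda_{ij}\ket{e_i}\otimes\ket{f_j}$, so that by definition of the operator--vector correspondence
\begin{equation}
X := \Op_{A\rightarrow B}(\ket{\psi}) = \sum_{i,j}\lambda_{ij}\ketbra{f_j}{e_i},\quad X^\dagger = \sum_{i,j}\overline{\lambda_{ij}}\ketbra{e_i}{f_j}.
\end{equation}

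First I would establish the identity for $\rho_A$. Compute $X^\dagger X$ using $\braket{f_j}{f_l}=\delta_{jl}$ to obtain
\begin{equation}
X^\dagger X = \sum_{i,j,k,l}\overline{\lambda_{ij}}\lambda_{kl}\braket{f_j}{f_l}\ketbra{e_i}{e_k} = \sum_{i,k}\Big(\sum_j \overline{\lambda_{ij}}\lambda_{kj}\Big)\ketbra{e_i}{e_k}.
\end{equation}
Then compute the partial trace $\rho_A = \tr_B\ketbra{\psi}{\psi}$ in the same basis. Writing $\rho_{AB} = \sum_{i,j,k,l}\lambda_{ij}\overline{\lambda_{kl}}\ket{e_i}\bra{e_k}\otimes\ket{f_j}\bra{f_l}$ and using $\tr_B(\ket{f_j}\bra{f_l})=\delta_{jl}$ yields exactly the same double sum, so $\rho_A = X^\dagger X$.

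Next I would handle $\rho_B$ by an entirely symmetric calculation: form $XX^\dagger$ using $\braket{e_i}{e_k}=\delta_{ik}$ to get $\sum_{j,l}\big(\sum_i \lambda_{ij}\overline{\lambda_{il}}\big)\ketbra{f_j}{f_l}$, and take $\tr_A$ of $\rho_{AB}$ to obtain the same expression. This confirms $\rho_B = XX^\dagger$.

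There is no substantive obstacle here. The verification is mechanical once the $\Op_{A\rightarrow B}$ convention $\ket{e_i}\otimes\ket{f_j}\mapsto\ketbra{f_j}{e_i}$ is applied carefully; the only thing to check is that the Kronecker deltas collapse the quadruple sums on both sides of each identity to the same double sum. Alternatively, one could give a basis-free derivation using Lemma~\ref{OV1} together with the fact that $\braket{\varphi}{\psi} = \tr[\Op_{A\rightarrow B}(\ket{\varphi})^\dagger \Op_{A\rightarrow B}(\ket{\psi})]$ (a polarised version of Lemma~\ref{Isom}), testing $\tr(M\rho_A) = \bra{\psi}(M\otimes\id_B)\ket{\psi}$ against arbitrary $M\in\Lin(A)$, but the direct coordinate approach is cleaner and self-contained.
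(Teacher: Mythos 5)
The paper itself does not prove this lemma; it defers to Watrous (Section~1.1), so there is no in-text proof to compare against. Your $\rho_B$ computation is correct: both $XX^\dagger$ and $\tr_A(\ketbra{\psi}{\psi})$ collapse to $\sum_{i,j,l}\lambda_{ij}\overline{\lambda_{il}}\ketbra{f_j}{f_l}$. But the $\rho_A$ half has a genuine gap: the two double sums you claim are ``exactly the same'' are not. You correctly obtain
\begin{equation}
X^\dagger X = \sum_{i,k}\Bigl(\sum_j \overline{\lambda_{ij}}\,\lambda_{kj}\Bigr)\ketbra{e_i}{e_k},
\qquad
\tr_B\ketbra{\psi}{\psi} = \sum_{i,k}\Bigl(\sum_j \lambda_{ij}\,\overline{\lambda_{kj}}\Bigr)\ketbra{e_i}{e_k},
\end{equation}
and the coefficients are complex conjugates of one another, not equal. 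What your computation actually proves is $X^\dagger X = \overline{\rho_A} = \rho_A^{\top}$ in the basis $\{\ket{e_i}\}$, which is the standard form of this identity for the convention $\ket{e_i}\otimes\ket{f_j}\mapsto\ketbra{f_j}{e_i}$; the transpose on the $A$ side is the same one that appears as $M^\top$ in Lemma~\ref{OV1}. Indeed, your proposed basis-free alternative exposes this: testing against arbitrary $M\in\Lin(A)$ via Lemma~\ref{OV1} gives $\tr(M\rho_A)=\bra{\psi}(M\otimes\id_B)\ket{\psi}=\tr\bigl(M^\top X^\dagger X\bigr)$, hence $\rho_A=(X^\dagger X)^\top$.

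To close the gap you should either expand $\ket{\psi}$ in its Schmidt bases, where the coefficients $\lambda_{ij}=r_i\delta_{ij}$ are real and the conjugation is harmless (this is how the identity is invoked in Lemma~\ref{threeEnt}), or note that the statement holds up to a transpose on the $A$ marginal and that every downstream use involves only Schatten norms of $X$, $X^\dagger X$ or $XX^\dagger$, which are invariant under transposition and complex conjugation, so the discrepancy is immaterial for the paper's applications. As written, though, the step ``yields exactly the same double sum'' is false for general complex $\lambda_{ij}$.
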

\section{Identities in terms of operator-vector correspondence}
We begin the main part of the demonstration by establishing the following identities for the relevant entropic quantities:
\begin{lemma}\label{threeEnt}
	For a pure state $\ket{\varphi} \in \Hil_{ABC}$ with $\ketbra{\varphi}{\varphi} = \rho_{ABC}$, let ${X = X_{B\rightarrow AC} = \Op_{B\rightarrow AC}(\ket{\varphi})}$. Given $(\tau_A\otimes\id_B) \gg \rho_{AB}$ and $\alpha\in (0, 1)\cup (1, \infty)$ we have 
	\begin{align}
		H_\alpha(\rho_{AB}\|\tau_A) &= -\log \sup_{\sigma_C\in \Den(C)}\left\|\left(\tau_A^{-1}\otimes\sigma_C\right)^\frac{1}{2\alpha'}X\right\|_{2}^{2\alpha'},\label{CRE}\\
		H_\alpha(\rho_B) &= -\log\left\|X\right\|_{2\alpha}^{2\alpha'}.\label{RE}
	\end{align}
	If in addition $\alpha> \frac{1}{2}$,
	\begin{align}
		I_\alpha(\rho_{AB}\|\tau_A) &=\log\sup_{\sigma_C\in \Den(C)}\left\|\left(\tau_A^{-1}\otimes\sigma_C\right)^\frac{1}{2\alpha'}X\right\|_{2\hat\alpha}^{2\alpha'}.\label{MI}
	\end{align}
\end{lemma}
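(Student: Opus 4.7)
The plan is to express each entropic quantity on the left as a Schatten norm of an operator built from $X$ by combining three standard tools: the marginal identities $X^\dagger X = \rho_B$ and $X X^\dagger = \rho_{AC}$ from Lemma~\ref{OV3}; the ``fold'' identity $\|N^\dagger N\|_\alpha = \|N\|_{2\alpha}^2$ from Proposition~\ref{fold}; and the variational identity $\|\rho\|_\alpha^{\alpha'} = \sup_{\sigma\in\Den}\tr[\rho\,\sigma^{1/\alpha'}]^{\alpha'}$ already used in the proof of Lemma~\ref{QuantEq}, which follows from H\"older together with a sign analysis for $\alpha'\lessgtr 0$. Equation~\eqref{RE} is then immediate: $\rho_B = X^\dagger X$ gives $\tr\rho_B^\alpha = \|X\|_{2\alpha}^{2\alpha}$ by Proposition~\ref{fold}, and the definition of $H_\alpha$ yields $-\log\|X\|_{2\alpha}^{2\alpha'}$.

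For Eq.~\eqref{CRE} the idea is to unfold the squared $2$-norm, collapse it onto the $C$-factor, and then transport back to the $AB$-factor by purity. Direct expansion using $X X^\dagger = \rho_{AC}$ gives
\begin{equation*}
\bigl\|(\tau_A^{-1}\otimes\sigma_C)^{1/(2\alpha')} X\bigr\|_2^{2} = \tr\bigl[\sigma_C^{1/\alpha'}\tilde\rho_C\bigr], \quad \tilde\rho_C := \tr_A\bigl[(\tau_A^{-1/(2\alpha')}\otimes\id_C)\rho_{AC}(\tau_A^{-1/(2\alpha')}\otimes\id_C)\bigr].
\end{equation*}
Supremising over $\sigma_C\in\Den(C)$ and raising to the $\alpha'$-th power produces $\|\tilde\rho_C\|_\alpha^{\alpha'}$ via the variational identity. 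But $\tilde\rho_C$ is the $C$-marginal of the (non-normalised) pure state $(\tau_A^{-1/(2\alpha')}\otimes\id_{BC})\ket{\varphi}$, whose $AB$-marginal is precisely $(\tau_A^{-1/(2\alpha')}\otimes\id_B)\rho_{AB}(\tau_A^{-1/(2\alpha')}\otimes\id_B)$; by the Schmidt decomposition across the $AB|C$ cut these two marginals share the same non-zero spectrum, hence the same $\alpha$-norm. Taking $-\log$ reproduces $-D_\alpha(\rho_{AB}\|\tau_A\otimes\id_B) = H_\alpha(\rho_{AB}\|\tau_A)$.

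For Eq.~\eqref{MI} the most direct route goes through Proposition~\ref{Dual}, which for a pure state gives $I_\alpha(\rho_{AB}\|\tau_A) = -I_{\hat\alpha}(\rho_{AC}\|\tau_A^{-1})$. Writing $S := (\tau_A^{-1}\otimes\sigma_C)^{1/(2\alpha')} = (\tau_A^{-1}\otimes\sigma_C)^{-1/(2\hat\alpha')}$ via the identity $-\hat\alpha' = \alpha'$ recorded earlier, substituting $\rho_{AC} = X X^\dagger$, and applying Proposition~\ref{fold} collapses the inner divergence to $D_{\hat\alpha}(\rho_{AC}\|\tau_A^{-1}\otimes\sigma_C) = \log\|SX\|_{2\hat\alpha}^{2\hat\alpha'}$; taking the infimum over $\sigma_C$, negating, and using $-2\hat\alpha' = 2\alpha'$ to swap $\inf$ with $\sup$ produces the stated formula. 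The main obstacle will be the sign bookkeeping between the regimes $\alpha>1$ and $\alpha\in(1/2,1)$: the variational identity and the $\inf/\sup$ exchange both depend on the sign of $\alpha'$, so each case must be handled separately in the style of Lemma~\ref{QuantEq}, with Proposition~\ref{continuity} invoked to cover limits in the support of $\tau_A$. A secondary subtlety in Eq.~\eqref{CRE} is that the Schmidt-spectrum identification requires $\tau_A^{-1/(2\alpha')}\otimes\id_{BC}$ to act on a single factor of the tripartition, which is exactly what the hypothesis $(\tau_A\otimes\id_B)\gg\rho_{AB}$ underwrites.
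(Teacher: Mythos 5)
Your proposal is correct, and for Eqs.~\eqref{RE} and~\eqref{MI} it follows essentially the paper's own route: \eqref{RE} via $\rho_B = X^\dagger X$ and Proposition~\ref{fold}, and \eqref{MI} via the duality of Proposition~\ref{Dual} combined with $\hat\alpha' = -\alpha'$ and Proposition~\ref{fold} (the paper inserts an intermediate norm expression in terms of $X_{C\rightarrow AB}$ before dualising, but the ingredients and the inf-to-sup sign flip are identical to yours). Where you genuinely diverge is \eqref{CRE}: the paper imports equation (19) of \cite{lennert13} --- a variational purification formula for $D_\alpha(\rho_{AB}\|\tau_A\otimes\id_B)$ --- as a black box and only converts the resulting vector expression into an operator norm via Lemmas~\ref{OV1} and~\ref{Isom}, whereas you re-derive that formula from first principles: expanding the squared $2$-norm into $\tr\bigl[\sigma_C^{1/\alpha'}\tilde\rho_C\bigr]$ using $XX^\dagger = \rho_{AC}$, invoking the H\"older-type variational identity (Lemma~\ref{newLem12}, which already packages the $\alpha'\gtrless 0$ sign bookkeeping so that both regimes give a supremum once the exponent $\alpha'$ sits inside the optimisation), and identifying the $\alpha$-norms of the $C$- and $AB$-marginals of the unnormalised pure state $\bigl(\tau_A^{-1/(2\alpha')}\otimes\id_{BC}\bigr)\ket{\varphi}$ through the Schmidt decomposition. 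Your version buys self-containedness at the cost of carrying out the case analysis the citation hides; the only quibble is your closing remark --- the hypothesis $(\tau_A\otimes\id_B)\gg\rho_{AB}$ is needed so that the pseudoinverse powers of $\tau_A$ do not truncate $\rho$ (keeping the norm expression equal to the divergence and finite), not to enable the Schmidt-spectrum identification, which holds for any vector regardless of support or normalisation.
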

The proof of Lemma~\ref{threeEnt} relies on the operator-vector correspondence.

\begin{proof}[Proof of Eq.~\eqref{CRE}]
	From equation (19) in~\cite{lennert13}, we can write
	\begin{align}
		H_\alpha(\rho_{AB}\|\tau_A) = -\log\sup_{\sigma_C \in \Den(C)}\bra{\varphi}\tau_A^{\frac{-1}{\alpha'}}\otimes\id_B\otimes\sigma_C^{\frac{1}{\alpha'}}\ket{\varphi}^{\alpha'}.
	\end{align}
	Also, using Lemma~\ref{OV1} we have
	\begin{align}
		\Op_{B\rightarrow AC}\left(\tau_A^\frac{-1}{2\alpha'}\otimes\id_B\otimes\sigma_C^\frac{1}{2\alpha'}\ket{\varphi}\right) = \left(\tau_A^\frac{-1}{2\alpha'}\otimes\sigma_C^\frac{1}{2\alpha'}\right)X.
	\end{align}
	From this we can deduce, using Lemma~\ref{Isom},
	\begin{align}
		\bra{\varphi}\tau_A^{\frac{-1}{\alpha'}}\otimes\id_B\otimes\sigma_C^{\frac{1}{\alpha'}}\ket{\varphi}^{\alpha'} &= \left\|\tau_A^\frac{-1}{2\alpha'}\otimes\id_B\otimes\sigma_C^\frac{1}{2\alpha'}\ket{\varphi}\right\|_2^{2\alpha'}\\
		&= \left\|\Op_{B\rightarrow AC}\left(\tau_A^\frac{-1}{2\alpha'}\otimes \id_B\otimes\sigma_C^\frac{1}{2\alpha'}\ket{\varphi}\right)\right\|_2^{2\alpha'}\\
		&= \left\|\left(\tau_A^\frac{-1}{2\alpha'}\otimes\sigma_C^\frac{1}{2\alpha'}\right)X\right\|_2^{2\alpha'}.
	\end{align}
\end{proof}
\begin{proof}[Proof of Eq.~\eqref{RE}]
	From Lemma~\ref{OV3} we can see that $\rho_B= X^\dagger X$, hence we have
	\begin{align}
		H_\alpha(B)_{\rho} = -\log\left\|X^\dagger X\right\|_\alpha^{\alpha'}= -\log\left\|X\right\|_{2\alpha}^{2\alpha'},
	\end{align}
	where we have also used the identity, Proposition~\ref{fold}.
\end{proof}
The proof of Eq.~\eqref{MI} relies on the duality result in Proposition~\ref{Dual}.
\begin{proof}[Proof of Eq.~\eqref{MI}]
	Using operator-vector correspondence we can re-express the generalised R\'enyi mutual information using operator norms, i.e. for $X_{C\rightarrow AB} = \Op_{C\rightarrow AB}\left(\ket{\varphi}\right)$,
	\begin{align}
		I_\alpha\left(\rho_{AB}\|\tau_A\right) &= \frac{1}{\alpha-1}\log\inf_{\sigma_B\in \Den(A)}\tr\left(\left[\left(\tau_A\otimes\sigma_B\right)^\frac{-1}{2\alpha'}\rho_{AB}\left(\tau_A\otimes\sigma_B\right)^\frac{-1}{2\alpha'}\right]^\alpha\right)\\
		&=\log\inf_{\sigma_B\in \Den(A)}\left\|\left(\tau_A\otimes\sigma_B\right)^\frac{-1}{2\alpha'}X_{C\rightarrow AB}X_{C\rightarrow AB}^\dagger\left(\tau_A\otimes\sigma_B\right)^\frac{-1}{2\alpha'}\right\|_\alpha^{\alpha'}\\
		&=\log\inf_{\sigma_B\in \Den(A)}\left\|\left(\tau_A\otimes\sigma_B\right)^\frac{-1}{2\alpha'}X_{C\rightarrow AB}\right\|_{2\alpha}^{2\alpha'}.\numberthis\label{MINorm}
	\end{align}
	By the duality of the generalised mutual information and Eq.~\eqref{MINorm}, we can write
	\begin{align}
		I_\alpha\left(\rho_{AB}\|\tau_A\right) &=-I_{\hat\alpha}\left(\rho_{AC}\|\tau_A^{-1}\right)\\
		&=-\log\inf_{\omega_C\in \Den(C)}\left\|\left(\tau_A^\frac{1}{2\hat\alpha'}\otimes\omega_C^\frac{-1}{2\hat\alpha'}\right)X\right\|_{2\hat\alpha}^{2\hat\alpha'}\\
		&=\log\sup_{\omega_C\in \Den(C)}\left\|\left(\tau_A^\frac{-1}{2\alpha'}\otimes\omega_C^\frac{1}{2\alpha'}\right)X\right\|_{2\hat\alpha}^{2\alpha'}.
	\end{align}
	where in the last line we used the fact that $\hat\alpha' = -\alpha'$.
\end{proof}

\section{Interpolating for valid choices of $\theta$}
Theorem~\ref{decompDup} can be proved directly from the following propositions which make use of the above identities.
\begin{proposition}\label{part1}
	Let $\alpha,\beta,\gamma$ be such that ${\alpha'} = {\beta'}+{\gamma'}$, $\rho_{AB} \in \Den(AB)$ and $\tau_A\in \PosS(A)$. Then the following holds:\\
	
	For $\alpha\in (1,2),\beta,\gamma\in(1,\infty)$, we find
	\begin{align}
		I_\beta(\rho_{AB}\|\tau_A) \geq H_\gamma(\rho_{B})-H_\alpha(\rho_{AB}\|\tau_A).\label{p1ugd}
	\end{align}
	
	For $\alpha\in \left[2/3,1\right),\beta\in\left(1/2,1\right),\gamma\in\left[1/2,1\right)$, we find
	\begin{align}
		I_\beta(\rho_{AB}\|\tau_A) \leq H_\gamma(\rho_{B})-H_\alpha(\rho_{AB}\|\tau_A).\label{p1uld}
	\end{align}
\end{proposition}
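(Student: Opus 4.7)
The plan is to carry out Proposition~\ref{part1} by a Dupuis-style complex interpolation, now applied to the three identities collected in Lemma~\ref{threeEnt}. Let $\ket{\varphi}_{ABC}$ be a purification of $\rho_{AB}$ and write $X = \Op_{C\to AB}(\ket{\varphi}) \in \Lin(C,AB)$. Lemma~\ref{threeEnt} tells us that, up to a supremum over $\sigma_C \in \Den(C)$, the three exponentiated quantities
\[
2^{-H_\alpha(\rho_{AB}\|\tau_A)},\quad 2^{I_\beta(\rho_{AB}\|\tau_A)},\quad 2^{-H_\gamma(\rho_B)}
\]
are weighted Schatten norms of $X$ taken at the exponents $1/(2\alpha')$, $1/(2\beta')$ and $0$ respectively. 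So, after exponentiation, the desired comparison is a multiplicative log-convexity inequality among these three norms, tailor-made for a three-line argument.

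Concretely, I would fix a full-support $\sigma_C \in \Den(C)$ and introduce the holomorphic family
\[
F: S \to \Lin(C,AB), \qquad F(z) = \bigl(\tau_A^{-1} \otimes \sigma_C\bigr)^{(1-z)/(2\beta')}\, X.
\]
The Schatten exponents are set to $p_0 = 2\hat\beta$ and $p_1 = 2\gamma$, and the interpolation parameter to $\theta = \gamma'/\alpha'$. The key algebraic check is that $\frac{1}{p_\theta} = \frac{1-\theta}{p_0} + \frac{\theta}{p_1} = \frac{1}{2}$, i.e.\ $\alpha' = \beta'/\hat\beta + \gamma'/\gamma$; this is immediate from $\alpha' = \beta'+\gamma'$ together with $\beta'/\hat\beta = \beta'+1$ and $\gamma'/\gamma = \gamma'-1$. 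At $z=0$ and $z=1$ the function $F$ evaluates to the mutual-information and $\rho_B$-norms respectively, while at $z=\theta$ the exponent collapses to $1/(2\alpha')$, recovering the conditional-entropy norm.

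Applying Theorem~\ref{weighted3lines} to $F$ yields
\[
\|F(\theta)\|_{2} \leq \Bigl(\sup_{t\in\R}\|F(\im t)\|_{2\hat\beta}\Bigr)^{1-\theta}\Bigl(\sup_{t\in\R}\|F(1+\im t)\|_{2\gamma}\Bigr)^{\theta}.
\]
Since the imaginary shifts $(\tau_A^{-1}\otimes\sigma_C)^{\im t/(2\beta')}$ are unitary, both suprema collapse to their boundary values. Raising to the power $2\alpha'$, the exponent $(1-\theta)$ becomes $2\beta'$ and $\theta$ becomes $2\gamma'$, giving
\[
\bigl\|(\tau_A^{-1}\otimes\sigma_C)^{1/(2\alpha')} X\bigr\|_{2}^{2\alpha'} \; \lesseqgtr \; \bigl\|(\tau_A^{-1}\otimes\sigma_C)^{1/(2\beta')} X\bigr\|_{2\hat\beta}^{2\beta'} \, \|X\|_{2\gamma}^{2\gamma'},
\]
with $\leq$ when $\alpha'>0$ and $\geq$ when $\alpha'<0$. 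Taking the supremum over $\sigma_C\in\Den(C)$ on both sides (the factor $\|X\|_{2\gamma}^{2\gamma'}$ is $\sigma_C$-independent and pulls out), then taking logarithms and applying Lemma~\ref{threeEnt}, produces the two inequalities of the proposition, one per sign regime.

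The step requiring genuine care will be pinning down the parameter ranges. The three-line theorem demands $\theta \in (0,1)$ and $1 \leq p_0, p_1 \leq \infty$. In case (i), $\alpha>1$ and $\beta,\gamma>1$ make all primes positive and all exponents valid; the upper bound $\alpha<2$ is forced because $\beta',\gamma'>1$ gives $\alpha' = \beta'+\gamma' > 2$, i.e.\ $\alpha<2$. In case (ii), all primes are negative, and $\beta'<-1$, $\gamma' \leq -1$ forces $\alpha' \leq -2$, equivalently $\alpha \geq 2/3$; the constraints on $p_0 = 2\hat\beta$ and $p_1 = 2\gamma$ remain within $[1,\infty]$ throughout this range. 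The main obstacle is therefore not the interpolation itself but verifying that the algebraic constraint $\alpha'=\beta'+\gamma'$, together with the sign conventions governing the flip when exponentiating by $2\alpha'$, cleanly partitions the parameter space into the two regimes stated in the proposition.
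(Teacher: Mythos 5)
Your argument is correct and is essentially the paper's own proof: it uses the same three norm identities from Lemma~\ref{threeEnt}, the same interpolation family up to the reflection $z\mapsto 1-z$ (the paper takes $F(z)=(\tau_A^{-1}\otimes\sigma_C)^{z/(2\beta')}X$ with $\theta=\beta'/\alpha'$, $p_0=2\gamma$, $p_1=2\hat\beta$), the same unitarity/supremum steps, and the same sign analysis, with the parameter ranges that the paper delegates to Lemma~\ref{ana} verified directly. The only cosmetic slip is the labelling of the operator-vector correspondence: Lemma~\ref{threeEnt} uses $X=\Op_{B\rightarrow AC}(\ket{\varphi})\in\Lin(B,AC)$, not $\Op_{C\rightarrow AB}(\ket{\varphi})$, so that $X^\dagger X=\rho_B$ and the weight $\tau_A^{-1}\otimes\sigma_C$ acts on the output space as required.
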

\begin{proof}
	Choose $F(z) = (\tau_A^{-1}\otimes\sigma_C)^\frac{z}{2\beta'}X,\quad
	\theta = \frac{\beta'}{\alpha'},\quad
	p_0 = 2\gamma,\quad
	p_1 = 2\hat\beta.$
	With these choices we can determine $\theta = \frac{1}{\alpha'}\left({\alpha'}-{\gamma'} \right) = 1 - \frac{\gamma'}{\alpha'}$, hence $1-\theta = \frac{\gamma'}{\alpha'}$.
	
	We can also calculate the appropriate value of $p_\theta$ to use Theorem~\ref{weighted3lines}:
	\begin{align}
		\frac{1}{p_\theta} = \frac{\gamma'}{2\alpha'\gamma} + \frac{\beta'}{2\alpha'\hat\beta}\implies\frac{2\alpha'}{p_\theta} =\gamma'-1 + \beta' + 1= {\gamma'} + {\beta'},
	\end{align}
	thus we can conclude that $p_\theta = 2$.
	
	We can therefore calculate that
	\begin{equation}
		\left\|F(\theta)\right\|_{p_\theta} = \left\|(\tau_A^{-1}\otimes\sigma_C)^\frac{1}{2\alpha'}X\right\|_2.
	\end{equation}
	Additionally,
	\begin{equation}
		\|F(\im t)\|_{p_0} = \left\|(\tau_A^{-1}\otimes\sigma_C)^\frac{\im t}{2\beta'}X\right\|_{2\gamma}
	\end{equation}
	and
	\begin{equation}
		\|F(1+\im t)\|_{p_1} = \left\|(\tau_A^{-1}\otimes\sigma_C)^\frac{1+\im t}{2\beta'}X\right\|_{2\hat\beta}.
	\end{equation}
	Since $(\tau_A^{-1}\otimes\sigma_C)^\frac{\im t}{2\beta'}$ is unitary for all $t\in \R$ we can write
	\begin{equation}
		\mathsf{M}_0 = \left\|X\right\|_{2\gamma}\quad\text{and}\quad \mathsf{M}_1 = \left\|(\tau_A^{-1}\otimes\sigma_C)^\frac{1}{2\beta'}X\right\|_{2\hat\beta}.
	\end{equation}

	Applying Theorem~\ref{weighted3lines} we have
	\begin{align}
		\left\|(\tau_A^{-1}\otimes\sigma_C)^\frac{1}{2\alpha'}X\right\|_2&\leq \left\|X\right\|_{2\gamma}^\frac{\gamma'}{\alpha'}\left\|(\tau_A^{-1}\otimes\sigma_C)^\frac{1}{2\beta'}X\right\|_{2\hat\beta}^\frac{\beta'}{\alpha'}.
	\end{align}
	
	First, consider $\alpha'>0$. Maximising over $\sigma_C$ on both sides we have
	\begin{align}
		\sup_{\sigma_C\in \Den(C)}\left\|(\tau_A^{-1}\otimes\sigma_C)^\frac{1}{2\alpha'}X\right\|_2^{2\alpha'}&\leq \left\|X\right\|_{2\gamma}^{2\gamma'}\sup_{\sigma_C\in \Den(C)}\left\|(\tau_A^{-1}\otimes\sigma_C)^\frac{1}{2\beta'}X\right\|_{2\hat\beta}^{2\beta'}.
	\end{align}
	
	Using Lemma~\ref{threeEnt}, we can rewrite this as
	\begin{align}
		-H_\alpha(\rho_{AB}\|\tau_A) &\leq -H_\gamma(\rho_{B})+I_\beta(\rho_{AB}\|\tau_A)\\
		\implies I_\beta(\rho_{AB}\|\tau_A) &\geq H_\gamma(\rho_{B})-H_\alpha(\rho_{AB}\|\tau_A).
	\end{align}
	
	If instead $\alpha'<0$, we obtain
	\begin{align}
		\sup_{\sigma_C\in \Den(C)}\left\|(\tau_A^{-1}\otimes\sigma_C)^\frac{1}{2\alpha'}X\right\|_2^{2\alpha'}&\geq \left\|X\right\|_{2\gamma}^{2\gamma'}\sup_{\sigma_C\in \Den(C)}\left\|(\tau_A^{-1}\otimes\sigma_C)^\frac{1}{2\beta'}X\right\|_{2\hat\beta}^{2\beta'},
	\end{align}
	giving us Eq.~\eqref{p1uld}. The valid ranges can be determined using Lemma~\ref{ana}.
\end{proof}
\begin{proposition}\label{part2}
	Let $\alpha,\beta,\gamma$ be such that ${\alpha'} = {\beta'} + {\gamma'}$, $\rho_{AB} \in \Den(AB)$ and $\tau_A\in \PosS(A)$. Then the following holds:\\
	
	For $\alpha\in(0,1), \beta \in (1/2,1),\gamma\in(1,\infty)$, we find
	\begin{align}
		I_\beta(\rho_{AB}\|\tau_A) \geq H_\gamma(\rho_{B})-H_\alpha(\rho_{AB}\|\tau_A)\label{p2ugd}.
	\end{align}
	
	For $\gamma\in[1/2,1), \beta \in (1,2),\alpha\in(1,\infty)$, we find
	\begin{align}
		I_\beta(\rho_{AB}\|\tau_A) \leq H_\gamma(\rho_{B})-H_\alpha(\rho_{AB}\|\tau_A)\label{p2uld}.
	\end{align}
\end{proposition}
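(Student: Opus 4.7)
The approach mirrors Proposition~\ref{part1}, but rearranges the three-line interpolation so that $I_\beta$ occupies the interior point $F(\theta)$ in place of $H_\alpha$. This swap is exactly what the complementary parameter regime demands: the hypotheses of Proposition~\ref{part2} are precisely those under which $|\alpha'| < |\beta'|$, so that $\alpha'/\beta' \in (0,1)$ is the natural interpolation ratio (replacing the $\beta'/\alpha' \in (0,1)$ that governed Proposition~\ref{part1}).

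Concretely, I would apply Theorem~\ref{weighted3lines} with
\begin{equation*}
F(z) = \left(\tau_A^{-1}\otimes\sigma_C\right)^{z/(2\alpha')}X, \qquad \theta = \frac{\alpha'}{\beta'}, \qquad p_0 = 2\gamma, \qquad p_1 = 2.
\end{equation*}
Because $(\tau_A^{-1}\otimes\sigma_C)^{it/(2\alpha')}$ is unitary on the imaginary axis, the boundary suprema collapse to $\mathsf{M}_0 = \|X\|_{2\gamma}$ and $\mathsf{M}_1 = \|(\tau_A^{-1}\otimes\sigma_C)^{1/(2\alpha')}X\|_2$. A short computation using $\alpha' = \beta' + \gamma'$ together with $\gamma'/\gamma = \gamma' - 1$ then verifies that $p_\theta = 2\hat\beta$, which is precisely the Schatten index required to recover $I_\beta$ via Eq.~\eqref{MI}. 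The three-line inequality thus reads
\begin{equation*}
\left\|(\tau_A^{-1}\otimes\sigma_C)^{1/(2\beta')}X\right\|_{2\hat\beta} \leq \|X\|_{2\gamma}^{-\gamma'/\beta'}\left\|(\tau_A^{-1}\otimes\sigma_C)^{1/(2\alpha')}X\right\|_2^{\alpha'/\beta'},
\end{equation*}
and raising both sides to the power $2\beta'$ (which preserves the inequality when $\beta' > 0$ and reverses it when $\beta' < 0$), taking $\sup_{\sigma_C \in \Den(C)}$ on both sides, and finally invoking the identities of Lemma~\ref{threeEnt}, delivers Eq.~\eqref{p2ugd} in the $\beta' < 0$ regime (case A) and Eq.~\eqref{p2uld} in the $\beta' > 0$ regime (case B).

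The principal obstacle --- really a book-keeping exercise rather than a deep one --- is verifying that the constraint $\alpha'/\beta' \in (0,1)$, together with the supporting algebraic side conditions and the requirement $\beta > 1/2$ needed for the duality underlying Eq.~\eqref{MI}, collapses via Lemma~\ref{ana} to exactly the two stated triples: $\alpha \in (0,1),\, \beta \in (1/2,1),\, \gamma \in (1,\infty)$ in case A, and $\gamma \in [1/2,1),\, \beta \in (1,2),\, \alpha \in (1,\infty)$ in case B. Geometrically this simply tracks the sign pattern of $(\alpha', \beta', \gamma')$, but writing it out cleanly is where the bulk of the remaining work lies.
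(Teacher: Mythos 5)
Your proposal is correct and coincides with the paper's own proof of Proposition~\ref{part2}: the same choice $F(z) = (\tau_A^{-1}\otimes\sigma_C)^{z/(2\alpha')}X$ with $\theta = \alpha'/\beta'$, $p_0 = 2\gamma$, $p_1 = 2$, the same boundary norms $\mathsf{M}_0 = \|X\|_{2\gamma}$ and $\mathsf{M}_1 = \|(\tau_A^{-1}\otimes\sigma_C)^{1/(2\alpha')}X\|_2$, the computation $p_\theta = 2\hat\beta$, the case split on the sign of $\beta'$, the supremum over $\sigma_C$, and the final appeal to Lemma~\ref{threeEnt} and Lemma~\ref{ana}. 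Nothing essential differs from the paper's argument.
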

\begin{proof}
	Choose $F(z) = (\tau_A^{-1}\otimes\sigma_C)^\frac{z}{2\alpha'}X,\quad
	\theta = \frac{\alpha'}{\beta'},\quad
	p_0 = 2\gamma,\quad
	p_1 = 2$.
	We have, as before, $1-\theta = \frac{-\gamma'}{\beta'}$ and through a similar calculation we can conclude that $p_\theta= 2\hat\beta$.
	
	We have
	\begin{align}
		&\|F(\theta)\|_{p_\theta} = \left\|(\tau_A^{-1}\otimes\sigma_C)^\frac{1}{2\beta'}X\right\|_{2\hat\beta}, \quad
		\|F(\im t)\|_{p_0} = \left\|(\tau_A^{-1}\otimes\sigma_C)^\frac{\im t}{2\alpha'}X\right\|_{2\gamma}\\
		&\text{and }\|F(1+\im t)\|_{p_1} = \left\|(\tau_A^{-1}\otimes\sigma_C)^\frac{1+\im t}{2\alpha'}X\right\|_{2},
	\end{align}
	hence $\mathsf{M}_0 = \left\|X\right\|_{2\gamma}\text{ and } \mathsf{M}_1 = \left\|(\tau_A^{-1}\otimes\sigma_C)^\frac{1}{2\alpha'}X\right\|_{2}$.
	
	Applying Theorem~\ref{weighted3lines} we have
	\begin{align}
		\left\|(\tau_A^{-1}\otimes\sigma_C)^\frac{1}{2\beta'}X\right\|_{2\hat\beta}&\leq\left\|X\right\|_{2\gamma}^\frac{-\gamma'}{\beta'}\left\|(\tau_A^{-1}\otimes\sigma_C)^\frac{1}{2\alpha'}X\right\|_{2}^\frac{\alpha'}{\beta'}.
	\end{align}
	
	First, we consider the case where $\beta'>0$. It follows that
	\begin{align} \left\|(\tau_A^{-1}\otimes\sigma_C)^\frac{1}{2\beta'}X\right\|_{2\hat\beta}^{2\beta'}&\leq\left\|X\right\|_{2\gamma}^{-2\gamma'}\left\|(\tau_A^{-1}\otimes\sigma_C)^\frac{1}{2\alpha'}X\right\|_{2}^{2\alpha'}.
	\end{align}
	
	As in Proposition~\ref{part1}, we can maximise over $\sigma_C$ on both sides. Hence we arrive at Eq.~\eqref{p2uld}.
	Repeating the same process with the assumption $\beta'<0$ yields Eq.~\eqref{p2ugd}. We can again refer to Lemma~\ref{ana} to determine the valid ranges.
\end{proof}
\begin{proposition}\label{part3}Let $\alpha,\beta,\gamma$ be such that ${\alpha'} = {\beta'}+{\gamma'}$, $\rho_{AB} \in \Den(AB)$ and $\tau_A\in \PosS(A)$. Then the following holds:\\
	
	For $\alpha\in(0,1), \gamma \in [1/2,1),\beta\in(1,\infty)$, we find
	\begin{align}
		I_\beta(\rho_{AB}\|\tau_A) \geq H_\gamma(\rho_{B})-H_\alpha(\rho_{AB}\|\tau_A)\label{p3ugd}.
	\end{align}
	
	For $\beta\in(1/2,1), \gamma \in (1,2),\alpha\in(1,\infty)$, we find
	\begin{align}
		I_\beta(\rho_{AB}\|\tau_A) \leq H_\gamma(\rho_{B})-H_\alpha(\rho_{AB}\|\tau_A)\label{p3uld}.
	\end{align}
\end{proposition}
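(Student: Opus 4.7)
The plan is to repeat the interpolation strategy of Propositions~\ref{part1} and~\ref{part2}, but now choose the interpolation so that $F(\theta)$ reproduces the R\'enyi entropy term $\|X\|_{2\gamma}$ associated with $H_\gamma(\rho_B)$ (as opposed to the $H_\alpha(\rho_{AB}\|\tau_A)$ term in Proposition~\ref{part1} or the $I_\beta(\rho_{AB}\|\tau_A)$ term in Proposition~\ref{part2}). This is the third, and last, of the three natural ways to assign $\alpha', \beta', \gamma'$ to the roles of $p_\theta$ and $p_0, p_1$ in Theorem~\ref{weighted3lines}, dictated by the relation $\alpha' = \beta' + \gamma'$.

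Concretely, I would set
\begin{equation*}
F(z) = \left(\tau_A^{-1}\otimes\sigma_C\right)^{\frac{1-z}{2\alpha'} + \frac{z}{2\beta'}} X, \qquad \theta = -\frac{\beta'}{\gamma'},
\end{equation*}
with $p_0 = 2$ and $p_1 = 2\hat\beta$, so that $F(0) = (\tau_A^{-1}\otimes\sigma_C)^{1/(2\alpha')}X$ carries the information of $H_\alpha$ and $F(1) = (\tau_A^{-1}\otimes\sigma_C)^{1/(2\beta')}X$ that of $I_\beta$. A brief calculation using $\alpha'=\beta'+\gamma'$ and the identity $\beta'/\hat\beta = \beta'+1$ from Section~\ref{sec:sand} gives $1/p_\theta = (\gamma'-1)/(2\gamma') = 1/(2\gamma)$, so $p_\theta = 2\gamma$; moreover the exponent of the positive operator at $z=\theta$ collapses to zero, so $F(\theta) = X$, precisely the norm required for $H_\gamma(\rho_B)$ via Lemma~\ref{threeEnt}. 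The choice $\theta = -\beta'/\gamma'$ lies in $(0,1)$ exactly when $\beta'$ and $\gamma'$ have opposite signs with $|\beta'|<|\gamma'|$, which happens in each of the two scenarios of the proposition.

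Since $\tau_A^{-1}\otimes\sigma_C$ is strictly positive, its imaginary powers are unitary, and the boundary suprema reduce to the $t$-independent values $\mathsf{M}_0 = \|(\tau_A^{-1}\otimes\sigma_C)^{1/(2\alpha')}X\|_2$ and $\mathsf{M}_1 = \|(\tau_A^{-1}\otimes\sigma_C)^{1/(2\beta')}X\|_{2\hat\beta}$. Theorem~\ref{weighted3lines} then yields
\begin{equation*}
\|X\|_{2\gamma} \;\leq\; \left\|(\tau_A^{-1}\otimes\sigma_C)^{\frac{1}{2\alpha'}}X\right\|_{2}^{\alpha'/\gamma'}\left\|(\tau_A^{-1}\otimes\sigma_C)^{\frac{1}{2\beta'}}X\right\|_{2\hat\beta}^{-\beta'/\gamma'}.
\end{equation*}
Raising to the power $2\gamma'$ preserves the inequality when $\gamma' > 0$ and reverses it when $\gamma' < 0$; the left-hand side is independent of $\sigma_C$, so taking a supremum over $\sigma_C \in \Den(C)$ on both sides and applying the identities of Lemma~\ref{threeEnt} converts the three norms into $H_\alpha(\rho_{AB}\|\tau_A)$, $H_\gamma(\rho_B)$, and $I_\beta(\rho_{AB}\|\tau_A)$, giving \eqref{p3ugd} in the first sign case and \eqref{p3uld} in the second.

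The interpolation itself is essentially mechanical once $F$, $\theta$, and the $p_k$ are identified; the genuine work lies in the parameter analysis. Specifically, one must verify that the sign conditions on $\alpha'$, $\beta'$, $\gamma'$ imposed by the choice $\theta = -\beta'/\gamma' \in (0,1)$, combined with $\alpha' = \beta'+\gamma'$ and the additional requirement $1/\beta + 1/\gamma \gtreqless 2$, carve out precisely the ranges $\alpha \in (0,1),\, \gamma \in [1/2,1),\, \beta \in (1,\infty)$ and $\beta \in (1/2,1),\, \gamma \in (1,2),\, \alpha \in (1,\infty)$. I expect this to reduce to the same case analysis already recorded in Lemma~\ref{ana} (and invoked at the close of the proofs of Propositions~\ref{part1} and~\ref{part2}), so the main obstacle is bookkeeping rather than a new idea.
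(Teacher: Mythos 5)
Your proposal is correct and is essentially the paper's own proof: the paper likewise interpolates with $F(z)=(\tau_A^{-1}\otimes\sigma_C)^{g(z)}X$ whose affine exponent vanishes at $z=\theta$ so that $\|F(\theta)\|_{2\gamma}=\|X\|_{2\gamma}$ gives $H_\gamma(\rho_B)$, with the two endpoints carrying the $H_\alpha$ term (Schatten $2$-norm) and the $I_\beta$ term ($2\hat\beta$-norm), followed by the same optimisation over $\sigma_C$, sign analysis in $\gamma'$, and appeal to Lemma~\ref{ana} for the ranges. The only difference is the immaterial relabeling $z\mapsto 1-z$ (the paper takes $\theta=\alpha'/\gamma'$ with $p_0=2\hat\beta$, $p_1=2$, while you take $\theta=-\beta'/\gamma'$ with the endpoints swapped).
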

\begin{proof}
	Choose $F(z) = (\tau_A^{-1}\otimes\sigma_C)^{\frac{1}{2\beta'}-z\frac{\gamma'}{2\alpha'\beta'}}X,\quad
	\theta = \frac{\alpha'}{\gamma'},\quad
	p_0 = 2\hat\beta,\quad
	p_1 = 2$.
	As above, $1-\theta = \frac{-\beta'}{\gamma'}$ and $p_\theta= 2\gamma$.
	
	We have
	\begin{align}
		&\|F(\theta)\|_{p_\theta} = \left\|X\right\|_{2\gamma},\quad
		\|F(\im t)\|_{p_0} = \left\|(\tau_A^{-1}\otimes\sigma_C)^{\frac{1}{2\beta'}-\frac{\im t\gamma'}{2\alpha'\beta'}}X\right\|_{2\hat\beta}\\
		&\text{and }\|F(1+\im t)\|_{p_1} = \left\|(\tau_A^{-1}\otimes\sigma_C)^{\frac{1}{2\alpha'}-\frac{\im t\gamma'}{2\alpha'\beta'}}X\right\|_{2},
	\end{align}
	hence $\mathsf{M}_0 = \left\|(\tau_A^{-1}\otimes\sigma_C)^{\frac{1}{2\beta'}}X\right\|_{2\hat\beta}\text{ and } \mathsf{M}_1 = \left\|(\tau_A^{-1}\otimes\sigma_C)^\frac{1}{2\alpha'}X\right\|_{2}$.
	Applying Theorem~\ref{weighted3lines} and performing the same procedure as in Propositions~\ref{part1} and~\ref{part2}, for both $\gamma'>0$ and $\gamma'<0$ we obtain Eqs.~\eqref{p3ugd} and ~\eqref{p3uld}. For the valid ranges, we have a similar situation as in Proposition~\ref{part2} but with symmetry in $\beta$ and $\gamma$.
\end{proof}
\section{Consolidation and analysis}
We may now prove Theorem~\ref{decompDup}:
\begin{proof}[Proof of Theorem~\ref{decompDup}]
	We first combine the three above propositions and examine the valid ranges.
	We have from Lemma~\ref{ana} that the propositions cover all possible permutations of the parameters, and hence all valid values of $\alpha, \beta$ and $\gamma$.
	
	For the forward inequality, i.e. Eqs.~\eqref{p1ugd},~\eqref{p2ugd} and~\eqref{p3ugd} we can see that either ($\alpha, \beta, \gamma > 1$), ($\alpha, \gamma<1, \beta > 1$) or ($\alpha, \beta <1, \gamma>1$), which all satisfy $\frac{1}{\beta} + \frac{1}{\gamma}\leq2$.
	
	For the reverse inequality, i.e. Eqs.~\eqref{p1uld},~\eqref{p2uld} and~\eqref{p3uld} we have either (${\alpha, \beta, \gamma < 1}$), ($\alpha, \gamma >1, \beta > 1$) or ($\alpha, \beta>1, \gamma<1$), which all satisfy $\frac{1}{\beta} + \frac{1}{\gamma}\geq2$.
	We again obtain the final conditions and the extension to positive semi-definite matrices in the same manner as Theorem~\ref{DivFormIntLem}.
\end{proof}
\chapter{Other useful results}
\section{Background results}\label{sec:alpha1}
\begin{proposition}\label{ato1}
	\begin{equation}
		\lim_{\alpha\rightarrow 1}H_\alpha(X) = H(X)
	\end{equation}
\end{proposition}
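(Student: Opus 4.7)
The plan is to recognise this as a classical indeterminate form and apply l'H\^opital's rule, the same routine device that the paper uses in Proposition~\ref{constToOne}. Writing $H_\alpha(X) = f(\alpha)/g(\alpha)$ with
\[
f(\alpha) = \log \sum_x p(x)^\alpha, \qquad g(\alpha) = 1-\alpha,
\]
I would first observe that $\sum_x p(x) = 1$ gives $f(1) = \log 1 = 0$, and clearly $g(1) = 0$, so the quotient is of $0/0$ type as $\alpha \to 1$. Both $f$ and $g$ are differentiable in a neighbourhood of $1$ (note $\sum_x p(x)^\alpha > 0$ for all $\alpha > 0$ since we only sum over $x \in \supp(X)$, following the convention $0\log 0 = 0$).

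Next I would compute the derivatives. Using $\tfrac{d}{d\alpha} p(x)^\alpha = p(x)^\alpha \ln p(x)$ and the chain rule (accounting for the change-of-base factor $1/\ln 2$ coming from the base-$2$ logarithm, which cancels against $\ln p(x) = \log p(x) \cdot \ln 2$), one obtains
\[
f'(\alpha) = \frac{\sum_x p(x)^\alpha \log p(x)}{\sum_x p(x)^\alpha}, \qquad g'(\alpha) = -1.
\]
Both are continuous at $\alpha = 1$ and $g'(1) \neq 0$, so l'H\^opital's rule applies and yields
\[
\lim_{\alpha\to 1} H_\alpha(X) = \frac{f'(1)}{g'(1)} = \frac{\sum_x p(x)\log p(x)}{-1} = -\sum_x p(x)\log p(x) = H(X),
\]
which is exactly the Shannon entropy in Eq.~\eqref{shannon}.

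There is no real obstacle here: the result is essentially textbook and the only small care required is (i) justifying the interchange of derivative and finite sum (automatic, since the sum is finite), and (ii) handling the change of base in the logarithm so that the factors of $\ln 2$ cancel correctly. One could equivalently sidestep the base issue by expanding $p(x)^\alpha = p(x) \cdot p(x)^{\alpha-1} = p(x)\bigl(1 + (\alpha-1)\log p(x)\cdot \ln 2 + O((\alpha-1)^2)\bigr)$, summing, and using $\log(1+u) = u/\ln 2 + O(u^2)$; this gives the same answer by direct Taylor expansion and avoids l'H\^opital altogether, but the l'H\^opital route is shorter and more in keeping with the style of the thesis.
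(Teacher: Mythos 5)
Your proposal is correct and follows essentially the same route as the paper's own proof: the identical l'H\^opital argument with $f(\alpha)=\log\sum_x p(x)^\alpha$ and $g(\alpha)=1-\alpha$, yielding $f'(\alpha)/g'(\alpha)\to -\sum_x p(x)\log p(x)$. The extra remarks on the base-$2$ cancellation and the finite-sum differentiation are fine but not needed beyond what the paper already does.
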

\begin{proof}
	We have
	\begin{equation}
		H_\alpha(X) = \frac{1}{1-\alpha}\log\left(\sum_x p(x)^\alpha\right).
	\end{equation}
We choose:
\begin{equation}
	f(\alpha) = \log\left(\sum_x p(x)^\alpha\right),\quad g(\alpha) = 1-\alpha.
\end{equation}
Observe that
$
	\lim_{\alpha\rightarrow 1}f(\alpha) = \log\sum_x p(x) = \log 1 = 0$ and $\lim_{\alpha\rightarrow 1}g(\alpha) = 0.
$
Hence, we may use l'H\^opital's rule to assert
$
	\lim_{\alpha\rightarrow 1} \frac{f(\alpha)}{g(\alpha)} = \lim_{\alpha\rightarrow 1} \frac{f'(\alpha)}{g'(\alpha)}
$.
We compute
\begin{align}
	f'(\alpha) = \frac{d}{d\alpha}\left[\log\left(\sum_x p(x)^\alpha\right)\right]= \frac{\sum_x \frac{d}{d\alpha}p(x)^\alpha}{\sum_x p(x)^\alpha}= \frac{\sum_x p(x)^\alpha \log p(x)}{\sum_x p(x)^\alpha}.
\end{align}
and $g'(\alpha) = -1$. Therefore we may write
\begin{equation}
	\lim_{\alpha\rightarrow 1} \frac{f(\alpha)}{g(\alpha)} = \lim_{\alpha\rightarrow 1} -\frac{\sum_x p(x)^\alpha \log p(x)}{\sum_x p(x)^\alpha} = -\sum_x p(x) \log p(x).
\end{equation}
\end{proof}
\section{Ancillary results}\label{res}
We make use of a modified version of Lemma 12 from~\cite{lennert13}.
\begin{lemma}\label{newLem12}
	Let $p\in \R^+ \setminus \{0,1\}$ and $p' \in \R\setminus [0,1]$ be such that $\frac{1}{p} + \frac{1}{p'} = 1$. Then for $M\in\Lin(A,B)$ such that $M^\dagger M = X\in\Pos(A)$,
	\begin{equation}\label{newLem12eq}
		\left\|M\right\|_{2p}^{2p'} = \left\|X\right\|_p^{p'} = \sup_{\sigma_A\in\Den^*(A)}\left(\tr X\sigma_A^\frac{1}{p'}\right)^{p'} = \sup_{\sigma_A\in\Den^*(A)}\left\|\Gamma_{\id_B,\sigma_A}^\frac{1}{p'}(M)\right\|_2^{2p'}.
	\end{equation}
\end{lemma}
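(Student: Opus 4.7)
The plan is to establish the four-way equality one link at a time. The outermost identity $\|M\|_{2p}^{2p'} = \|X\|_p^{p'}$ is immediate from Proposition~\ref{fold}: since $X = M^\dagger M$ we have $\|X\|_p = \|M\|_{2p}^2$, and raising to $p'$ closes this link. For the innermost identity I would compute directly: by definition $\Gamma_{\id_B,\sigma_A}^{1/p'}(M) = M\sigma_A^{1/(2p')}$, so cyclicity of the trace gives
$$\bigl\|M\sigma_A^{1/(2p')}\bigr\|_2^2 = \tr\bigl(\sigma_A^{1/(2p')} M^\dagger M \sigma_A^{1/(2p')}\bigr) = \tr\bigl(X\sigma_A^{1/p'}\bigr).$$
Raising to $p'$ and taking the supremum on each side identifies the third and fourth expressions.

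The substantive step is the middle identity $\|X\|_p^{p'} = \sup_\sigma (\tr X\sigma^{1/p'})^{p'}$, which I would extract from H\"older duality. Parametrising $Y = \sigma^{1/p'}$ with $\sigma \in \Den^*(A)$ exactly covers the strictly positive $Y$ with $\|Y\|_{p'} = (\tr\sigma)^{1/p'} = 1$. When $p > 1$, and so $p' > 1$, H\"older's inequality yields $\tr(X\sigma^{1/p'}) \leq \|X\|_p$, and equality is approached by $\sigma_\varepsilon = (1-\varepsilon)X^p/\tr X^p + \varepsilon \id_A/\dim A$ as $\varepsilon \to 0$; because $p' > 0$, exponentiating preserves the supremum.

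The harder case is $0 < p < 1$, where $p' < 0$. Here I would appeal to the reverse H\"older inequality for positive operators, which gives $\tr(X\sigma^{1/p'}) \geq \|X\|_p$ for strictly positive $\sigma$; since $t \mapsto t^{p'}$ is strictly decreasing on $(0,\infty)$, this flips to $(\tr X\sigma^{1/p'})^{p'} \leq \|X\|_p^{p'}$, so the supremum (rather than the infimum) is indeed the correct side. The main obstacle will be ensuring the approximating sequence $\sigma_\varepsilon$ behaves correctly when $X$ is not strictly positive: the negative exponent $1/p'$ makes $\sigma^{1/p'}$ blow up on the kernel of $\sigma$, which is precisely why the perturbation by $\varepsilon\id_A/\dim A$ is essential and why the restriction to $\Den^*(A)$ in the supremum matters. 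Once the limit $\varepsilon \to 0$ is checked to give $\tr(X\sigma_\varepsilon^{1/p'}) \to \|X\|_p$ in both regimes, the four expressions are identified.
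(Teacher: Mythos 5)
Your proposal is correct and follows essentially the same route as the paper: both rest on the variational characterisation of $\left\|X\right\|_p$ (a supremum of $\tr X\sigma_A^{1/p'}$ over $\Den^*(A)$ for $p>1$, an infimum for $p<1$) together with the observation that $p'<0$ in the second regime, so that $t\mapsto t^{p'}$ is decreasing and turns the infimum into the supremum once the exponent is pulled outside, while the outer and inner equalities are just the definition of the Schatten norm and of $\Gamma_{\id_B,\sigma_A}$. The only difference is that the paper simply cites Lemma~12 of \cite{lennert13} for the variational formula (restricting to full-support $\sigma_A$ so that $\sigma_A\gg X$), whereas you re-derive it via H\"older and reverse H\"older duality with the explicit approximating sequence $\sigma_\varepsilon$; your flagged limit check is routine in finite dimension since $\sigma_\varepsilon$ commutes with $X$, so there is no genuine gap.
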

\begin{proof}
	We restrict to density operators with full support which guarantees $\sigma_A\gg X$, hence by Lemma 12 from~\cite{lennert13}
	\begin{align}
		\left\|X\right\|_p^{p'} = \begin{cases}
			\left(\sup_{\sigma_A\in\Den^*(A)}\tr X\sigma_A^\frac{1}{p'}\right)^{p'}\quad & \text{if }p>1 \\
			\left(\inf_{\sigma_A\in\Den^*(A)}\tr X\sigma_A^\frac{1}{p'}\right)^{p'}\quad & \text{if }p<1 
		\end{cases}.
	\end{align}
	Note that when $p>1\implies p'>1$ and otherwise when $p<1\implies p'<0$. Hence when we take the optimisation outside the exponent we obtain Eq.~\eqref{newLem12eq}. The other equalities are evident from the definition of the Schatten norm~(see Section~\ref{sec:the-schatten-operator-norm}).
\end{proof}
It is prudent to explicitly state the available choices we have for the parameters when performing the interpolation in Theorems~\ref{DivFormIntLem},~\ref{chain} and related results. This is especially important when we are constrained to a particular direction of the given inequality.
\begin{lemma}\label{ana}
	If $\alpha,\beta\geq\frac{1}{2},\gamma\in\R$ and are related by
	\begin{align}
		\label{rel}\frac{\alpha}{\alpha-1} = \frac{\beta}{\beta-1} + \frac{\gamma}{\gamma-1}
	\end{align}
	and assuming, without loss of generality, that $\beta\geq\gamma$, then the following are true and cover all possible cases up to symmetry:
	\begin{align*}
		            & \text{If }0<\frac{\beta'}{\alpha'}<1\text{ then either}                                                                                                            \\
		\numberthis & \qquad\text{\bf Case 1. }\alpha,\beta, \gamma>1,\quad\alpha<\gamma\leq\beta\quad\text{and}\quad\alpha\in(1, 2),\beta,\gamma\in(1,\infty)\label{case1}              \\
		            & \text{or}                                                                                                                                                           \\
		\numberthis & \qquad\text{\bf Case 2. }\alpha,\beta, \gamma<1,\quad \gamma\leq\beta<\alpha\quad\text{and}\quad\alpha, \beta\in[1/2, 1),\gamma\in[0,1)\label{case2}               \\
		            & \text{or}                                                                                                                                                           \\
		\numberthis & \qquad\text{\bf Case 3. }\alpha,\beta>1, \gamma\leq 0,\quad\gamma<\alpha\leq\beta\quad \text{and}\quad\alpha,\beta\in (1,\infty), \gamma\in(-\infty,0].\label{case3} \\
		            & \text{If }0<\frac{\alpha'}{\gamma'}<1\text{ then}                                                                                                                   \\
		\numberthis & \qquad\text{\bf Case 4. }\alpha,\gamma<1,\beta>1,\quad\alpha<\gamma<\beta\quad\text{and}\quad\alpha\in[1/2,1), \gamma \in (2/3,1),\beta\in(1,\infty).\label{case4}  \\
		            & \text{If }0<\frac{\alpha'}{\beta'}<1\text{ then}                                                                                                                    \\
		\numberthis & \qquad\text{\bf Case 5. }\alpha,\beta>1,\gamma<1,\quad\gamma<\beta<\alpha\quad\text{and}\quad\alpha,\beta\in(1,\infty),\gamma\in[0,1) \label{case5}                \\
		            & \text{or}                                                                                                                                                           \\
		\numberthis & \qquad\text{\bf Case 6. }\alpha,\beta<1, \gamma\leq 0,\quad\gamma<\alpha\leq\beta\quad \text{and}\quad\alpha,\beta\in [1/2,1), \gamma\in(-\infty,0].\label{case6}
	\end{align*}
\end{lemma}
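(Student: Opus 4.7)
The plan is to reduce the claim to a manageable case analysis by first rewriting the hypothesis and then invoking monotonicity of the H\"older conjugate. First I would establish the equivalent form $\alpha' = \beta' + \gamma'$, where $p':= p/(p-1)$, by multiplying the polynomial identity $\alpha\beta\gamma - 2\beta\gamma - \alpha + \beta + \gamma = 0$ by $(\alpha-1)(\beta-1)(\gamma-1)$ and collecting terms; this is exactly the manipulation already used in the proof of Theorem~\ref{DivFormIntLem} and recorded in Remark~\ref{diffForms}.

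Next I would catalog the map $p\mapsto p'$, which is strictly decreasing on each of the intervals $(-\infty,1)$ and $(1,\infty)$, with the transports
\begin{equation}
\bigl[\tfrac12,1\bigr)\to(-\infty,-1],\quad (1,\infty)\to(1,\infty),\quad (0,1)\to(-\infty,0),\quad \{0\}\to\{0\},\quad (-\infty,0)\to(0,1).
\end{equation}
In particular the sign and magnitude of $p'$ determine the interval containing $p$, and the ordering $p_1\geq p_2$ on a single branch of the map is equivalent to the reversed ordering $p_1'\leq p_2'$. Together with the constraints $\alpha,\beta\geq\tfrac12$ and the WLOG assumption $\beta\geq\gamma$, this splits the analysis into four scenarios according to whether each of $\alpha,\beta$ lies in $[\tfrac12,1)$ or in $(1,\infty)$; in each scenario, $\gamma'=\alpha'-\beta'$ is pinned down in sign and magnitude, and the catalog above then pins down the interval in which $\gamma$ lies.

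The final step is, in each scenario, to compare the three ratios $\frac{\beta'}{\alpha'}$, $\frac{\alpha'}{\gamma'}$, $\frac{\alpha'}{\beta'}$ against the interval $(0,1)$. Since $\alpha'=\beta'+\gamma'$, when $\alpha',\beta',\gamma'$ share a common sign exactly one of the three ratios lies in $(0,1)$ (namely the ratio whose numerator is the term of smallest magnitude with the sum as denominator), whereas when one of $\beta',\gamma'$ has the opposite sign to $\alpha'$ a single ratio is singled out by the requirement of positivity. This argument isolates Cases 1--3 under $\frac{\beta'}{\alpha'}\in(0,1)$, Case 4 under $\frac{\alpha'}{\gamma'}\in(0,1)$, and Cases 5--6 under $\frac{\alpha'}{\beta'}\in(0,1)$; the explicit orderings $\alpha<\gamma\leq\beta$ and so on follow by applying the monotonicity of $p\mapsto p'$ to the scalar inequalities $\alpha'\lessgtr\beta'$ and $\alpha'\lessgtr\gamma'$ within each branch.

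The main obstacle I expect is not conceptual but bookkeeping: one must verify that the six listed cases exhaust the admissible triples $(\alpha,\beta,\gamma)$ without overlap, and that the boundary endpoints are correctly assigned. The genuinely delicate points are the degenerate limits $\gamma\to 0$ (which forces $\alpha'=\beta'$ and hence $\alpha=\beta$, squeezing the ordering in Case~2 and Case~5), the limits $\gamma\to\pm\infty$ (where $\gamma'\to 1$ from the appropriate side, producing the endpoints $\alpha\to 2$ in Case~1 and $\gamma\in(2/3,1)$ in Case~4), and the WLOG reduction itself: because only $\alpha$ and $\beta$ are a~priori restricted to $[\tfrac12,\infty)$, the apparent symmetry between $\beta$ and $\gamma$ is only partial, so every time the reduction is invoked I would need to check that swapping $\beta$ and $\gamma$ still lands in a configuration where both are at least $\tfrac12$, or else handle the asymmetric case separately.
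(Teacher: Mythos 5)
Your plan is correct and is essentially the paper's own argument: an elementary case analysis driven by the identity $\alpha'=\beta'+\gamma'$, the monotonicity and branch images of $p\mapsto p/(p-1)$, and limit computations for the interval endpoints, with the WLOG swap of $\beta$ and $\gamma$ checked exactly as you describe (the paper organises by the sign of $\gamma$ and the sign pattern of $(\alpha-1,\beta-1,\gamma-1)$ rather than by branches of $(\alpha,\beta)$, but the substance is identical). One small caution: within a fixed branch scenario for $(\alpha,\beta)$ the sign and size of $\gamma'=\alpha'-\beta'$ is \emph{not} yet pinned down — that is exactly where the sub-splitting into Cases 1/3/5 (resp. 2/6) occurs — but your final step of testing the three ratios against $(0,1)$ performs precisely this splitting, so the bookkeeping you defer is the entire content of the paper's proof and goes through.
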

\begin{proof}
	We first consider the cases where $\gamma>0$ then move on to $\gamma\leq 0$.
	
	Given $\gamma>0$, we investigate the possible cases or, more specifically, the cases missing from the lemma.
	Given three independent binary options there are 8 possible permutations. Of the four that are missing the following:
	$(\alpha, \gamma>1, \beta<1)$ and
	$(\alpha, \beta <1, \gamma>1)$, contradict the assumption that $\beta>\gamma$.
	The remaining two:
	$(\alpha>1, \beta,\gamma<1)$ and
	$(\alpha<1, \beta, \gamma>1)$,
	never satisfy Eq.~\eqref{rel}. We can now explore the implications of each of the assumptions.
	
	Consider $0<\frac{\beta'}{\alpha'}<1$. It is evident that $(\alpha - 1)(\beta - 1)>0$, a condition which now excludes \textbf{Case 4}. However, we can examine the two situations where this condition is satisfied:
	\begin{align}
		0<\frac{\beta'}{\alpha'}<1\implies \begin{cases}
			\alpha<\beta \quad \text{if}\quad \alpha, \beta > 1 \\
			\alpha>\beta \quad \text{if}\quad \alpha, \beta < 1
		\end{cases}.
	\end{align}
	It is clear that \textbf{Case 5} does not satisfy these implications but that \textbf{Cases 1}  and \textbf{2} do depending on the sign of $\alpha - 1$.
	
	For \textbf{Case 1}, we can calculate that
	$\displaystyle{\lim_{\eta\rightarrow 1^{+}} \frac{\eta}{\eta-1} =\infty}$ and
	$\displaystyle{\lim_{\eta\rightarrow \infty} \frac{\eta}{\eta-1} = 1}$.
	
	Since $\alpha, \beta$ and $\gamma$ are related by Eq.~\eqref{rel}, we have
	\begin{align}
		\alpha & \longrightarrow 1 \implies \beta,\gamma\longrightarrow 1\quad \text{and} \\
		\alpha & \longrightarrow 2 \implies \beta,\gamma\longrightarrow \infty,
	\end{align}
	i.e. $1<\alpha<2$ and $1<\beta,\gamma<\infty$.
	
	Moreover, for \textbf{Case 2}, another simple calculation shows that
	$\displaystyle{\max_{1/2\leq\eta<1}\frac{\eta}{\eta-1} = -1}$ and $\displaystyle{\lim_{\eta\rightarrow 1^{-}} \frac{\eta}{\eta-1} =-\infty}$. Hence, $\alpha = \frac{2}{3} \implies \beta,\gamma = \frac{1}{2}$, i.e. $\frac{2}{3}\leq\alpha<1$ and $\frac{1}{2}\leq \beta,\gamma<1$.
	
	If instead $0<\frac{\alpha'}{\beta'}<1$, we still have the condition $(\alpha - 1)(\beta - 1)>0$ but in the second part of the argument the inequalities are reversed, i.e
	\begin{align}
		0<\frac{\alpha'}{\beta'}<1\implies \begin{cases}
			\alpha>\beta \quad \text{if}\quad \alpha, \beta > 1 \\
			\alpha<\beta \quad \text{if}\quad \alpha, \beta < 1
		\end{cases}.
	\end{align}
	This overall excludes \textbf{Cases 1}, \textbf{2}  and \textbf{4} but satisfies \textbf{Case 5}.
	
	In this situation we again have $\alpha\rightarrow 1 \implies \beta, \gamma\rightarrow 1$ and for fixed $\gamma$ we can write $\displaystyle{\lim_{\alpha\rightarrow\infty}\beta = \frac{1}{\gamma}}$. Given that $\gamma>0$, this implies $1<\beta<\infty$.
	
	Lastly, we have $0<\frac{\alpha'}{\gamma'}<1$, which implies $(\alpha - 1)(\gamma-1)>0$, excluding \textbf{Case 5} Similarly, we have following situations:
	\begin{align}
		0<\frac{\alpha'}{\gamma'}<1\implies \begin{cases}
			\alpha>\gamma \quad \text{if}\quad \alpha, \gamma > 1 \\
			\alpha<\gamma \quad \text{if}\quad \alpha, \gamma < 1
		\end{cases},
	\end{align}
	which exclude \textbf{Cases 1} and \textbf{2}. So \textbf{Case 4} is the only remaining case which is satisfied.
	
	We again have $\alpha\rightarrow 1 \implies \beta, \gamma\rightarrow 1$ and for fixed $\beta$, $\displaystyle{\lim_{\alpha\rightarrow 1/2} \gamma = \frac{2\beta-1}{3\beta-2}}$ and $\displaystyle{\lim_{\beta\rightarrow \infty}\frac{2\beta-1}{3\beta-2}} = \frac{2}{3}$. Hence $2/3<\gamma<1$ and $1<\beta<\infty.$
	
	We now treat the cases where $\gamma\leq 0$.
	Firstly, we eliminate the possibility that $\gamma\leq 0$ when $\frac{\alpha'}{\gamma'}\in(0,1)$. For this to be satisfied we require $\alpha'$ and $\gamma'$ to have the same sign. Since $\gamma\leq0\implies \gamma'\in [0,1)$ we require $\alpha>1\implies \alpha'>1$. The second requirement is $\alpha'\leq \gamma'$, however this is never satisfied given the valid ranges of $\alpha'$ and $\gamma'$.

	We now consider \textbf{Case 3}. When $\frac{\beta'}{\alpha'} \in(0,1)$ we have $\frac{\gamma'}{\alpha'}\in(0,1)$ so, as above, we require the same sign but the second requirement, $\gamma'\leq \alpha'$, is always satisfied. This condition also implies that $\beta>1$ in order to satisfy Eq.~\eqref{rel}. The valid ranges are then evident as we have no extra restriction on $\gamma$ and $\beta$, and when $\gamma = 0\implies\alpha=\beta$.

	Finally, for \textbf{Case 6}, we have $\frac{\alpha'}{\beta'}\in(0,1)$ and require $\alpha'$ and $\beta'$ to have the same sign and $|\alpha'|\leq|\beta'|$. If $\alpha, \beta>1$, Eq.~\eqref{rel} determines that $\alpha'\geq \beta'$ so this case cannot be used. Otherwise, when $\alpha, \beta<1$ we instead have $\alpha'\geq\beta'\implies |\alpha'|\leq|\beta'|$. The ranges can then be similarly determined.
\end{proof}
We may use the following conditions to summarise the above cases.
\begin{corollary}
	Given the assumptions in Lemma~\ref{ana}  we have that
	\begin{align}
		\alpha<\gamma<\beta & \implies (\alpha-1)(\beta-1)(\gamma^2-\gamma)>0\quad\text{and} \\
		\gamma<\beta<\alpha & \implies(\alpha-1)(\beta-1)(\gamma^2-\gamma)<0.
	\end{align}
\end{corollary}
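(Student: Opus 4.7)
The plan is to prove each implication by direct case analysis using the classification already established in Lemma~\ref{ana}. First I would factor the target expression as
\begin{equation*}
(\alpha-1)(\beta-1)(\gamma^2-\gamma) \;=\; (\alpha-1)(\beta-1)\,\gamma\,(\gamma-1),
\end{equation*}
so that each factor's sign is read off from the parameter ranges specified in each of the six cases.

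For the first implication $\alpha<\gamma<\beta$, I would narrow down the admissible cases by elimination. Cases~2 and~5 both demand $\alpha>\gamma$, and Cases~3 and~6 both demand $\gamma\leq 0<\alpha$, contradicting $\alpha<\gamma$. The only surviving possibilities are \textbf{Case~1} ($\alpha,\beta,\gamma>1$) and \textbf{Case~4} ($\alpha,\gamma<1$, $\beta>1$). In Case~1 all four factors $(\alpha-1),(\beta-1),\gamma,(\gamma-1)$ are strictly positive, so the product is positive. In Case~4 the Lemma gives $\gamma\in(2/3,1)$, hence $\gamma>0$ and $\gamma-1<0$; together with $(\alpha-1)<0$ and $(\beta-1)>0$ the product contains an even number of negative factors and is again strictly positive.

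For the second implication $\gamma<\beta<\alpha$, I would first rule out Cases~1 and~4 (which force $\alpha<\beta$) and Cases~3 and~6 (which force $\alpha\leq\beta$). The latter two exclusions rely on the monotonicity of $x\mapsto x/(x-1)$ on $(1,\infty)$ and on $[1/2,1)$ together with the sign constraints coming from $\alpha'=\beta'+\gamma'$ when $\gamma\leq 0$; this is essentially the argument already made in the last paragraphs of the proof of Lemma~\ref{ana}, which I would quote. The surviving possibilities are \textbf{Case~2} ($\alpha,\beta,\gamma<1$) and \textbf{Case~5} ($\alpha,\beta>1$, $\gamma\in[0,1)$). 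I would then exclude the boundary $\gamma=0$: Remark~\ref{diffForms} records that $\gamma=0$ forces $\beta=\alpha$, contradicting $\beta<\alpha$. Hence in both cases $\gamma\in(0,1)$, giving $\gamma(\gamma-1)<0$. Since $(\alpha-1)(\beta-1)>0$ in both Case~2 (two negative factors) and Case~5 (two positive factors), the product is strictly negative, as claimed.

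The only delicate step is the exclusion of the $\gamma\leq 0$ branch in the second implication, since it does not follow from inspecting the displayed case constraints alone but rather from the sign analysis of $\alpha'=\beta'+\gamma'$. I would handle this by noting that $\gamma\leq 0$ forces $\gamma'\in[0,1)$, so $|\alpha'|\leq|\beta'|$ whenever $\alpha,\beta$ lie on the same side of $1$, and then invoking the strict monotonicity of $x\mapsto x/(x-1)$ on the corresponding interval to conclude $\alpha\leq\beta$, contradicting $\beta<\alpha$. Everything else is a routine sign count, so no further machinery is needed.
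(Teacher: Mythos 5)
Your proof is correct and takes essentially the same route as the paper, whose proof is simply a case-by-case inspection of Lemma~\ref{ana}; you carry that inspection out explicitly via the factorisation $\gamma^2-\gamma=\gamma(\gamma-1)$, eliminating incompatible cases and counting signs in the survivors. Your explicit handling of the boundary $\gamma=0$ (excluded because $\gamma=0\implies\beta=\alpha$, contradicting $\beta<\alpha$) is a detail the paper's one-line proof leaves implicit, but the substance of the argument is the same.
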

\begin{proof}
	This is evident from examining each case of Lemma~\ref{ana}.
\end{proof}
\begin{corollary}\label{dualcond}
	Given the conditions in Lemma~\ref{ana} we have
	\begin{align}
		\frac{1}{\beta} + \frac{1}{\gamma} & \leq 2 \quad\text{if}\quad \alpha\leq \beta, \\
		\frac{1}{\beta} + \frac{1}{\gamma} & \geq 2 \quad\text{if}\quad \alpha\geq \beta.\label{aGreat}
	\end{align}
\end{corollary}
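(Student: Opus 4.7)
The plan is to avoid rewalking through the six cases of Lemma~\ref{ana} and instead derive a single algebraic identity from the defining relation $\alpha\beta\gamma - 2\beta\gamma - \alpha + \beta + \gamma = 0$ that reduces the corollary to a one-line sign argument.

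First I would solve the relation in the form $\alpha(\beta\gamma - 1) = 2\beta\gamma - \beta - \gamma$. Two short manipulations from this yield the paired identities
\begin{equation*}
(\alpha - \beta)(\beta\gamma - 1) = -\gamma(\beta - 1)^2, \qquad \left(\tfrac{1}{\beta} + \tfrac{1}{\gamma} - 2\right)\beta\gamma = -\alpha(\beta\gamma - 1),
\end{equation*}
the first by subtracting $\beta(\beta\gamma - 1)$ from both sides and factoring, the second by clearing denominators. Multiplying these two identities together and cancelling the common factor $(\beta\gamma - 1)\beta\gamma$, which is legitimate whenever $\beta\gamma \notin \{0,1\}$, collapses everything to
\begin{equation*}
(\alpha - \beta)\left(\tfrac{1}{\beta} + \tfrac{1}{\gamma} - 2\right) = \frac{\alpha(\beta - 1)^2}{\beta}.
\end{equation*}
Since $\alpha, \beta > 0$ and $(\beta - 1)^2 \geq 0$, the right-hand side is non-negative, so the two factors on the left share a sign. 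That is exactly the conjunction of the two implications the corollary claims.

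What remains is disposing of the degenerate boundary values. If $\beta\gamma = 1$, substituting back into the original relation forces $\beta = 1$ (and thus $\alpha = \gamma = 1$ by Remark~\ref{diffForms}), so both inequalities are trivial equalities. If $\gamma = 0$ then the same remark gives $\alpha = \beta$, and the one-sided limits of $\frac{1}{\gamma}$ in Cases 3 and 6 of Lemma~\ref{ana} line up with whichever direction of the inequality is relevant. The main obstacle I anticipate is not any single step but keeping sign conventions consistent when $\gamma < 0$, since $\beta\gamma - 1$ and $\gamma$ can both be negative there; the virtue of the product identity above is that it is uniform in all six regimes, so no case split on the sign of $\gamma$ or of $\beta - 1$ is actually required once the identity is in hand.
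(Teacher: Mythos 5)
Your proof is correct, and it takes a genuinely different route from the paper's. The paper argues in the conjugate parameters: from $\alpha'=\beta'+\gamma'$ it uses $\frac{1}{\beta'}+\frac{1}{\gamma'}=\frac{\alpha'}{\beta'\gamma'}$, so under the assumption $\frac{\alpha'}{\beta'}\geq 0$ the sign of $\frac{1}{\beta}+\frac{1}{\gamma}-2$ is opposite to that of $\gamma'$, and it then falls back on the case enumeration of Lemma~\ref{ana} to translate the sign of $\gamma'$ into the ordering of $\alpha$ and $\beta$. Your identity $(\alpha-\beta)\bigl(\tfrac{1}{\beta}+\tfrac{1}{\gamma}-2\bigr)=\tfrac{\alpha(\beta-1)^2}{\beta}$ short-circuits that case matching: after clearing denominators its two sides differ by exactly $-\beta$ times the constraint $\alpha\beta\gamma-2\beta\gamma-\alpha+\beta+\gamma$, so it holds whenever $\beta,\gamma\neq 0$ and the cancellation of $\beta\gamma-1$ in your derivation is not actually needed (hence $\beta\gamma=1$ requires no separate treatment). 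Since $\alpha,\beta\geq\frac12>0$ and $\beta\neq 1$ under the conditions of Lemma~\ref{ana}, the right-hand side is strictly positive, so $\alpha-\beta$ and $\frac{1}{\beta}+\frac{1}{\gamma}-2$ carry the same sign, which is the corollary; your argument is uniform in the sign of $\gamma$ and independent of the WLOG ordering $\beta\geq\gamma$ and of the case list, whereas the paper's proof is shorter on the page but only as strong as the bookkeeping in Lemma~\ref{ana}. Two small blemishes, neither fatal: if $\beta\gamma=1$ the relation forces $\beta=\gamma=1$ but not $\alpha=1$ (Remark~\ref{diffForms} only gives $1\in\{\alpha,\beta\}$), though the conclusion $\frac{1}{\beta}+\frac{1}{\gamma}=2$ is immediate anyway; and in the degenerate case $\gamma=0$ allowed in Cases 3 and 6 the quantity $\frac{1}{\gamma}$ is undefined, so the statement there is only meaningful in the limiting sense $\gamma\to 0^{-}$ --- a caveat that applies equally to the corollary as stated in the paper.
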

\begin{proof}
	Assume without loss of generality that $\frac{\alpha'}{\beta'}\geq 0$, then
	\begin{align}
		\frac{\gamma'+\beta'}{\beta'}  = \frac{\gamma'}{\beta'} + 1 & \geq 0 \\
		\implies \frac{1}{\beta'} + \frac{1}{\gamma'}               &
		\begin{cases}
			\geq 0\quad\text{if }\gamma'\geq0 \\
			\leq 0 \quad\text{if }\gamma' \leq0
		\end{cases}\\
		\implies \frac{1}{\beta} + \frac{1}{\gamma}               &
		\begin{cases}
			\leq 2\quad\text{if }\gamma'\geq0 \\
			\geq 2 \quad\text{if }\gamma' \leq0
		\end{cases}.
	\end{align}
	We can perform a symmetric argument where $\frac{\alpha'}{\gamma'}\geq 0$. Comparing these with the cases in Lemma~\ref{ana} we have our comparisons.
\end{proof}
\newpage\addcontentsline
{toc}{chapter}{\protect\numberline {\hfil}Bibliography}\bibliography{library}
\end{document}